\documentclass[10pt]{article}
\synctex=1 % linkear
\usepackage{amsfonts}
\usepackage{amsmath}
\usepackage{lipsum}
\usepackage{amssymb}
\usepackage{amsmath}
\usepackage{amsthm}
\usepackage{geometry}
\usepackage{multirow}
\geometry{letterpaper,left=1.0in,right=1.0in,top=1.0in,bottom=1.0in}
\usepackage{xr}

%-------- Bibliography ---------------------------------------
\bibliographystyle{ecta}
\usepackage[authoryear]{natbib}
\renewcommand{\theequation}{\thesection.\arabic{equation}}
\newcommand{\myref}[2]{\hyperref[#1]{#2}}
\numberwithin{equation}{section}

%-------- PACKAGES ------------------------------------------
\usepackage{latexsym}
\usepackage{graphicx} % include figures
\usepackage{enumerate}
\usepackage{setspace}
\onehalfspacing
\setstretch{1.20}
\usepackage[toc,title,titletoc,header]{appendix}
\usepackage[bottom]{footmisc} % forces the footnotes to go at the bottom of page.
\usepackage[pdfborder={0 0 0}]{hyperref}
\hypersetup{
 colorlinks=true,linkcolor=magenta, citecolor=blue, filecolor=magenta, 
 linkbordercolor={0 1 1}, citebordercolor={1 0 0},
}

% Lines
\usepackage{lineno}
\setpagewiselinenumbers
%\modulolinenumbers[5]
%\linenumbers

%-------- ENVIRONMENTS ------------------------------------------
\newtheorem{theorem}{Theorem}[section]
\newtheorem{lemma}{Lemma}[section]

\newtheorem{assumption}{Assumption}[section]

\newtheorem{example}{Example}[section]

\newcounter{assumptionM}
\newcounter{assumptionA}
\def\theassumptionM{M.\arabic{assumptionM}}
\def\theassumptionA{A.\arabic{assumptionA}}
\setcounter{assumptionM}{0}
\setcounter{assumptionA}{0}

\begin{document}
	\relax
	\hypersetup{pageanchor=false}
	
	\hypersetup{pageanchor=true}

\author{
Federico A. Bugni\\
Department of Economics\\
Northwestern University\\
\url{federico.bugni@northwestern.edu}
\and
Mengsi Gao\\
Department of Economics\\
UC Berkeley\\
\url{mengsi.gao@berkeley.edu}
}

\bigskip
\title{Inference under Covariate-Adaptive Randomization with Imperfect Compliance
\thanks{We thank the Coeditor and four anonymous referees for comments and suggestions that have greatly improved the manuscript. We also thank Ivan Canay, Azeem Shaikh, Max Tabord-Meehan, and Diego Ubfal for helpful comments and discussion. This research was supported by the National Science Foundation Grant SES-1729280.}}

\maketitle

\vspace{-0.3in}
\thispagestyle{empty}

\begin{spacing}{1.3}
\begin{abstract}

This paper studies inference in a randomized controlled trial (RCT) with covariate-adaptive randomization (CAR) and imperfect compliance of a binary treatment. In this context, we study inference on the local average treatment effect (LATE), i.e., the average treatment effect conditional on individuals that always comply with the assigned treatment. As in \cite{bugni/canay/shaikh:2018,bugni/canay/shaikh:2019}, CAR refers to randomization schemes that first stratify according to baseline covariates and then assign treatment status so as to achieve ``balance'' within each stratum. In contrast to these papers, however, we allow participants of the RCT to endogenously decide to comply or not with the assigned treatment status.

We study the properties of an estimator of the LATE derived from a ``fully saturated'' instrumental variable (IV) linear regression, i.e., a linear regression of the outcome on all indicators for all strata and their interaction with the treatment decision, with the latter instrumented with the treatment assignment. We show that the proposed LATE estimator is asymptotically normal, and we characterize its asymptotic variance in terms of primitives of the problem. We provide consistent estimators of the standard errors and asymptotically exact hypothesis tests. In the special case when the target proportion of units assigned to each treatment does not vary across strata, we can also consider two other estimators of the LATE, including the one based on the ``strata fixed effects'' IV linear regression, i.e., a linear regression of the outcome on indicators for all strata and the treatment decision, with the latter instrumented with the treatment assignment. 

Our characterization of the asymptotic variance of the LATE estimators in terms of the primitives of the problem allows us to understand the influence of the parameters of the RCT. We use this to propose strategies to minimize their asymptotic variance in a hypothetical RCT based on data from a pilot study. We illustrate the practical relevance of these results using a simulation study and an empirical application based on \cite{dupas/karlan/robinson/ubfal:2018}.
\end{abstract}
\end{spacing}

\medskip
\noindent KEYWORDS: Covariate-adaptive randomization, stratified block randomization, treatment assignment, randomized controlled trial, strata fixed effects, saturated regression, imperfect compliance.

\noindent JEL classification codes: C12, C14

\thispagestyle{empty} 
\newpage

%\tableofcontents

\newpage
\hypersetup{pageanchor=true}
\setcounter{page}{1}

\section{Introduction}
This paper studies inference in a randomized controlled trial (RCT) with covariate-adaptive randomization (CAR) with a binary treatment. As in \cite{bugni/canay/shaikh:2018,bugni/canay/shaikh:2019}, CAR refers to randomization schemes that first stratify according to baseline covariates and then assign treatment status so as to achieve ``balance'' within each stratum. As these references explain, CAR is commonly used to assign treatment status in RCTs in all parts of the sciences.\footnote{See \cite{rosenberger/lachin:2016} for a textbook treatment focused on clinical trials and \cite{duflo/glennerster/kremer:2007} and \cite{bruhn/mckenzie:2008} for reviews focused on development economics.}

In contrast to \cite{bugni/canay/shaikh:2018,bugni/canay/shaikh:2019}, this paper allows for imperfect compliance. That is, the participants of the RCT can endogenously decide their treatment, denoted by $D$, which may or may not coincide with the assigned treatment status, denoted by $A$. This constitutes an empirically relevant contribution to this growing literature, as imperfect compliance is a common occurrence in many RCTs. For recent examples of RCTs that use CAR and have imperfect compliance, see \cite{angrist/lavy:2009}, \cite{attanasio/kugler/meghir:2011}, 
\cite{dupas/karlan/robinson/ubfal:2018}, \cite{mcintosh/alegria/ordonez/zenteno:2018}, 
\cite{somville/vandewalle:2018}, among many others.

Our goal is to study the effect of the treatment on an outcome of interest, denoted by $Y$. We consider the potential outcome framework with $Y=Y(1)D + Y(0)(1-D)$, where $Y(1)$ denotes the outcome with treatment and $Y(0)$ denotes the outcome without treatment. We also consider a potential decision framework with $D = D(1)A + D(0)(1-A)$, where $D(1)$ denotes the decision when assigned to treatment and $D(0)$ denotes the decision when not assigned to treatment. In the context of imperfect treatment compliance, a key causal parameter of interest is the so-called local average treatment effect (LATE) introduced in \cite{angrist/imbens:1994}, and given by
\begin{equation}
 \beta ~\equiv~E[~Y(1)-Y(0)~|~D(0)=0,~D(1)=1~].
 \label{eq:late}
\end{equation}
In words, the LATE is the average treatment effect for those individuals who decide to comply with their assigned treatment status, i.e., the ``compliers''.\footnote{The definition of LATE presumes the presence of compliers in the population of interest. This will be formalized in Assumption \ref{ass:1}. Under imperfect compliance, the literature also considers other parameters, such as the intention to treat, defined as $\text{ITT}=E[Y|A=1]-E[Y|A=0]$, or the average treatment effect on the treated, defined as $\text{TOT}=E[ Y( 1) -Y( 0) |D(1)=D(0)=1]$. In this paper, we prefer the LATE over these alternative parameters. First, it can be shown that $\text{ITT} = \text{LATE} \times P(D(0)=0,D(1)=1)$. Unlike the ITT, the LATE represents the average treatment effect for a subset of the population (i.e., the compliers), which makes it preferable. Second, it can be shown that $\text{TOT} = \text{LATE}$ if there are no ``always takers'', i.e., individuals who adopt the treatment regardless of the assignment. However, if always takers are present, the TOT is not identified under our assumptions.}
As in \cite{bugni/canay/shaikh:2019}, we consider inference on the LATE based on simple linear regressions. In Section \ref{sec:SAT}, we study the properties of an estimator of the LATE derived from a ``fully saturated'' instrumental variable (IV) linear regression, i.e., a linear regression of the outcome on all indicators for all strata and their interaction with the treatment decision, where the latter is instrumented with the treatment assignment. We show that its coefficients can be used to consistently estimate the LATE under very general conditions. We show that the proposed LATE estimator is asymptotically normal, and we characterize its asymptotic variance in terms of primitives of the problem. As expected, we show that the asymptotic variance is different from the one under perfect compliance derived in \cite{bugni/canay/shaikh:2018,bugni/canay/shaikh:2019}. We provide consistent estimators of these new standard errors and asymptotically exact hypothesis tests. In addition, we show that the results of the fully saturated regression can be used to estimate all of the primitive parameters of the problem.

In the special case when the target proportion of units being assigned to each of the treatments does not vary across strata, we also consider two other regression-based estimators of the LATE. Section \ref{sec:SFE} proposes an estimator of the LATE based on the ``strata fixed effects'' IV linear regression, i.e., a linear regression of the outcome on indicators for all strata and the treatment decision, where the latter is instrumented with the treatment assignment. In turn, Section \ref{sec:2STT} proposes an estimator of the LATE based on a ``two sample regression'' IV linear regression, i.e., a linear regression of the outcome on a constant and the treatment decision, where the latter is instrumented by the treatment assignment. We show that the LATE estimators produced by both of these regressions are asymptotically normal, and we characterize their asymptotic variances in terms of primitives of the problem. We show how to estimate the corresponding standard errors and generate asymptotically exact hypothesis tests by relying on results from the fully saturated regression.

Sections \ref{sec:SAT}, \ref{sec:SFE}, and \ref{sec:2STT} characterize the asymptotic variance of the three regression-based IV estimators of the LATE in terms of the primitive parameters of the problem. This allows us to understand how the parameters of the RCT with CAR affect the standard errors. In principle, this information can be used to propose strategies to minimize the asymptotic variance of the LATE estimator of a hypothetical RCT, possibly with the aid of data from a pilot RCT. We consider this RCT design problem in Section \ref{sec:optim}, and we establish several interesting results. First, we show that it is optimal to use a CAR method that imposes the highest possible level of ``balance'' on the treatment assignment within each stratum, such as stratified block randomization.\footnote{This finding extends the results obtained by \cite{bugni/canay/shaikh:2018,bugni/canay/shaikh:2019} to the case of imperfect compliance.} Within this class of CAR methods, our second result in Section \ref{sec:optim} establishes that the asymptotic variance of the estimators of the LATE cannot increase when the collection of strata becomes finer.\footnote{See \cite{bai:2022} for analogous results in the context of perfect compliance.} In addition, we show how to use the data from a pilot RCT to estimate the asymptotic variance that would result from using a finer set of strata in a hypothetical RCT. Our third and final result in Section \ref{sec:optim} provides an expression for the optimal treatment propensity in a hypothetical RCT in terms of its primitive parameters. To exploit this result in practice, we provide a consistent estimator of the optimal treatment propensity based on data from a pilot version of the RCT.

In recent work, \cite{ansel/hong/li:2018} also consider inference for the LATE in RCTs with a binary treatment and imperfect compliance. While most of their paper focuses on the case in which treatment assignment is done via simple random sampling, they consider inference on RCTs with CAR in Section 4. In contrast, our paper is entirely focused on RCTs with CAR. This allows us to tailor our assumptions to the problem under consideration, and it enables us to give more detailed formal arguments.\footnote{An example of this is the proof of Lemma \ref{lem:AsyDist}, where we modify the arguments in \citet[Lemma B.2]{bugni/canay/shaikh:2018} to allow for the presence of imperfect compliance.} Second, \cite{ansel/hong/li:2018} consider IV regressions without fully specifying the set of covariates in these regressions. Consequently, they derive the asymptotic variance of their LATE estimators in terms of high-level expressions. In contrast, we fully specify the covariates in our IV regressions according to the specification typically used by practitioners. This allows us to obtain explicit expressions for the asymptotic variance of our LATE estimators, detailing which of these depend on the underlying population and which are chosen by the researcher implementing the RCT. In this sense, our expressions reveal the underlying forces determining the asymptotic variance and enable researchers to choose the RCT parameters to improve the efficiency of their estimators. We consider the topic of RCT design in Section \ref{sec:optim}.

The remainder of the paper is organized as follows. In Section \ref{sec:setup}, we describe the setup of the inference problem and we specify our assumptions. The next three sections consider the problem of inference on the LATE based on different IV regression models. Section \ref{sec:SAT} considers the ``fully saturated'' IV linear regression, Section \ref{sec:SFE} considers the ``strata fixed effects'' IV linear regression, and Section \ref{sec:2STT} considers the ``two-sample regression'' IV linear regression. In each one of these sections, we propose a consistent estimator of the LATE, we characterize its asymptotic distribution, and we propose a consistent estimator of their standard errors and asymptotically valid hypothesis tests. In Section \ref{sec:optim}, we consider the problem of designing a hypothetical RCT with CAR based on data from a pilot RCT with CAR. In Section \ref{sec:MC}, we study the finite sample behavior of our hypothesis tests based via Monte Carlo simulations. Section \ref{sec:application} illustrates the practical relevance of our results by an empirical application based on the RCT in \cite{dupas/karlan/robinson/ubfal:2018}. Section \ref{sec:conclusions} provides concluding remarks. All proofs and several intermediate results are collected in the appendix.

\section{Setup and notation} \label{sec:setup}

We consider an RCT with $n$ participants. For each participant $i=1,\dots,n$, $Y_i \in \mathbb{R}$ denotes the observed outcome of interest, $Z_i \in \mathcal{Z}$ denotes a vector of observed baseline covariates, $A_i \in \{0,1\}$ indicates the treatment assignment, and $D_i \in \{0,1\}$ indicates the treatment decision. Relative to the setup in \cite{bugni/canay/shaikh:2018,bugni/canay/shaikh:2019}, we allow for imperfect compliance, i.e., for $D_i \neq A_i$. 

We consider potential outcome models for both outcomes and treatment decisions. For each participant $i=1,\dots,n$, we use $Y_i(D)$ to denote the potential outcome of participant $i$ if he/she makes treatment decision $D$, and we use $D_i(A)$ to denote the potential treatment decision of participant $i$ if he/she has assigned treatment $A$. These are related to their observed counterparts in the usual manner:
\begin{align}
 D_i &~=~ D_i(1) A_i + D_i(0) (1-A_i),\notag\\
 Y_i &~=~ Y_i(1) D_i + Y_i(0) (1-D_i) .\label{eq:outcome}
\end{align}

Following the usual classification in the LATE framework in \cite{angrist/imbens:1994}, each participant in the RCT can only be one of four types: complier, always taker, never taker, or a defier. An individual $i$ is said to be a complier if $\{D_i(0)=0,D_i(1)=1\}$, an always taker if $\{D_i(0)=D_i(1)=1\}$, a never taker if $\{D_i(0)=D_i(1)=0\}$, and a defier if $\{D_i(0)=1,D_i(1)=0\}$. As usual in the literature, we later impose that there are no defiers in our population of participants in order to identify the LATE. It is convenient to use $C$ to denote a complier, $AT$ to denote an always taker, $NT$ to denote a never taker, and $DEF$ to denote a defier. Our goal in this paper is to consistently estimate the LATE $ \beta \equiv E[Y(1)-Y(0)|C]$ and to test hypotheses about it. In particular, for a prespecified choice of $\beta_0 \in \mathbb{R}$, we are interested in the following hypothesis testing problem
\begin{equation}
 H_0:\beta = \beta_0 ~~~~\text{versus}~~~~ H_1:\beta \neq \beta_0
 \label{eq:HT}
\end{equation}
at a significance level $\alpha \in (0,1)$.

Following \cite{bugni/canay/shaikh:2018,bugni/canay/shaikh:2019}, we use $P_n$ to denote the distribution of the observed data
\begin{align*}
 X^{(n)} ~=~((Y_i,D_i,A_i,Z_i)~:i=1,\dots,n)
\end{align*}
and denote by $Q_n$ the distribution of the underlying random variables, given by
\begin{align*}
 W^{(n)} ~=~ ((Y_i(1),Y_i(0),D_i(0),D_i(1),Z_i)~:i=1,\dots,n).
\end{align*}
Note that $P_n$ is jointly determined by \eqref{eq:outcome}, $Q_n$, and the treatment assignment mechanism. We therefore state our assumptions below in terms of the restrictions on $Q_n$ and the treatment assignment mechanism. In fact, we will not make reference to $P_n$ for the remainder of the paper, and all the operations are understood to be under $Q_n$ and the treatment assignment mechanism.

Strata are constructed from the observed, baseline covariates $Z_i$ using a prespecified function $S:\mathcal{Z} \to \mathcal{S}$, where $\mathcal{S}$ is a finite set. For each participant $i=1,\dots,n$, let $S_i \equiv S(Z_i)$ and let $S^{(n)} = (S_i:i=1,\dots,n)$. By definition, we note that $S^{(n)}$ is completely determined by the covariates in $W^{(n)}$. 

We begin by describing our assumptions on the underlying data generating process (DGP) of $W^{(n)}$.

\begin{assumption}\label{ass:1}
	$W^{(n)}$ is an i.i.d.\ sample that satisfies
	\begin{enumerate}[{(a)}]
	 \item $ E[ Y_{i}( d) ^{2}]<\infty $ for all $d\in \{0,1\}$,
	 \item $p(s) ~\equiv~ P(S_i=s)>0$ for all $s \in \mathcal{S}$,
	 \item $ P( D_{i}( 0) =1,D_{i}(1) =0) =0$ or, equivalently, $ P( D_{i}(1)\geq D_{i}(0)) =1$.
	 \item $\pi _{D( 1) }( s) - \pi _{D( 0) }( s)>0$ for all $s \in \mathcal{S}$, where $\pi_{D(a)}(s) =P(D_i(a)=1|S_i=s) $ for $(a,s) \in \{0,1\} \times \mathcal{S}$.
	\end{enumerate}
\end{assumption}

Assumption \ref{ass:1} requires the underlying data distribution to be i.i.d., i.e., $Q_{n} = Q^n$, where $Q$ denotes the common marginal distribution of $(Y_i(1),Y_i(0),D_i(0),D_i(1),Z_i)$. In addition, the assumption imposes several requirements on $Q$. Assumption \ref{ass:1}(a) demands that the potential outcomes have finite second moments, which is important to develop our asymptotic analysis. Assumption \ref{ass:1}(b) requires all strata to be relevant. Assumption \ref{ass:1}(c) corresponds to \citet[Condition 2]{angrist/imbens:1994}, and imposes the standard ``no defiers'' or ``monotonicity'' condition that is essential to identify the LATE. In other words, this condition implies that there are no participants who will decide to defy the treatment assignment, i.e., decide to both adopt the treatment when assigned to the control and decide to adopt the control when assigned to treatment. To interpret Assumption \ref{ass:1}(d), we note that Lemma \ref{lem:P_type_representation} provides the following expression of the probability of each type of participant conditional on the stratum $s \in \mathcal{S}$:
\begin{align}
P( AT|S=s) &~=~\pi _{D( 0) }( s), \notag\\
P( NT|S=s) &~=~1-\pi _{D( 1) }( s), \notag \\
P( C|S=s) &~=~\pi _{D( 1) }( s) -\pi _{D( 0) }( s). \label{eq:pi_defns} 
\end{align}
By \eqref{eq:pi_defns}, Assumption \ref{ass:1}(d) imposes that every stratum has a non-trivial amount of participants who will decide to comply with the assigned treatment status. This corresponds to a strata-specific version of the so-called ``relevance condition'' imposed in \citet[Condition 1]{angrist/imbens:1994}.

Next, we describe our assumptions on the treatment assignment mechanism. As explained earlier, we focus our analysis on CAR, i.e., on randomization schemes that first stratify according to baseline covariates and then assign treatment status to as to achieve ``balance'' within each stratum. To describe our assumption more formally, we require some further notation. Let $A^{(n)} = (A_i:i=1,\dots,n)$ denote the vector of treatment assignments. For any $s \in \mathcal{S}$, let $\pi_A(s) \in (0,1)$ denote the ``target'' proportion of participants to assign to treatment in stratum $s$, determined by the researcher implementing the RCT. Also, let
\begin{equation*}
 n_{A}( s) ~\equiv~ \sum_{i=1}^{n}1[ A_{i}=1,S_{i}=s ]
\end{equation*}
denote the number of participants assigned to treatment in stratum $s$, and let 
\begin{equation*}
 n( s) ~\equiv~ \sum_{i=1}^{n}1[ S_{i}=s ]
\end{equation*}
denote the number of participants in stratum $s$. With this notation in place, we now specify our assumption regarding the treatment assignment mechanism.

\begin{assumption}\label{ass:2}
	The treatment assignment mechanism satisfies
\begin{enumerate}[{(a)}]
\item $W^{(n)}\perp A^{(n)}~|~S^{(n)}$,
\item $n_{A}( s) /n( s) ~\overset{p}{\to}~ \pi _{A}( s) \in ( 0,1) $ for all $s\in \mathcal{S}$,
\end{enumerate}
\end{assumption}

In principle, it is possible that $n(s)=0$ for some $s\in \mathcal{S}$, and so the term $n_{A}(s)/n(s)$ may not be properly defined. In any case, we note that Assumption \ref{ass:1} implies that $n(s)/n \overset{p}{\to} p(s)>0$ for all $s\in \mathcal{S}$, and so the zero denominator issue is only a small sample problem and does not affect our asymptotic analysis.\footnote{To avoid defining objects that have a zero denominator, we abuse the notation and redefine the ratio $a/b$ as an arbitrary number (say, zero) whenever $b=0$. We adopt this rule in Assumption \ref{ass:2} and throughout the rest of this paper. For the reasons explained earlier, this alternative ratio definition only serves the purpose of avoiding a zero denominator in small samples and does not affect our asymptotic analysis.}

Assumption \ref{ass:2}(a) essentially requires that the treatment assignment vector $A^{(n)}$ is a function of the strata vector $S^{(n)}$ and an exogenous randomization device. Assumption \ref{ass:2}(b) imposes that the fraction of units assigned to treatment in the stratum $s$ converges in probability to a target proportion $\pi_A(s)$ as the sample size diverges. As we show in Section \ref{sec:SAT}, Assumption \ref{ass:2} imposes sufficient structure of the CAR mechanism to analyze the asymptotic distribution of the LATE estimator in the ``fully saturated'' (SAT) IV regression. However, as we show in Sections \ref{sec:SFE} and \ref{sec:2STT}, Assumption \ref{ass:2} will not be enough to guarantee the consistency of the LATE estimator in the ``strata fixed effects'' (SFE) and ``two sample'' (2S) IV regressions. To analyse the asymptotic properties of these estimators, we replace Assumption \ref{ass:2} with the following condition, which mildly strengthens it.

\begin{assumption}\label{ass:3}
The treatment assignment mechanism satisfies
\begin{enumerate}[{(a)}]
\item $W^{(n)}\perp A^{(n)}~|~S^{(n)}$,
\item $\{( \sqrt{n}( n_{A}( s) /n( s) -\pi _{A}( s) ) : s\in \mathcal{S} )|S^{(n)}\}~\overset{d}{\to}~N( \mathbf{0},\Sigma _{A}) $ w.p.a.1, and, for some $\tau(s) \in [0,1]$, $$\Sigma _{A}~\equiv~ diag( (\tau ( s) \pi _{A}( s)( 1-\pi _{A}( s) ) /p( s):s\in \mathcal{S})) .$$
\item $\pi _{A}( s)~=~\pi _{A} \in (0,1)$ for all $s\in \mathcal{S}$.
\end{enumerate}
\end{assumption}

Note that Assumption \ref{ass:2}(a) and Assumption \ref{ass:3}(a) coincide. Assumption \ref{ass:3}(b) strengthens the convergence Assumption \ref{ass:2}(b), and it requires the fraction of units assigned to the treatment in the stratum $s$ to be asymptotically normal, conditional on the vector of strata $S^{(n)}$. For each stratum $s \in \mathcal{S}$, the parameter $\tau(s) \in [0,1]$ determines the amount of dispersion that the CAR mechanism allows on the fraction of units assigned to the treatment in that stratum. A lower value of $\tau(s)$ implies that the CAR mechanism imposes a higher degree of ``balance'' or ``control'' of the treatment assignment proportion relative to its desired target value.\footnote{Assumption \ref{ass:3}(b) is slightly weaker than \citet[Assumption 4.1(c)]{bugni/canay/shaikh:2019} in that we require the condition holding w.p.a.1 instead of a.s. We establish that the w.p.a.1 version of the assumption is sufficient to establish all of our formal results.} Finally, Assumption \ref{ass:3}(c) imposes that the target value for the treatment assignment does not vary by strata. As we show in Sections \ref{sec:SFE} and \ref{sec:2STT}, this condition is key to the consistency of the LATE estimators produced by the SFE and 2SR regressions.

As explained by \cite{bugni/canay/shaikh:2018,bugni/canay/shaikh:2019} and \citet[Sections 3.10 and 3.11]{rosenberger/lachin:2016}, Assumptions \ref{ass:2} and \ref{ass:3} are satisfied by a wide array of CAR schemes. We briefly consider three popular schemes that can easily be seen to satisfy this assumption.

\begin{example}[Simple Random Sampling (SRS)]\label{ex:SRS}
This refers to a treatment assignment mechanism in which $A^{(n)}$ satisfy
\begin{equation}
 P(A^{(n)}=(a_i: i=1,\dots,n)|S^{(n)}=(s_i: i=1,\dots,n),W^{(n)})~=~\prod_{i=1}^{n} \pi_A(s_i)^{a_i} (1-\pi_A(s_i))^{1-a_i}.
 \label{eq:SRS}
\end{equation}
In other words, SRS assigns each participant in the stratum $s$ to treatment with probability $\pi_A(s)$ and to control with probability $(1-\pi_A(s))$, independent of anything else in sample. 

Note that Assumption \ref{ass:3}(a) follows immediately from \eqref{eq:SRS}. Also, by combining \eqref{eq:SRS}, Assumption \ref{ass:1}, and the Central Limit Theorem (CLT), it is possible to show Assumption \ref{ass:3}(b) holds with $\tau(s) =1$ for all $s \in \mathcal{S}$. In terms of the range of values of $\tau(s)$ allowed by Assumption \ref{ass:3}(b), SRS imposes the least amount of ``balance'' of the treatment assignment proportion relative to its desired target value. Finally, Assumption \ref{ass:3}(c) can be satisfied by setting $\pi_A(s)$ to be constant across strata.
\end{example}

\begin{example}[Stratified Block Randomization (SBR)]\label{ex:SBR}
This is sometimes also referred to as block randomization or permuted blocks within strata. In SBR, the assignments across strata are independent, and independently of the rest of the information in the sample. Within every stratum $s$, SBR assigns exactly $\lfloor n(s)\pi_A(s)\rfloor$ of the $n(s)$ participants in stratum $s$ to treatment and the remaining $n(s) -\lfloor n(s)\pi_A(s)\rfloor $ to control, where all possible $$\binom{n(s)}{\lfloor n(s)\pi_A(s)\rfloor}$$ assignments are equally likely.

As explained by \cite{bugni/canay/shaikh:2018,bugni/canay/shaikh:2019}, this mechanism satisfies Assumptions \ref{ass:3}(a)-(b) with $\tau(s)=0$ for all $s\in \mathcal{S}$. In terms of the range of values of $\tau(s)$ allowed by Assumption \ref{ass:3}(b), SBR imposes the most amount of ``balance'' of the treatment assignment proportion relative to its desired target value. Finally, Assumption \ref{ass:3}(c) can be satisfied by setting $\pi_A(s)$ to be constant across strata.
\end{example}

\begin{example}[Minimization methods] \label{ex:minimization}
Another popular class of treatment assignment mechanisms are the so-called minimization methods. These were originally proposed by \cite{pocock/simon:1975}, and extended and further investigated by \cite{hu/hu:2012}. In these methods, the treatment is assigned recursively for $k=1,\dots,n$ according to
\begin{equation}
P(A_{k}=1 \mid S^{(k)}, A^{(k-1)})~=~\left\{\begin{array}{ll}
\pi_A & \text { if }~ \operatorname{Imb}_{k}(S^{(k)}, A^{(k-1)})=0 \\
\lambda & \text { if } ~\operatorname{Imb}_{k}(S^{(k)}, A^{(k-1)})<0 \\
1-\lambda & \text { if }~ \operatorname{Imb}_{k}(S^{(k)}, A^{(k-1)})>0
\end{array}\right.,    
\label{eq:minimization}
\end{equation}
where $A^{(0)}= \emptyset$, $\lambda \in (\pi_A,1]$, and $\operatorname{Imb}_{k}(S^{(k)}, A^{(k-1)})$ is a weighted measure of imbalance relative to the desired treatment assignment distribution. 

We follow \citet[Page 1797]{hu/hu:2012} and specify $\operatorname{Imb}_{k}(S^{(k)}, A^{(k-1)})$ as the weighted sum of three sources of imbalance: (i) overall imbalance, with associated weight $w_o$, (ii) imbalance within the marginal distribution of the $L$ relevant covariates, each with associated weights $(w_{m,\ell} :\ell =1,\dots,L)$, and (iii) within strata imbalance, with associated weight $w_{s}$. As explained in \citet[Remark 2.1]{hu/hu:2012}, their minimization method includes \cite{pocock/simon:1975} as a special case when we set $w_o=w_{s}=0$. \citet[Theorem 3.2]{hu/hu:2012} provide conditions on $(w_o, w_s, (w_{m,\ell} :\ell =1,\dots,L)')$ under which the strata-specific imbalance is a positive recurrent Markov chain, implying that
they converge to zero at a fast rate. As a corollary, under these conditions, Assumption \ref{ass:3} is satisfied with $\tau(s)=0$ for all $s\in \mathcal{S}$.
\end{example}

\section{``Fully saturated'' (SAT) IV regression}\label{sec:SAT}

In this section, we study the asymptotic properties of an IV estimator of the LATE based on a linear regression model of the outcome of interest on the full set of indicators for all strata and their interaction with the treatment decision, where the latter is instrumented with the treatment assignment. Following the nomenclature in \cite{bugni/canay/shaikh:2019}, we refer to this as the SAT IV regression. We show in this section that this SAT IV regression can consistently estimate the LATE for each stratum, and we derive their joint asymptotic distribution. These estimators can then be combined to produce a consistent estimator of the LATE. We show that this estimator is asymptotically normal and we characterize its asymptotic variance in terms of the primitives parameters of the RCT. We also show that the coefficients and residuals of the SAT IV regression can be used to consistently estimate these primitive parameters, which allows us to propose a consistent estimator of the standard errors of the LATE estimator. All of this allows us to propose hypothesis tests for the LATE that are asymptotically exact, i.e., their limiting rejection probability under the null hypothesis is equal to the nominal level.

In terms of our notation, the SAT IV regression is the result of regressing $Y_{i}$ on $(1[S_{i}=s]:s\in \mathcal{S})$ and $(D_i 1[S_{i}=s]:s\in \mathcal{S})$. Since the treatment decision $D_{i}$ is endogenously decided by the RCT participant, we instrument it with the exogenous treatment assignment $A_i$. To define these IV estimators precisely, set
\begin{align*}
 \mathbf{Y}_{n}~&\equiv~( Y_{i}:i=1,\dots, n),\\
 \mathbf{X}_{n}^{\text{sat}}~&\equiv~( ( (1[S_{i}=s]:s\in \mathcal{S})',(D_{i} 1[S_{i}=s]:s\in \mathcal{S})') :i=1,\dots, n)',\\
 \mathbf{Z}_{n}^{\text{sat}}~&\equiv~( ( (1[S_{i}=s]:s\in \mathcal{S})',(A_{i} 1[S_{i}=s]:s\in \mathcal{S})') :i=1,\dots, n)'.
\end{align*}
The IV estimators of the coefficients in SAT regression are
\begin{align}
 ( (\hat{\gamma}_{\text{sat}}(s):s\in \mathcal{S})',(\hat{\beta}_{\text{sat}}(s):s\in \mathcal{S})')' ~\equiv~( {\mathbf{Z}_{n}^{\text{sat}}}'\mathbf{X}_{n}^{\text{sat}}) ^{-1}( {\mathbf{Z}_{n}^{\text{sat}}}'\mathbf{Y}
_{n}),
\label{eq:IV_SAT_estimators}
\end{align}
where $\hat{\gamma}_{\text{sat}}(s)$ corresponds to the IV estimator of the coefficient on $1[S_{i}=s]$ and $\hat{\beta}_{\text{sat}}(s)$ corresponds to the IV estimator of the coefficient on $D_i 1[S_{i}=s]$.\footnote{In principle, it is possible that ${\mathbf{Z}_{n}^{\text{sat}}}'\mathbf{X}_{n}^{\text{sat}}$ is singular. In any case, we show in Lemma \ref{lem:MatrixSAT} that ${\mathbf{Z}_{n}^{\text{sat}}}'\mathbf{X}_{n}^{\text{sat}}/n$ converges in probability to a non-singular matrix. Thus, the singularity is only a small sample problem and does not affect our asymptotic analysis. We can avoid this problem by using $M^{-1}$ to denote any generalized inverse of $M$.}

Under Assumptions \ref{ass:1} and \ref{ass:2}, Theorem \ref{thm:plim_SAT} in the appendix shows that, for each stratum $s \in \mathcal{S}$,
\begin{align}
\hat{\gamma}_{\mathrm{sat}}(s) &~\overset{p}{\to }~\gamma ( s)~\equiv~ \left[ 
\begin{array}{c}
\pi _{D( 1) }( s) E[ Y( 0) |C,S=s] -\pi _{D( 0) }( s) E[ Y( 1) |C,S=s ] + \\
\pi _{D( 0) }( s) E[ Y( 1) |AT,S=s ] +( 1-\pi _{D( 1) }( s) ) E[ Y( 0) |NT,S=s]
\end{array}
\right]\notag\\
\hat{\beta}_{\mathrm{sat}}(s) &~\overset{p}{\to }~\beta ( s) ~\equiv~E[ Y( 1)-Y( 0) |C,S=s].
\label{eq:SAT_limit}
\end{align}
This last equation in \eqref{eq:SAT_limit} reveals that $\hat{\beta}_{\mathrm{sat}}$ is a consistent estimator of the vector of strata-specific LATE. To define a consistent estimator of the LATE based on these, all we then need is a consistent estimator of the probability a participant belongs to each strata conditional on being a complier, i.e., $P( S=s|C)$ for $s \in \mathcal{S}$. To this end, for every $s \in \mathcal{S}$, let 
\begin{align}
\hat{P}(S=s,C)& ~\equiv~ \frac{n(s)}{n}\left(\frac{n_{AD}(s)}{n_{A}(s)}-\frac{ n_{D}(s)-n_{AD}(s)}{n(s)-n_{A}(s)}\right) \notag \\
\hat{P}(C)& ~\equiv~ \sum_{s\in \mathcal{S}}\frac{n(s)}{n}\left(\frac{n_{AD}(s)}{ n_{A}(s)}-\frac{n_{D}(s)-n_{AD}(s)}{n(s)-n_{A}(s)}\right)\notag \\
\hat{P}( S=s|C)& ~\equiv~\frac{\hat{P}(S=s,C)}{\hat{P}(C)}~=~\frac{\frac{n(s)}{n}( \frac{n_{AD}(s)}{ n_{A}(s)}-\frac{n_{D}(s)-n_{AD}(s)}{n(s)-n_{A}(s)})}{\sum_{\tilde{s}\in \mathcal{S}}\frac{n(\tilde{s})}{n}( \frac{n_{AD}(\tilde{s})}{n_{A}(\tilde{s})}-\frac{n_{D}(\tilde{s})-n_{AD}(\tilde{s})}{n(\tilde{s})-n_{A}(\tilde{s})})},\label{eq:P_C}
\end{align}
where
\begin{align*}
n_{AD}( s) &~\equiv~ \sum_{i=1}^{n}1[ A_{i}=1,D_{i}=1,S_{i}=s ] \\
n_{D}( s) &~\equiv~ \sum_{i=1}^{n}1[ D_{i}=1,S_{i}=s] .
\end{align*}
Under Assumptions \ref{ass:1} and \ref{ass:2}, Theorem \ref{thm:plim_SAT} also shows that $\hat{P}( S=s|C)$ is a consistent estimator of $P( S=s|C)$ for every $s \in \mathcal{S}$. It is then natural to propose the following estimator of the LATE:
\begin{align}
 \hat{\beta}_{\text{sat}} ~\equiv~\sum_{s\in \mathcal{S}} \hat{P}(S=s|C) \hat{\beta}_{\text{sat}}(s).
 \label{eq:beta_SAT}
\end{align}
It follows from our previous discussion, the continuous mapping theorem, and the law of iterated expectations that $\hat{\beta}_{\text{sat}}$ is a consistent estimator of the LATE. The following theorem confirms this result, and also characterizes its asymptotic distribution in terms of primitive parameters of the RCT.

%%%%%%%% DIVIDER %%%%%%%%%%%%
\begin{theorem}[SAT main result]\label{thm:AsyDist_SAT} 
Suppose that Assumptions \ref{ass:1} and \ref{ass:2} hold. Then,
\begin{align*}
 \sqrt{n}(\hat{\beta}_{\mathrm{sat}}-\beta )~~\overset{d}{\to }~~N(0,V_{\mathrm{sat}}),
\end{align*}
where $\beta \equiv E[Y(1)-Y(0)|C]$ and $V_{\mathrm{sat}} \equiv V_{{Y} ,1}^{\mathrm{sat}}+V_{{Y},0}^{\mathrm{sat}}+V_{D,1}^{\mathrm{sat}}+V_{D,0}^{\mathrm{sat}}+V_{H}^{\mathrm{sat}}$ with
\begin{align}
V_{{Y} ,1}^{\mathrm{sat}} &\equiv\tfrac{1}{P(C)^{2}}\sum_{s\in \mathcal{S}}\frac{ p( s) }{\pi _{A}(s)}\left[
\begin{array}{c}
 V[ Y( 1) |AT,S=s ]\pi _{D( 0) }( s) +V[ Y( 0) |NT,S=s]( 1-\pi _{D( 1) }( s) ) \\
+ V[ Y( 1) |C,S=s] ( \pi _{D( 1) }( s) -\pi _{D( 0) }( s) )\\
+(E[Y(1)|C,S=s]-E[Y(1)|AT,S=s])^{2}\times \\
\pi _{D( 0) }( s) ( \pi _{D( 1) }( s) -\pi _{D( 0) }( s) ) /\pi _{D( 1) }( s)
\end{array}
\right] \notag \\
V_{{Y} ,0}^{\mathrm{sat}} &\equiv\tfrac{1}{P(C)^{2}}\sum_{s\in \mathcal{S}}\frac{ p( s) }{(1-\pi _{A}(s))}\left[
\begin{array}{c}
 V[ Y( 1) |AT,S=s ] \pi _{D( 0) }( s)+ V[ Y( 0) |NT,S=s]( 1-\pi _{D( 1) }( s) ) \\
+ V[ Y( 0) |C,S=s]( \pi _{D( 1) }( s) -\pi _{D( 0) }( s) ) \\
+(E[Y(0)|C,S=s]-E[Y(0)|NT,S=s])^{2}\times \\
( 1-\pi _{D( 1) }( s) ) ( \pi _{D( 1) }( s) -\pi _{D( 0) }( s) ) /( 1-\pi _{D( 0) }( s) )
\end{array}
\right] \notag \\
V_{D ,1}^{\mathrm{sat}} &\equiv\tfrac{1}{P(C)^{2}}\sum_{s\in \mathcal{S}}\frac{p( s) ( 1-\pi _{D( 1) }( s) ) }{\pi _{A}( s) \pi _{D(1)}(s)}\left[
\begin{array}{c}
-\pi _{D(0)}(s)(E[Y(1)|C,S=s]-E[Y(1)|AT,S=s]) \\ 
+\pi _{D(1)}(s)(E[Y(0)|C,S=s]-E[Y(0)|NT,S=s]) \\ 
+\pi _{D(1)}(s)( E[Y(1)-Y(0)|C,S=s]-\beta ) 
\end{array}
\right] ^{2} \notag\\
V_{D ,0}^{\mathrm{sat}} &\equiv\tfrac{1}{P(C)^{2}}\sum_{s\in \mathcal{S}}\frac{p( s) \pi _{D( 0) }( s) }{ (1-\pi _{A}( s))( 1-\pi _{D(0)}(s)) }\left[
\begin{array}{c}
-( 1-\pi _{D(0)}(s)) ( E[Y(1)|C,S=s]-E[Y(1)|AT,S=s]) \\
+( 1-\pi _{D(1)}(s)) ( E[Y(0)|C,S=s]-E[Y(0)|NT,S=s]) \\
+( 1-\pi _{D(0)}(s)) ( E[Y(1)-Y(0)|C,S=s]-\beta )
\end{array}
\right] ^{2}\notag \\
V_{H}^{\mathrm{sat}} &\equiv \tfrac{1}{P(C)^{2}}\sum_{s\in \mathcal{S}}p(s)(\pi _{D(1)}(s)-\pi _{D(0)}(s))^{2}(E[Y(1)-Y(0)|C,S=s]-\beta )^{2}\notag\\
P(C) &\equiv \sum_{s\in \mathcal{S}}p(s)(\pi _{D(1)}(s)-\pi _{D(0)}(s))>0.\label{eq:AsyDist_SAT_Avar}
\end{align}
\end{theorem}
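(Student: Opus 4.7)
Because the SAT regression's design matrix is block-diagonal across strata, the stratum-specific IV coefficient reduces to the within-stratum Wald ratio
$$\hat{\beta}_{\mathrm{sat}}(s) = \frac{\bar{Y}_{1}(s) - \bar{Y}_{0}(s)}{\bar{D}_{1}(s) - \bar{D}_{0}(s)},$$
where $\bar{Y}_a(s)$ and $\bar{D}_a(s)$ are sample means of $Y_i$ and $D_i$ over units with $S_i = s$ and $A_i = a$. Inserting this together with the expression for $\hat{P}(S=s|C)$ from \eqref{eq:P_C} into \eqref{eq:beta_SAT} collapses the estimator to the compact form $\hat{\beta}_{\mathrm{sat}} = \widehat{\mathrm{ITT}}/\hat{P}(C)$, where $\widehat{\mathrm{ITT}} \equiv \sum_{s} (n(s)/n)(\bar{Y}_{1}(s) - \bar{Y}_{0}(s))$. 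My plan is therefore to derive the joint asymptotic distribution of $(\widehat{\mathrm{ITT}}, \hat{P}(C))$ and apply the delta method.

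\textbf{Linearization.} Using $\beta = \mathrm{ITT}/P(C)$, the delta method gives
$$\sqrt{n}(\hat{\beta}_{\mathrm{sat}} - \beta) = \frac{1}{P(C)} \sqrt{n}\bigl[(\widehat{\mathrm{ITT}} - \mathrm{ITT}) - \beta(\hat{P}(C) - P(C))\bigr] + o_{p}(1).$$
Using the identities $E[Y|A=1,S=s] - E[Y|A=0,S=s] = \beta(s) P(C|S=s)$ and $E[D|A=1,S=s] - E[D|A=0,S=s] = P(C|S=s)$, I would split this into three pieces: a stratum-composition piece $\sum_{s} \sqrt{n}(n(s)/n - p(s)) P(C|S=s)(\beta(s)-\beta)/P(C)$, and, for each $a \in \{0,1\}$, a within-cell piece $\sum_{s} (\pm p(s)/P(C)) \sqrt{n}[\bar{Y}_{a}(s) - \beta \bar{D}_{a}(s) - (E[Y|A=a,S=s] - \beta E[D|A=a,S=s])]$. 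I would cite Lemma \ref{lem:AsyDist} (the authors' extension of \citet[Lemma B.2]{bugni/canay/shaikh:2018} to imperfect compliance) to conclude that, conditional on $S^{(n)}$, the within-cell pieces are asymptotically normal with cell variance $V[Y - \beta D|A=a, S=s]/(p(s)\pi_{A}(s)^{a}(1-\pi_{A}(s))^{1-a})$, jointly independent across $(s,a)$, and independent of the stratum-composition piece (which is a function of $S^{(n)}$ alone).

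\textbf{Variance assembly.} The composition piece involves a multinomial covariance, but it collapses neatly because $\sum_{s} p(s) P(C|S=s)(\beta(s) - \beta) = \mathrm{ITT} - \beta P(C) = 0$, so the off-diagonal contributions vanish and the diagonal produces exactly $V_{H}^{\mathrm{sat}}$. The remaining work is to decompose $V[Y - \beta D|A=a,S=s]$ via the law of total variance, conditioning further on participant type. Under $A_i = 1$, for instance, $(Y_i, D_i) = (Y_i(1), 1)$ for AT and C types and $(Y_i(0), 0)$ for NT, with type probabilities given by \eqref{eq:pi_defns}; the within-type conditional variances yield the first three lines of $V_{Y,1}^{\mathrm{sat}}$, while rewriting $\beta = \beta(s) - (\beta(s) - \beta)$ in the between-type variance of $E[Y - \beta D|\mathrm{type}, A=1, S=s]$ regroups the cross terms into the squared bracket of $V_{D,1}^{\mathrm{sat}}$ and the final line of $V_{Y,1}^{\mathrm{sat}}$. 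The $a = 0$ case is symmetric.

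\textbf{Main obstacle.} The principal difficulty is the algebraic bookkeeping in the last step: the theorem's five-component form is dictated by how the law of total variance by type, applied to $Y - \beta D$, repackages within-cell $Y$--$D$ cross-covariances into the squared brackets of $V_{D,a}^{\mathrm{sat}}$ and the between-type lines of $V_{Y,a}^{\mathrm{sat}}$, and one must verify that every cross term lands in exactly one of the stated summands. A secondary subtlety is that Assumption \ref{ass:2} pins down only the probability limit of $n_{A}(s)/n(s)$; this suffices here because the SAT estimator depends on that quantity only through cell-size denominators, so Slutsky handles its fluctuations — a contrast with the SFE and 2SR estimators that motivates strengthening to Assumption \ref{ass:3} in the later sections.
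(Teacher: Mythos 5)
Your proposal is correct, and it takes a genuinely different route from the paper's. The paper keeps the estimator in the form $\sum_{s}\hat{P}(S=s|C)\hat{\beta}_{\mathrm{sat}}(s)$ and linearizes the two ingredients separately: Theorem \ref{thm:Expression} represents $\sqrt{n}(\hat{\beta}_{\mathrm{sat}}(s)-\beta(s))$ in terms of the components $R_{n,1},R_{n,2}$ of the fundamental vector in Lemma \ref{lem:AsyDist}, a separate expansion handles $\sqrt{n}(\hat{P}(S=s|C)-P(S=s|C))$ via $R_{n,2},R_{n,4}$, and the variance is read off from the block-diagonal covariance of Lemma \ref{lem:AsyDist} together with the moment translations of Lemma \ref{lem:MeanTraslation}. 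You instead note that the estimated weights exactly cancel the Wald denominators, collapsing the estimator to $\widehat{\mathrm{ITT}}/\hat{P}(C)$, linearize that single ratio, and recover the five-component variance by a law-of-total-variance decomposition of $V[Y-\beta D\,|\,A=a,S=s]$ by compliance type. I checked the key algebra: writing $M(s)$ for the bracket in $V_{D,1}^{\mathrm{sat}}$, one has $M(s)=\pi_{D(1)}(s)\bigl(E[Y(1)|D(1)=1,S=s]-E[Y(0)|NT,S=s]-\beta\bigr)$, so conditioning on $D$ within the $(A=1,S=s)$ cell gives $V[Y-\beta D|A=1,S=s]=\pi_{D(1)}(s)\sigma^{2}(1,1,s)+(1-\pi_{D(1)}(s))\sigma^{2}(0,1,s)+\tfrac{1-\pi_{D(1)}(s)}{\pi_{D(1)}(s)}M(s)^{2}$, which is exactly the sum of the brackets in $V_{Y,1}^{\mathrm{sat}}$ and $V_{D,1}^{\mathrm{sat}}$ (the same identity the paper verifies, in reverse, in the proof of Theorem \ref{thm:SAT_se}); the $a=0$ case and the multinomial computation producing $V_{H}^{\mathrm{sat}}$ from $\sum_{s}p(s)P(C|S=s)(\beta(s)-\beta)=0$ are likewise correct. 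Both routes rest on the same technical core---the conditional i.i.d. structure and partial-sum CLT established in Lemma \ref{lem:AsyDist}, whose relevant parts ($R_{n,1},R_{n,2},R_{n,4}$) indeed hold under Assumptions \ref{ass:1}--\ref{ass:2} alone---and you correctly identify this as the load-bearing step; the only imprecision is that the conditioning that makes the cell memberships deterministic is on $(S^{(n)},A^{(n)})$, not on $S^{(n)}$ alone. What your route buys is brevity and transparency (the $\widehat{\mathrm{ITT}}/\hat{P}(C)$ identity and the clean emergence of $V_{H}^{\mathrm{sat}}$ from the composition variance); what the paper's route buys is reusability, since the strata-specific representation and limits (Theorems \ref{thm:Expression} and \ref{thm:CondLATE}) are needed again for the variance estimator in Theorem \ref{thm:SAT_se} and for the SFE and 2S analyses.
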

%%%%%%%% DIVIDER %%%%%%%%%%%%

Several remarks about Theorem \ref{thm:AsyDist_SAT} are in order. First, we note that $\hat{\beta}_{\mathrm{sat}}$ is related to the IV estimator $\hat{\beta}_3$ considered by \cite{ansel/hong/li:2018}. In fact, these two estimators coincide if one specifies their covariates as a full vector of strata dummies. By specifying the set of covariates in our regression, we are able to obtain a closed-form expression of the asymptotic variance of $\hat{\beta}_{\mathrm{sat}}$ in terms of the primitive parameters of the RCT. This will become useful in Section \ref{sec:optim}, where we consider the problem of choosing the parameters of the RCT to improve efficiency.

Second, we note that \cite{bugni/canay/shaikh:2019} derive the asymptotic distribution of $\hat{\beta}_{\mathrm{sat}}$ under perfect compliance (i.e., $\pi _{D(1)}(s)=1$, $\pi _{D(0)}(s)=0$). This means that we can understand the consequences of imperfect compliance by comparing Theorem \ref{thm:AsyDist_SAT} and \citet[Theorem 3.1]{bugni/canay/shaikh:2019}. First and foremost, we note that imperfect compliance means that the probability limit of $\hat{\beta}_{\mathrm{sat}}$ is no longer the ATE, but rather the LATE. Second, we note that imperfect compliance introduces significant changes to the asymptotic variance of $\hat{\beta}_{\mathrm{sat}}$. Imperfect compliance not only changes the expressions of $V_{{Y}}^{\mathrm{sat}} = V_{{Y} ,1}^{\mathrm{sat}}+V_{{Y},0}^{\mathrm{sat}}$ and $V_{H}^{\mathrm{sat}}$, but it also adds two new terms, $V_{D,1}^{\mathrm{sat}}$ and $V_{D,0}^{\mathrm{sat}}$. All of this implies that the consistent estimator of $V_{\mathrm{sat}}$ proposed in \citet[Theorem 3.3]{bugni/canay/shaikh:2019} no longer applies, and a new one is required. We do this in Theorem \ref{thm:SAT_se}.

Third, it is notable that Assumption \ref{ass:3} is not required to derive Theorem \ref{thm:AsyDist_SAT}. In other words, the details of the CAR mechanism are not relevant to the asymptotic distribution of $\hat{\beta}_{\mathrm{sat}}$. This was pointed out in the case of perfect compliance by \cite{bugni/canay/shaikh:2019}, and Theorem \ref{thm:AsyDist_SAT} reveals that it also extends to the present setup.
 
Fourth, we note that Theorem \ref{thm:AsyDist_SAT} allows for $\pi _{D( 0) }( s)=P(AT|S=s)=0$ or $1-\pi _{D( 1) }( s)=P(NT|S=s)=0$, but this requires a mild abuse of notation. If $\pi _{D( 0) }( s)=P(AT|S=s)=0$, the mean and the variance of $\{Y(1)|AT,S=s\}$ are not properly defined, but we can set
\begin{equation*}
 V[ Y( 1) |AT,S=s ]\pi _{D( 0) }( s)~=~0~~~\text{and}~~~ E[ Y( 1) |AT,S=s ]\pi _{D( 0) }( s)~=~0.
\end{equation*}
Similarly, if $1-\pi _{D( 1) }( s)=P(NT|S=s)=0$, the mean and the variance of $\{Y(0)|NT,S=s\}$ are not properly defined, but we can set
\begin{equation*}
 V[ Y(0)|NT,S=s ](1-\pi _{D( 1) }( s))~=~0~~~\text{and}~~~E[ Y(0)|NT,S=s ](1-\pi _{D( 1) }( s))~=~0.
\end{equation*}
In particular, in the special case of perfect compliance (i.e., $\pi _{D(1)}(s)=1$, $\pi _{D(0)}(s)=0$), Theorem \ref{thm:AsyDist_SAT} then holds with 
\begin{align*}
\beta &~=~ E[Y(1)-Y(0)]\\
V_{\mathrm{sat}} &~=~ \sum_{s\in \mathcal{S}}{ p( s) }\left(\frac{V[ Y( 1) |S=s]}{\pi _{A}(s)} + \frac{V[ Y( 0) |S=s]}{(1-\pi _{A}(s))}+(E[Y(1)-Y(0)|S=s]-\beta )^{2}\right),
\end{align*}
which can be shown to coincide with the corresponding result in \citet[Section 5]{bugni/canay/shaikh:2019}.
 
As promised earlier, the next result provides a consistent estimator of $V_{\mathrm{sat}}$.

%%%%%%%% DIVIDER %%%%%%%%%%%%
\begin{theorem}[Estimator of SAT asy.\ variance]\label{thm:SAT_se} 
Suppose that Assumptions \ref{ass:1} and \ref{ass:2} hold. Define the following estimators:
\begin{align}
\hat{V}_{1}^{\mathrm{sat}} & ~\equiv~ \frac{1}{\hat{P}( C) ^{2}} \sum_{s\in \mathcal{S}}
\left( \frac{n(s)}{n_{A}(s)}\right) ^{2} \times \notag\\
&~~~~\left[
\begin{array}{l}
\frac{1}{n}\sum_{i=1}^{n} 1[D_{i}=1,A_{i}=1,S_{i}=s][ \hat{u} _{i}+( 1-\frac{n_{AD}( s) }{n_{A}( s) }) ( \hat{\beta}_{\mathrm{sat}}( s) -\hat{\beta}_{\mathrm{sat}}) ] ^{2} +\\
\frac{1}{n}\sum_{i=1}^{n} 1[D_{i}=0,A_{i}=1,S_{i}=s][ \hat{u}_{i}- \frac{n_{AD}( s) }{n_{A}( s) }( \hat{\beta}_{\mathrm{sat}}( s) -\hat{\beta}_{\mathrm{sat}}) ] ^{2}
\end{array}
\right] \notag\\
\hat{V}_{0}^{\mathrm{sat}} & ~\equiv~ \frac{1}{\hat{P}( C) ^{2}} \sum_{s\in \mathcal{S}}
\left( \frac{n(s)}{n(s)-n_{A}(s)}\right) ^{2}\times \notag\\
&~~~~\left[
\begin{array}{l}
\frac{1}{n}\sum_{i=1}^{n} 1[D_{i}=1,A_{i}=0,S_{i}=s][ \hat{u} _{i}+( 1-\frac{n_{D}( s) -n_{AD}( s) }{n( s) -n_{A}( s) }) ( \hat{\beta}_{\mathrm{sat}}( s) - \hat{\beta}_{\mathrm{sat}}) ] ^{2} +\\
\frac{1}{n}\sum_{i=1}^{n} 1[D_{i}=0,A_{i}=0,S_{i}=s][ \hat{u}_{i}- \frac{n_{D}( s) -n_{AD}( s) }{n( s) -n_{A}( s) }( \hat{\beta}_{\mathrm{sat}}( s) -\hat{\beta}_{\mathrm{sat}}) ]^{2}
\end{array}
\right] \notag\\
\hat{V}_{H}^{\mathrm{sat}} & ~\equiv~ \frac{1}{\hat{P}( C) ^{2}}\sum_{s\in \mathcal{S}}\frac{n(s)}{n}\left( \frac{n_{AD}(s)}{n_{A}(s)}-\frac{ n_{D}(s)-n_{AD}(s)}{n( s) -n_{A}(s)}\right) ^{2}(\hat{\beta}_{\mathrm{sat}}(s)- \hat{\beta}_{\mathrm{sat}})^{2},\label{eq:V_defns}
\end{align}
where $(\hat{u}_{i})_{i=1}^{n}$ are the SAT IV-regression residuals, given by 
\begin{equation}
 \hat{u}_{i}~\equiv~ Y_{i}-\sum_{s\in \mathcal{S}}1[S_{i}=s]\hat{ \gamma}_{\mathrm{sat}}(s)-\sum_{s\in \mathcal{S}}D_{i}1[S_{i}=s]\hat{\beta}_{\mathrm{sat}}(s),
 \label{eq:resid_SAT}
\end{equation}
and $(\hat{\gamma}_{\mathrm{sat}}(s),\hat{\beta}_{\mathrm{sat}}(s))$, $\hat{\beta}_{\mathrm{sat}}$, and $\hat{P}(C)$ are as in \eqref{eq:IV_SAT_estimators}, \eqref{eq:beta_SAT}, and \eqref{eq:P_C}, respectively. Then,
\begin{align}
 \hat{V}_{\mathrm{sat}} ~\equiv~\hat{V}_{1}^{\mathrm{sat}} + \hat{V}_{0}^{\mathrm{sat}} + \hat{V}_{H}^{\mathrm{sat}} ~\overset{p}{\to}~ V_{\mathrm{sat}}. 
\end{align}
\end{theorem}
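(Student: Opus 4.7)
The plan is to show separately that $\hat{V}_H^{\mathrm{sat}}\overset{p}{\to}V_H^{\mathrm{sat}}$, $\hat{V}_1^{\mathrm{sat}}\overset{p}{\to}V_{Y,1}^{\mathrm{sat}}+V_{D,1}^{\mathrm{sat}}$, and $\hat{V}_0^{\mathrm{sat}}\overset{p}{\to}V_{Y,0}^{\mathrm{sat}}+V_{D,0}^{\mathrm{sat}}$, and then conclude by Slutsky, since $V_{\mathrm{sat}}$ is the sum of these five quantities. The preparatory inputs I will use are: (a) the consistency results established earlier in this section, namely $\hat{\beta}_{\mathrm{sat}}(s)\overset{p}{\to}\beta(s)$, $\hat{\gamma}_{\mathrm{sat}}(s)\overset{p}{\to}\gamma(s)$, $\hat{\beta}_{\mathrm{sat}}\overset{p}{\to}\beta$, and $\hat{P}(C)\overset{p}{\to}P(C)>0$; (b) the ratio convergences $n(s)/n\overset{p}{\to}p(s)$, $n_A(s)/n(s)\overset{p}{\to}\pi_A(s)$, $n_{AD}(s)/n_A(s)\overset{p}{\to}\pi_{D(1)}(s)$, and $(n_D(s)-n_{AD}(s))/(n(s)-n_A(s))\overset{p}{\to}\pi_{D(0)}(s)$, which follow from Assumptions \ref{ass:1}--\ref{ass:2}; and (c) a WLLN for cell sums asserting that, for sufficiently integrable $g$,
\[
 \frac{1}{n}\sum_{i=1}^{n}\mathbb{1}\{A_i=a,D_i=d,S_i=s\}\,g(Y_i) ~\overset{p}{\to}~ p(s)\pi_A(s)^a(1-\pi_A(s))^{1-a}P(D_i(a)=d\mid S_i=s)\,E[g(Y_i)\mid A_i=a,D_i=d,S_i=s],
\]
which is a routine extension of the adapted Lemma B.2 of \cite{bugni/canay/shaikh:2018} referenced in the Introduction, using $W^{(n)}\perp A^{(n)}\mid S^{(n)}$ and Assumption \ref{ass:1}.

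The convergence $\hat{V}_H^{\mathrm{sat}}\overset{p}{\to}V_H^{\mathrm{sat}}$ is immediate from the continuous mapping theorem applied to the finite sum of products of the quantities in (a) and (b). For $\hat{V}_1^{\mathrm{sat}}$ (the case of $\hat{V}_0^{\mathrm{sat}}$ being symmetric upon swapping $A=1$ with $A=0$, AT with NT, and $\pi_{D(1)}(s)$ with $\pi_{D(0)}(s)$), substitute the explicit form of $\hat{u}_i$ on each cell: $\hat{u}_i=Y_i-\hat{\gamma}_{\mathrm{sat}}(s)-\hat{\beta}_{\mathrm{sat}}(s)$ on $\{A_i=1,D_i=1,S_i=s\}$, where $Y_i=Y_i(1)$ and the unit is an always taker or a complier; and $\hat{u}_i=Y_i-\hat{\gamma}_{\mathrm{sat}}(s)$ on $\{A_i=1,D_i=0,S_i=s\}$, where $Y_i=Y_i(0)$ and the unit is a never taker. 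Expanding each squared bracket as the square of the ``true'' residual evaluated at population parameters, plus twice its cross-product with the correction $[\mathbb{1}\{D_i=1\}-n_{AD}(s)/n_A(s)](\hat{\beta}_{\mathrm{sat}}(s)-\hat{\beta}_{\mathrm{sat}})$, plus the square of the correction, the cell-level WLLN in (c) delivers closed-form limits for each term. Multiplying through by $(n(s)/n_A(s))^2\overset{p}{\to}1/\pi_A(s)^2$ and summing over the finite set $\mathcal{S}$ yields the claimed limit after dividing by $\hat{P}(C)^2\overset{p}{\to}P(C)^2$.

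The main obstacle is the algebraic bookkeeping that matches the resulting limit to $V_{Y,1}^{\mathrm{sat}}+V_{D,1}^{\mathrm{sat}}$. The correction $\mathbb{1}\{D_i=1\}-n_{AD}(s)/n_A(s)$ is the within-cell recentering of the treatment indicator in the subpopulation $\{A_i=1,S_i=s\}$, so multiplied by $(\hat{\beta}_{\mathrm{sat}}(s)-\hat{\beta}_{\mathrm{sat}})$ it is precisely the sample analog of the contribution of $\hat{\beta}_{\mathrm{sat}}(s)$ to the delta-method linearization of $\hat{\beta}_{\mathrm{sat}}=\sum_{s}\hat{P}(S=s\mid C)\hat{\beta}_{\mathrm{sat}}(s)$ around $\beta$. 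Using the IV orthogonality identities $\gamma(s)+\beta(s)\pi_{D(1)}(s)=E[Y\mid A=1,S=s]$ and $\gamma(s)+\beta(s)\pi_{D(0)}(s)=E[Y\mid A=0,S=s]$, together with the variance-plus-mean-squared decomposition of the conditional second moment within each type-by-stratum cell, the squared ``true'' residuals aggregate into $V_{Y,1}^{\mathrm{sat}}$, while the cross-product and square-of-correction terms aggregate into $V_{D,1}^{\mathrm{sat}}$. This step is tedious but routine and essentially inverts the calculation underlying the proof of Theorem \ref{thm:AsyDist_SAT}.
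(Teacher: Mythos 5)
Your proposal matches the paper's own proof in both structure and substance: the paper likewise splits $V_{\mathrm{sat}}$ into the three pieces $\hat{V}_{1}^{\mathrm{sat}}\overset{p}{\to}V_{Y,1}^{\mathrm{sat}}+V_{D,1}^{\mathrm{sat}}$, $\hat{V}_{0}^{\mathrm{sat}}\overset{p}{\to}V_{Y,0}^{\mathrm{sat}}+V_{D,0}^{\mathrm{sat}}$, $\hat{V}_{H}^{\mathrm{sat}}\overset{p}{\to}V_{H}^{\mathrm{sat}}$, replaces $\hat{u}_i$ by the population residual using the consistency results of Theorem \ref{thm:plim_SAT}, invokes exactly the cell-level law of large numbers you describe in (c) (established in the paper as Lemmas \ref{lem:AsyDist2} and \ref{lem:U_asymptotics} via the coupling adaptation of Lemma B.2 of \cite{bugni/canay/shaikh:2018}), and finishes with the translation of $\sigma^2(d,a,s)$ and $\mu(d,a,s)$ into type-conditional moments (Lemma \ref{lem:MeanTraslation}). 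The argument is correct and takes essentially the same route.
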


We can propose hypothesis tests for the LATE by combining Theorems \ref{thm:AsyDist_SAT} and \ref{thm:SAT_se}. For completeness, this is recorded in the next result.

\begin{theorem}[SAT test]\label{thm:SAT_test}
Suppose that Assumptions \ref{ass:1} and \ref{ass:2} hold, and that $V_{\mathrm{sat}}>0$. For the problem of testing \eqref{eq:HT} at level $\alpha \in (0,1)$, Consider the following hypothesis testing procedure
\begin{align*}
\phi_{n}^{\mathrm{sat}}(X^{(n)})~ \equiv~1[~|{\sqrt{n}(\hat{\beta}_{\mathrm{sat}}-\beta_0 )}|~>~\sqrt{\hat{V}_{\mathrm{sat}}}z_{1-\alpha/2}~],
\end{align*}
where $z_{1-\alpha/2}$ is the $(1-\alpha/2)$-quantile of $N(0,1)$. Then,
\begin{align*}
\lim_{n\to \infty}E[\phi_{n}^{\mathrm{sat}}(X^{(n)})]~ =~\alpha
\end{align*}
whenever $H_0$ in \eqref{eq:HT} holds, i.e., $\beta = \beta_0$.
\end{theorem}

\section{``Strata fixed effects'' (SFE) IV regression}\label{sec:SFE}

In this section, we consider the asymptotic properties of an IV estimator of the LATE based on a linear regression model of the outcome of interest with a full set of indicators for all strata and the treatment decision, where the latter is instrumented with the treatment assignment. Following the nomenclature in \cite{bugni/canay/shaikh:2018,bugni/canay/shaikh:2019}, we refer to this as the SFE IV regression. Under certain conditions, we show that this SFE IV regression consistently estimates the LATE. We show that this estimator is asymptotically normal and we characterize its asymptotic variance in terms of the primitives parameters of the RCT. We also propose a consistent estimator of this asymptotic variance by using the results of the SAT IV regression in Section \ref{sec:SAT}. This allows us to propose hypothesis tests for the LATE that are asymptotically exact, i.e., their limiting rejection probability under the null hypothesis is equal to the nominal level.

In terms of our notation, the SFE IV regression is the result of regressing $Y_{i}$ on $(1[S_{i}=s]:s\in \mathcal{S})$ and $D_i$. Since the treatment decision $D_{i}$ is endogenously decided by the RCT participant, we instrument it with the exogenous treatment assignment $A_i$. To define this IV estimator precisely, set
\begin{align*}
 \mathbf{Y}_{n}~&\equiv~(Y_{i}:i=1,\dots, n)^{\prime },\\
 \mathbf{X}_{n}^{\text{sfe}}~&\equiv~(((1[S_{i}=s]:s\in \mathcal{S} )^{\prime },D_{i}):i=1,\dots ,n)^{\prime },\\
 \mathbf{Z}_{n}^{\text{sfe}}~&\equiv~(((1[S_{i}=s]:s\in \mathcal{S} )^{\prime },A_{i}):i=1,\dots ,n)^{\prime }.
\end{align*}
The estimators of the coefficients in IV SFE regression are
\begin{align}
((\hat{\gamma}_{\text{sfe}}(s):s\in \mathcal{S})^{\prime },\hat{\beta}_{ \text{sfe}})'~\equiv ~({\mathbf{Z}_{n}^{\text{sfe}}}^{\prime } \mathbf{X}_{n}^{\text{sfe}})^{-1}({\mathbf{Z}_{n}^{\text{sfe}}}^{\prime } \mathbf{Y}_{n}),
\label{eq:IV_SFE_estimators}
\end{align}
where $\hat{\gamma}_{\text{sfe}}(s)$ corresponds to the IV estimator of the coefficient on $1[S_{i}=s]$ and $\hat{\beta}_{ \text{sfe}}$ corresponds to the IV estimator of the coefficient on $D_i$.

Under Assumptions \ref{ass:1} and \ref{ass:2}, Theorem \ref{thm:plim_SFE} in the appendix shows that
\begin{equation*}
\hat{\beta}_{\mathrm{sfe}}~\overset{p}{\to }~\sum_{s\in \mathcal{S} }\omega (s)E[Y(1)-Y(0)|C,S=s],
\end{equation*}
where $(\omega (s):s\in \mathcal{S})$ are non-negative weights defined by
\begin{equation}
\omega (s)~\equiv~ \frac{\pi _{A}(s)(1-\pi _{A}(s))P(C,S=s)}{\sum_{\tilde{s} \in \mathcal{S}}\pi _{A}(\tilde{s})(1-\pi _{A}(\tilde{s}))P(C,S= \tilde{s})}.
\label{eq:weights}
\end{equation}
These equations show that $\hat{\beta}_{ \text{sfe}}$ is not necessarily a consistent estimator of the LATE under Assumptions \ref{ass:1} and \ref{ass:2}. By inspecting \eqref{eq:weights}, it follows that the consistency of the estimator can be restored provided that the treatment propensity does not vary by strata, i.e., Assumption \ref{ass:3}(c). For this reason, we maintain this condition for the remainder of this section.

The following result reveals that $\hat{ \beta}_{\text{sfe}}$ is a consistent and asymptotically normal estimator of the LATE. The result characterizes the asymptotic distribution of this estimator in terms of primitive parameters of the RCT.

%%%%%%%% DIVIDER %%%%%%%%%%%%
\begin{theorem}[SFE main result] \label{thm:AsyDist_SFE}
Suppose that Assumptions \ref{ass:1} and \ref{ass:3} hold. Then,
\begin{equation*}
\sqrt{n}(\hat{\beta}_{\mathrm{sfe}}-\beta )~~\overset{d}{\to } ~~N(0,V_{\mathrm{sfe}}),
\end{equation*}
where $\beta \equiv E[Y(1)-Y(0)|C]$ and $V_{\mathrm{sfe}}\equiv V_{\mathrm{sat}}+ V_{A}^{\mathrm{sfe}}$ with
\begin{align}
 V_{A}^{\mathrm{sfe}} &~\equiv~ \frac{(1-2\pi _{A})^{2}}{P(C)^{2}\pi _{A}(1-\pi _{A})}\sum_{s\in \mathcal{S}}p(s)\tau (s)(\pi _{D(1)}(s)-\pi _{D(0)}(s))^{2}(E[Y(1)-Y(0)|C,S=s]-\beta )^{2},
 \label{eq:VA}
\end{align}
with $P(C)$ and $V_{\mathrm{sat}}$ as defined in \eqref{eq:AsyDist_SAT_Avar}.
\end{theorem}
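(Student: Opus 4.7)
The plan is to leverage Theorem~\ref{thm:AsyDist_SAT} and express $\hat\beta_{\mathrm{sfe}}$ as $\hat\beta_{\mathrm{sat}}$ plus a remainder capturing the extra variability introduced by the SFE specification. Applying Frisch--Waugh--Lovell to the SFE IV regression (i.e., partialling out the strata dummies from $Y_i$, $D_i$, $A_i$) yields
\[
\hat\beta_{\mathrm{sfe}} \;=\; \frac{\sum_{s\in\mathcal{S}} D(s)\,\hat\beta_{\mathrm{sat}}(s)}{\sum_{s\in\mathcal{S}} D(s)}, \qquad D(s) \;\equiv\; n_{AD}(s) - \frac{n_A(s)\,n_D(s)}{n(s)},
\]
and a short computation gives the identity $D(s)/n = \hat\pi(s)(1-\hat\pi(s))\,\hat P(S=s,C)$, with $\hat\pi(s)=n_A(s)/n(s)$ and $\hat P(S=s,C)$ as in \eqref{eq:P_C}. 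Under Assumption~\ref{ass:3}(c), the implicit weights $\tilde\omega(s)\equiv D(s)/\sum_{s'}D(s')$ converge to $P(S=s|C)$, which are exactly the limiting weights in \eqref{eq:beta_SAT} that define $\hat\beta_{\mathrm{sat}}$; this explains why the two estimators share the same probability limit.

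Next, I would write $\hat\beta_{\mathrm{sfe}}-\hat\beta_{\mathrm{sat}} = \sum_s[\tilde\omega(s)-\hat P(S=s|C)]\,\hat\beta_{\mathrm{sat}}(s)$. Because $\sum_s \tilde\omega(s) = \sum_s \hat P(S=s|C) = 1$, one may subtract $\beta$ from $\hat\beta_{\mathrm{sat}}(s)$ without changing the sum, and the consistency of $\hat\beta_{\mathrm{sat}}(s)$ for $\beta(s)$ then allows replacement by $\beta(s)-\beta$ up to an $o_p(n^{-1/2})$ remainder (the cross term is the product of two $O_p(n^{-1/2})$ factors). A first-order Taylor expansion of $\hat\pi(s)(1-\hat\pi(s))$ around $\pi_A(1-\pi_A)$, together with $\sum_s P(S=s|C)(\beta(s)-\beta)=0$ to cancel the common $\sum_{s'}\delta(s')$ piece, yields
\[
\sqrt n\,(\hat\beta_{\mathrm{sfe}} - \hat\beta_{\mathrm{sat}}) \;=\; \frac{1-2\pi_A}{\pi_A(1-\pi_A)\,P(C)}\sum_{s\in\mathcal{S}} p(s)\bigl(\pi_{D(1)}(s)-\pi_{D(0)}(s)\bigr)\bigl(\beta(s)-\beta\bigr)\,\delta(s) + o_p(1),
\]
with $\delta(s)=\sqrt n(\hat\pi(s)-\pi_A)$. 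Under Assumption~\ref{ass:3}(b), $\{\delta(s):s\in\mathcal{S}\}\mid S^{(n)}\overset{d}{\to} N(\mathbf 0,\Sigma_A)$ w.p.a.1 with $\Sigma_A$ diagonal, so a conditional-to-unconditional argument shows that this remainder converges weakly to $N(0,V_A^{\mathrm{sfe}})$: computing $\sum_s (\text{coefficient})^2 \cdot \tau(s)\pi_A(1-\pi_A)/p(s)$ and collecting constants reproduces the expression in \eqref{eq:VA} exactly.

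The principal obstacle is establishing that $\sqrt n(\hat\beta_{\mathrm{sat}}-\beta)$ and $\sqrt n(\hat\beta_{\mathrm{sfe}}-\hat\beta_{\mathrm{sat}})$ are asymptotically independent, so that their variances add to give $V_{\mathrm{sat}}+V_A^{\mathrm{sfe}}$. I would argue as follows: Theorem~\ref{thm:AsyDist_SAT} was proved under Assumption~\ref{ass:2} alone, with no role for $\tau(s)$ or for the stratum-level fluctuations $\delta(s)$. Hence, conditioning on $S^{(n)}$ and on the counts $\{n_A(s):s\in\mathcal{S}\}$ (which fix the $\delta(s)$), the within-subgroup outcome averages that drive $\hat\beta_{\mathrm{sat}}$ are sampled from $Q$ and yield the same conditional $N(0,V_{\mathrm{sat}})$ limit regardless of the realized $\delta(s)$. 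A joint CLT then produces a bivariate-normal limit with diagonal covariance, and Slutsky delivers the claimed $N(0, V_{\mathrm{sat}}+V_A^{\mathrm{sfe}})$ distribution.
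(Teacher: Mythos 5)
Your proposal is correct in its essentials but reaches the result by a genuinely different route from the paper. The paper proves Theorem \ref{thm:AsyDist_SFE} by brute-force expansion of the IV formula: it inverts ${\mathbf{Z}_{n}^{\mathrm{sfe}}}'\mathbf{X}_{n}^{\mathrm{sfe}}/n$ (Lemma \ref{lem:MatrixSFE}), writes $\sqrt{n}(\hat{\beta}_{\mathrm{sfe}}-\beta)\Lambda_n$ directly in terms of the building blocks $R_{n,1},R_{n,2},R_{n,3},R_{n,4}$ of Lemma \ref{lem:AsyDist}, and arrives at a linear representation (\eqref{eq:SFE_av8}) that differs from the SAT representation (\eqref{eq:deriv6}) only by an extra term in $R_{n,3}$. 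You instead exploit the exact algebraic identity $\hat{\beta}_{\mathrm{sfe}}=\sum_s D(s)\hat{\beta}_{\mathrm{sat}}(s)/\sum_s D(s)$ with $D(s)/n=\hat{\pi}(s)(1-\hat{\pi}(s))\hat{P}(S=s,C)$ (which is indeed correct, since $n(s)n_{AD}(s)-n_A(s)n_D(s)=n(s)D(s)$), and then do a delta-method on the weights, using $\sum_s P(S=s|C)(\beta(s)-\beta)=0$ to kill the common normalization term. This is more conceptual and makes transparent \emph{why} $V_A^{\mathrm{sfe}}$ is driven solely by heterogeneity of $\beta(s)$ and by $(1-2\pi_A)$, and why it vanishes at $\pi_A=1/2$ or $\tau(s)=0$; the paper's route is more mechanical but produces the explicit $R_n$-coefficients needed later for the variance estimators. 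Your coefficient computation for $V_A^{\mathrm{sfe}}$ checks out against \eqref{eq:VA}.

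The one step where your argument is imprecise is the asymptotic independence of $\sqrt{n}(\hat{\beta}_{\mathrm{sat}}-\beta)$ and $\sqrt{n}(\hat{\beta}_{\mathrm{sfe}}-\hat{\beta}_{\mathrm{sat}})$. Your conditional heuristic — that given $S^{(n)}$ and the counts $\{n_A(s)\}$ the SAT estimator has the same conditional $N(0,V_{\mathrm{sat}})$ limit regardless of the realized $\delta(s)$ — is not literally correct: $\hat{\beta}_{\mathrm{sat}}$ is driven not only by within-cell outcome averages ($R_{n,1}$) but also by the compliance-rate fluctuations $R_{n,2}$ and the strata-proportion fluctuations $R_{n,4}$ (see \eqref{eq:deriv6}), and conditioning on $S^{(n)}$ freezes $R_{n,4}$, so the conditional variance omits the $V_H^{\mathrm{sat}}$ contribution. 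What is actually needed, and what the paper supplies in Lemma \ref{lem:AsyDist}, is the joint CLT for $(R_{n,1},R_{n,2},R_{n,3},R_{n,4})$ with block-diagonal asymptotic covariance: your extra term is a functional of $R_{n,3}$ alone, the SAT expansion is a functional of $(R_{n,1},R_{n,2},R_{n,4})$ alone, and independence of the $R_{n,3}$ block from the others is exactly what makes the variances add. Proving that joint block-diagonal limit (in particular the conditioning/permutation argument handling the dependence induced by covariate-adaptive assignment) is the real technical content of the theorem, and your sketch presupposes it rather than establishes it. With Lemma \ref{lem:AsyDist} in hand, however, your decomposition closes the argument cleanly.
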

%%%%%%%% DIVIDER %%%%%%%%%%%%

We now give several remarks about Theorem \ref{thm:AsyDist_SFE}. First, we note that $\hat{\beta}_{\mathrm{sfe}}$ is related to the IV estimator $\hat{\beta}_2$ considered by \cite{ansel/hong/li:2018}. In fact, these two estimators coincide if one specifies their covariates as a full vector of strata dummies. As pointed out in Section \ref{sec:SAT}, specifying the covariates in the regression allows us to obtain a closed-form expression of the asymptotic variance $\hat{\beta}_{\mathrm{sfe}}$ in terms of the primitive parameters of the RCT.

As we have done in Section \ref{sec:SAT} with the SAT IV regression, we can analyze the consequences of imperfect compliance for the SFE IV regression by comparing Theorem \ref{thm:AsyDist_SFE} and \citet[Theorem 4.3]{bugni/canay/shaikh:2018} or \citet[Theorem 4.1]{bugni/canay/shaikh:2019}. First, note that imperfect compliance means that the probability limit of $\hat{\beta}_{\mathrm{sfe}}$ is not the ATE, but rather the LATE. Second, we note that imperfect compliance introduces significant changes to the asymptotic variance of $\hat{\beta}_{\mathrm{sfe}}$. This implies that the consistent estimators of $V_{\mathrm{sfe}}$ proposed in \citet[Section 4.2]{bugni/canay/shaikh:2018} or \citet[Theorem 4.2]{bugni/canay/shaikh:2019} do not apply, and a new one is required. We provide this in Theorem \ref{thm:SFE_se}.

Third, we note that Theorem \ref{thm:AsyDist_SFE} relies on Assumption \ref{ass:3}, which is stronger than Assumption \ref{ass:2} used to derive Theorem \ref{thm:AsyDist_SAT}. First, and as discussed earlier, Assumption \ref{ass:3}(c) is important to guarantee that $\hat{\beta}_{\mathrm{sfe}}$ is a consistent estimator of the LATE. Second, we note that the derivation of the asymptotic distribution of $\hat{\beta}_{\mathrm{sfe}}$ relies on the details about the CAR mechanism provided in Assumption \ref{ass:3}(b). These types of details were not required to derive the asymptotic distribution of $\hat{\beta}_{\mathrm{sat}}$ in Theorem \ref{thm:AsyDist_SAT}. 

Fourth, it is relevant to note that $V_{\mathrm{sfe}}-V_{\mathrm{sat}}=V_{A}^{\mathrm{sfe}} \geq 0$, which reveals that $\hat{\beta}_{\mathrm{sat}}$ is equally or more efficient than $\hat{\beta}_{\mathrm{sfe}}$. In particular, both estimators have the same asymptotic distribution if and only if $V_{A}^{\mathrm{sfe}} =0$. By inspecting \eqref{eq:VA}, this occurs if the RCT is implemented with either $\pi=1/2$ or $\tau(s)=0$ (e.g., by using SBR as described in Example \ref{ex:SBR}).\footnote{These results resemble those obtained by \citet[Corollary 1.2]{lin:2013} in the context of finite population inference, perfect compliance, and assignment using SRS.}

Fifth, we note that Theorem \ref{thm:AsyDist_SFE} allows for $\pi _{D( 0) }( s)=P(AT|S=s)=0$ or $1-\pi _{D( 1) }( s)=P(NT|S=s)=0$, by using the same abuse of notation as in Section \ref{sec:SAT}. 
%If $\pi _{D( 0) }( s)=P(AT|S=s)=0$, the mean and the variance of $\{Y(1)|AT,S=s\}$ are not properly defined, but we can set
%\begin{equation*}
% V[ Y( 1) |AT,S=s ]\pi _{D( 0) }( s)=0~~~\text{and}~~~ E[ Y( 1) |AT,S=s ]\pi _{D( 0) }( s)=0.
%\end{equation*}
%Similarly, if $1-\pi _{D( 1) }( s)=P(NT|S=s)=0$, the mean and the variance of $\{Y(0)|NT,S=s\}$ are not properly defined, but we can set
%\begin{equation*}
% V[ Y(0)|NT,S=s ](1-\pi _{D( 1) }( s))=0~~~\text{and}~~~E[ Y(0)|NT,S=s ](1-\pi _{D( 1) }( s))=0.
%\end{equation*}
In particular, in the special case of perfect compliance (i.e., $\pi _{D(1)}(s)=1$, $\pi _{D(0)}(s)=0$), Theorem \ref{thm:AsyDist_SFE} then holds with 
\begin{align*}
\beta &~=~ E[Y(1)-Y(0)]\\
V_{\mathrm{sfe}} &~=~ \sum_{s\in \mathcal{S}}{ p( s) }\left[\tfrac{V[ Y( 1) |S=s]}{\pi _{A}(s)} + \tfrac{V[ Y( 0) |S=s]}{(1-\pi _{A}(s))}+\left(1+ \tau (s)\tfrac{(1-2\pi _{A})^{2}}{\pi _{A}(1-\pi _{A})} \right)(E[Y(1)-Y(0)|S=s]-\beta )^{2}
\right],
\end{align*}
which coincides with the corresponding results in \citet[Section 4.2]{bugni/canay/shaikh:2018} and \citet[Section 5]{bugni/canay/shaikh:2019}.

As promised earlier, the next result provides a consistent estimator of $V_{\mathrm{sfe}}$.

%%%%%%%% DIVIDER %%%%%%%%%%%%
\begin{theorem}[Estimator of SFE asy.\ variance] \label{thm:SFE_se}
Assume Assumptions \ref{ass:1} and \ref{ass:3}. Define the following estimator:
\begin{equation}
\hat{V}_{A}^{\mathrm{sfe}}~\equiv~ \frac{1}{\hat{P}(C)^{2}}\sum_{s\in \mathcal{ S}}\frac{n(s)}{n}\tau (s)\frac{(1-2\frac{n_{A}(s)}{n(s)})^{2}}{\frac{n_{A}(s) }{n(s)}(1-\frac{n_{A}(s)}{n(s)})}\left[\frac{n_{AD}(s)}{n_{A}(s)}-\frac{ (n_{D}(s)-n_{AD}(s))}{(n(s)-n_{A}(s))}\right]^{2}(\hat{\beta}_{\mathrm{sat}}(s)- \hat{\beta}_{\mathrm{sat}})^{2},
\end{equation}
where $(\hat{\beta}_{\mathrm{sat}}(s):s\in \mathcal{S})$, $\hat{\beta}_{ \mathrm{sat}}$, and $\hat{P}(C)$ are as in \eqref{eq:IV_SAT_estimators}, \eqref{eq:beta_SAT}, and \eqref{eq:P_C}, respectively. Then,
\begin{equation*}
\hat{V}_{\mathrm{sfe}}~=~\hat{V}_{\mathrm{sat}}+\hat{V}_{A}^{\mathrm{sfe}}~\overset{p}{ \to }~V_{\mathrm{sfe}},
\end{equation*}
where $\hat{V}_{\mathrm{sat}}$ is as in \eqref{eq:V_defns}.
\end{theorem}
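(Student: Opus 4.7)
The plan is to exploit the additive decomposition $\hat{V}_{\mathrm{sfe}} = \hat{V}_{\mathrm{sat}} + \hat{V}_A^{\mathrm{sfe}}$ that parallels the population decomposition $V_{\mathrm{sfe}} = V_{\mathrm{sat}} + V_A^{\mathrm{sfe}}$. Theorem \ref{thm:SAT_se} already delivers $\hat{V}_{\mathrm{sat}} \overset{p}{\to} V_{\mathrm{sat}}$ under Assumptions \ref{ass:1} and \ref{ass:2}, and since Assumption \ref{ass:3} strengthens Assumption \ref{ass:2}, this convergence is available here for free. By Slutsky's theorem it therefore suffices to establish $\hat{V}_A^{\mathrm{sfe}} \overset{p}{\to} V_A^{\mathrm{sfe}}$, after which the stated conclusion is immediate.

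To prove $\hat{V}_A^{\mathrm{sfe}} \overset{p}{\to} V_A^{\mathrm{sfe}}$, I would exploit the fact that $\mathcal{S}$ is finite (Assumption \ref{ass:1}(b)) and argue summand by summand, then apply the continuous mapping theorem. The building blocks I would cite are: $n(s)/n \overset{p}{\to} p(s) > 0$ by the weak law of large numbers applied to the i.i.d.\ indicators $\{I\{S_i = s\}\}_{i=1}^{n}$ (Assumption \ref{ass:1}); $n_A(s)/n(s) \overset{p}{\to} \pi_A(s) = \pi_A \in (0,1)$ by Assumption \ref{ass:3}(b)--(c); the stratum-level first-stage convergences $n_{AD}(s)/n_A(s) \overset{p}{\to} \pi_{D(1)}(s)$ and $(n_D(s) - n_{AD}(s))/(n(s) - n_A(s)) \overset{p}{\to} \pi_{D(0)}(s)$, which are implicit in the proof of Theorem \ref{thm:plim_SAT} (and underlie the consistency of $\hat{P}(S=s,C)$); and finally $\hat{\beta}_{\mathrm{sat}}(s) \overset{p}{\to} \beta(s)$, $\hat{\beta}_{\mathrm{sat}} \overset{p}{\to} \beta$, and $\hat{P}(C) \overset{p}{\to} P(C) > 0$, all consequences of Theorems \ref{thm:plim_SAT} and \ref{thm:AsyDist_SAT}. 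Since $\pi_A \in (0,1)$ and $P(C) > 0$ (by Assumption \ref{ass:1}(d)), the relevant denominators in the defining expressions are bounded away from zero, so the continuous mapping theorem applies directly to the function
\[
(a,b,c,d,e) ~\mapsto~ \tau(s) \, a \, \frac{(1-2b)^{2}}{b(1-b)} \, c^{2} \, (d - e)^{2}
\]
evaluated at $(n(s)/n,\, n_A(s)/n(s),\, n_{AD}(s)/n_A(s) - (n_D(s)-n_{AD}(s))/(n(s)-n_A(s)),\, \hat{\beta}_{\mathrm{sat}}(s),\, \hat{\beta}_{\mathrm{sat}})$, yielding convergence in probability to $\tau(s)\, p(s)\, (1-2\pi_A)^{2}/(\pi_A(1-\pi_A))\, (\pi_{D(1)}(s)-\pi_{D(0)}(s))^{2}\, (\beta(s)-\beta)^{2}$. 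Summing over the finite set $\mathcal{S}$ and dividing by $\hat{P}(C)^{2} \overset{p}{\to} P(C)^{2}$ produces precisely $V_A^{\mathrm{sfe}}$ as defined in \eqref{eq:VA}.

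There is no genuine analytical obstacle at this stage — all of the hard work has already been absorbed into Theorems \ref{thm:plim_SAT}, \ref{thm:AsyDist_SAT}, and \ref{thm:SAT_se} — so the proof is essentially a bookkeeping exercise in assembling the appropriate Slutsky/continuous-mapping arguments. The one place where care is required is keeping track of the fact that Assumption \ref{ass:3}(c) (constant treatment propensity across strata) is what makes the target $V_A^{\mathrm{sfe}}$ a well-defined, stratum-constant quadratic form in $(1-2\pi_A)^{2}/(\pi_A(1-\pi_A))$; without it, the SFE estimator would not converge to $\beta$ in the first place, and the decomposition $V_{\mathrm{sfe}} = V_{\mathrm{sat}} + V_A^{\mathrm{sfe}}$ would not be meaningful.
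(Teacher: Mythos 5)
Your proposal is correct and follows essentially the same route as the paper: the paper's proof likewise reduces the claim to $\hat{V}_{A}^{\mathrm{sfe}}\overset{p}{\to}V_{A}^{\mathrm{sfe}}$ via the law of large numbers, the convergence $n_A(s)/n(s)\overset{p}{\to}\pi_A$, the stratum-level first-stage limits (stated in Lemma \ref{lem:A1and2_impliesold3} rather than only implicitly in Theorem \ref{thm:plim_SAT}), and the consistency results \eqref{eq:hat_limits_1}--\eqref{eq:hat_limits_2}, then invokes Theorem \ref{thm:SAT_se} for the $\hat{V}_{\mathrm{sat}}$ term. The only cosmetic difference is where you source the convergences $n_{AD}(s)/n_A(s)\overset{p}{\to}\pi_{D(1)}(s)$ and $(n_D(s)-n_{AD}(s))/(n(s)-n_A(s))\overset{p}{\to}\pi_{D(0)}(s)$.
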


%%%%%%%% DIVIDER %%%%%%%%%%%%
%\begin{remark}
%In principle, under the assumptions of Theorem \ref{thm:SFE_se}, said result would also hold if one replaced every $\frac{n_{A}(s)}{n(s)}$ in the expressions for $\hat{V}_{1}^{\mathrm{sat}}$, $\hat{V}_{0}^{\mathrm{sat}}$, $ \hat{V}_{H}^{\mathrm{sat}}$, and $\hat{V}_{A}^{\mathrm{sfe}}$ by any weighted average of $\{ \frac{n_{A}(s)}{n(s)}:s\in S\} $, such as $\hat{\pi}_{A}=\sum_{s\in \mathcal{S}}\frac{n_{A}(s)}{n(s)}\frac{n( s) }{n}$. We decided against this to enable making finite samples comparisons among $\hat{V}_{ \mathrm{sat}}$ and $\hat{V}_{\mathrm{sfe}}$. In particular, if we use the definitions in Theorems \ref{thm:SAT_se} and \ref{thm:SFE_se}, $\hat{V}_{ \mathrm{sat}}\leq\hat{V}_{\mathrm{sfe}}$, with equality if and only if $\hat{V}_{A}^{\mathrm{sfe}}=0$.
%\end{remark}
%%%%%%%% DIVIDER %%%%%%%%%%%%

To conclude the section, we can propose hypothesis tests for the LATE by combining Theorems \ref{thm:AsyDist_SFE} and \ref{thm:SFE_se}. For completeness, this is recorded in the next result.

\begin{theorem}[SFE test]\label{thm:SFE_test} 
Suppose that Assumptions \ref{ass:1} and \ref{ass:3} hold, and that $V_{\mathrm{sfe}}>0$. For the problem of testing \eqref{eq:HT} at level $\alpha \in (0,1)$, Consider the following hypothesis testing procedure
\begin{align*}
\phi_{n}^{\mathrm{sfe}}(X^{(n)})~ \equiv~1[~|{\sqrt{n}(\hat{\beta}_{\mathrm{sfe}}-\beta_0 )}|~>~\sqrt{\hat{V}_{\mathrm{sfe}}}z_{1-\alpha/2}~],
\end{align*}
where $z_{1-\alpha/2}$ is the $(1-\alpha/2)$-quantile of $N(0,1)$. Then,
\begin{align*}
\lim_{n\to \infty}E[\phi_{n}^{\mathrm{sfe}}(X^{(n)})]~ =~\alpha
\end{align*}
whenever $H_0$ in \eqref{eq:HT} holds, i.e., $\beta = \beta_0$.
\end{theorem}
%%%%%%%% DIVIDER %%%%%%%%%%%%

\section{``Two sample'' (2S) IV regression}\label{sec:2STT}

We now consider the asymptotic properties of an IV estimator of the LATE based on a linear regression model of the outcome of interest on a constant and the treatment decision, where the latter is instrumented with the treatment assignment. Following the nomenclature in \cite{bugni/canay/shaikh:2018}, we refer to this as the 2S IV regression. Under certain conditions, the 2S IV regression can consistently estimate the LATE. We show that this estimator is asymptotically normal and we characterize its asymptotic variance in terms of the primitives parameters of the RCT. We also propose a consistent estimator of this asymptotic variance by using the results of the SAT IV regression in Section \ref{sec:SAT}. This allows us to propose hypothesis tests for the LATE that are asymptotically exact, i.e., their limiting rejection probability under the null hypothesis is equal to the nominal level.

In terms of our notation, the 2S IV regression is the result of regressing $Y_{i}$ on $1$ and $D_i$. Since the treatment decision $D_{i}$ is endogenously decided by the RCT participant, we instrument it with the exogenous treatment assignment $A_i$. To define this IV estimator precisely, set
\begin{align*}
 \mathbf{Y}_{n}~&\equiv~(Y_{i}:i=1,\dots, n)^{\prime },\\
 \mathbf{X}_{n}^{\mathrm{2s}}~&\equiv~((1,D_{i} ):i=1,\dots ,n)^{\prime },\\
 \mathbf{Z}_{n}^{\mathrm{2s}}~&\equiv~((1,A_{i} ):i=1,\dots ,n)^{\prime }.
\end{align*}
The estimators of the coefficients in IV 2S regression are
\begin{align}
(\hat{\gamma}_{\mathrm{2s}},\hat{\beta}_{ \mathrm{2s}})^{\prime }~\equiv~
({\mathbf{Z}_{n}^{\mathrm{2s}}}^{\prime } \mathbf{X}_{n}^{\mathrm{2s}})^{-1}({\mathbf{Z}_{n}^{\mathrm{2s}}}^{\prime } \mathbf{Y}_{n}).
\label{eq:IV_2S_estimators}
\end{align}
where $\hat{\gamma}_{\mathrm{2s}}$ corresponds to the IV estimator of the coefficient on $1$ and $\hat{\beta}_{ \mathrm{2s}}$ corresponds to the IV estimator of the coefficient on $D_i$.

Under Assumptions \ref{ass:1} and \ref{ass:2}, Theorem \ref{thm:plim_2SR} in the appendix shows that
\begin{equation}
\hat{\beta}_{\mathrm{2s}}~\overset{p}{\to }~\frac{\sum_{s\in \mathcal{S}}p(s)\left[
\begin{array}{c}
[\pi _{A}(s)-\bar{\pi} _{A} ]\pi _{D(0)}(s)E[Y(1)|AT,S=s]+ \\ 
+[\pi _{A}(s)-\bar{\pi} _{A} ](1-\pi _{D(1)}(s))E[Y(0)|NT,S=s] \\ 
+(1-\bar{\pi} _{A} )\pi _{A}(s)(\pi _{D(1)}(s)-\pi _{D(0)}(s))E[Y(1)|C,S=s] \\
-\bar{\pi} _{A} (1-\pi _{A}(s))(\pi _{D(1)}(s)-\pi _{D(0)}(s))E[Y(0)|C,S=s]
\end{array}
\right] }{(1-\bar{\pi} _{A})(\sum_{s\in \mathcal{S}}p(s)\pi _{A}(s)\pi _{D(1)}(s))-\bar{\pi} _{A} (\sum_{s\in \mathcal{S}}p(s)(1-\pi _{A}(s))\pi _{D(0)}(s))},
\label{eq:weights2}
\end{equation}
with $\bar{\pi} _{A} \equiv \sum_{s\in \mathcal{S}}p(s)\pi _{A}(s)$. This equation reveals that $\hat{\beta}_{\mathrm{2s}}$ is not necessarily a consistent estimator of the LATE under Assumptions \ref{ass:1} and \ref{ass:2}. However, if we additionally impose Assumption \ref{ass:3}(c), it follows that $\hat{\beta}_{\mathrm{2s}}$ becomes a consistent estimator of the LATE. For this reason, we maintain this condition for the remainder of this section.

The following result reveals that $\hat{ \beta}_{\mathrm{2s}}$ is a consistent and asymptotically normal estimator of the LATE. The result characterizes the asymptotic distribution of this estimator in terms of primitive parameters of the RCT.

%%%%%%%% DIVIDER %%%%%%%%%%%%
\begin{theorem}[2S main result] \label{thm:AsyDist_2SR}
Suppose that Assumptions \ref{ass:1} and \ref{ass:3} hold. Then,
\begin{equation*}
\sqrt{n}(\hat{\beta}_{\mathrm{2s}}-\beta )~~\overset{d}{\to } ~~N(0,V_{\mathrm{2s}}),
\end{equation*}
where $\beta \equiv E[Y(1)-Y(0)|C]$ and $V_{\mathrm{2s}}\equiv V_{\mathrm{sat}}+ V_{A}^{\mathrm{2s}}$ with
\begin{align}
V_{A}^{\mathrm{2s}}&~=~
\sum_{s\in \mathcal{S}}\tfrac{p( s) \tau (s)}{\pi _{A}(1-\pi _{A})P( C) ^{2}}\left[
\begin{array}{c}
(\pi _{A}\pi _{D( 0) }( s) +( 1-\pi_{A}) \pi _{D( 1) }( s) ) (E[Y(1)-Y(0)|C,S=s]-\beta ) \\
+\pi _{D(1)}(s)E[Y(0)|C,S=s]-\pi _{D(0)}(s)E[Y(1)|C,S=s] \\ 
+\pi _{D(0)}(s)E[Y(1)|AT,S=s]+(1-\pi _{D(1)}(s))E[Y(0)|NT,S=s]\\ 
-\sum_{\tilde{s}\in \mathcal{S}}p(\tilde{s})\times \\
\left[ 
\begin{array}{c}
(\pi _{A}\pi_{D( 1) }( \tilde{s}) +( 1-\pi _{A}) \pi_{D( 0) }( \tilde{s}) ) ( E[Y(1)-Y(0)|C,S=\tilde{s}]-\beta )\\
+\pi _{D(1)}(\tilde{s})E[Y(0)|C,S=\tilde{s}]-\pi _{D(0)}(\tilde{s})E[Y(1)|C,S=\tilde{s}] \\ 
+\pi _{D(0)}(\tilde{s})E[Y(1)|AT,S=\tilde{s}]+(1-\pi _{D(1)}(\tilde{s}))E[Y(0)|NT,S=\tilde{s}]
\end{array}
\right]
\end{array}
\right] ^{2},
\label{eq:VA2}
\end{align}
with $P(C)$ and $V_{\mathrm{sat}}$ as defined in \eqref{eq:AsyDist_SAT_Avar}.
\end{theorem}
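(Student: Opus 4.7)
The plan is to mirror the strategy that was (implicitly) used in Theorem \ref{thm:AsyDist_SFE}: express $\sqrt{n}(\hat{\beta}_{\mathrm{2s}} - \beta)$ as a sum of two asymptotically independent terms, one reproducing the within-stratum contribution $V_{\mathrm{sat}}$ and one capturing the extra contribution $V_A^{\mathrm{2s}}$ coming from the randomness of the stratum-level treatment fractions. First, I would use the 2SLS/Wald representation $\hat{\beta}_{\mathrm{2s}} = N_n/M_n$, where $N_n = n^{-1}\sum_i (A_i-\bar A)(Y_i-\bar Y)$ and $M_n = n^{-1}\sum_i (A_i-\bar A)(D_i - \bar D)$. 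Writing $\hat\pi_A(s) \equiv n_A(s)/n(s)$, $\hat p(s) \equiv n(s)/n$, $\hat P(A) \equiv \sum_s \hat p(s)\hat\pi_A(s)$, and $\bar Y_{A=a}(s)$ for the within-cell sample mean of $Y$, I would stratify to obtain
$$N_n = \sum_{s\in\mathcal{S}} \hat p(s)\bigl[(1-\hat P(A))\hat\pi_A(s)\bar Y_{A=1}(s) - \hat P(A)(1-\hat\pi_A(s))\bar Y_{A=0}(s)\bigr],$$
with an analogous expression for $M_n$ in terms of $\bar D_{A=a}(s)$. Applying the delta method around the limits $M_n \to M_0 \equiv \pi_A(1-\pi_A)P(C)$ (established in Theorem \ref{thm:plim_2SR}) and $N_n \to \beta M_0$, I reduce the problem to the asymptotic distribution of $M_0^{-1}\sqrt{n}(N_n - \beta M_n)$. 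Introducing $R_i \equiv Y_i - \beta D_i$, this equals
$$M_0^{-1}\sqrt{n}\sum_{s\in\mathcal{S}}\hat p(s)\bigl[(1-\hat P(A))\hat\pi_A(s)\bar R_{A=1}(s) - \hat P(A)(1-\hat\pi_A(s))\bar R_{A=0}(s)\bigr].$$

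Next, I would linearize this expression around the population values, separating two sources of randomness: (a) the within-cell averages $\bar R_{A=a}(s)$ deviating from their conditional expectations, and (b) the stratum fractions $\hat\pi_A(s)$ (and hence also $\hat P(A)$) deviating from $\pi_A$. For source (a), conditional on $(S^{(n)},A^{(n)})$ the data within each $(s,a)$ cell are i.i.d.\ by Assumption \ref{ass:3}(a), so by the Lemma B.2-type argument used in the proof of Theorem \ref{thm:AsyDist_SAT} the corresponding component satisfies a joint CLT. A careful accounting of coefficients — using the identities relating $E[Y|A=a,S=s]$ to the complier, always-taker, and never-taker conditional means through \eqref{eq:pi_defns} — shows that this contribution produces exactly the quadratic form $V_{\mathrm{sat}}$. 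For source (b), the joint CLT in Assumption \ref{ass:3}(b) gives that $\{\sqrt{n}(\hat\pi_A(s)-\pi_A):s\in\mathcal S\}$, conditional on $S^{(n)}$ (w.p.a.1), converges to $N(\mathbf{0},\Sigma_A)$ with $\Sigma_A = \mathrm{diag}(\tau(s)\pi_A(1-\pi_A)/p(s))$. By the product rule, $\hat\pi_A(s)$ appears in the linearization both directly inside the $s$-summand and indirectly through $\hat P(A) = \sum_{\tilde s}\hat p(\tilde s)\hat\pi_A(\tilde s)$; collecting the coefficient of $\sqrt{n}(\hat\pi_A(s)-\pi_A)$ gives $p(s)/(\pi_A(1-\pi_A)P(C))$ times precisely the bracketed expression in \eqref{eq:VA2}. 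Computing the variance yields $V_A^{\mathrm{2s}}$.

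Finally, I would combine the two sources. By Assumption \ref{ass:3}(a), conditional on $S^{(n)}$ the randomization $A^{(n)}$ is independent of $W^{(n)}$; therefore source (a) (a functional of $W^{(n)}$ given $(S^{(n)},A^{(n)})$) and source (b) (a functional of $A^{(n)}$ given $S^{(n)}$) are asymptotically uncorrelated given $S^{(n)}$. Together with consistency of $\hat p(s)$, $\hat P(A)$, the tower law, and Slutsky's theorem, this yields $\sqrt{n}(\hat\beta_{\mathrm{2s}}-\beta) \overset{d}{\to} N(0, V_{\mathrm{sat}}+V_A^{\mathrm{2s}})$. Edge cases $\pi_{D(0)}(s)=0$ or $\pi_{D(1)}(s)=1$ are handled by the abuse-of-notation convention adopted in Section \ref{sec:SAT}.

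The main obstacle will be the bookkeeping in source (b): correctly disentangling the direct dependence on $\hat\pi_A(s)$ from the indirect dependence through $\hat P(A)$, and showing that the resulting coefficient matches the bracketed quantity in \eqref{eq:VA2} — in particular, that the $\sum_{\tilde s\in\mathcal S}p(\tilde s)[\cdots]$ subtraction inside the bracket is exactly what the chain rule applied to the $\hat P(A)$-dependence produces. A secondary, but easier, challenge is verifying that the algebra for source (a) reproduces the full five-term expression for $V_{\mathrm{sat}}$ given in \eqref{eq:AsyDist_SAT_Avar}, which requires repeatedly rewriting conditional expectations of $Y$ given $(A,S)$ in terms of the type-specific means via \eqref{eq:pi_defns}.
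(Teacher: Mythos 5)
Your overall strategy is the one the paper actually uses: linearize the IV ratio around $M_{0}=\pi_{A}(1-\pi_{A})P(C)$, decompose $\sqrt{n}(N_{n}-\beta M_{n})$ into asymptotically independent pieces, identify the coefficient multiplying $\sqrt{n}(n_{A}(s)/n(s)-\pi_{A})$ with the bracketed expression in \eqref{eq:VA2} (this is exactly the $R_{n,3}(s)$ coefficient in the paper's expansion \eqref{eq:beta_2sr_2}, equal to $p(s)[\,\cdots]/(\pi_{A}(1-\pi_{A})P(C))$, noting that the last two lines of the bracket are $\gamma(s)-\sum_{\tilde{s}}p(\tilde{s})\gamma(\tilde{s})$), and invoke the block-diagonal joint CLT (Lemma \ref{lem:AsyDist}) for independence of the pieces. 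Your treatment of $\hat{P}(A)$ is also sound: under Assumption \ref{ass:3}(c), $\hat{P}(A)-\pi_{A}=\sum_{s}\hat{p}(s)(\hat{\pi}_{A}(s)-\pi_{A})$, so its fluctuation is indeed absorbed into your source (b).

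There is, however, one genuine gap: your decomposition has only two sources of randomness, and it is missing the $\sqrt{n}$-fluctuations of the strata proportions $\hat{p}(s)=n(s)/n$. You invoke only ``consistency of $\hat{p}(s)$,'' but consistency is not enough, because $\hat{p}(s)$ enters as the outer weight on a stratum-$s$ summand whose probability limit, $\pi_{A}(1-\pi_{A})(\pi_{D(1)}(s)-\pi_{D(0)}(s))(E[Y(1)-Y(0)|C,S=s]-\beta)$, is nonzero whenever the strata-specific LATEs are heterogeneous. Hence $\sqrt{n}(\hat{p}(s)-p(s))=O_{p}(1)$ contributes a term $\tfrac{1}{P(C)}\sum_{s}(\pi_{D(1)}(s)-\pi_{D(0)}(s))(\beta(s)-\beta)\sqrt{n}(\hat{p}(s)-p(s))$ to the expansion; its variance (using the multinomial covariance $\Sigma_{4}$ of Lemma \ref{lem:AsyDist}) is exactly $V_{H}^{\mathrm{sat}}$, i.e., the $R_{n,4}$ line of \eqref{eq:beta_2sr_2}. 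Your source (a) — within-cell averages conditional on $(S^{(n)},A^{(n)})$ — cannot produce $V_{H}^{\mathrm{sat}}$, since that component is a \emph{between}-strata variance generated by the random strata counts, not by within-cell sampling variability; source (a) only delivers $V_{Y,1}^{\mathrm{sat}}+V_{Y,0}^{\mathrm{sat}}+V_{D,1}^{\mathrm{sat}}+V_{D,0}^{\mathrm{sat}}$. As written, your argument would therefore yield the limit variance $V_{\mathrm{sat}}-V_{H}^{\mathrm{sat}}+V_{A}^{\mathrm{2s}}$ rather than $V_{\mathrm{sat}}+V_{A}^{\mathrm{2s}}$. The fix is to add a third, asymptotically independent source for $\{\sqrt{n}(n(s)/n-p(s)):s\in\mathcal{S}\}$, which is precisely how the paper proceeds.
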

%%%%%%%% DIVIDER %%%%%%%%%%%%

Several remarks about Theorem \ref{thm:AsyDist_2SR} are in order. First, we note that $\hat{\beta}_{\mathrm{2s}}$ coincides with IV estimator $\hat{\beta}_1$ considered by \cite{ansel/hong/li:2018}. Relative to their results, we provide a closed-form expression of the asymptotic variance $\hat{\beta}_{\mathrm{2s}}$ in terms of the primitive parameters of the RCT.

As we have done in the previous sections, we can analyze the consequences of imperfect compliance for the 2S IV regression by comparing Theorem \ref{thm:AsyDist_2SR} and \citet[Theorem 4.1]{bugni/canay/shaikh:2018}. First, note that imperfect compliance means that the probability limit of $\hat{\beta}_{\mathrm{2s}}$ is not the ATE, but rather the LATE. Second, we note that imperfect compliance introduces significant changes to the asymptotic variance of $\hat{\beta}_{\mathrm{2s}}$. This implies that the consistent estimator of $V_{\mathrm{2s}}$ proposed in \citet[Section 4.1]{bugni/canay/shaikh:2018} does not apply, and a new one is required. We provide this in Theorem \ref{thm:2SR_se}.

Third, we note that Theorem \ref{thm:AsyDist_2SR} relies on Assumption \ref{ass:3}, which is stronger than Assumption \ref{ass:2} used to derive Theorem \ref{thm:AsyDist_SAT}. The argument here is the same as in Section \ref{sec:SFE}. First, Assumption \ref{ass:3}(c) is important to guarantee that $\hat{\beta}_{\mathrm{2s}}$ is a consistent estimator of the LATE. Second, the derivation of the asymptotic distribution of $\hat{\beta}_{\mathrm{2s}}$ relies on the details about the CAR mechanism provided in Assumption \ref{ass:3}(b).

Fourth, it is relevant to note that $V_{\mathrm{2s}}-V_{\mathrm{sat}}=V_{A}^{\mathrm{2s}} \geq 0$, which reveals that $\hat{\beta}_{\mathrm{sat}}$ is equally or more efficient than $\hat{\beta}_{\mathrm{2s}}$. In particular, both estimators have the same asymptotic distribution if and only if $V_{A}^{\mathrm{2s}} =0$. By inspecting \eqref{eq:VA}, this occurs if the RCT is implemented with $\tau(s)=0$ (e.g., by using SBR as described in Example \ref{ex:SBR}). We also note that $\tau(s)=0$ implies that $V_{\mathrm{2s}} = V_{\mathrm{sfe}} = V_{\mathrm{sat}}$ and $\pi_A=1/2$ implies that $V_{\mathrm{2s}} \geq V_{\mathrm{sfe}} = V_{\mathrm{sat}}$. Other than these special cases, $V_{\mathrm{2s}}$ and $V_{\mathrm{sfe}}$ cannot be ordered unambiguously (See \citet[Remark 4.8]{bugni/canay/shaikh:2018} for a similar point in the context of perfect compliance).\footnote{As in the previous section, these results resemble those obtained by \citet[Corollary 1.1]{lin:2013} in the context of finite population inference, perfect compliance, and assignment using SRS.}
% I can find examples where one dominates or the other.

Fifth, we note that Theorem \ref{thm:AsyDist_2SR} allows for $\pi _{D( 0) }( s)=P(AT|S=s)=0$ or $1-\pi _{D( 1) }( s)=P(NT|S=s)=0$, by using the same abuse of notation as in Section \ref{sec:SAT}. 
%If $\pi _{D( 0) }( s)=P(AT|S=s)=0$, the mean and the variance of $\{Y(1)|AT,S=s\}$ are not properly defined, but we can set
%\begin{equation*}
% V[ Y( 1) |AT,S=s ]\pi _{D( 0) }( s)=0~~~\text{and}~~~ E[ Y( 1) |AT,S=s ]\pi _{D( 0) }( s)=0.
%\end{equation*}
%Similarly, if $1-\pi _{D( 1) }( s)=P(NT|S=s)=0$, the mean and the variance of $\{Y(0)|NT,S=s\}$ are not properly defined, but we can set
%\begin{equation*}
% V[ Y(0)|NT,S=s ](1-\pi _{D( 1) }( s))=0~~~\text{and}~~~E[ Y(0)|NT,S=s ](1-\pi _{D( 1) }( s))=0.
%\end{equation*}
In the special case of perfect compliance (i.e., $\pi _{D(1)}(s)=1$, $\pi _{D(0)}(s)=0$), Theorem \ref{thm:AsyDist_2SR} then holds with 
\begin{align*}
\beta & ~=~E[Y(1)-Y(0)] \\
V_{\mathrm{2s}}& ~=~\sum_{s\in \mathcal{S}}{p(s)}\left[ 
\begin{array}{c}
\frac{V[Y(1)|S=s]}{\pi _{A}(s)}+\frac{V[Y(0)|S=s]}{(1-\pi _{A}(s))}+(E[Y(1)-Y(0)|S=s]-\beta )^{2} +\\ 
\tau (s)\pi _{A}(1-\pi _{A})\left( \frac{(E[Y(1)|S=s]-E[Y(1)])}{\pi _{A}}+\frac{(E[Y(0)|S=s]-E[Y(0)])}{(1-\pi _{A})}\right) ^{2}
\end{array}
\right] , 
\end{align*}
which coincides with the result in \citet[Section 4.1]{bugni/canay/shaikh:2018}.

As promised earlier, the next result provides a consistent estimator of $V_{\mathrm{2s}}$.

%%%%%%%% DIVIDER %%%%%%%%%%%%
\begin{theorem}[Estimator of 2S asy.\ variance] \label{thm:2SR_se}
Assume Assumptions \ref{ass:1} and \ref{ass:3}. Define the following estimator:
\begin{equation}
\hat{V}_{A}^{\mathrm{2s}}~\equiv~ \sum_{s\in \mathcal{S}}\tfrac{\frac{n(s) }{n}\tau (s)}{\frac{n_{A}( s) }{n( s) }(1-\frac{n_{A}( s) }{n( s) })\hat{P}(C)^{2}}\left[ 
\begin{array}{c}
[\frac{n_{A}( s) }{n( s) }\frac{n_{D}(s) -n_{AD}(s) }{n(s)-n_{A}(s) }+(1-\frac{n_{A}( s) }{n(s) })\frac{n_{AD}( s) }{n_{A}( s) }](\hat{\beta}_{\mathrm{sat}}(s)-\hat{\beta}_{\mathrm{sat}}) \\ 
-\sum_{\tilde{s}\in \mathcal{S}}\frac{n_{D}( \tilde{s}) }{n}(\hat{
\beta}_{\mathrm{sat}}(\tilde{s})-\hat{\beta}_{\mathrm{sat}}) +\hat{\gamma}_{\mathrm{sat}}(s)-\sum_{\tilde{s}\in \mathcal{S}}p(\tilde{s})\hat{\gamma}_{\mathrm{sat}}(\tilde{s})
\end{array}
\right] ^{2},
\end{equation}
where $(\hat{\beta}_{\mathrm{sat}}(s):s\in \mathcal{S})$, $\hat{\beta}_{ \mathrm{sat}}$, and $\hat{P}(C)$ are as in \eqref{eq:IV_SAT_estimators}, \eqref{eq:beta_SAT}, and \eqref{eq:P_C}, respectively. Then,
\begin{equation*}
\hat{V}_{\mathrm{2s}}~=~\hat{V}_{\mathrm{sat}}+\hat{V}_{A}^{\mathrm{2s}}~\overset{p}{ \to }~V_{\mathrm{2s}},
\end{equation*}
where $\hat{V}_{\mathrm{sat}}$ is as in \eqref{eq:V_defns}.
\end{theorem}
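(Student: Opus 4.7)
The plan is to leverage Theorem \ref{thm:SAT_se}, which already delivers $\hat{V}_{\mathrm{sat}}\overset{p}{\to}V_{\mathrm{sat}}$, so by Slutsky's theorem it suffices to show that $\hat{V}_{A}^{\mathrm{2s}}\overset{p}{\to}V_{A}^{\mathrm{2s}}$. I would proceed by applying the continuous mapping theorem to $\hat{V}_{A}^{\mathrm{2s}}$ after collecting all the sample-level quantities whose probability limits are either already stated in the earlier sections or can be read off from the proof of Theorem \ref{thm:plim_SAT}. The denominators appearing in $\hat{V}_{A}^{\mathrm{2s}}$ are ratios whose limits are strictly positive by Assumptions \ref{ass:1}(b)--(d) and \ref{ass:3}(c), so CMT applies without issue.

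First I would assemble the menu of building-block convergences. By the WLLN, $n(s)/n\overset{p}{\to}p(s)$. Assumption \ref{ass:3}(b)--(c) yields $n_{A}(s)/n(s)\overset{p}{\to}\pi_{A}\in(0,1)$ for each $s$. The proof of Theorem \ref{thm:plim_SAT} produces $n_{AD}(s)/n_{A}(s)\overset{p}{\to}\pi_{D(1)}(s)$ and $(n_{D}(s)-n_{AD}(s))/(n(s)-n_{A}(s))\overset{p}{\to}\pi_{D(0)}(s)$, and therefore $n_{D}(\tilde{s})/n\overset{p}{\to}p(\tilde{s})\bigl[\pi_{A}\pi_{D(1)}(\tilde{s})+(1-\pi_{A})\pi_{D(0)}(\tilde{s})\bigr]$ after the obvious decomposition. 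From the same theorem, $\hat{\gamma}_{\mathrm{sat}}(s)\overset{p}{\to}\gamma(s)$ and $\hat{\beta}_{\mathrm{sat}}(s)\overset{p}{\to}\beta(s)$, while Theorem \ref{thm:AsyDist_SAT} together with \eqref{eq:P_C} gives $\hat{\beta}_{\mathrm{sat}}\overset{p}{\to}\beta$ and $\hat{P}(C)\overset{p}{\to}P(C)>0$.

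Next I would plug these limits into $\hat{V}_{A}^{\mathrm{2s}}$ and verify the algebraic identification with $V_{A}^{\mathrm{2s}}$ line by line. The weighted combination $\frac{n_{A}(s)}{n(s)}\frac{n_{D}(s)-n_{AD}(s)}{n(s)-n_{A}(s)}+\bigl(1-\frac{n_{A}(s)}{n(s)}\bigr)\frac{n_{AD}(s)}{n_{A}(s)}$ converges to $\pi_{A}\pi_{D(0)}(s)+(1-\pi_{A})\pi_{D(1)}(s)$, which (multiplied by $\hat{\beta}_{\mathrm{sat}}(s)-\hat{\beta}_{\mathrm{sat}}$) matches the first line inside the bracket of \eqref{eq:VA2}. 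The term $\sum_{\tilde{s}}(n_{D}(\tilde{s})/n)(\hat{\beta}_{\mathrm{sat}}(\tilde{s})-\hat{\beta}_{\mathrm{sat}})$ matches the cross-stratum $\pi_{A}\pi_{D(1)}(\tilde{s})+(1-\pi_{A})\pi_{D(0)}(\tilde{s})$ piece in \eqref{eq:VA2} by the formula for $n_{D}(\tilde{s})/n$ above. Finally, substituting the expression for $\gamma(s)$ from \eqref{eq:SAT_limit} into $\hat{\gamma}_{\mathrm{sat}}(s)-\sum_{\tilde{s}}p(\tilde{s})\hat{\gamma}_{\mathrm{sat}}(\tilde{s})$ reproduces precisely the remaining lines $\pi_{D(1)}(s)E[Y(0)|C,S=s]-\pi_{D(0)}(s)E[Y(1)|C,S=s]+\pi_{D(0)}(s)E[Y(1)|AT,S=s]+(1-\pi_{D(1)}(s))E[Y(0)|NT,S=s]$ inside the bracket of \eqref{eq:VA2}, along with their $\tilde{s}$-averages.

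The main obstacle is purely clerical: lining up the bookkeeping so that the four population moments $E[Y(1)|C,S=s]$, $E[Y(0)|C,S=s]$, $E[Y(1)|AT,S=s]$, $E[Y(0)|NT,S=s]$ and the three probabilities $\pi_{A}$, $\pi_{D(0)}(s)$, $\pi_{D(1)}(s)$ enter with identical coefficients on both sides once $\gamma(s)$ is expanded via \eqref{eq:SAT_limit}. The boundary-case conventions introduced after Theorem \ref{thm:AsyDist_SAT} for $\pi_{D(0)}(s)=0$ or $1-\pi_{D(1)}(s)=0$ carry over unchanged, since the relevant sample frequencies converge to the corresponding boundary values and the affected products in both the estimator and the population object degenerate to zero in tandem.
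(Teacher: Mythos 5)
Your proposal is correct and follows essentially the same route as the paper: the paper's proof of this theorem is a one-line appeal to the LLN, Assumptions \ref{ass:2}(b) and \ref{ass:3}(c), Lemma \ref{lem:A1and2_impliesold3}, and the probability limits in \eqref{eq:hat_limits_1}--\eqref{eq:hat_limits_2}, combined with Theorem \ref{thm:SAT_se}, which is precisely the collection of building-block convergences and the CMT argument you assemble. Your explicit line-by-line matching of the bracket terms (including the swap between $\pi_A\pi_{D(0)}(s)+(1-\pi_A)\pi_{D(1)}(s)$ in the own-stratum coefficient and $\pi_A\pi_{D(1)}(\tilde s)+(1-\pi_A)\pi_{D(0)}(\tilde s)$ as the limit of $n_D(\tilde s)/n\cdot p(\tilde s)^{-1}$) is exactly the bookkeeping the paper leaves implicit.
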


%%%%%%%% DIVIDER %%%%%%%%%%%%
%\begin{remark}
%In principle, under the assumptions of Theorem \ref{thm:2SR_se}, said result would also hold if one replaced every $\frac{n_{A}(s)}{n(s)}$ in the expressions for $\hat{V}_{1}^{\mathrm{sat}}$, $\hat{V}_{0}^{\mathrm{sat}}$, $ \hat{V}_{H}^{\mathrm{sat}}$, and $\hat{V}_{A}^{\mathrm{sfe}}$ by any weighted average of $\{ \frac{n_{A}(s)}{n(s)}:s\in S\} $, such as $\hat{\pi}_{A}=\sum_{s\in \mathcal{S}}\frac{n_{A}(s)}{n(s)}\frac{n( s) }{n}$. We decided against this to enable making finite samples comparisons among $\hat{V}_{ \mathrm{sat}}$ and $\hat{V}_{\mathrm{sfe}}$. In particular, if we use the definitions in Theorems \ref{thm:SAT_se} and \ref{thm:2SR_se}, $\hat{V}_{ \mathrm{sat}}\leq\hat{V}_{\mathrm{2s}}$, with equality if and only if $\hat{V}_{A}^{\mathrm{2s}}=0$.
%\end{remark}
%%%%%%%% DIVIDER %%%%%%%%%%%%

To conclude the section, we can propose hypothesis tests for the LATE by combining Theorems \ref{thm:AsyDist_2SR} and \ref{thm:2SR_se}. For completeness, this is recorded in the next result.

\begin{theorem}[2S test]\label{thm:2SR_test} 
Suppose that Assumptions \ref{ass:1} and \ref{ass:3} hold, and that $V_{\mathrm{2s}}>0$. For the problem of testing \eqref{eq:HT} at level $\alpha \in (0,1)$, Consider the following hypothesis testing procedure
\begin{align*}
\phi_{n}^{\mathrm{2s}}(X^{(n)})~ \equiv~1[~|{\sqrt{n}(\hat{\beta}_{\mathrm{2s}}-\beta_0 )}|~>~\sqrt{\hat{V}_{\mathrm{2s}}}z_{1-\alpha/2}~],
\end{align*}
where $z_{1-\alpha/2}$ is the $(1-\alpha/2)$-quantile of $N(0,1)$. Then,
\begin{align*}
\lim_{n\to \infty}E[\phi_{n}^{\mathrm{2s}}(X^{(n)})]~ =~\alpha
\end{align*}
whenever $H_0$ in \eqref{eq:HT} holds, i.e., $\beta = \beta_0$.
\end{theorem}
%%%%%%%% DIVIDER %%%%%%%%%%%%

\section{Designing an RCT based on pilot RCT data}\label{sec:optim}

This section considers the situation of a researcher interested in designing a hypothetical RCT with CAR to estimate the LATE. The researcher is in charge of choosing the parameters of the RCT, which entail the randomization scheme, the stratification function, and the treatment probability vector. To make these decisions, the researcher observes the results of a previous pilot RCT (also with CAR) conducted on the same population of interest. Therefore, the researcher can consistently estimate the main features of the population based on the pilot RCT data.

Throughout this section, we assume that both RCTs satisfy the sampling framework in Section \ref{sec:setup}. We use $n_P$ to denote the sample size of the pilot RCT and $n$ to denote the sample size of the hypothetical RCT. We presume that $n_P$ and $n$ are sufficiently large, so our asymptotic analysis in previous sections accurately represents both RCTs. This framework allows us to exploit the pilot RCT information to improve the asymptotic efficiency of the IV LATE estimators in the hypothetical RCT.\footnote{We recognize that, in practice, there may be a wide array considerations in designing the hypothetical RCT. The fact that we can focus on the asymptotic efficiency of the IV LATE estimator is a by-product of our asymptotic framework.}
%%%%%%

\subsection{Choosing the randomization scheme}\label{sec:RCTChoice_1}

In this section, we consider a hypothetical RCT with CAR with a given stratification function $S:\mathcal{Z} \to \mathcal{S}$. Our objective is to consider the effect that the randomization scheme has on the asymptotic distribution of the LATE estimators.

Under Assumptions \ref{ass:1} and \ref{ass:2}, Theorem \ref{thm:AsyDist_SAT} reveals the randomization scheme has no influence on the asymptotic distribution of the LATE estimator based on the SAT IV regression. If we additionally impose Assumption \ref{ass:3}, then Theorems \ref{thm:AsyDist_SFE} and \ref{thm:AsyDist_2SR} show that the randomization scheme affects the asymptotic variance of the LATE estimators based on the SFE IV and 2S IV regressions via the parameter $(\tau(s):s \in \mathcal{S})$, which appears on $V_A^{\mathrm{sfe}}$ and $V_A^{\mathrm{2s}}$, respectively. For both of these estimators, the optimal choice of the randomization scheme is to set $\tau(s)=0$ for all $s \in \mathcal{S}$. This decision is optimal regardless of the information in the pilot RCT. Under this optimal choice, the asymptotic distribution of the LATE estimators in the SFE IV and 2S IV regressions coincide with the LATE estimator in the SAT IV regression, given in Theorem \ref{thm:AsyDist_SAT}. In practice, $\tau(s)=0$ for all $s \in \mathcal{S}$ can be achieved by implementing the CAR using SBR. 

Based on these arguments, it is natural to focus the remainder of this section on the case in which the hypothetical RCT has $\tau(s)=0$ for all $s \in \mathcal{S}$.

\subsection{Choosing the stratification function}\label{sec:strataChoice}

This section considers the effect of the choice of the stratification function of the hypothetical RCT on the asymptotic variance of the LATE estimators. For the sake of simplicity, we impose some restrictions on the hypothetical RCT under consideration. First, we assume that the hypothetical RCT satisfies Assumptions \ref{ass:1} and \ref{ass:3} with $\tau(s) =0$ for all $s\in \mathcal{S}$. This choice is motivated by the discussion in Section \ref{sec:RCTChoice_1}, and has the additional benefit that we can describe the asymptotic distribution of all three LATE estimators in a single statement.\footnote{Note that Assumption \ref{ass:3}(c) imposes a constant treatment assignment probability. In principle, we could generalize our analysis at the expense of substantially complicating the notation.}

The next result establishes that if the strata of the hypothetical RCT becomes coarser and all else remains equal, then the asymptotic variance of the IV estimators will either remain constant or increase.

%%%%%%%%%%%% DIVIDER %%%%%%%%%%%%

\begin{theorem}\label{thm:coarser_moreVar}
Consider two hypothetical RCTs with CAR on the same population and with the same parameters except for the strata function. In particular, both RCTs satisfy Assumptions \ref{ass:1} and \ref{ass:3}, and both have $\tau ( s) =0$ and the same $\pi _{A}$. The first RCT has strata function $S_{1}:\mathcal{Z} \to \mathcal{S}_{1}$, the second RCT has strata function $S_{2}:\mathcal{Z} \to \mathcal{S}_{2}$, and $S_{1}$ is (weakly) finer than $S_{2}$, i.e.,
\begin{equation}
\forall z,z'\in \mathcal{Z}, ~~S_{1}( z) =S_{1}( z^{\prime }) ~~\Longrightarrow~~ S_{2}( z) =S_{2}( z^{\prime }) . \label{eq:coarser_1}
\end{equation}
Let $\hat{\beta}_{1}$ denote the SAT, SFE, or 2S IV LATE estimator from the first RCT with sample size $n$, and let $\hat{\beta}_{2}$ denote the SAT, SFE, or 2S IV LATE estimator from the second RCT with sample size $n$. Then, as $n\to \infty $,
\begin{equation}
\sqrt{n}( \hat{\beta}_1-\beta ) \overset{d}{\to}N( 0,V_{\mathrm{sat},1}),~~~\sqrt{n}( \hat{\beta}_2-\beta ) \overset{d}{\to}N( 0,V_{\mathrm{sat},2}), \label{eq:coarser_2}
\end{equation}
and
\begin{equation}
 V_{\mathrm{sat},1} ~~\leq~~ V_{\mathrm{sat},2}.
 \label{eq:coarser_3}
\end{equation}
\end{theorem}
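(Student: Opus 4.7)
The first step is to reduce the three separate convergence claims to a single variance inequality. The hypothesis $\tau(s)=0$ for all $s$ makes $V_A^{\mathrm{sfe}}$ in Theorem \ref{thm:AsyDist_SFE} and $V_A^{\mathrm{2s}}$ in Theorem \ref{thm:AsyDist_2SR} both vanish, and combined with a constant $\pi_A$ this gives $V_{\mathrm{sat}}=V_{\mathrm{sfe}}=V_{\mathrm{2s}}$ in each of the two RCTs. The weak convergences in \eqref{eq:coarser_2} then follow directly from Theorems \ref{thm:AsyDist_SAT}, \ref{thm:AsyDist_SFE}, and \ref{thm:AsyDist_2SR}, and the claim \eqref{eq:coarser_3} reduces to $V_{\mathrm{sat},1}\leq V_{\mathrm{sat},2}$.

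Because $S_1$ is (weakly) finer than $S_2$, \eqref{eq:coarser_1} gives $S_2=g(S_1)$ for some $g:\mathcal{S}_1\to\mathcal{S}_2$, so the plan is to rewrite every stratum-$s_2$ primitive entering $V_{\mathrm{sat},2}$ as an $S_2$-conditional expectation of the corresponding stratum-$s_1$ primitive. By \eqref{eq:pi_defns}, type masses aggregate linearly, $P(T,S_2=s_2)=\sum_{s_1\in g^{-1}(s_2)}P(T,S_1=s_1)$ for $T\in\{AT,NT,C\}$; within-type conditional means such as $E[Y(1)|C,S_2=s_2]$ become weighted averages of their $S_1$-counterparts with weights $P(C,S_1=s_1)/P(C,S_2=s_2)$; and $\pi_A$ is unchanged. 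This recasts $V_{\mathrm{sat},2}$ in terms of the very same $S_1$-primitives that build $V_{\mathrm{sat},1}$, except that the within-stratum potential-outcome variances have been inflated (via $E[V[Y(d)|T,S_1]]\leq E[V[Y(d)|T,S_2]]$) while several $S_1$-conditional means have been replaced by their coarser $S_2$-projections.

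The idea for the actual inequality is to trade these two effects off via the law of total variance $V[h(S_1)]=V[E[h(S_1)|S_2]]+E[V[h(S_1)|S_2]]$, applied term by term in \eqref{eq:AsyDist_SAT_Avar}. The target is to show that $V_{\mathrm{sat},2}-V_{\mathrm{sat},1}$ equals the expectation of a single non-negative $S_2$-conditional variance. In the perfect-compliance specialization at the end of Section \ref{sec:SAT}, a short calculation delivers exactly this:
\[
V_{\mathrm{sat},2}-V_{\mathrm{sat},1} ~=~ E\!\left[\, V\!\left[\, \sqrt{\tfrac{1-\pi_A}{\pi_A}}\, E[Y(1)|S_1] + \sqrt{\tfrac{\pi_A}{1-\pi_A}}\, E[Y(0)|S_1] \,\big|\, S_2 \,\right]\right] ~\geq~ 0,
\]
where the weights $\alpha=\sqrt{(1-\pi_A)/\pi_A}$ and $\beta=\sqrt{\pi_A/(1-\pi_A)}$ are chosen so that $\alpha\beta=1$, which is exactly what is needed to absorb the $2\,\mathrm{Cov}$ cross-term generated by $V_H^{\mathrm{sat}}$ into the $\tfrac{1-\pi_A}{\pi_A}$ and $\tfrac{\pi_A}{1-\pi_A}$ multipliers produced by $V_{Y,1}^{\mathrm{sat}}$ and $V_{Y,0}^{\mathrm{sat}}$. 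I expect the general-compliance case to collapse into an analogous statement, with one composite $S_1$-measurable ``score'' per potential outcome.

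The main obstacle is the combinatorial complexity of \eqref{eq:AsyDist_SAT_Avar}: the pieces $V_{D,1}^{\mathrm{sat}}$ and $V_{D,0}^{\mathrm{sat}}$ are sums of squares of three-term linear combinations of $C$-means, $AT$/$NT$-offsets, and the LATE deviation, and under the aggregation identities above the $S_1$-level type probabilities $\pi_{D(a)}(s_1)$ appear inside these squares rather than multiplying them. Matching terms so that all cross-products between $AT$-, $NT$-, and $C$-means either cancel or combine into a single conditional variance requires introducing a small set of composite $S_1$-measurable scores (one built from $E[Y(1)|AT,S_1]$ and $E[Y(1)|C,S_1]$ weighted by $\pi_{D(0)}(S_1)/\pi_{D(1)}(S_1)$, a symmetric one for $Y(0)$ and the $NT$/$C$ pair, and one absorbing the compliance-gap term), and then invoking an AM--GM step analogous to the perfect-compliance identity $\alpha\beta=1$ above to verify that the $V_Y$ pieces dominate every cross-product generated by the $V_D$ and $V_H$ pieces.
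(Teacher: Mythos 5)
Your overall strategy is exactly the paper's: reduce all three estimators to $V_{\mathrm{sat}}$ via $\tau(s)=0$, aggregate the $S_1$-level primitives into $S_2$-level ones, apply the law of total variance term by term, and show that the difference $V_{\mathrm{sat},2}-V_{\mathrm{sat},1}$ is the expectation of a single non-negative $S_2$-conditional variance of a composite $S_1$-measurable score. Your perfect-compliance computation is correct and is precisely the special case of the paper's argument. However, the general-compliance case --- which is the actual content of the theorem --- is left as a conjecture (``I expect the general-compliance case to collapse into an analogous statement''), and this is where the real work lies; as written, the proposal does not prove \eqref{eq:coarser_3}.

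Two concrete points where your sketch would need correction to go through. First, the working scores are not ``one per potential outcome'': the paper's proof shows that $V_{Y,1}^{\mathrm{sat}}+V_{D,1}^{\mathrm{sat}}$ contracts (under coarsening) by exactly the between-strata variance of
$M_{1}(s_1)=(\pi_{D(1)}(s_1)-\pi_{D(0)}(s_1))E[Y(1)|C,S_1=s_1]+\pi_{D(0)}(s_1)E[Y(1)|AT,S_1=s_1]+(1-\pi_{D(1)}(s_1))(E[Y(0)|NT,S_1=s_1]+\beta)$,
which mixes $Y(1)$ and $Y(0)$ across the three types (it is essentially the conditional mean of the observed outcome in the treated arm, shifted by $\beta$); the analogous score $M_0$ for the control arm likewise involves $E[Y(1)|AT,\cdot]$. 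Your proposed weighting by $\pi_{D(0)}/\pi_{D(1)}$ does not produce this cancellation of the $AT$--$NT$--$C$ cross-products. Second, the decisive step is not an AM--GM bound but an exact linear identity among the three scores, $M_1=M_0+M_H+\beta$ with $M_H=(\pi_{D(1)}-\pi_{D(0)})(E[Y(1)-Y(0)|C,S_1]-\beta)$, which lets the combination $\tfrac{1}{\pi_A}V[M_1|S_2]+\tfrac{1}{1-\pi_A}V[M_0|S_2]-V[M_H|S_2]$ (gain, gain, minus the loss from $V_H^{\mathrm{sat}}$, which \emph{increases} under refinement) collapse into $\tfrac{1}{\pi_A}V[(1-\pi_A)^{-1/2}M_0+(1-\pi_A)^{1/2}M_H\,|\,S_2]\geq 0$. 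Without identifying these specific scores and that identity, the claim that all cross-terms cancel remains unverified.
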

%%%%%%%%%%%% DIVIDER %%%%%%%%%%%%

One implication of Theorem \ref{thm:coarser_moreVar} is that, all else equal, a finer strata structure is always preferable from the point of view of the asymptotic efficiency of the LATE estimator. Since the strata are defined based on the baseline covariate $Z$, the finest possible strata structure is one in which each point in the support of $Z$, denoted by $\mathcal{Z}$, is assigned to its own stratum. This idea is feasible if $Z$ is discretely distributed and $\mathcal{Z}$ is finite. On the other hand, this conclusion would fail if $\mathcal{Z}$ takes infinitely many values, as our formal analysis is based on the presumption that $\mathcal{S}$ is a finite set and $p(s)>0$ for all $s \in \mathcal{S}$. In this sense, our asymptotic framework limits our ability to make extreme recommendations based on Theorem \ref{thm:coarser_moreVar}.

It is relevant to connect Theorem \ref{thm:coarser_moreVar} with the recent contributions by \cite{tabordmeehan:2020} and \cite{bai:2022}. Unlike our work, both references consider inference for the ATE under perfect compliance. \cite{bai:2022} considers the problem of treatment assignment in an RCT and shows that the optimal stratified randomization scheme (in the sense of minimizing the asymptotic variance of the ATE estimator) is achieved by a certain matched-pair design, i.e., strata formed by pairs of individuals. While our asymptotic framework does not allow matched-pair designs, one could interpret them as the limiting result of repeatedly splitting up our strata. In this sense, Theorem \ref{thm:coarser_moreVar} is compatible with \cite{bai:2022}'s optimality result. In turn, \cite{tabordmeehan:2020} considers treatment assignment in an RCT using a randomization procedure referred to as {\it stratification trees} in the context of perfect compliance. The main result in \cite{tabordmeehan:2020} shows that these stratification trees can be used to find an optimal stratification function (again, in the sense of minimizing the asymptotic variance of the ATE estimator). To derive this result, he restricts attention to strata functions with a fixed level of complexity or {\it tree depth}. Theorem \ref{thm:coarser_moreVar} implies that the asymptotic variance of the ATE estimator cannot increase if we consider the optimal stratification tree in \cite{tabordmeehan:2020} and we further divide any of its branches. Note that this is compatible with the main results in \cite{tabordmeehan:2020}, as the further divided tree would have a level of complexity not allowed in the optimization problem in his paper.

Consider the situation of a researcher who has completed a pilot RCT with CAR. Motivated by Theorem \ref{thm:coarser_moreVar}, the researcher interested in gaining efficiency in the estimation of the LATE would want to run a hypothetical RCT with a finer strata partition than that of the pilot RCT. In this case, the researcher would naturally be interested in predicting the asymptotic variance of the LATE estimator in this hypothetical RCT based on the pilot RCT data (that is, before implementing the hypothetical RCT). This is precisely the problem addressed in Theorem \ref{thm:finer_strata}. This result provides a consistent estimator of the asymptotic variance of the LATE estimator in the hypothetical RCT with finer strata than the pilot RCT, based exclusively on the data from the pilot RCT. 
%The result follows from showing that the data from the pilot RCT can be reinterpreted as if it belonged to the hypothetical RCT.

\begin{theorem}\label{thm:finer_strata}
Let $(( Y_{i},Z_{i},S^P_i,A_{i})) _{i=1}^{n_{P}}$ denote data from a pilot RCT that satisfies Assumptions \ref{ass:1}, \ref{ass:2}, and \ref{ass:3}(c), and uses a strata function given by $S^{P}:\mathcal{Z}\to \mathcal{S}^{P}$. Consider a hypothetical RCT on the same population that satisfies Assumptions \ref{ass:1} and \ref{ass:3}, that uses the same $\pi _{A}$ as the pilot RCT, $\tau ( s) =0$ for all $s\in \mathcal{S}$, and a strata function $S:\mathcal{Z} \to \mathcal{S}$ that is finer than that of the pilot RCT i.e.,
\begin{equation}
\forall z,z^{\prime }\in \mathcal{Z}, ~~~S( z) =S( z^{\prime }) ~~\Longrightarrow~~ S^{P}( z) =S^{P}( z^{\prime }). \label{eq:newS_0}
\end{equation}
We also assume that $n_{A}^{P}(s)/n^{P}(s)\overset{p}{\to }\pi _{A}$ for all $s\in \mathcal{S}$, where $n_{A}^{P}( s) \equiv \sum_{i=1}^{n_{P}}1[ A_{i}=1,S( Z_{i}) =s] $ and $n^{P}( s) \equiv \sum_{i=1}^{n_{P}}1[ S( Z_{i}) =s] $.

Let $\hat{\beta}$ denote the SAT, SFE, or 2S IV LATE estimator from this hypothetical RCT with sample size $n$. Then, as $n\to \infty $,
\begin{equation}
\sqrt{n}( \hat{\beta}-\beta ) ~~\overset{d}{\to }~~ N( 0,V_{\mathrm{sat}}) . \label{eq:newS_1}
\end{equation}
Furthermore, $V_{\mathrm{sat}}$ can be consistently estimated using Theorem \ref{thm:SAT_se} using the pilot data but with updated strata information, given by $(( Y_{i},Z_{i},S(Z_i),A_{i})) _{i=1}^{n_{P}}$.
\end{theorem}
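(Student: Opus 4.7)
The argument splits naturally into two parts. For the asymptotic normality \eqref{eq:newS_1}, the hypothetical RCT satisfies Assumptions \ref{ass:1} and \ref{ass:3}, so Theorems \ref{thm:AsyDist_SAT}, \ref{thm:AsyDist_SFE}, and \ref{thm:AsyDist_2SR} apply and yield $\sqrt{n}(\hat{\beta}_{\bullet} - \beta) \overset{d}{\to} N(0, V_{\bullet})$ for $\bullet \in \{\mathrm{sat}, \mathrm{sfe}, \mathrm{2s}\}$. Inspecting \eqref{eq:VA} and \eqref{eq:VA2} shows that $\tau(s) = 0$ for all $s$ forces $V_A^{\mathrm{sfe}} = V_A^{\mathrm{2s}} = 0$, so $V_{\mathrm{sfe}} = V_{\mathrm{2s}} = V_{\mathrm{sat}}$, giving the claimed limit.

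The substantive part is the consistency of the variance estimator. The key observation is that, by inspection of \eqref{eq:AsyDist_SAT_Avar}, the value of $V_{\mathrm{sat}}$ depends only on the population distribution $Q$ (through $p(s)$, $\pi_{D(0)}(s)$, $\pi_{D(1)}(s)$, and the conditional moments of the potential outcomes under the finer strata $S$) together with the researcher's choice of $\pi_A$. Since the pilot and hypothetical RCTs are conducted on the same population and share the same $\pi_A$, the quantity $V_{\mathrm{sat}}$ is the same object in both. It therefore suffices to show that Theorem \ref{thm:SAT_se}, applied to the pilot sample with $S(Z_i)$ used as the stratum label in place of $S_i^P$, returns a consistent estimator. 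This in turn reduces to verifying that the pilot data, so relabeled, satisfy Assumptions \ref{ass:1} and \ref{ass:2}. Assumption \ref{ass:1} is a statement about $Q$ alone and is inherited directly, and Assumption \ref{ass:2}(b) is precisely the explicit hypothesis $n_A^P(s)/n^P(s) \overset{p}{\to} \pi_A$ for all $s \in \mathcal{S}$.

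The main obstacle is Assumption \ref{ass:2}(a), namely the conditional independence $W^{(n_P)} \perp A^{(n_P)} \mid S^{(n_P)}$ with $S^{(n_P)} \equiv \{S(Z_i) : i = 1, \dots, n_P\}$. The pilot satisfies the analogous property with $S^{P,(n_P)}$ in the conditioning. By \eqref{eq:newS_0}, there is a deterministic map $f$ with $S^P = f \circ S$, so $S^{P,(n_P)}$ is a function of $S^{(n_P)}$; moreover, $S^{(n_P)}$ is a function of $W^{(n_P)}$ through $Z_i \in W_i$. Using the inclusion $\{S^{(n_P)} = s\} \subseteq \{S^{P,(n_P)} = f(s)\}$, one can expand the joint law of $(W^{(n_P)}, A^{(n_P)}, S^{(n_P)})$ by conditioning on $S^{P,(n_P)}$, invoke the pilot's conditional independence to factor out $A^{(n_P)}$, and then cancel the $S^{P,(n_P)}$ marginal to obtain $P(W^{(n_P)} \in B \mid A^{(n_P)}, S^{(n_P)} = s) = P(W^{(n_P)} \in B \mid S^{(n_P)} = s)$. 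Once Assumptions \ref{ass:1} and \ref{ass:2} are verified for the relabeled pilot data, Theorem \ref{thm:SAT_se} applies and delivers consistency of the proposed estimator for $V_{\mathrm{sat}}$, completing the argument.
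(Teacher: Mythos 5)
Your proposal is correct and follows essentially the same route as the paper: the limit distribution comes directly from Theorems \ref{thm:AsyDist_SAT}, \ref{thm:AsyDist_SFE}, and \ref{thm:AsyDist_2SR} with $\tau(s)=0$, and the substantive step is verifying Assumptions \ref{ass:1} and \ref{ass:2} for the pilot data relabeled with the finer strata, where the only nontrivial item is Assumption \ref{ass:2}(a). Your argument for that item — using that $S^{P}$ is a deterministic coarsening of $S$, which is itself a function of $W^{(n_P)}$, to factor $A^{(n_P)}$ out via the pilot's conditional independence and identify $P(A^{(n_P)}\mid S^{(n_P)})$ with $P(A^{(n_P)}\mid S^{P,(n_P)})$ — is exactly the computation the paper carries out in its displays \eqref{eq:newS_2}--\eqref{eq:newS_4}.
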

%%%%%%%%%%%% DIVIDER %%%%%%%%%%%%

Theorem \ref{thm:finer_strata} shows how to use the pilot RCT data to estimate the asymptotic variance of the LATE estimators in a hypothetical RCT with finer strata than that of the pilot RCT. In principle, one could also consider a similar exercise in the case when the hypothetical RCT has coarser strata than that of the pilot RCT. We omit this result for the sake of brevity, as Theorem \ref{thm:coarser_moreVar} shows that such a hypothetical RCT would produce less efficient estimators than those from the pilot RCT.

\subsection{Choosing the treatment propensity}\label{sec:pi_opt}

This section considers the effect of the choice of the treatment propensity vector $(\pi_A(s):s\in \mathcal{S})$ on the asymptotic variance of the LATE estimators, for a given the stratification function $S:\mathcal{Z} \to \mathcal{S}$. Under appropriate assumptions, Theorems \ref{thm:AsyDist_SAT}, \ref{thm:AsyDist_SFE}, and \ref{thm:AsyDist_2SR} provide an asymptotic distribution of our LATE IV estimators for any treatment propensity vector $(\pi_A(s):s\in \mathcal{S})$. The following result calculates the optimal treatment propensity in the sense of minimizing their asymptotic variance, and provides a strategy to estimate it based on data from a pilot RCT.

\begin{theorem}\label{thm:opt_pi}
Consider a hypothetical RCT that satisfies Assumptions \ref{ass:1} and \ref{ass:2}. The treatment assignment probability vector that minimizes the asymptotic variance of the SAT IV estimator is $(\pi _{A}^{\ast }(s):s\in \mathcal{S}) $ with
\begin{equation}
\pi _{A}^{\ast }(s)~\equiv~ \left( 1+\sqrt{\frac{\Pi _{2}(s)}{\Pi _{1}(s)}} \right) ^{-1}, \label{eq:Optimal_Pi_s}
\end{equation}
where
\begin{align}
\Pi _{1}( s) &~\equiv~\left[ 
\begin{array}{c}
\left[ 
\begin{array}{c}
V[Y(1)|AT,S=s]\pi _{D(0)}(s)+V[Y(0)|NT,S=s](1-\pi _{D(1)}(s)) \\ 
+V[Y(1)|C,S=s](\pi _{D(1)}(s)-\pi _{D(0)}(s)) \\ 
+(E[Y(1)|C,S=s]-E[Y(1)|AT,S=s])^{2}\times \\ 
\pi _{D(0)}(s)(\pi _{D(1)}(s)-\pi _{D(0)}(s))/\pi _{D(1)}(s)
\end{array}
\right] + \\ 
\frac{(1-\pi _{D(1)}(s))}{\pi _{D(1)}(s)}\left[ 
\begin{array}{c}
-\pi _{D(0)}(s)(E[Y(1)|C,S=s]-E[Y(1)|AT,S=s]) \\ 
+\pi _{D(1)}(s)(E[Y(0)|C,S=s]-E[Y(0)|NT,S=s]) \\ 
+\pi _{D(1)}(s)(E[Y(1)-Y(0)|C,S=s]-\beta )
\end{array}
\right] ^{2}
\end{array}
\right] \notag \\
\Pi _{2}( s) &~\equiv~\left[ 
\begin{array}{c}
\left[ 
\begin{array}{c}
V[Y(1)|AT,S=s]\pi _{D(0)}(s)+V[Y(0)|NT,S=s](1-\pi _{D(1)}(s)) \\ 
+V[Y(0)|C,S=s](\pi _{D(1)}(s)-\pi _{D(0)}(s)) \\ 
+(E[Y(0)|C,S=s]-E[Y(0)|NT,S=s])^{2}\times \\ 
(1-\pi _{D(1)}(s))(\pi _{D(1)}(s)-\pi _{D(0)}(s))/(1-\pi _{D(0)}(s))
\end{array}
\right] + \\ 
\frac{\pi _{D(0)}(s)}{(1-\pi _{D(0)}(s))}\left[ 
\begin{array}{c}
-(1-\pi _{D(0)}(s))(E[Y(1)|C,S=s]-E[Y(1)|AT,S=s]) \\ 
+(1-\pi _{D(1)}(s))(E[Y(0)|C,S=s]-E[Y(0)|NT,S=s]) \\ 
+(1-\pi _{D(0)}(s))(E[Y(1)-Y(0)|C,S=s]-\beta )
\end{array}
\right] ^{2}
\end{array}
\right] .\label{eq:Opt_pi_expression}
\end{align}

Consider a hypothetical RCT that satisfies Assumptions \ref{ass:1} and \ref{ass:3} with $\tau(s)=0$ for all $s \in \mathcal{S}$. The constant treatment assignment probability that minimizes the asymptotic variance of any of the LATE IV estimators is
\begin{equation}
\pi _{A}^{\ast }~\equiv~ \left( 1+\sqrt{\frac{\sum_{s\in \mathcal{S}}p( s) \Pi _{2}(s)}{\sum_{\tilde{s}\in \mathcal{S}}p( \tilde{s} ) \Pi _{1}(\tilde{s})}}\right) ^{-1}. \label{eq:Optimal_Pi}
\end{equation}

Furthermore, \eqref{eq:Optimal_Pi_s} and \eqref{eq:Optimal_Pi} can be consistently estimated from the data in the pilot RCT. To this end, we propose plug-in estimators of the terms on the right-hand side of \eqref{eq:Optimal_Pi_s} or \eqref{eq:Optimal_Pi}, based on Theorems \ref{thm:plim_SAT} and \ref{thm:primitiveEstimation} in the appendix.
\end{theorem}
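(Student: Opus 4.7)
The plan is to recognize that, among the five components of $V_{\mathrm{sat}}$ in Theorem \ref{thm:AsyDist_SAT}, only $V_{Y,1}^{\mathrm{sat}}+V_{D,1}^{\mathrm{sat}}$ and $V_{Y,0}^{\mathrm{sat}}+V_{D,0}^{\mathrm{sat}}$ depend on the treatment propensities, through factors of $1/\pi_A(s)$ and $1/(1-\pi_A(s))$ respectively, while $V_H^{\mathrm{sat}}$ does not involve $\{\pi_A(s)\}$ at all. The first step is therefore an algebraic rearrangement: I would collect the coefficients of $1/\pi_A(s)$ across $V_{Y,1}^{\mathrm{sat}}$ and $V_{D,1}^{\mathrm{sat}}$, noting that the factor $(1-\pi_{D(1)}(s))/\pi_{D(1)}(s)$ in $V_{D,1}^{\mathrm{sat}}$ is what remains after pulling out the common $p(s)/[P(C)^{2}\pi_A(s)]$ prefactor, and similarly for the "0" components. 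This produces the decomposition
\[
V_{\mathrm{sat}}~=~V_H^{\mathrm{sat}}~+~\frac{1}{P(C)^2}\sum_{s\in \mathcal{S}} p(s)\left[\frac{\Pi_1(s)}{\pi_A(s)}+\frac{\Pi_2(s)}{1-\pi_A(s)}\right],
\]
where $\Pi_1(s)$ and $\Pi_2(s)$ are exactly the expressions in \eqref{eq:Opt_pi_expression}.

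The second step is a stratum-by-stratum optimization, since in this form the problem decouples and $V_H^{\mathrm{sat}}$ is irrelevant. For each $s$, I would minimize $f(\pi)=\Pi_1(s)/\pi+\Pi_2(s)/(1-\pi)$ on $(0,1)$. The first order condition $\Pi_1(s)/\pi^{2}=\Pi_2(s)/(1-\pi)^{2}$ rearranges to $(1-\pi)/\pi=\sqrt{\Pi_2(s)/\Pi_1(s)}$, which yields \eqref{eq:Optimal_Pi_s}. Strict convexity of $f$ on $(0,1)$ follows from $f''(\pi)=2\Pi_1(s)/\pi^{3}+2\Pi_2(s)/(1-\pi)^{3}>0$, confirming that the critical point is the global minimum. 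For the constant propensity case of \eqref{eq:Optimal_Pi}, I note that Assumption \ref{ass:3} with $\tau(s)=0$ implies $V_A^{\mathrm{sfe}}=V_A^{\mathrm{2s}}=0$ by Theorems \ref{thm:AsyDist_SFE} and \ref{thm:AsyDist_2SR}, so the three asymptotic variances coincide; setting $\pi_A(s)\equiv\pi_A$ in the decomposition above pulls $\pi_A$ out of the sum, yielding a one-dimensional minimization of the same form whose solution is \eqref{eq:Optimal_Pi}.

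The final step is consistent estimation from the pilot RCT, which will be handled by the plug-in principle. Theorems \ref{thm:plim_SAT} and \ref{thm:primitiveEstimation} (as already invoked in the statement) deliver consistent estimators of the primitive parameters $p(s)$, $\pi_{D(0)}(s)$, $\pi_{D(1)}(s)$, $E[Y(d)|T,S=s]$, and $V[Y(d)|T,S=s]$ for the relevant types $T\in\{C,AT,NT\}$, as well as $\beta$ and $P(C)$. Since $\Pi_1(s)$ and $\Pi_2(s)$ are continuous functions of these parameters, and $\pi_A^*(s)$ (or $\pi_A^*$) is in turn a continuous function of $\Pi_1$ and $\Pi_2$ on the region where $\Pi_1(s)>0$, the continuous mapping theorem yields consistency of the plug-in estimator.

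The main obstacle is the algebraic verification in the first step: one must carefully track the ratios $(1-\pi_{D(1)}(s))/\pi_{D(1)}(s)$ and $\pi_{D(0)}(s)/(1-\pi_{D(0)}(s))$ emerging when combining the "$Y$" and "$D$" components, and confirm that the resulting brackets match the closed-form expressions for $\Pi_1(s)$ and $\Pi_2(s)$ in \eqref{eq:Opt_pi_expression} term by term. Once this decomposition is in hand, the optimization is a brief calculus exercise and the consistency claim follows immediately from previously established results.
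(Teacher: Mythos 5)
Your proposal is correct and follows essentially the same route as the paper's proof: the paper likewise rewrites $V_{\mathrm{sat}}$ as $V_{H}^{\mathrm{sat}}+\frac{1}{P(C)^{2}}\sum_{s}p(s)\left(\frac{\Pi_{1}(s)}{\pi_{A}(s)}+\frac{\Pi_{2}(s)}{1-\pi_{A}(s)}\right)$, minimizes stratum by stratum (and over a common $\pi_{A}$ for the constrained case), and invokes the CMT for the plug-in consistency. Your additional detail on the first-order condition, convexity, and the collection of the $V_{Y,\cdot}^{\mathrm{sat}}+V_{D,\cdot}^{\mathrm{sat}}$ terms simply makes explicit steps the paper leaves implicit.
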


\section{Monte Carlo simulations}\label{sec:MC}

In this section, we explore the various inference methods proposed in the paper via Monte Carlo simulations. This exercise has multiple goals. First, we hope to show that we can accurately estimate the LATE and the asymptotic variance of the various LATE estimators. Second, we seek to confirm the accuracy of our asymptotic normal approximation by showing that the empirical coverage rate of our proposed confidence intervals is close to our desired coverage level. Third, we will use these simulations to explore how the asymptotic variance of the various estimators change as we vary the parameters of the DGP. In particular, we will numerically explore the theoretical predictions in Section \ref{sec:optim} regarding the optimal RCT design.

We consider four simulation designs, which we describe in Table \ref{tab:designs}. We consider CAR based on either four or eight strata, i.e., $\mathcal{S} \in \{4,8\}$. These were generated by intersecting the support of $L$ binary covariates with $L \in \{2,3\}$.
Design 1 is our baseline design, where we consider an RCT with four strata, and with treatment assignment probabilities and strata-specific LATE constant across strata. Design 2 is similar to our baseline design, but we divide each stratum into two, effectively increasing the number of strata from four to eight.
Design 3 is similar to our baseline design, but we allow the strata-specific LATE to vary across strata. Finally, Design 4 is similar to Design 3, but we also enable treatment assignment probabilities to vary by strata.

\begin{table}[ht]
 \centering
 \begin{tabular}{cccc}
 \hline\hline
 Design & $|\mathcal{S}|$ & $(\pi_A(s):s \in \mathcal{S})$ & $(\beta(s):s \in \mathcal{S})$ \\\hline
 1 & 4 & constant & constant\\ 
 2 & 8 & constant & constant\\ 
 3 & 4 & constant & not constant\\ 
 4 & 4 & not constant & not constant\\ 
 \hline\hline
 \end{tabular}
 \caption{\small Description of the simulation designs.}
 \label{tab:designs}
\end{table}

For each simulation design, we show the average results of computing the estimators in $5,000$ independent replications of the RCT with sample size $n=200$. For each RCT, we consider that the researcher assigns the treatment according to the following CAR mechanisms: SRS (as described in Example \ref{ex:SRS}), SBR (as described in Example \ref{ex:SBR}), and the minimization methods in \cite{pocock/simon:1975} (PSM) and \cite{hu/hu:2012} (HHM) (as described in Example \ref{ex:minimization}).
The implementation of the minimization methods requires several tuning parameters. For Designs 1, 3, and 4, (i.e., designs with two binary covariates), we follow the guidance in \citet[Section 4.1]{hu/hu:2012}, and we set the biasing probability to $\lambda=0.85$, uniform PSM weights (i.e., $w_{m,1}=w_{m,2} = 0.5$, and $w_{o}=w_{s}=0$ by definition), and HHM weights to $(w_{o}, w_{m, 1}, w_{m, 2}, w_{s}) = (0.3, 0.1, 0.1, 0.5)$. For Design 2, we keep the biasing probability at $\lambda=0.85$ and uniform PSM weights (i.e., $w_{m,1}=w_{m,2}=w_{m,3} = 1/3$, and $w_{o}=w_{s}=0$ by definition), and we change the HHM weights to $(w_{o}, w_{m, 1}, w_{m, 2},w_{m,3}, w_{s}) = (0.04, {1}/{60}, {1}/{60}, {1}/{60}, 0.91)$. We note that our HHM weights for Designs 1, 3, and 4 satisfy Conditions (A)-(B) in \citet[Theorem 3.1]{hu/hu:2012} and those for Design 2 satisfy Condition (C) in \citet[Theorem 3.2]{hu/hu:2012}. Thus, our implementation of HHM satisfies Assumption \ref{ass:3} with $\tau(s) = 0$ for all $s \in \mathcal{S}$. In turn, our implementation of PSM satisfies Assumption \ref{ass:2}, but it is not currently known whether it satisfies Assumption \ref{ass:3}(b). For this reason, we restrict attention to the SAT IV regression when presenting results for PSM.

\subsection{Design 1}
% This is called case 1' in Mengsi's excel spreadsheet

We begin our description of Design 1 by specifying the features known to the researcher implementing the RCT. This researcher knows that there are two binary covariates (i.e., $L=2$), resulting in four strata (i.e., $|\mathcal{S}|=4$), and that RCT participants are assigned into treatment or control by using SRS, SBR, PSM, or HHM with constant treatment assignment probabilities equal to $\pi_A(s)=1/2$ for all $s \in \mathcal{S}$.

We next describe those aspects of Design 1 unknown to the researcher implementing the RCT. First, we make all strata to be equally likely, i.e., $p(s)=1/4$ for all $s \in \mathcal{S}$. Second, RCT participants belong to each type according to i.i.d.\ draws of a multinomial distribution with
\begin{align}
 P(C|S=s) = 0.7,~~P(AT|S=s) = 0.15,~\text{and}~~P(NT|S=s) = 0.15.
 \label{eq:P_types_MC}
\end{align}
Third, conditional on their type and their strata, the RCT participants have potential outcomes that are i.i.d.\ drawn according to a normal distribution with
\begin{align}
 (E(Y(0)|C,S=s):s\in \mathcal{S}) &~=~ [0,0,0,0]\notag\\
 (E(Y(1)|C,S=s):s\in \mathcal{S}) &~=~ [1,1,1,1]\notag\\
 (E(Y(0)|NT,S=s):s\in \mathcal{S}) &~=~ [-0.6, -0.4, -0.2, 0]\notag\\
 (E(Y(1)|AT,S=s):s\in \mathcal{S}) &~=~ [2, 2.2, 2.4, 2.6]
 \label{eq:E_types_MC}
\end{align}
and
\begin{align}
 V(Y(0)|C,S=s)=0.5, ~V(Y(1)|C,S=s)=3,~ V(Y(0)|NT,S=s) =V(Y(1)|AT,S=s)=1.
 \label{eq:V_types_MC}
\end{align}
Note that \eqref{eq:E_types_MC} implies that the strata-specific LATE is $\beta(s) =E(Y(1)-Y(0)|C,S=s) = 1$ and, thus, the LATE is $\beta = 1$.\footnote{As discussed in earlier sections, the ITT is equal to the LATE multiplied by $P(C)=0.7$.} Also, while the distribution of treatment effects for compliers are homogeneous across strata, the corresponding distribution for always takers and never takers is not.\footnote{We have also conducted simulations in a case in which there is homogeneity across strata for all types. These results are omitted for brevity and available upon request.}

The simulation results for Design 1 are provided in Table \ref{tab:Design1}. The results reveal that all the proposed estimators are very close to the true LATE (equal to one). Across our simulations, the bias is almost zero and the squared error of estimation is close to the asymptotic variance. We now describe the behavior of the asymptotic variance of these estimators. As shown by our formal results, the asymptotic variance is constant under SBR and HHM (i.e., $\tau(s)=0$), and the asymptotic variance of the SAT IV estimator is the same across CAR mechanisms. The homogeneity of the distribution of treatment effects for the compliers across strata implies that $V_A^{\mathrm{sfe}}=0$ and, thus, the asymptotic variance of the SFE IV estimator is the same across CAR mechanisms. Finally, the heterogeneity of the distribution of potential outcomes for never takers and always takers causes that $V_A^{\mathrm{2s}}>0$.\footnote{This is the main difference with the simulations in which there is homogeneity across strata for all types. In that case, $V_A^{\mathrm{2s}}=0$ and thus all estimators under consideration have the same asymptotic variance under SRS and SBR.} This explains why the asymptotic variance of the 2S IV estimator is higher under SRS than under SBR and HHM. It is relevant to note that, for all estimators, the asymptotic variance under SBR and HHM is smaller or equal than that under SRS, as demonstrated in Section \ref{sec:RCTChoice_1}. In all cases, the average value of our proposed asymptotic variance estimate is very close to the true asymptotic variance. From these results and the fact that the normal asymptotic approximation is accurate, it follows that the empirical coverage rate of the true LATE is very close to the desired coverage rate of 95\%.

The simulation results in Table \ref{tab:Design1} were obtained using a treatment assignment probability of $\pi_A(s)=1/2$ for all $s \in \mathcal{S}$. The most efficient LATE estimator in Table \ref{tab:Design1} is the SAT IV estimator, which has an asymptotic variance equal to 14.5306. Section \ref{sec:pi_opt} shows us to improve on the efficiency of this estimator by optimizing the treatment assignment probabilities. According to Theorem \ref{thm:opt_pi}, the optimal treatment assignment probability vector is $(\pi_A^*(s):s \in\mathcal{S}) = (0.6362, 0.6339, 0.6303, 0.6256)$ and the optimal constant treatment assignment probability is $\pi_A^* = 0.6314$. For the SAT IV estimator, the former yields an asymptotic variance of 13.5913 and the later yields an asymptotic variance of 13.5922. To put this into perspective, we can compute the efficiency gain in terms of ``effective sample size''. This calculation reveals that using $\pi_A(s)=1/2$ for all $s \in \mathcal{S}$ is approximately equivalent to discarding 6.458\% of the sample relative to the optimal constant treatment assignment probability and 6.465\% of the sample relative to the optimal treatment assignment probability vector.
%\footnote{This is because $15.5408/1,059.3\approx 14.6714/1,000$ and $15.5408/1,059.4 \approx 14.6696/1,000$.} 

\begin{table}[ht]
 \centering
 \begin{tabular}{ccccccc}
 \hline \hline
CAR & Estimator & Avg.\ est.& Avg.\ SE &  Avg.\ AVar.\ est. & AVar.  & Coverage \\
\hline
 & $\hat{\beta}_\mathrm{sat}$ 
& 0.9981 & 14.3750 & 14.4206 & 14.5306 & 0.9478 \\
SBR &$\hat{\beta}_\mathrm{sfe}$ 
& 0.9981 & 14.3751 & 14.4206 & 14.5306 & 0.9478 \\
 & $\hat{\beta}_\mathrm{2s}$ 
 & 0.9982 & 14.3771 & 14.4206 & 14.5306	& 0.9472 \\
 \hline
 & $\hat{\beta}_\mathrm{sat}$ 
 & 1.0023 & 14.2152	& 14.6968 & 14.5306	& 0.9552 \\
SRS & $\hat{\beta}_\mathrm{sfe}$ 
& 1.0023 & 14.2020 & 14.7172 & 14.5306 & 0.9562 \\
 & $\hat{\beta}_\mathrm{2s}$ 
 & 1.0020 & 14.0122	& 14.9885 & 14.5673	& 0.9602 \\
 \hline
 & $\hat{\beta}_\mathrm{sat}$ 
 & 0.9947 & 14.1154	& 14.6231 & 14.5306	& 0.9548 \\
 PSM &$\hat{\beta}_\mathrm{sfe}$ 
 & 0.9947 & 14.1169 & U & U & U \\
 & $\hat{\beta}_\mathrm{2s}$ 
 & 0.9950 & 14.0385 & U & U & U \\
 \hline
 & $\hat{\beta}_\mathrm{sat}$ 
 & 0.9990 & 14.9270	& 14.5686 & 14.5306	& 0.9472 \\
 HHM &$\hat{\beta}_\mathrm{sfe}$ 
 & 0.9990 & 14.9275	& 14.5686 & 14.5306	& 0.9472 \\
 & $\hat{\beta}_\mathrm{2s}$ 
 & 0.9990 & 14.9270	& 14.5686 & 14.5306	& 0.9486 \\
 \hline
 \hline
 \end{tabular}
 \caption{\small Simulation results over $5,000$ replications of Design 1 with sample size $n=200$. The columns labels are as follows: ``CAR'' denotes the CAR treatment assignment mechanism, which can be SBR, SRS, PSM or HHM, ``Estimator'' denotes the LATE estimator under consideration, which can be $\hat{\beta}_\mathrm{sat}$, $\hat{\beta}_\mathrm{sfe}$, or $\hat{\beta}_\mathrm{2s}$, ``Avg.\ est.'' denotes average LATE estimate over simulations, ``Avg.\ SE'' denotes the average squared error of estimation over simulations scaled by $n$, ``Avg.\ AVar.\ est.'' denotes the average asymptotic variance estimate over simulations,  ``AVar.'' denotes the asymptotic variance of the LATE estimator, and ``Coverage'' denotes the coverage rate of the true LATE over the simulations with desired coverage rate of $1-\alpha =95\%$. Finally, we use ``U'' to indicate that the asymptotic properties of PSM are unknown except in the case of the SAT regression.}
 \label{tab:Design1}
\end{table}

\subsection{Design 2}
% This is called case 1' with |S|=8 in Mengsi's excel spreadsheet

In Section \ref{sec:strataChoice}, we showed that the asymptotic variance of the proposed LATE estimators does not increase if the strata of the RCT becomes finer and all else remains equal. We explore this prediction in Design 2, where we are splitting each stratum in Design 1 into two equally-sized strata. This produces a DGP with eight strata (i.e., $|\mathcal{S}|=8$). As in Design 1, RCT participants are still assigned into treatment or control by using SRS, SBR, PSM or HHM with constant treatment assignment probabilities equal to $\pi_A(s)=1/2$.

We now describe the features of Design 2 that are unknown to the researcher. All strata remain equally likely, i.e., $p(s)=1/8$ for all $s \in \mathcal{S}$, and RCT participants are assigned into types according to i.i.d.\ draws of a multinomial distribution with probabilities as in \eqref{eq:P_types_MC}. Conditional on their type and their strata, the RCT participants have potential outcomes that are i.i.d.\ drawn according to a normal distribution with
\begin{align}
 (E(Y(0)|C,S=s):s\in \mathcal{S}) &~=~ [-0.5,0.5,-0.5,0.5,-0.5,0.5,-0.5,0.5]\notag\\
 (E(Y(1)|C,S=s):s\in \mathcal{S}) &~=~ [0.5,1.5,0.5,1.5,0.5,1.5,0.5,1.5]\notag\\
 (E(Y(0)|NT,S=s):s\in \mathcal{S}) &~=~ [-1.1, -0.1, -0.9, 0.1, -0.7, 0.3, -0.5, 0.5]\notag\\
 (E(Y(1)|AT,S=s):s\in \mathcal{S}) &~=~ [1.5, 2.5, 1.7, 2.7, 1.9, 2.9, 2.1, 3.1]
 \label{eq:E_types_MC2}
\end{align}
and
\begin{align}
 V(Y(0)|C,S=s)=0.25, ~V(Y(1)|C,S=s)=2.75,~ V(Y(0)|NT,S=s) =V(Y(1)|AT,S=s)=0.75.
 \label{eq:V_types_MC2}
\end{align}
The parameters of the DGP in Design 2 are the result from splitting each stratum in Design 1 into two equally-sized strata. In particular, the law of iterated expectations and the law of total variance imply that \eqref{eq:E_types_MC2} and \eqref{eq:V_types_MC2} are compatible with \eqref{eq:E_types_MC} and \eqref{eq:V_types_MC}. Note also that \eqref{eq:E_types_MC2} implies that the strata-specific LATE is $\beta(s) =E(Y(1)-Y(0)|C,S=s) = 1$ and, thus, the LATE is $\beta = 1$. 

The simulation results for Design 2 are provided in Table \ref{tab:Design2}. These results are qualitatively similar to those in Table \ref{tab:Design1}. The estimators of the LATE and the various asymptotic variances are very accurate, and the empirical coverage rate is very close to the desired coverage level.
The main noticeable difference between the tables is that the asymptotic variance of each estimator in Design 2 is smaller than the corresponding one in Design 1, as expected from Theorem \ref{thm:coarser_moreVar}. For example, the asymptotic variance of the SAT IV estimator is 14.5306 in the RCT with four strata and becomes 12.4898 in the RCT with eight strata. In terms of ``effective sample size'', this means that using four strata instead of eight is approximately equivalent to discarding 14.04\% of the sample relative to using eight strata.

We can also consider optimizing the treatment assignment probability in these simulations. Focusing on the SAT IV estimator, the optimal treatment assignment probability vector yields an asymptotic variance of 11.366 and the optimal constant treatment assignment probability yields an asymptotic variance of 11.3678. In terms of ``effective sample size'', this means that using $\pi_A(s)=1/2$ for all $s \in \mathcal{S}$ is approximately equivalent to discarding 8.984\% of the sample relative to the optimal constant treatment assignment probability and 8.998\% of the sample relative to the optimal treatment assignment probability vector.

\begin{table}[ht]
\centering
\begin{tabular}{ccccccc}
\hline\hline
CAR & Estimator & Avg.\ est.& Avg.\ SE & Avg.\ AVar.\ est. & AVar. & Coverage \\
\hline
& $\hat{\beta}_\mathrm{sat}$ 
& 0.9987 & 12.0676	& 12.0972 & 12.4898	& 0.9498 \\
SBR &$\hat{\beta}_\mathrm{sfe}$ 
& 0.9987 & 12.0660 & 12.0972 & 12.4898 & 0.9498 \\
& $\hat{\beta}_\mathrm{2s}$ 
& 0.9986 & 12.0766	& 12.0972 & 12.4898	& 0.9484 \\
\hline
& $\hat{\beta}_\mathrm{sat}$ 
& 1.0012 & 12.6288 & 12.5982 & 12.4898 & 0.9490 \\
SRS & $\hat{\beta}_\mathrm{sfe}$ 
& 1.0015 & 12.5865 & 12.6958 & 12.4898 & 0.9480 \\
& $\hat{\beta}_\mathrm{2s}$ 
& 0.9994 & 14.2107	& 15.3848 & 14.5673	& 0.9586 \\
\hline
& $\hat{\beta}_\mathrm{sat}$ 
& 1.0002 & 12.5345	& 12.4266 & 12.4898	& 0.9462 \\
PSM &$\hat{\beta}_\mathrm{sfe}$ 
& 1.0002 & 12.5282 & U & U & U \\
& $\hat{\beta}_\mathrm{2s}$ 
& 1.0000 & 12.3183 & U & U & U \\\hline
& $\hat{\beta}_\mathrm{sat}$ 
& 0.9960 & 12.3157	& 12.2443 & 12.4898	& 0.9446 \\
HHM &$\hat{\beta}_\mathrm{sfe}$ 
& 0.9959 & 12.3127 & 12.2443 & 12.4898 & 0.9440 \\
& $\hat{\beta}_\mathrm{2s}$ 
& 0.9963 & 12.3593	& 12.2443 & 12.4898	& 0.9448 \\
\hline \hline
\end{tabular}
\caption{\small Simulation results over $5,000$ replications of Design 2 with sample size $n=200$. The columns labels are as follows: ``CAR'' denotes the CAR treatment assignment mechanism, which can be SBR, SRS, PSM or HHM, ``Estimator'' denotes the LATE estimator under consideration, which can be $\hat{\beta}_\mathrm{sat}$, $\hat{\beta}_\mathrm{sfe}$, or $\hat{\beta}_\mathrm{2s}$, ``Avg.\ est.'' denotes average LATE estimate over simulations, ``Avg.\ SE'' denotes the average squared error of estimation over simulations scaled by $n$, ``Avg.\ AVar.\ est.'' denotes the average asymptotic variance estimate over simulations, ``AVar.'' denotes the asymptotic variance of the LATE estimator, and ``Coverage'' denotes the coverage rate of the true LATE over the simulations with desired coverage rate of $1-\alpha =95\%$. Finally, we use ``U'' to indicate that the asymptotic properties of PSM are unknown except in the case of the SAT regression.}
\label{tab:Design2}
\end{table}

\subsection{Design 3}
% This is called case 2' with |S|=4 in Mengsi's excel spreadsheet

Relative to Design 1, Design 3 introduces heterogeneity of the strata-specific treatment effects. In terms of the observable features to the researcher, however, Design 3 is identical to Design 1. There are four strata (i.e., $|\mathcal{S}|=4$) and RCT participants are assigned into treatment or control by using SRS, SBR, PSM or HHM with constant treatment assignment probabilities equal to $\pi_A(s)=0.7$ for all $s \in \mathcal{S}$. 

We now describe the features of Design 3 that are unobserved to the researcher. As in Design 1, all strata are equally likely, i.e., $p(s)=1/4$ for all $s \in \mathcal{S}$, and RCT participants are assigned into types according to i.i.d.\ draws of a multinomial distribution with probabilities as in \eqref{eq:P_types_MC}. Conditional on their type and their strata, RCT participants have potential outcomes that are i.i.d.\ drawn according to a normal distribution. The conditional variance of this distribution is as in \eqref{eq:V_types_MC}. The conditional mean of this distribution for never takers and always takes is as in \eqref{eq:E_types_MC}, while that for compliers is as follows
\begin{align}
 (E(Y(0)|C,S=s):s\in \mathcal{S}) &~=~ [0, 0.2, 0.4, 0.6]\notag\\
 (E(Y(1)|C,S=s):s\in \mathcal{S}) &~=~ [-1, 1.2, 1.4, 3.6].
 \label{eq:E_types_MC3}
\end{align}
Note also that \eqref{eq:E_types_MC3} implies that the strata-specific LATE is $(\beta(s):s\in \mathcal{S}) =(E(Y(1)-Y(0)|C,S=s) :s\in \mathcal{S}) = [-1,1,1,3]$. When this information is combined with $P(C|S=s)=0.7$ and $p(s)=1/4$ for all $s\in \mathcal{S}$, it follows that the LATE is (still) $\beta = 1$. 

The simulation results for Design 3 are provided in Table \ref{tab:Design3}. Most of the results are qualitatively similar to those in previous simulations. The estimators of the LATE and the various asymptotic variances are very accurate, and the empirical coverage rate is very close to the desired coverage level. As predicted by our results, all LATE IV estimators have the same asymptotic variance when SBR or HHM is used. The main difference with respect to the results in Design 1 appears when SRS is used.  In Design 1, the SAT and SFE IV estimators have the same asymptotic variance under SRS. In Design 3, the SFE IV estimator has larger asymptotic variance than the SAT IV estimator when SRS is used. This is compatible with Theorem \ref{thm:AsyDist_SFE}, as the heterogeneity of the strata-specific treatment effect in Design 3 implies that $V_{A}^{\mathrm{sfe}}>0$.

Finally, we could also explore the effects of optimizing the treatment assignment probability. We decided to avoid this here and in the next design for the sake of brevity.
%We now explore the effect of optimizing the treatment assignment probability in Design 3. We focus on the SAT IV estimator, which has an asymptotic variance of 19.4935 when the treatment assignment probability is $\pi_A(s)=1/2$ for all $s \in \mathcal{S}$. In contrast, the optimal treatment assignment probability vector yields an asymptotic variance of 19.3159 and the optimal constant treatment assignment probability yields an asymptotic variance of 19.3257. In terms of ``effective sample size'', this means that using $\pi_A(s)=1/2$ for all $s \in \mathcal{S}$ is approximately equivalent to discarding 0.92\% of the sample relative to the optimal constant treatment assignment probability and 0.87\% of the sample relative to the optimal treatment assignment probability vector.

\begin{table}[ht]
\centering
\begin{tabular}{ccccccc}
\hline\hline
CAR & Estimator & Avg.\ est.& Avg.\ SE & Avg.\ AVar.\ est. & AVar. & Coverage \\
\hline
& $\hat{\beta}_\mathrm{sat}$ 
& 0.9943 & 16.9138 & 16.7226 & 16.5909 & 0.9482 \\
SBR &$\hat{\beta}_\mathrm{sfe}$ 
& 0.9942 & 16.8798 & 16.7226 & 16.5909 & 0.9488 \\
& $\hat{\beta}_\mathrm{2s}$ 
& 0.9943 & 16.9976	& 16.7226 & 16.5909	& 0.9486 \\
\hline
& $\hat{\beta}_\mathrm{sat}$ 
& 0.9978 & 17.5020	& 17.0201 & 16.5909	& 0.9462 \\
SRS & $\hat{\beta}_\mathrm{sfe}$ 
& 0.9995 & 19.0459 & 19.1864 & 18.1147 & 0.9506 \\
& $\hat{\beta}_\mathrm{2s}$ 
& 0.9951 & 20.1232	& 19.9878 & 19.1584	& 0.9500 \\
\hline
& $\hat{\beta}_\mathrm{sat}$ 
& 0.9961 & 16.6584	& 16.7337 & 16.5909	& 0.9534 \\
PSM & $\hat{\beta}_\mathrm{sfe}$ 
& 0.9964 & 16.7238 & U & U & U \\
& $\hat{\beta}_\mathrm{2s}$ 
& 0.9963 & 16.6057 & U & U & U \\
\hline
& $\hat{\beta}_\mathrm{sat}$ 
& 0.9929 & 16.4258 & 16.7655 & 16.5909 & 0.9530 \\
HHM & $\hat{\beta}_\mathrm{sfe}$ 
& 0.9931 & 16.4938 & 16.7655 & 16.5909 & 0.9530 \\
& $\hat{\beta}_\mathrm{2s}$ 
& 0.9925 & 16.6222 & 16.7655 & 16.5909 & 0.9508 \\
\hline \hline
\end{tabular}
\caption{\small Simulation results over $5,000$ replications of Design 3 with sample size $n=200$. The columns labels are as follows: ``CAR'' denotes the CAR treatment assignment mechanism, which can be SBR, SRS, PSM or HHM, ``Estimator'' denotes the LATE estimator under consideration, which can be $\hat{\beta}_\mathrm{sat}$, $\hat{\beta}_\mathrm{sfe}$, or $\hat{\beta}_\mathrm{2s}$, ``Avg.\ est.'' denotes average LATE estimate over simulations, ``Avg.\ SE'' denotes the average squared error of estimation over simulations scaled by $n$, ``Avg.\ AVar.\ est.'' denotes the average asymptotic variance estimate over simulations,  ``AVar.'' denotes the asymptotic variance of the LATE estimator, and ``Coverage'' denotes the coverage rate of the true LATE over the simulations with desired coverage rate of $1-\alpha =95\%$. Finally, we use ``U'' to indicate that the asymptotic properties of PSM are unknown except in the case of the SAT regression.}
\label{tab:Design3}
\end{table}

\subsection{Design 4}
% This is called case 4 with |S|=4 in Mengsi's excel spreadsheet

Design 4 considers a researcher who implements CAR with an heterogeneous treatment assignment probability. The RCT in Design 4 has four strata, just like in Design 1 and 3. Unlike all previous designs, RCT participants are assigned into treatment or control by using SRS, SBR, PSM or HHM with treatment assignment probability vector equal to
\begin{align}
  (\pi_A(s):s\in \mathcal{S}) &~=~ [0.3,0.7,0.6,0.8].
  \label{eq:pi_vector_MC4}
\end{align}

We now describe the features of Design 4 that are unobserved to the researcher. First, all strata are equally likely, i.e., $p(s)=1/4$ for all $s \in \mathcal{S}$. Second, RCT participants are assigned into types according to i.i.d.\ draws of a multinomial distribution with probabilities 
\begin{align}
  (P(AT|S=s):s\in \mathcal{S}) &~=~ 
  [0.15, 0.15, 0.1, 0.15]\notag\\
  (P(NT|S=s):s\in \mathcal{S}) &~=~ [0.25,0.15,0.2,0.05]\notag\\
  (P(C|S=s):s\in \mathcal{S}) &~=~ [0.6,0.7,0.7,0.8].
  \label{eq:P_types_MC4}
\end{align}
Conditional on their type and strata, RCT participants have potential outcomes that are i.i.d.\ drawn according to a normal distribution with conditional variance as in \eqref{eq:V_types_MC}, and conditional mean given by
\begin{align}
 (E(Y(0)|C,S=s):s\in \mathcal{S}) &~=~ [0, 0.2, 0.4, 0.6]\notag\\
 (E(Y(1)|C,S=s):s\in \mathcal{S}) &~=~ [-5.6, 3,    4.8, 2]\notag\\
 (E(Y(0)|NT,S=s):s\in \mathcal{S}) &~=~ [-0.6, -0.4, -0.2, 0]\notag\\
 (E(Y(1)|AT,S=s):s\in \mathcal{S}) &~=~ [2, 2.2, 2.4, 2.6]
 \label{eq:E_types_MC4}
\end{align}
These parameters in \eqref{eq:E_types_MC4} imply that the strata-specific LATE is $(\beta(s):s\in \mathcal{S}) =(E(Y(1)-Y(0)|C,S=s) :s\in \mathcal{S}) = [-5.6,2.8,4.4,1.4]$. When this information is combined with $(P(C|S=s):s\in \mathcal{S})$ in \eqref{eq:P_types_MC4} and $p(s)=1/4$ for all $s\in \mathcal{S}$, we can verify that the LATE is (still) $\beta = 1$.

The simulation results for Design 4 are provided in Table \ref{tab:Design4}. Given the heterogeneity of the treatment assignment probability in \eqref{eq:pi_vector_MC4}, neither the SFE IV estimator nor the 2S IV estimator are guaranteed to be consistent for the LATE. In fact, under our current conditions, Theorem \ref{thm:plim_SFE} in the appendix implies that $\hat{\beta}_\mathrm{sfe} \overset{p}{\to} 1.0974 \neq 1=\beta $ and Theorem \ref{thm:plim_2SR} in the appendix implies that $\hat{\beta}_\mathrm{2s} \overset{p}{\to} 2.0422 \neq 1=\beta$. Note that this occurs for all CAR mechanisms. Since these estimators are not consistent for the parameter of interest, there is no point in discussing their asymptotic variance or their coverage. In contrast, the SAT IV estimator remains consistent in this scenario. In addition, the estimator of the asymptotic variance of the SAT IV estimator is accurate and the empirical coverage level is close to the desired coverage level. These simulations confirm that the inference based on the SAT IV estimator is valid under more general conditions than that based on the other two IV estimators.

\begin{table}[ht]
\centering
\begin{tabular}{ccccccc}
\hline\hline
CAR & Estimator & Avg.\ est.& Avg.\ SE & Avg.\ AVar.\ est. & AVar. & Coverage \\\hline
& $\hat{\beta}_\mathrm{sat}$    
& 0.9999 & 47.6372	& 46.4695 & 47.1206	& 0.9428 \\
SBR &$\hat{\beta}_\mathrm{sfe}$  
& 1.0948 & 53.6611 & NR & NR & NR \\
& $\hat{\beta}_\mathrm{2s}$    
& 2.0388 & 237.6425 & NR & NR  & NR \\
\hline
& $\hat{\beta}_\mathrm{sat}$   
& 1.0145 & 48.7670	& 47.5906 & 47.1206	& 0.9366 \\
SRS & $\hat{\beta}_\mathrm{sfe}$ 
& 1.1114 & 64.5130 & NR & NR & NR \\
& $\hat{\beta}_\mathrm{2s}$     
& 2.0456 & 249.6926 & NR & NR & NR \\
\hline
& $\hat{\beta}_\mathrm{sat}$   
& 0.9977 & 46.8252	& 45.4723 & 47.1206	& 0.9422 \\
PSM & $\hat{\beta}_\mathrm{sfe}$ 
& 1.0580 & 52.2365 & U & U & U \\
& $\hat{\beta}_\mathrm{2s}$  
& 1.9421 & 206.1005 & U & U & U \\
\hline
& $\hat{\beta}_\mathrm{sat}$   
& 0.9964 & 47.5980	& 44.9222 & 47.1206	& 0.9372 \\
HHM & $\hat{\beta}_\mathrm{sfe}$ 
& 1.0428 & 51.6829 & NR & NR & NR \\
& $\hat{\beta}_\mathrm{2s}$    
& 1.9649 & 209.4983 & NR & NR & NR \\
\hline\hline
\end{tabular}
\caption{\small Simulation results over $5,000$ replications of Design 4 with sample size $n=200$. The columns labels are as follows: ``CAR'' denotes the CAR treatment assignment mechanism, which can be SBR, SRS, PSM or HHM, ``Estimator'' denotes the LATE estimator under consideration, which can be $\hat{\beta}_\mathrm{sat}$, $\hat{\beta}_\mathrm{sfe}$, or $\hat{\beta}_\mathrm{2s}$, ``Avg.\ est.'' denotes average LATE estimate over simulations, ``Avg.\ SE'' denotes the average squared error of estimation over simulations scaled by $n$, 
``Avg.\ AVar.\ est.'' denotes the average asymptotic variance estimate over simulations, ``AVar.'' denotes the asymptotic variance of the LATE estimator, and ``Coverage'' denotes the coverage rate of the true LATE over the simulations with desired coverage rate of $1-\alpha =95\%$. ``NR'' indicates that the corresponding asymptotic variance is not relevant as the conditions for the consistency of the corresponding SFE and 2S IV estimators are not satisfied. Finally, we use ``U'' to indicate that the asymptotic properties of PSM are unknown except in the case of the SAT regression.}
 \label{tab:Design4}
\end{table}

\section{Empirical Illustration}\label{sec:application}

In this section, we consider an empirical illustration based on \cite{dupas/karlan/robinson/ubfal:2018}. The authors use an RCT to investigate the economic impact of expanding access to basic bank accounts in several countries: Malawi, Uganda, and Chile.\footnote{The data are publicly available at \url{https://www.aeaweb.org/articles?id=10.1257/app.20160597}.} These countries differ significantly in their level of development and banking access, with Uganda being the intermediate country in both of these respects. For our illustration, we focus on their RCT conducted in Uganda.

We now briefly summarize the empirical setting; see \cite{dupas/karlan/robinson/ubfal:2018} for a more detailed description.
Bank accounts are important to daily economic life, but the rate of opening a bank account in developing countries is relatively low compared to developed countries. During the time of the RCT, the authors report that 74\% of households in Uganda were unbanked. Among the many benefits of having access to bank accounts, the authors focus on its arguably most primary function: safekeeping. 

\cite{dupas/karlan/robinson/ubfal:2018} selected a random sample of 2,159 Ugandan households who did not have a bank account in 2011. These households were assigned into a treatment or a control group. Treated households were given a voucher for a free savings account at their nearest bank branch, without any fees for the RCT duration, along with assistance to complete the necessary paperwork required to open this account. Households in the control group were not provided with these vouchers. We use the binary variable $A_i \in \{0,1\}$ to indicate if household $i$ was given the voucher for the free savings account. The treatment assignment was stratified by gender, occupation,
%\footnote{The occupation categories were classified as employee, self-employed: vendor, business owner, trader; or farmer: including animal rearing, and housewife or unemployed.} 
and bank branch,
%\footnote{There were three branches}
which generated 41 strata, i.e., $s\in \mathcal{S}=\{1,\cdots,41\}$.
Within each stratum, households were randomly assigned to treatment or control using SBR with $\pi_A(s)=1/2$ for all $s\in\mathcal{S}$. As a result, 1,080 households were selected to receive the treatment, while the remaining 1,079 were placed in the control group. The households in the sample were re-interviewed thrice during 2012-2013. In these follow-up surveys, the authors collected information about their saving behavior and several other related outcomes.

This RCT featured imperfect compliance. While none of the 1,079 households in the control group accessed a free savings account, not every one of the 1,080 households in the treatment group opened their free savings account. \citet[Table 1]{dupas/karlan/robinson/ubfal:2018} reveals that, out of the 1,080 treated households, only 54\% actually opened up the bank account, while only 42\% made at least one deposit during the RCT. As in \cite{dupas/karlan/robinson/ubfal:2018}, we are interested in the effect of opening and {\it using} the free savings account. We define the binary variable $D_i = D_i(A_i) \in \{0,1\}$ to indicate if household $i$ has opened and used the free savings account in this RCT.\footnote{Following the paper, we consider an account used if the owner has made at least one deposit during the RCT. We have also explored alternative definitions of usage, and we obtained qualitatively similar findings.}

Given this setup, we use our IV regressions to estimate and conduct inference on the LATE for several outcomes of interest.\footnote{In this RCT, $D_i(0)=0$ and so there are no always takers. As a consequence, the LATE coincides with the TOT, i.e., $E[Y(1)-Y(0)|D=1]$.} \cite{dupas/karlan/robinson/ubfal:2018} collect information on two types of outcomes: saving stocks and downstream outcomes. For brevity, we select three outcomes: savings in formal financial institutions, savings in cash at home or in a secret place, and expenditures in the last month. These variables were reported by the households in the final survey of the RCT and measured in 2010 US dollars.

\begin{table}[ht]
\centering
\begin{tabular}{c|ccc}
\hline
\hline
\multirow{2}{*}{Estimator} & {Savings in formal}
& Savings in cash at  & Expenditures in  \\
& fin.\ institutions & home or secret place & last month \\
\hline
$\hat{\beta}_\mathrm{sat}$  & 17.572*** & -7.323 & -2.427 \\
s.e.\ if $\tau(s)=0$ & 4.061 & 4.824 & 3.888 \\
s.e.\ if $\tau(s)=1$  & 4.061 & 4.824 & 3.888 \\
\hline
$\hat{\beta}_\mathrm{sfe}$   & 17.544*** & -7.371 & -2.439 \\
s.e.\ if $\tau(s)=0$ & 4.061 & 4.824 & 3.888 \\
s.e.\ if $\tau(s)=1$  & 4.063 & 4.825 & 3.898 \\
\hline
$\hat{\beta}_\mathrm{2s}$    & 18.000*** & -6.375 & -1.402 \\
s.e.\ if $\tau(s)=0$ & 4.061 & 4.824 & 3.888 \\
s.e.\ if $\tau(s)=1$  & 4.166 & 5.106 & 4.278 \\
\hline\hline
\multicolumn{4}{c}{Results copied from \cite{dupas/karlan/robinson/ubfal:2018}}\\
\hline
\hline
$\hat{\beta}$  & 20.117*** & -6.250 & 0.622 \\
s.e. & 2.795 & 3.503 & 2.72 \\
\hline
\hline
\end{tabular}
\caption{\small Results of the IV regressions based on data from \cite{dupas/karlan/robinson/ubfal:2018}. For $t\in\{0,1\}$, ``s.e.\ if $\tau(s)=t$'' denotes the estimated standard error of the LATE estimator (divided by $\sqrt{n}$) under the assumption that the RCT uses $\tau(s)=t$ for all $s \in \mathcal{S}$. Note that the RCT used SBR, and so $\tau(s)=0$ for all $s \in \mathcal{S}$. The significance level of LATE estimators is indicated with stars in the usual manner: ``***'' means significant at $\alpha=1\%$, ``**'' means significant at $\alpha=5\%$, and``*'' means significant at $\alpha=10\%$. The bottom of the table shows the corresponding results from \citet[Tables 4 and 5]{dupas/karlan/robinson/ubfal:2018}.}
\label{tab:application}
\end{table}

Table \ref{tab:application} presents the results of the IV regressions computed using our Stata package.\footnote{Our Stata package is publicly available at \href{https://sites.northwestern.edu/federicobugni/carlate/}{https://sites.northwestern.edu/federicobugni/carlate/}.} For each outcome variable, we estimate the LATE of using the savings account based on the SAT, SFE, and 2S IV regressions. Given the setup of the RCT, all of these estimators are consistent. We can also consistently estimate the standard errors of these LATE estimators. Since the RCT uses SBR (i.e., $\tau(s)=0$ for all $s\in \mathcal{S}$), the standard errors for all of these estimators coincide. To illustrate our results, it is also relevant to compare these with the estimates of the standard errors under the incorrect assumption that the RCT used SRS (i.e., $\tau(s)=1$ for all $s\in \mathcal{S}$). (It is relevant to note that the standard Stata command {\it ivregress} would presume that the data were collected using SRS.) First, by Theorem \ref{thm:AsyDist_SAT}, the standard error of the SAT IV estimator does not depend on the details of the CAR mechanism, so the estimated standard error does not depend on $(\tau(s):s\in \mathcal{S})$. Second, since this RCT uses $\pi_A(s)=1/2$ for all $s\in\mathcal{S}$, Theorem \ref{thm:AsyDist_SFE} implies that the standard error of the SFE IV estimator does not depend on $(\tau(s):s\in \mathcal{S})$. Finally, as predicted by Theorem \ref{thm:AsyDist_2SR}, the standard error of the 2S IV estimator under $\tau(s)=1$ for all $s\in \mathcal{S}$ is larger than necessary, resulting in a loss in statistical power. When measured in terms of effective sample size, using the larger standard errors in the 2S IV regression is analog to losing 2.52\%, 5.52\%, and 9.12\% of the sample in the SAT IV regressions of the first, second, and third outcome variables, respectively. 

We now briefly describe the quantitative findings. For brevity, we focus on the LATE estimators based on the SAT IV regression. For complier households, opening and using these savings accounts result in an average increase of savings in formal financial institutions of \$17.572, an average decrease in savings in cash at home or in a secret place of \$7.323, and an average decrease in total expenditures in the last month of \$2.427. The first estimator is significantly different from zero at all significance levels, while the other two are not statistically significantly different from zero. Finally, our estimates indicate that 44.15\% of the households are compliers, while the remaining 55.85\% are never takers.

For the sake of comparison, Table \ref{tab:application} also includes the corresponding results from \citet[Tables 4 and 5]{dupas/karlan/robinson/ubfal:2018}. Their regressions are analogous to our SFE IV specification but with several differences. First, they run regressions using panel data from three periods, while we only use data from the first period. Second, their sample is affected by attrition in the second and third periods. Third, their regressions include additional controls such as the baseline outcome value and period indicator variables. None of these controls are available for our regression based on the data from the first period. Despite these differences, we note that the two sets of results are qualitatively similar.

Following the arguments in Section \ref{sec:optim}, we could use the data to study the optimality of the RCT parameters. For example, we can use the results in Section \ref{sec:pi_opt} to estimate the optimal treatment assignment probabilities. We find that the optimal constant treatment assignment probabilities $\pi_A^*$ for the three outcome variables are 0.549, 0.491, and 0.496, respectively. These probabilities are all very close to the one used in the RCT and consequently result in very minor efficiency gains.
% 0.91%, 0.03%, 0.01%, respectively.

%\input{rest.tex}

\section{Conclusions}\label{sec:conclusions}

This paper studies inference in an RCT with CAR and imperfect compliance of a binary treatment. By CAR, we refer to randomization schemes that first stratify according to baseline covariates and then assign treatment status to achieve ``balance'' within each stratum. In this context, we allow the RCT participants to endogenously decide whether to comply or not with the assigned treatment status. Given the possibility of imperfect compliance, we study inference on the LATE.

We study the asymptotic properties of three LATE estimators derived from IV regression. The first one is the ``fully saturated'' or SAT IV regression, i.e., a linear regression of the outcome on all indicators for all strata and their interaction with the treatment decision, with the latter instrumented with the treatment assignment. We show that the proposed LATE estimator is asymptotically normal, and we characterize its asymptotic variance in terms of primitives of the problem. We provide consistent estimators of the standard errors and asymptotically exact hypothesis tests. This LATE estimator is consistent under weak conditions regarding the CAR method used to implement the RCT (i.e., Assumptions \ref{ass:1}-\ref{ass:2}).

Our second LATE estimator is based on the ``strata fixed effects'' or SFE IV linear regression, i.e., a linear regression of the outcome on indicators for all strata and the treatment decision, with the latter instrumented with the treatment assignment. Our last LATE estimator is based on the ``two-sample'' or 2S IV linear regression, i.e., a linear regression of the outcome on a constant and the treatment decision, with the latter instrumented with the treatment assignment. The consistency of both of these LATE estimators requires additional conditions relative to the one based on the SAT IV linear regression (i.e., Assumptions \ref{ass:1}-\ref{ass:3}). In particular, they require that the target proportion of RCT participants assigned to each treatment cannot vary by strata (see Assumption \ref{ass:3}). Under these conditions, we show that both LATE estimators are asymptotically normal, and we characterize their asymptotic variance in terms of primitives of the problem. We also provide consistent estimators of their standard errors and asymptotically exact hypothesis tests.

Our characterization of the asymptotic properties of the LATE estimators allows us to investigate the influence of the parameters of the RCT. We use this to propose strategies to minimize their asymptotic variance in a hypothetical RCT based on data from its pilot study. We also establish that the asymptotic variance of the proposed LATE estimators does not increase if the strata of the RCT becomes finer and all else remains equal. We determine the optimal treatment assignment probability vector and show how to estimate it consistently based on data from a pilot study.

We confirm our theoretical results in Monte Carlo simulations. We also illustrate the practical relevance of our findings by revisiting the RCT in \cite{dupas/karlan/robinson/ubfal:2018}.

% - -- - - - - - - - - - - 
% HERE STARTS THE APPENDIX 
% - -- - - - - - - - - - - 
\renewcommand{\theequation}{\Alph{section}-\arabic{equation}}
\begin{appendix} 

\begin{small}
\section{Appendix}

%%%%%%%% DIVIDER %%%%%%%%%%%%
\subsection{Additional notation}
%%%%%%%% DIVIDER %%%%%%%%%%%%

This appendix uses the following notation. We use LHS and RHS to denote ``left hand side'' and ``right hand side'', respectively. We also use LLN, CLT, CMT, and LIE to denote ``law of large numbers'', ``central limit theorem'', ``continuous mapping theorem'', ``law of iterated expectations'', respectively.

For any $i=1,\dots,n$ and $(d,a,s)\in \{0,1\}^2\times \mathcal{S}$, we also define
\begin{align}
\tilde{Y}_{i}(d) ~\equiv~ Y_{i}(d)-E[Y_{i}(d)|S_{i}] &~\overset{(1)}{=}~ Y_{i}(d)-E[Y(d)|S]\notag \\
E[ \tilde{Y}_{i}(d)|D_{i}(a)=d,S_{i}=s] &~\overset{(2)}{=}~E[ \tilde{Y}(d)|D(a)=d,S=s] \notag\\
V[ \tilde{Y}_{i}(d)|D_{i}(a)=d,S_{i}=s ]&~\overset{(3)}{=}~V[ \tilde{Y}(d)|D(a)=d,S=s ], \label{eq:defnY_pre}
\end{align}
where (1)-(3) follow from Assumption \ref{ass:1}. Based on \eqref{eq:defnY_pre}, we define
\begin{align}
\mu ( d,a,s) ~\equiv~ E[ \tilde{Y}(d)|D(a)=d,S=s]  ~~~\text{and}~~~
\sigma ^{2}( d,a,s) ~\equiv~ V[ \tilde{Y}(d)|D(a)=d,S=s ]. \label{eq:defnY}
\end{align}
Lemma \ref{lem:MeanTraslation} translates $\mu ( d,a,s) $ and $\sigma ^{2}( d,a,s) $ in terms of the conditional moments of $Y(d)$.
%%%%%%%% DIVIDER %%%%%%%%%%%%

For every $s \in \mathcal{S}$, Sections \ref{sec:A_SFE} and \ref{sec:A_2SR} will use the following notation:
\begin{align*}
n_{A} ~\equiv~ \sum_{s\in \mathcal{S}}n_{A}( s), ~~n_{D}~\equiv~ \sum_{s\in \mathcal{S}}n_{D}( s),~~\text{and}~~    n_{AD}~\equiv~ \sum_{s\in \mathcal{S}}n_{AD}( s).
\end{align*}

%%%%%%%% DIVIDER %%%%%%%%%%%%
\subsection{Auxiliary results}
%%%%%%%% DIVIDER %%%%%%%%%%%%

%%%%%%%% DIVIDER %%%%%%%%%%%%
\begin{lemma}\label{lem:P_type_representation}
Under Assumption \ref{ass:1}, and for any $i=1,\dots,n$ and $s\in \mathcal{S}$,
\begin{align}
&P(D_i( 1) =1,D_i( 0) =1|S_i=s) ~=~P( AT|S=s)~=~\pi _{D( 0) }( s)\in [0,1) \notag\\
&P(D_i( 1) =0,D_i( 0) =0|S_i=s)~=~P( NT|S=s)~=~1-\pi _{D( 1) }( s)\in [0,1) \notag \\
&P(D_i( 1) =1,D_i( 0) =0|S_i=s)~=~P( C|S=s)~=~\pi _{D( 1) }( s) -\pi _{D( 0) }( s)\in (0,1]. \label{eq:pi_defns2} 
\end{align}
\end{lemma}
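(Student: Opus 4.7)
The plan is to verify each of the three identities by reading off the definitions of the compliance types and then applying the no-defiers condition in Assumption \ref{ass:1}(c) to collapse one of the two marginals in each joint event. The first equality in each displayed line is immediate from the definitions: AT, NT, and C correspond respectively to the events $\{D_i(0)=D_i(1)=1\}$, $\{D_i(0)=D_i(1)=0\}$, and $\{D_i(0)=0, D_i(1)=1\}$.

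For the second equality in each line, I would use monotonicity ($D_i(1)\ge D_i(0)$ a.s.\ under Assumption \ref{ass:1}(c)). For AT, the event $\{D_i(0)=1\}$ forces $D_i(1)=1$, so
\[
P(D_i(0)=1,D_i(1)=1\mid S_i=s)=P(D_i(0)=1\mid S_i=s)=\pi_{D(0)}(s).
\]
For NT, the event $\{D_i(1)=0\}$ forces $D_i(0)=0$, so
\[
P(D_i(0)=0,D_i(1)=0\mid S_i=s)=P(D_i(1)=0\mid S_i=s)=1-\pi_{D(1)}(s).
\]
For C, since monotonicity rules out defiers, the three events AT, NT, C partition the sample space conditional on $S_i=s$, so $P(C\mid S=s)=1-\pi_{D(0)}(s)-(1-\pi_{D(1)}(s))=\pi_{D(1)}(s)-\pi_{D(0)}(s)$.

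Finally, to pin down the ranges, I would combine Assumption \ref{ass:1}(d), which says $\pi_{D(1)}(s)-\pi_{D(0)}(s)>0$, with the trivial bounds $0\le \pi_{D(a)}(s)\le 1$. This yields $\pi_{D(0)}(s)<\pi_{D(1)}(s)\le 1$, hence $\pi_{D(0)}(s)\in[0,1)$; symmetrically $\pi_{D(1)}(s)>\pi_{D(0)}(s)\ge 0$, hence $1-\pi_{D(1)}(s)\in[0,1)$; and the complier probability lies in $(0,1]$ by Assumption \ref{ass:1}(d) together with the fact that a probability cannot exceed one.

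There is no real obstacle here: the lemma is essentially a bookkeeping exercise translating the Angrist--Imbens type-partition into the $(\pi_{D(0)},\pi_{D(1)})$ parametrization. The only subtle point is to make sure that, when invoking monotonicity, one uses the a.s.\ version $P(D_i(1)\ge D_i(0))=1$ so that the identifications hold up to a null set, which is automatic since we are working under $Q$.
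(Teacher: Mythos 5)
Your proposal is correct and follows essentially the same route as the paper: the identities are obtained by using monotonicity (Assumption \ref{ass:1}(c)) to collapse each joint event to a marginal, the complier probability follows from the resulting partition, and the range restrictions follow from Assumption \ref{ass:1}(d). The only cosmetic difference is that the paper also explicitly invokes the i.i.d.\ condition to drop the subscript $i$, which you leave implicit.
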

\begin{proof}
%[Proof of Lemma \ref{lem:P_type_representation}]
We begin by showing the first line in \eqref{eq:pi_defns2}.
\begin{align*}
 P(D_i( 1) =1,D_i( 0) =1|S_i=s) &\overset{(1)}{=} P(D( 1) =1,D( 0) =1|S=s) 
%&=P(D( 0) =1|S=s)-P(D( 1) =0,D( 0) =1|S=s)\\
\overset{(2)}{=}P(D( 0) =1|S=s)=\pi _{D(0)}(s)\overset{(3)}{<}1,
\end{align*}
where (1) and (3) hold by the i.i.d.\ condition in Assumption \ref{ass:1}, and (2) holds by Assumption \ref{ass:1}(c). To complete the argument, note that $P(AT|S=s)\equiv P(D( 1) =1,D( 0) =1|S=s)$. The second line in \eqref{eq:pi_defns2} follows from a similar argument. The last line in \eqref{eq:pi_defns2} follows from the other two lines and Assumption \ref{ass:1}(c).
\end{proof}
%%%%%%%% DIVIDER %%%%%%%%%%%%

%\begin{lemma}\label{lem:aux_lemma}
%Assume Assumptions \ref{ass:1} and \ref{ass:2}. For any $i=1,\dots ,n$, $ (d,a,s,b)\in \{ 0,1\}^2 \times \mathcal{S} \times \{1,2\} $, $s_{-i} \in \mathcal{S}^{n-1}$,
%\begin{align}
%P( D_{i}( a) =1|A_{i}=a,S_{i}=s,S_{-i}=s_{-i}) &=P( D( a) =1|S=s) =\pi _{D( a) }( s). \label{eq:aux_lemma_1} 
%\end{align}
%Also, provided that the conditioning event has positive probability,
%\begin{align}
%E[ Y_{i}( d) ^{b}|D_{i}( a) =d,A_{i}=a,,S_{-i}=s_{-i} ] &=E[ Y( d) ^{b}|D( a) =d,S=s]. \label{eq:aux_lemma_3} 
%\end{align}
%\end{lemma}
%%%%%%%% DIVIDER %%%%%%%%%%%%
%\begin{proof}
%Fix $i=1,\dots ,n$ and $ (d,a,s,b)\in \{ 0,1\}^2 \times \mathcal{S} \times \{1,2\} $ arbitrarily, and set $s_i=s$ throughout this proof. To prove \eqref{eq:aux_lemma_1}, consider the following derivation.
%\begin{align*}
%P( D_{i}( a) =1|A_{i}=a,S_{i}=s,S_{-i}=s_{-i})
%\overset{(1)}{=}P( D_{i}( a) =1|S_{i}=s,S_{-i}=s_{-i})\overset{(2)}{=}P( D_{i}( a) =1|S_{i}=s) =\pi _{D( a) }( s) ,
%\end{align*}
%where (1) holds by Assumption \ref{ass:2}(a) and (2) holds by Assumption \ref{ass:1}.
%
%To derive \eqref{eq:aux_lemma_3}, consider the following argument.
%\begin{align*}
%E[ Y_{i}( d) ^{b}|A_{i}=a,D_{i}( a) =d,S_{i}=s,S_{-i}=s_{-i} ]
%&\overset{(1)}{=}E[ Y_{i}( d) ^{b}|D_{i}( a) =d,S_{i}=s ,S_{-i}=s_{-i}] \\
%&\overset{(2)}{=}E[ Y_{i}( d) ^{b}|D_{i}( a) =d,S_{i}=s ] \\
%&\overset{(3)}{=}E[ Y( d) ^{b}|D( a) =d,S=s] ,
%\end{align*}
%where (1) holds by Assumption \ref{ass:2}(a), and (2) and (3) hold by Assumption \ref{ass:1}.
%\end{proof}
%%%%%%%% DIVIDER %%%%%%%%%%%%

\begin{lemma}\label{lem:MeanTraslation}
Under Assumptions \ref{ass:1} and \ref{ass:2}, and provided that the conditioning event has positive probability,
\begin{align}
\mu (1,1,s) &~=~E[Y(1)|AT,S=s]\tfrac{\pi _{D(0)}(s)}{\pi _{D(1)}(s)} +E[Y(1)|C,S=s]\tfrac{\pi _{D( 1) }( s) -\pi _{D( 0) }( s) }{\pi _{D( 1) }( s) } -E[Y(1)|S=s] \notag \\
\mu (0,1,s) &~=~E[Y(0)|NT,S=s]-E[Y(0)|S=s]\notag  \\
\mu (1,0,s) &~=~E[Y(1)|AT,S=s]-E[Y(1)|S=s] \notag \\
\mu (0,0,s) &~=~E[Y(0)|NT,S=s]\tfrac{1-\pi _{D( 1) }( s) }{1-\pi _{D( 0) }( s) }+E[Y(0)|C,S=s]\tfrac{ \pi _{D( 1) }( s) -\pi _{D( 0) }( s) }{1-\pi _{D( 0) }( s) }-E[Y(0)|S=s]\label{eq:Translate}
\end{align}
and
\begin{align}
\sigma ^{2}(1,1,s) &~=~\left(
\begin{array}{c}
V[Y(1)|AT,S=s]\frac{\pi _{D(0)}(s)}{\pi _{D(1)}(s)}+V[Y(1)|C,S=s]\frac{\pi _{D(1)}(s)-\pi _{D(0)}(s)}{\pi _{D(1)}(s)}+ \\
(E[Y(1)|C,S=s]-E[Y(1)|AT,S=s])^{2}\frac{\pi _{D(0)}(s)}{\pi _{D(1)}(s)}\frac{ \pi _{D(1)}(s)-\pi _{D(0)}(s)}{\pi _{D(1)}(s)}
\end{array}
\right)   \notag  \\
\sigma ^{2}(0,1,s) &~=~V[Y(0)|NT,S=s] \notag \\
\sigma ^{2}(1,0,s) &~=~V[Y(1)|AT,S=s] \notag \\
\sigma ^{2}(0,0,s) &~=~\left(
\begin{array}{c}
V[Y(0)|NT,S=s]\frac{1-\pi _{D(1)}(s)}{1-\pi _{D(0)}(s)}+V[Y(0)|C,S=s]\frac{ \pi _{D(1)}(s)-\pi _{D(0)}(s)}{1-\pi _{D(0)}(s)}+ \\
(E[Y(0)|C,S=s]-E[Y(0)|NT,S=s])^{2}\frac{1-\pi _{D(1)}(s)}{1-\pi _{D(0)}(s)} \frac{\pi _{D(1)}(s)-\pi _{D(0)}(s)}{1-\pi _{D(0)}(s)}
\end{array}
\right) .\label{eq:Translate2}
\end{align}
\end{lemma}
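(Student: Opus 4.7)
The plan is to derive each of the eight formulas by decomposing the conditioning event $\{D(a)=d, S=s\}$ according to compliance type and invoking Lemmas \ref{lem:P_type_representation} and \ref{lem:aux_lemma}. Under Assumption \ref{ass:1}(c) there are no defiers, so within stratum $s$ we have the disjoint decompositions $\{D(1)=1\} = \{AT\} \cup \{C\}$, $\{D(1)=0\} = \{NT\}$, $\{D(0)=1\} = \{AT\}$, and $\{D(0)=0\} = \{NT\} \cup \{C\}$. Lemma \ref{lem:aux_lemma} allows me to drop the $A_i=a$ component of the conditioning throughout, which is the only place Assumption \ref{ass:2} enters.

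First I would handle the four mean formulas. Since $\mu(d,a,s) = E[Y(d)\mid D(a)=d,S=s] - E[Y(d)\mid S=s]$, the task reduces to expressing the first term via the type decomposition. The cases $\mu(0,1,s)$ and $\mu(1,0,s)$ are immediate because the event $\{D(a)=d\}$ coincides with a single type (NT and AT, respectively). For $\mu(1,1,s)$, the law of total probability gives
\begin{equation*}
E[Y(1)\mid D(1)=1,S=s] = E[Y(1)\mid AT,S=s]\tfrac{P(AT\mid S=s)}{P(D(1)=1\mid S=s)} + E[Y(1)\mid C,S=s]\tfrac{P(C\mid S=s)}{P(D(1)=1\mid S=s)},
\end{equation*}
and the probability ratios simplify via Lemma \ref{lem:P_type_representation} to $\pi_{D(0)}(s)/\pi_{D(1)}(s)$ and $(\pi_{D(1)}(s)-\pi_{D(0)}(s))/\pi_{D(1)}(s)$; subtracting $E[Y(1)\mid S=s]$ yields the stated expression. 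The case $\mu(0,0,s)$ is symmetric, using the NT/C decomposition of $\{D(0)=0\}$ and the denominator $1-\pi_{D(0)}(s)$.

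Next I would turn to the variance formulas. For the trivial cases $\sigma^2(0,1,s)$ and $\sigma^2(1,0,s)$ nothing is needed beyond the fact that subtracting a function of $S$ leaves the conditional variance given $S=s$ unchanged. For $\sigma^2(1,1,s)$ I would apply the law of total variance, conditioning on the compliance type $T\in\{AT,C\}$ within the event $\{D(1)=1, S=s\}$:
\begin{equation*}
V[Y(1)\mid D(1)=1,S=s] = E\bigl[V[Y(1)\mid T,S=s]\,\big|\,D(1)=1,S=s\bigr] + V\bigl[E[Y(1)\mid T,S=s]\,\big|\,D(1)=1,S=s\bigr].
\end{equation*}
The first summand reproduces the two $V[\cdot]$ terms in the statement with weights $\pi_{D(0)}(s)/\pi_{D(1)}(s)$ and $(\pi_{D(1)}(s)-\pi_{D(0)}(s))/\pi_{D(1)}(s)$. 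The second summand is the variance of a two-point distribution with weights $(p,1-p)$ on values $(a,c) = (E[Y(1)\mid AT,S=s],E[Y(1)\mid C,S=s])$; the identity $p(a-(pa+qc))^2+q(c-(pa+qc))^2 = pq(a-c)^2$ with $q=1-p$ delivers exactly the cross-term displayed. The case $\sigma^2(0,0,s)$ is analogous with types NT and C and denominator $1-\pi_{D(0)}(s)$.

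No step is delicate; the only algebraic point worth flagging is the two-point variance identity above. I would also note that whenever $\pi_{D(0)}(s)=0$ or $\pi_{D(1)}(s)=1$ the formulas remain valid under the convention recorded after Theorem \ref{thm:AsyDist_SAT}, since any undefined conditional moment is paired with a zero weight and reinterpreted as zero.
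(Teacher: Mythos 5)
Your proposal is correct and follows essentially the same route as the paper: decompose the event $\{D(a)=d,S=s\}$ by compliance type (which is exactly the paper's conditioning on $D(1-a)$ given $D(a)=d$), read off the weights from Lemma \ref{lem:P_type_representation}, and use Lemma \ref{lem:aux_lemma} to strip the $A_i=a$ conditioning. The only cosmetic difference is that you obtain the variance formulas via the law of total variance and the two-point identity $pq(a-c)^2$, whereas the paper computes $E[Y(d)^2\mid\cdot]-(E[Y(d)\mid\cdot])^2$ from the same decomposition applied with $b=2$; these are algebraically equivalent.
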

%%%%%%%% DIVIDER %%%%%%%%%%%%
\begin{proof}
The subscript $i=1,\dots,n$ is absent from all expressions due to the i.i.d.\ condition in Assumption \ref{ass:1}. For any $( d,a,s,b) \in \{ 0,1\}^2 \times \mathcal{S}\times \{ 1,2\} $ such that $P(D(a)=d|S=s)=1[d=1]\pi _{D(a)}(s)+1[d=0](1-\pi _{D(a)}(s))>0$,
\begin{align}
E[ Y(d)^{b}|D(a)=d,S=s] 
=\frac{\left[ 
\begin{array}{c}
E[ Y(d)^{b}|D(a)=d,D(1-a)=0,S=s] P(D(1-a)=0,D(a)=d|S=s) \\
+E[ Y(d)^{b}|D(a)=d,D(1-a)=1,S=s] P(D(1-a)=1,D(a)=d|S=s)
\end{array}
\right] }{1[d=1]\pi _{D(a)}(s)+1[d=0](1-\pi _{D(a)}(s))}.\label{eq:Translate6}
\end{align}

We begin by showing \eqref{eq:Translate}. We only show the first line, as the others can be shown analogously.
\begin{align*}
\mu (1,1,s) &\overset{(1)}{=}E[ Y(1)|D(1)=1,S=s] -E[Y(1)|S=s] \\
&\overset{(2)}{=}\frac{\left[ 
\begin{array}{c}
E[ Y(1)|D(1)=1,D(0)=0,S=s] P(D(0)=0,D(1)=1|S=s) \\ 
+E[ Y(1)|D(1)=1,D(0)=1,S=s] P( D(0)=1,D(1)=1|S=s) 
\end{array}
\right] }{\pi _{D(1)}(s)}-E[Y(1)|S=s] \\
%&=E[ Y(1)|AT,S=s] \frac{P[ AT|S=s] }{\pi _{D(1)}(s)}+E [ Y(1)|C,S=s] \frac{P[ C|S=s] }{\pi _{D(1)}(s)} -E[Y(1)|S=s] \\
&\overset{(3)}{=}E[Y(1)|AT,S=s]\tfrac{\pi _{D(0)}(s)}{\pi _{D(1)}(s)}+E[Y(1)|C,S=s]
\tfrac{\pi _{D( 1) }( s) -\pi _{D( 0)
}( s) }{\pi _{D( 1) }( s) }-E[Y(1)|S=s],
\end{align*}
where (1) follows from \eqref{eq:defnY}, (2) follows from \eqref{eq:Translate6}, and (3) follows from Lemma \ref{lem:P_type_representation}.

To conclude, we show \eqref{eq:Translate2}. Again, we only show the first line, as the others can be shown analogously.
\begin{align*}
\sigma ^{2}(1,1,s) &\overset{(1)}{=} V[ Y(1)-E[Y(1)|S=s]|D(1)=1,S=s] \\
&=V[ Y(1)|D(1)=1,S=s] \\
%&=E[ Y(1)^{2}|D(1)=1,S=s] -( E[ Y(1)|D(1)=1,S=s ] ) ^{2} \\
&\overset{(2)}{=}\left(
\begin{array}{c}
\frac{\left[ 
\begin{array}{c}
E[ Y(1)^{2}|D(1)=1,D(0)=0,S=s] P( D(0)=0,D(1)=1|S=s)\\
+E[ Y(1)^{2}|D(1)=1,D(0)=1,S=s] P( D(0)=1,D(1)=1|S=s)
\end{array}
\right] }{\pi _{D(1)}(s)} \\ 
-\frac{\left[ 
\begin{array}{c}
E[ Y(1)|D(1)=1,D(0)=0,S=s] P(D(0)=0,D(1)=1|S=s)  \\ 
+E[ Y(1)|D(1)=1,D(0)=1,S=s] P( D(0)=1,D(1)=1|S=s) 
\end{array}
\right] ^{2}}{( \pi _{D(1)}(s)) ^{2}}
\end{array}
\right)  \\
%% FEDE: THE DERIVATION BELOW IS USEFUL TO HAVE BUT JUST ALGEBRA.
%&=\left\{ 
%\begin{array}{c}
%\frac{[ E[ Y(1)^{2}|C,S=s] P[ C|S=s] +E[ Y(1)^{2}|AT,S=s] P(AT|S=s) ] }{\pi_{D(1)}(s)} \\
%-\frac{[ E[ Y(1)|C,S=s] P[ C|S=s] +E[ Y(1)|AT,S=s] P(AT|S=s)] ^{2}}{( \pi_{D(1)}(s)) ^{2}}
%\end{array}
%\right\}  \\
&\overset{(3)}{=}\left( 
\begin{array}{c}
V[Y(1)|AT,S=s]\frac{\pi _{D(0)}(s)}{\pi _{D(1)}(s)}+V[Y(1)|C,S=s]\frac{\pi _{D(1)}(s)-\pi _{D(0)}(s)}{\pi _{D(1)}(s)}+ \\
(E[Y(1)|AT,S=s]-E[Y(1)|C,S=s])^{2}\frac{\pi _{D(0)}(s)}{\pi _{D(1)}(s)}\frac{ \pi _{D(1)}(s)-\pi _{D(0)}(s)}{\pi _{D(1)}(s)}
\end{array}
\right) ,
\end{align*}
where (1) follows from \eqref{eq:defnY}, (2) follows from \eqref{eq:Translate6}, and (3) follows  follows from Lemma \ref{lem:P_type_representation}.
\end{proof}

%%%%%%%% DIVIDER %%%%%%%%%%%%
\begin{lemma}\label{lem:A1and2_impliesold3}
Under Assumptions \ref{ass:1} and \ref{ass:2},
\begin{equation}
\left. \left\{ \left( \sqrt{n}\left( \frac{n_{AD}(s)}{n_{A}(s)}-\pi _{D(1)}(s),\frac{n_{D}(s)-n_{AD}(s)}{n(s)-n_{A}(s)}-\pi _{D(0)}(s)\right) ^{\prime }:s\in \mathcal{S}\right) \right\vert ((S_{i},A_{i}))_{i=1}^{n}\right\} 
\overset{d}{\to }N( \mathbf{0},\Sigma _{D})\text{ w.p.a.1,} \label{eq:A3A_statement}
\end{equation}
where
\begin{equation}
\Sigma _{D}~\equiv ~diag\left(\left[
\begin{array}{cc}
\frac{\pi _{D(1)}(s)(1-\pi _{D(1)}(s))}{\pi _{A}(s)} & 0 \\
0 & \frac{\pi _{D(0)}(s)(1-\pi _{D(0)}(s))}{1-\pi _{A}(s)}
\end{array}
\right] \frac{1}{p(s)}:s\in \mathcal{S}\right).
\end{equation}
In addition,
\begin{equation}
\left(\frac{n_{AD}(s)}{n_{A}( s) },\frac{ n_{D}(s)-n_{AD}(s) }{ n( s) -n_{A}(s) }\right) \overset{p}{\to}( \pi _{D( 1) }( s) ,\pi _{D( 0) }( s) ).
\label{eq:A3B_statement}
\end{equation} 
\end{lemma}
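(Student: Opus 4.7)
The plan is to reduce both claims to a conditional Lindeberg--Feller CLT applied to sums of Bernoullis. The key starting observation is that by Assumption \ref{ass:2}(a) combined with the i.i.d.\ condition in Assumption \ref{ass:1}, conditional on $\{(S_i,A_i)\}_{i=1}^n$ the variables $\{D_i(A_i)\}_{i=1}^n$ are independent and each $D_i(A_i)$ is Bernoulli with parameter $\pi_{D(A_i)}(S_i)$ (this is the same line of argument used to prove Lemma \ref{lem:aux_lemma}). Since
\begin{equation*}
n_{AD}(s) \;=\; \sum_{i:\,A_i=1,\,S_i=s} D_i(1), \qquad n_D(s)-n_{AD}(s) \;=\; \sum_{i:\,A_i=0,\,S_i=s} D_i(0),
\end{equation*}
these are, conditionally, sums of $n_A(s)$ and $n(s)-n_A(s)$ i.i.d.\ Bernoullis with parameters $\pi_{D(1)}(s)$ and $\pi_{D(0)}(s)$. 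Across the different strata and across the two treatment arms the indexing sets are disjoint, so the whole collection of sums is jointly conditionally independent; this disjointness is what will produce the \emph{diagonal} covariance $\Sigma_D$.

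Next, I would apply the CLT to each component separately. By Assumptions \ref{ass:1}(b) and \ref{ass:2}(b), $n_A(s)/n \overset{p}{\to} p(s)\pi_A(s)>0$ and $(n(s)-n_A(s))/n \overset{p}{\to} p(s)(1-\pi_A(s))>0$, so the event $E_n$ on which $\min_{s\in\mathcal{S}}\bigl(n_A(s)\wedge(n(s)-n_A(s))\bigr)$ is large has probability approaching $1$. Restricted to $E_n$ the Lindeberg condition is automatic for bounded Bernoulli summands, which yields
\begin{equation*}
\sqrt{n_A(s)}\left(\tfrac{n_{AD}(s)}{n_A(s)} - \pi_{D(1)}(s)\right) \;\overset{d}{\to}\; N\bigl(0,\pi_{D(1)}(s)(1-\pi_{D(1)}(s))\bigr),
\end{equation*}
and an analogous statement for the control arm. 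Multiplying by $\sqrt{n/n_A(s)}$ and applying Slutsky together with the probability limit of $n_A(s)/n$ rescales this to the rate-$\sqrt{n}$ statement with variance $\pi_{D(1)}(s)(1-\pi_{D(1)}(s))/(p(s)\pi_A(s))$, which matches the diagonal entry of $\Sigma_D$. Stacking across $s$ and across the two arms via the conditional independence from step one yields the joint normal limit on $E_n$, i.e.\ w.p.a.1, which is exactly \eqref{eq:A3A_statement}.

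For \eqref{eq:A3B_statement} I would either divide the first display by $\sqrt{n}$ and invoke Slutsky, or, more directly, apply a conditional weak law to the same Bernoulli decomposition. The main obstacle will be the conditional-versus-unconditional bookkeeping: Lindeberg--Feller gives a conditional limit on a sequence of events $E_n$ with $P(E_n)\to 1$, and we must pass to the unconditional weak-convergence statement in the lemma. This is standard---for example, by checking that conditional characteristic functions converge in probability to the deterministic characteristic function of $N(\mathbf{0},\Sigma_D)$ and then applying bounded convergence---but it is the technical step that requires the most care and is precisely the reason the lemma is phrased with ``w.p.a.1'' rather than almost surely.
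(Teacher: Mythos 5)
Your proposal follows essentially the same route as the paper's proof: both condition on $\{(S_i,A_i)\}_{i=1}^{n}$, use Assumptions \ref{ass:1} and \ref{ass:2}(a) to show that $n_{AD}(s)$ and $n_D(s)-n_{AD}(s)$ are conditionally independent Binomial sums over disjoint index sets (the paper's display \eqref{eq:pf_A3_4}--\eqref{eq:pf_A3_5}), apply the normal approximation on the w.p.a.1 event where all cell counts diverge, and then rescale from rate $\sqrt{n_A(s)}$ to rate $\sqrt{n}$ via the probability limits $n(s)/n\overset{p}{\to}p(s)$ and $n_A(s)/n(s)\overset{p}{\to}\pi_A(s)$ exactly as in the paper's Step 2. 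The argument is correct and no further comparison is needed.
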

%%%%%%%% DIVIDER %%%%%%%%%%%%
\begin{proof}
We only show \eqref{eq:A3A_statement}, as \eqref{eq:A3B_statement} follows from \eqref{eq:A3A_statement} and elementary convergence arguments. We divide the proof of \eqref{eq:A3A_statement} in two steps. The first step shows that
\begin{align}
&\left. \left\{ \left( \left( \sqrt{n_{A}( s) }\left( \frac{n_{AD}(s)}{n_A(s)}-\pi _{D(1)}(s)\right) ,\sqrt{n( s) -n_{A}( s) }\left( \frac{n_{0D}(s)}{n(s)-n_A(s)}-\pi _{D(0)}(s)\right) \right) ^{\prime }:s\in \mathcal{S}\right) \right\vert ((S_{i},A_{i}))_{i=1}^{n}\right\} \notag \\
&\overset{d}{\to}N\left( \mathbf{0},diag\left( \left[
\begin{array}{cc}
\pi _{D(1)}(s)(1-\pi _{D(1)}(s)) & 0 \\
0 & \pi _{D(0)}(s)(1-\pi _{D(0)}(s))
\end{array}
\right] :s\in \mathcal{S}\right) \right) \text{ w.p.a.1,}\label{eq:pf_A3_6}
\end{align}
where $n_{0D}( s) \equiv \sum_{i=1}^{n}1[ A_{i}=0,D_{i}=1,S_{i}=s] $ for any $s\in \mathcal{S}$. The second step shows that
\begin{equation}
\left. \left\{ \left(\left( \frac{\sqrt{n}}{\sqrt{n_{A}( s) }}, \frac{\sqrt{n}}{\sqrt{n( s) -n_{A}( s) }}\right) ^{\prime }:s\in \mathcal{S}\right)\right\vert ((S_{i},A_{i}))_{i=1}^{n}\right\} \to \frac{1}{\sqrt{ p( s) }}\left( \frac{1}{\sqrt{\pi _{A}( s) }},\frac{1}{ \sqrt{1-\pi _{A}( s) }}\right) \text{ w.p.a.1.}
\label{eq:pf_A3_8}
\end{equation}
Then, \eqref{eq:A3A_statement} follows from \eqref{eq:pf_A3_6} and \eqref{eq:pf_A3_8} via elementary convergence arguments.

\underline{Step 1: Show \eqref{eq:pf_A3_6}.} Conditional on $(( S_{i},A_{i}))_{i=1}^{n}$, note that $(( n_{A}(s),n( s) ) :s\in \mathcal{S}) $ is non-stochastic, and so the only source of randomness in \eqref{eq:pf_A3_6} is $( ( n_{AD}(s),n_{0D}( s) ) :s\in \mathcal{S})$. Also, it is relevant to note that
\begin{align}
n_{AD}( s) &=\sum_{i=1}^{n}1[ A_{i}=1,D_{i}( 1) =1,S_{i}=s] =\sum_{i=1}^{n}1[ A_{i}=1,S_{i}=s] D_{i}( 1) \notag\\
n_{0D}( s) &=\sum_{i=1}^{n}1[ A_{i}=0,D_{i}( 0) =1,S_{i}=s] =\sum_{i=1}^{n}1[ A_{i}=0,S_{i}=s] D_{i}( 0) . \label{eq:pf_A3_2}
\end{align}
According to \eqref{eq:pf_A3_2}, each component of $(( n_{AD}(s),n_{0D}( s) ) ^{\prime }:s\in \mathcal{S})$ is determined by different subset of individuals in the random sample.

As a next step, consider the following derivation for any $(d_{0,i}) _{i=1}^{n}\times ( d_{1,i}) _{i=1}^{n}\times ( a_{i}) _{i=1}^{n}\times (s_{i})_{i=1}^{n}\in \{ 0,1\} ^{n}\times \{ 0,1\} ^{n}\times \{ 0,1\} ^{n}\times \mathcal{S}^{n}$.
\begin{align}
&P( ( ( D_{i}( 0) ,D_{i}( 1) ) ) _{i=1}^{n}=( ( d_{0,i},d_{1,i}) ) _{i=1}^{n}|( ( A_{i},S_{i}) ) _{i=1}^{n}=( ( a_{i},s_{i}) ) _{i=1}^{n}) \notag \\
&\overset{(1)}{=}P( ( ( D_{i}( 0) ,D_{i}( 1) ) ) _{i=1}^{n}=( ( d_{0,i},d_{1,i}) ) _{i=1}^{n}|( S_{i}) _{i=1}^{n}=( s_{i}) _{i=1}^{n}) \notag \\
&=\frac{P( ( ( D_{i}( 0) ,D_{i}( 1) ,S_{i}) ) _{i=1}^{n}=( ( d_{0,i},d_{1,i},s_{i}) ) _{i=1}^{n}) }{P( ( S_{i}) _{i=1}^{n}=( s_{i}) _{i=1}^{n}) } \notag \\ 
&\overset{(2)}{=}
\prod\nolimits_{i=1}^{n}P( ( D( 0) ,D( 1) ) =( d_{0,i},d_{1,i}) |S=s_{i}) , \label{eq:pf_A3_4}
\end{align}
where (1) follows from Assumption \ref{ass:2}(a) and (2) follows from Assumption \ref{ass:1}. Conditionally on $( ( A_{i},S_{i}) ) _{i=1}^{n}=( ( a_{i},s_{i}) ) _{i=1}^{n}$, \eqref{eq:pf_A3_4} reveals that $( ( D_{i}( 0) ,D_{i}( 1) ) ) _{i=1}^{n}$ is an independent sample with $( (  D_{i}( 0) ,D_{i}( 1) ) |( ( A_{i},S_{i}) ) _{i=1}^{n}=( ( a_{i},s_{i}) ) _{i=1}^{n}) \overset{d}{=}( ( D( 0) ,D( 1) ) |S=s_{i}) $.

By \eqref{eq:pf_A3_2}, $(( n_{AD}(s),n_{0D}( s) ) ^{\prime }:s\in \mathcal{S}) $ are the sum of binary observations from different individuals. If we condition on $( ( A_{i},S_{i})) _{i=1}^{n}$, \eqref{eq:pf_A3_2} and \eqref{eq:pf_A3_4} imply that
\begin{equation}
 \{ ( ( n_{AD}(s),n_{D}(s)-n_{AD}(s)) ^{\prime }:s\in \mathcal{S}) \vert (S_{i})_{i=1}^{n},(A_{i})_{i=1}^{n}) \} ~\overset{d}{=}~( ( B( 1,s) ,B( 0,s) ) ^{\prime }:s\in \mathcal{S} ) , \label{eq:pf_A3_5}
\end{equation}
where $( ( B( 1,s) ,B( 0,s) ) ^{\prime }:s\in \mathcal{S}) $ are independent random variables with $ B( 1,s) \sim Bi( n_{A}( s) ,\pi _{D(1)}(s)) $ and $B( 0,s) \sim Bi( n( s) -n_{A}( s) ,\pi _{D(0)}(s)) $. Provided that we condition on sequences of $(( S_{i},A_{i}) )_{i=1}^{n}$ with $n_{A}( s) \to \infty $ and $n( s) -n_{A}( s) \to \infty $ for all $s\in \mathcal{S}$, \eqref{eq:pf_A3_6} follows immediately from \eqref{eq:pf_A3_5} and the normal approximation to the binomial.

To conclude the step, it suffices to show that $n_{A}( s) \to \infty $ and $n( s) -n_{A}( s) \to \infty $ for all $s\in \mathcal{S}$ w.p.a.1. In turn, note that this is a consequence of $n( s) /n\overset{a.s}{\to }p( s) >0$ under Assumption \ref{ass:1} and $n_{A}( s) /n( s) \overset{p}{\to}\pi _{A}( s) \in ( 0,1) $ by Assumption \ref{ass:2}(b).

\underline{Step 2: Show \eqref{eq:pf_A3_8}.} Fix $s\in \mathcal{S}$ arbitrarily and notice that
\begin{align}
\left( \frac{\sqrt{n}}{\sqrt{n_{A}( s) }}, \frac{\sqrt{n}}{\sqrt{n( s) -n_{A}( s) }}\right) =\left[ \tfrac{n}{n( s) } \left( \tfrac{1}{n_{A}( s) /n( s) },\tfrac{1}{ 1-n_{A}( s) /n( s) }\right) \right] ^{1/2}
\overset{p}{\to}\frac{1}{\sqrt{p( s) }}\left( \frac{1}{\sqrt{ \pi _{A}( s) }},\frac{1}{\sqrt{1-\pi _{A}( s) }} \right) ,\label{eq:pf_A3_7}
\end{align}
where the convergence follows from Assumptions \ref{ass:1} and \ref{ass:2}(b). From \eqref{eq:pf_A3_7} and the fact that $( ( n_{A}(s),n( s) ) :s\in \mathcal{S}) $ is non-stochastic once we condition on $( ( A_{i},S_{i}) ) _{i=1}^{n}$, \eqref{eq:pf_A3_8} follows.
\end{proof}

\begin{lemma}\label{lem:AsyDist}
Assume Assumptions \ref{ass:1} and \ref{ass:3}, and define 
\begin{equation*}
R_{n}~\equiv~( R_{n,1}',R_{n,2}',R_{n,3}',R_{n,4}') ',
\end{equation*}
where
\begin{align}
R_{n,1} &~\equiv~\left( \frac{1}{\sqrt{n}}\sum_{i=1}^{n}1[ D_{i}=d,A_{i}=a,S_{i}=s] ( \tilde{Y}_{i}( d) -\mu (d,a,s) ) :( d,a,s) \in \{ 0,1\}^2 \times \mathcal{S}\right) \notag\\
R_{n,2} &~\equiv~\left( \left[ \sqrt{n}\left( \frac{n_{AD}(s)}{n_{A}( s) }-\pi _{D( 1) }( s) \right) ,\sqrt{n}\left( \frac{ n_{D}(s)-n_{AD}(s)}{n( s) -n_{A}(s)}-\pi _{D( 0) }( s) \right) \right]':s\in S\right)\notag \\ 
R_{n,3} &~\equiv~\left( \sqrt{n}\left( \frac{n_{A}( s) }{n( s) }-\pi _{A}( s) \right) :s\in S\right) \notag\\ 
R_{n,4} &~\equiv~\left( \sqrt{n}\left( \frac{n( s) }{n}-p( s) \right) :s\in S\right) .\label{eq:Rn_defn}
\end{align}
Then,
\begin{equation*}
R_{n}~\overset{d}{\to }~N\left( \left( 
\begin{array}{c}
\mathbf{0} \\ 
\mathbf{0} \\ 
\mathbf{0} \\ 
\mathbf{0}
\end{array}
\right) ,\left( 
\begin{array}{cccc}
\Sigma _{1} & \mathbf{0} & \mathbf{0} & \mathbf{0} \\ 
\mathbf{0} & \Sigma _{2} & \mathbf{0} & \mathbf{0} \\ 
\mathbf{0} & \mathbf{0} & \Sigma _{3} & \mathbf{0} \\ 
\mathbf{0} & \mathbf{0} & \mathbf{0} & \Sigma _{4}
\end{array}
\right) \right) ,
\end{equation*}
where
\begin{align*}
\Sigma _{1} &~\equiv~diag \left( \left[ 
\begin{array}{c}
1[ ( d,a) =( 0,0) ] ( 1-\pi _{D( 0) }( s) ) ( 1-\pi _{A}( s) ) \\
+1[ ( d,a) =( 1,0) ] \pi _{D( 0) }( s) ( 1-\pi _{A}( s) )\\
+1[ ( d,a) =( 0,1) ] ( 1-\pi _{D( 1) }( s) ) \pi _{A}( s) \\
+1[ ( d,a) =( 1,1) ] \pi _{D( 1) }( s) \pi _{A}( s)
\end{array}
\right]  p( s) \sigma^{2}( d,a,s)~:~( d,a,s) \in \{ 0,1\} ^2 \times \mathcal{S}\right) \\
\Sigma _{2} &~\equiv~diag \left(\left[
\begin{array}{cc}
{( 1-\pi _{D( 1) }( s) ) \pi _{D( 1) }( s) }/{\pi _{A}( s) } & 0 \\
0 & {( 1-\pi _{D( 0) }( s) ) \pi _{D( 0) }( s) }/{(1-\pi _{A}( s)) }
\end{array}
\right]/{p( s) }:s\in \mathcal{S}\right) \\
\Sigma _{3} &~\equiv~diag( \tau ( s) ( 1-\pi _{A}( s) ) \pi _{A}( s)/ p( s) :s\in \mathcal{S}) \\
\Sigma _{4} &~\equiv~diag( p( s) :s\in \mathcal{S}) -( p( s) :s\in \mathcal{S}) ( p( s) :s\in \mathcal{S}) '.
\end{align*}
\end{lemma}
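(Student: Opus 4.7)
The plan is to establish joint asymptotic normality via a sequential conditioning argument on the joint characteristic function, exploiting the nested measurability of the blocks: $R_{n,4}\in\sigma(S^{(n)})$, $R_{n,3}\in\sigma(S^{(n)},A^{(n)})$, and $(R_{n,1},R_{n,2})$ additionally involves the potential outcomes and compliance decisions in $W^{(n)}$. A preliminary task is to verify that the claimed limiting covariance is indeed block-diagonal. The only non-obvious vanishing block is between $R_{n,1}$ and $R_{n,2}$: for each individual $i$ with $(A_i,S_i)=(a,s)$, the $R_{n,1}$ summand $I\{D_i(a)=d\}(\tilde Y_i(d)-\mu(d,a,s))$ has conditional mean zero \emph{on} the event $\{D_i(a)=d\}$, whereas the $R_{n,2}$ summand depends on $i$ only through $D_i(a)$; a direct calculation using Lemmas \ref{lem:P_type_representation} and \ref{lem:aux_lemma} shows that the cross-moment vanishes.

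By the tower property,
\begin{equation*}
E[e^{it'R_n}] \;=\; E\!\Bigl[e^{it_4'R_{n,4}}\, E\!\bigl[e^{it_3'R_{n,3}}\, E[e^{i(t_1'R_{n,1}+t_2'R_{n,2})}\mid S^{(n)},A^{(n)}]\,\big|\, S^{(n)}\bigr]\Bigr].
\end{equation*}
Conditional on $(S^{(n)},A^{(n)})$, Assumption \ref{ass:3}(a) implies $W^{(n)}$ consists of independent observations whose within-stratum conditional laws coincide with the unconditional $Q$-laws. Partitioning participants by their $(A_i,S_i)$ cell and applying a multivariate Lindeberg-Feller CLT---with Lindeberg's condition secured by the finite second moments in Assumption \ref{ass:1}(a)---together with the cell-frequency limits $n(s)/n\overset{p}{\to}p(s)$ and $n_A(s)/n(s)\overset{p}{\to}\pi_A(s)$ from Assumptions \ref{ass:1}(b) and \ref{ass:2}(b), I would conclude $\{(R_{n,1},R_{n,2})\mid(S^{(n)},A^{(n)})\}\overset{d}{\to} N(\mathbf{0},\mathrm{diag}(\Sigma_1,\Sigma_2))$ w.p.a.1. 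The next conditioning level uses Assumption \ref{ass:3}(b) to give $\{R_{n,3}\mid S^{(n)}\}\overset{d}{\to} N(\mathbf{0},\Sigma_3)$ w.p.a.1, and the outermost layer uses the standard multinomial CLT for $R_{n,4}$. Since characteristic functions are uniformly bounded by $1$, dominated convergence pushes each product through, yielding $E[e^{it'R_n}]\to\exp(-\tfrac{1}{2} t'\,\mathrm{diag}(\Sigma_1,\Sigma_2,\Sigma_3,\Sigma_4)\,t)$, which identifies the joint limit.

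The main obstacle is the innermost step: identifying the conditional covariance of $(R_{n,1},R_{n,2})$ as $\mathrm{diag}(\Sigma_1,\Sigma_2)$. Within each $(a,s)$ cell, one must show the scaled sample second moment of the centered summands converges in probability to the correct within-cell variance. For the $R_{n,1}$ block this requires combining a within-cell SLLN with the cell-frequency limits to recover each term $p(s)\,\pi_A(s)^{a}(1-\pi_A(s))^{1-a}\,\pi_{D(a)}(s)^{d}(1-\pi_{D(a)}(s))^{1-d}\,\sigma^2(d,a,s)$; the $R_{n,2}$ block is essentially already supplied by Lemma \ref{lem:A1and2_impliesold3}. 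A secondary technical subtlety is that Assumption \ref{ass:3}(b) delivers conditional convergence only w.p.a.1 rather than almost surely; this is handled by a subsequence argument allowing w.p.a.1 convergence to substitute for a.s.\ convergence in the characteristic-function limits, as flagged in the paper's remark on the weakening of \citet[Assumption 4.1(c)]{bugni/canay/shaikh:2019}.
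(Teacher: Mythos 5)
Your proposal is correct, but it routes the central step through a genuinely different argument than the paper. The paper isolates $R_{n,1}$ via a reordering-and-coupling construction in the spirit of \citet[Lemma B.2]{bugni/canay/shaikh:2018}: conditional on $\{(D_i,A_i,S_i)\}_{i=1}^{n}$, the summands of $R_{n,1}$ are re-indexed by $(d,a,s)$-cell and replaced in distribution by partial sums, over deterministic index blocks $[\lfloor nF_{G(d,a,s)}\rfloor+1,\lfloor nF_{G(d,a,s)+1}\rfloor]$, of an auxiliary i.i.d.\ array $\check Y_i(d,a,s)$ that is independent of $\{(D_i,A_i,S_i)\}_{i=1}^{n}$; asymptotic independence of $R_{n,1}$ from $(R_{n,2},R_{n,3},R_{n,4})$ is then structural rather than computed, $R_{n,1}$'s limit follows from a partial-sum-process CLT, and only $(R_{n,2},R_{n,3},R_{n,4})$ are handled by the sequential characteristic-function conditioning you describe. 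You instead keep $R_{n,1}$ and $R_{n,2}$ together at the innermost conditioning level, apply a conditional Lindeberg--Feller CLT to the per-individual summands (which are indeed conditionally independent across $i$ given $(S^{(n)},A^{(n)})$ under Assumptions \ref{ass:1} and \ref{ass:3}(a)), and obtain block-diagonality from the orthogonality computation $E[I\{D(a)=d\}(\tilde Y(d)-\mu(d,a,s))(D(a)-\pi_{D(a)}(s))\mid S=s]=(d-\pi_{D(a)}(s))\,P(D(a)=d\mid S=s)\cdot 0=0$, which is valid precisely because $\mu(d,a,s)=E[\tilde Y(d)\mid D(a)=d,S=s]$; the remaining cross-terms vanish trivially because they load on disjoint sets of individuals. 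Your route is shorter and avoids the reordering machinery, at the cost of verifying the cross-covariances by hand and of computing the conditional variance of $R_{n,1}(d,a,s)$ as $\tfrac{1}{n}\#\{i:A_i=a,S_i=s\}\cdot P(D(a)=d\mid S=s)\,\sigma^2(d,a,s)$ before passing to the limit; the paper's coupling buys exact independence of the surrogate for $R_{n,1}$ and is reused to prove Lemma \ref{lem:AsyDist2}, so it is not redundant there. Your subsequence treatment of the w.p.a.1 issue matches the paper's $\varepsilon$-argument in its Step 2.
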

%%%%%%%% DIVIDER %%%%%%%%%%%%
\begin{proof}
Throughout this proof, it is relevant to recall that Assumption \ref{ass:3} implies Assumption \ref{ass:2}. Also, let $\zeta _{j}\sim N( {\bf 0},\Sigma _{j}) $ for $j=1,2,3,4$, with $( \zeta _{1}',\zeta _{2}',\zeta _{3}',\zeta _{4}') $ are independent. Our goal is to show that $( R_{n,1}',R_{n,2}',R_{n,3}',R_{n,4}') \overset{d}{\to}( \zeta _{1}',\zeta _{2}',\zeta _{3}',\zeta _{4}') $. We divide the argument into 3 steps.

\underline{Step 1.} Under Assumptions \ref{ass:1} and \ref{ass:2}, we show that for random vectors $R_{n,1}^{C}$ and $R_{n,1}^{D}$,
\begin{align}
&( R_{n,1}',R_{n,2}',R_{n,3}',R_{n,4}') ~\overset{d}{=}~( {R_{n,1}^{C}}',R_{n,2}',R_{n,3}',R_{n,4}') \label{eq:ad_step1_eq1} \\
&R_{n,1}^{D} ~\perp~ ( R_{n,2}',R_{n,3}',R_{n,4}')' \label{eq:ad_step1_eq2} \\
&R_{n,1}^{D} ~\overset{d}{\to}~\zeta _{1} \label{eq:ad_step1_eq3} \\
&R_{n,1}^{C} ~=~R_{n,1}^{D}+o_{p}( 1) .\label{eq:ad_step1_eq4} 
\end{align}

For any arbitrary $( (y_{i})_{i=1}^{n},( d_{i}) _{i=1}^{n},( a_{i}) _{i=1}^{n},(s_{i})_{i=1}^{n}) \in \mathbb{R} ^{n}\times \{ 0,1\} ^{n}\times \{ 0,1\} ^{n}\times \mathcal{S}^{n}$, consider first the following derivation. Provided that the conditioning event has positive probability,
\begin{align}
&dP( ( \tilde{Y}_i( d_{i}) =y_{i}) _{i=1}^{n}|( ( D_{i},A_{i},S_{i}) =( d_{i},a_{i},s_{i}) ) _{i=1}^{n}) \notag \\
&\overset{(1)}{=}dP( ( Y_i( d_{i}) =y_{i}+E[ Y( d_{i}) |S=s_{i}] ) _{i=1}^{n}|( ( D_{i},A_{i},S_{i}) =( d_{i},a_{i},s_{i}) ) _{i=1}^{n}) \notag \\
&\overset{(2)}{=}\frac{dP( ( ( Y_i( d_{i}) ,D_i( a_{i}) ) =( y_{i}+E[ Y( d_{i}) |S=s_{i}] ,d_{i}) ) _{i=1}^{n}|( ( S_{i},A_{i}) =( s_{i},a_{i}) ) _{i=1}^{n}) }{ P( ( D_i( a_{i}) =d_{i}) _{i=1}^{n}|( ( S_{i},A_{i}) =( s_{i},a_{i}) ) _{i=1}^{n}) } \notag \\
&\overset{(3)}{=}\frac{dP( ( ( Y_i( d_{i}) ,D_i( a_{i}) ) =( y_{i}+E[ Y( d_{i}) |S=s_{i}] ,d_{i}) ) _{i=1}^{n}|( S_{i}=s_{i}) _{i=1}^{n}) }{P( ( D_i( a_{i}) =d_{i}) _{i=1}^{n}|( S_{i}=s_{i}) _{i=1}^{n}) } \notag \\
&\overset{(4)}{=}\frac{\int_{( z_{i}:S( z_{i}) =s_{i}) _{i=1}^{n}}dP( ( ( Y_i( d_{i}) ,D_i( a_{i}) ,Z_{i}) =( y_{i}+E[ Y( d_{i}) |S=s_{i}] ,d_{i},z_{i}) ) _{i=1}^{n}) }{\int_{( z_{i}:S( z_{i}) =s_{i}) _{i=1}^{n}}P( ( ( D_i( a_{i}) ,Z_{i}) =( d_{i},z_{i}) ) _{i=1}^{n}) } \notag \\
&\overset{(5)}{=}\frac{\prod_{i=1}^{n}\int_{z_{i}:S( z_{i}) =s_{i}}dP( ( Y_i( d_{i}) ,D_i( a_{i}) ,Z_{i}) =( y_{i}+E[ Y( d_{i}) |S=s_{i}] ,d_{i},z_{i}) ) }{\prod_{i=1}^{n}\int_{z_{i}:S( z_{i}) =s_{i}}P( ( D_i( a_{i}) ,Z_{i}) =( d_{i},z_{i}) ) } \notag \\
&=\prod_{i=1}^{n}dP( Y_i( d_{i}) =y_{i}+E[ Y_i( d_{i}) |S=s_{i}] |D_i( a_{i}) =d_{i},S=s_{i}) \notag \\
&\overset{(6)}{=}\prod_{i=1}^{n}dP( \tilde{Y}_i( d_{i}) =y_{i}|D_i( a_{i}) =d_{i},S=s_{i}) ,\label{eq:ad_step1_eq5}
\end{align}
where (1) and (6) hold by \eqref{eq:defnY_pre}, (2) holds by $D_{i}=D_i( A_{i}) $, (3) holds by  Assumption \ref{ass:2}(a), (4) holds by  $S_{i}=S( Z_{i}) $, and (5) holds by  Assumption \ref{ass:1}. A corollary of \eqref{eq:ad_step1_eq5} is that $( ( \tilde{Y}_i( d_{i}) ) _{i=1}^{n}|( ( D_{i},A_{i},S_{i}) =( d_{i},a_{i},s_{i}) ) _{i=1}^{n}) $ has the distribution of an independent sample with observation $i=1,\dots ,n$ distributed according to $( \tilde{Y}( d_{i}) |D( a_{i}) =d_{i},S=s_{i}) $. Then, conditionally on $( ( D_{i},A_{i},S_{i}) =( d_{i},a_{i},s_{i}) ) _{i=1}^{n} $, $( \tilde{Y}_{i}( d_{i}) -\mu (d_{i},a_{i},s_{i})) _{i=1}^{n}$ is an independent sample with $( \tilde{Y}_{i}( d_{i}) -\mu (d_{i},a_{i},s_{i})|( ( D_{i},A_{i},S_{i}) =( d_{i},a_{i},s_{i}) ) _{i=1}^{n}) \overset{d}{=}( \tilde{Y} ( d_{i}) -\mu (d_{i},a_{i},s_{i})|D( a_{i}) =d_{i},S=s_{i}) $.

Conditional on $( ( D_{i},A_{i},S_{i}) =( d_{i},a_{i},s_{i}) ) _{i=1}^{n}$, consider the following matrix
\begin{equation}
( ( ( 1[D_{i}=d,A_{i}=a,S_{i}=s](\tilde{Y}_{i}(d)-\mu (d,a,s))) :(d,a,s)\in \{0,1\}^2\times \mathcal{S}) ^{\prime }:i=1,\dots ,n) . \label{eq:ad_step1_eq6}
\end{equation}

Consider the following observation for each row $i=1,\dots ,n$ of \eqref{eq:ad_step1_eq6}. Row $i$ has one and only one indicator $( 1[D_{i}=d,A_{i}=a,S_{i}=s]:(d,a,s)\in \{0,1\}^2\times \mathcal{ S}) $ that is turned on, corresponding to $( d,a,s) =( d_{i},a_{i},s_{i}) $. For this entry, we have that $ 1[D_{i}=d,A_{i}=a,S_{i}=s](\tilde{Y}_{i}(d)-\mu (d,a,s))=(\tilde{Y} _{i}(d_{i})-\mu (d_{i},a_{i},s_{i}))$. The remaining observations in row $i$ are equal to zero and, thus, independent of $(\tilde{Y}_{i}(d_{i})-\mu (d_{i},a_{i},s_{i}))$. In this sense, the elements of the row are independent. By the derivation in \eqref{eq:ad_step1_eq5}, conditional on $ ( ( D_{i},A_{i},S_{i}) =( d_{i},a_{i},s_{i}) ) _{i=1}^{n}$, the rows are independent. As a consequence, conditional on $( ( D_{i},S_{i},A_{i}) ) _{i=1}^{n}$, \eqref{eq:ad_step1_eq6} has the same distribution as the following matrix
\begin{equation}
( ( ( 1(D_{i}=d,A_{i}=a,S_{i}=s)\breve{Y} _{i}(d,a,s)) :(d,a,s)\in \{0,1\}^2\times \mathcal{S} ) ^{\prime }:i=1,\dots ,n), \label{eq:ad_step1_eq7}
\end{equation}
where $(( \breve{Y}_{i}(d,a,s):(d,a,s)\in \{0,1\}^{2}\times \mathcal{S }) ^{\prime }:i=1,\dots ,n)$ denotes a matrix of $ 4|\mathcal{S}|\times n$ independent random variables, independent of $( ( D_{i},S_{i},A_{i}) ) _{i=1}^{n}$, with $\breve{Y}_{i}(d,a,s) \overset{d}{=}( \tilde{Y}(d)-\mu (d,a,s)|D( a) =d,S=s) $ for each $(d,a,s)\in \{0,1\}^{2}\times \mathcal{S}$. As a corollary,
\begin{equation}
( R_{n,1}|( ( D_{i},A_{i},S_{i}) ) _{i=1}^{n}) \overset{d}{=}( R_{n,1}^{B}|( ( D_{i},A_{i},S_{i}) ) _{i=1}^{n}), \label{eq:ad_step1_eq8}
\end{equation}
where
\begin{equation*}
R_{n,1}^{B}~\equiv ~\left( \frac{1}{\sqrt{n}}\sum_{i=1}^{n} 1[D_{i}=d,A_{i}=a,S_{i}=s]\breve{Y}_{i}(d,a,s):(d,a,s)\in \{0,1\}^2\times \mathcal{S}\right) .
\end{equation*}

Consider the following classification of observations. Let $g=1$ represent an observation with $( d,a,s) =( 0,0,1) $, $g=2$ represents $( d,a,s) =( 1,0,1) $, $g=3$ represents $ ( d,a,s) =( 0,1,1) $, $g=4$ represents $( d,a,s) =( 1,1,1) $, $g=5$ represents $( d,a,s) =( 0,0,2) $, and so on, until $g=4S$, which represents $( d,a,s) =( 1,1,|\mathcal{S}|) $. Let $G:( d,a,s) \to \mathcal{G}\equiv (1,\dots ,4|\mathcal{S}|)$ denote the function that maps each $( d,a,s) $ into a group $g\in \mathcal{G}$. For each $g\in \mathcal{G}$, let $N_{g}\equiv \sum_{i=1}^{n}1( G( D_{i},A_{i},S_{i}) <g) $. Also, let $N_{|\mathcal{G}|+1} =N_{4|\mathcal{S}|+1} = n$. Note that, conditional on $( ( D_{i},A_{i},S_{i}) ) _{i=1}^{n}$, $( N_{G( d,a,s) }:(d,a,s)\in \{0,1\}^{2}\times \mathcal{S}) $ is nonstochastic. Let's now consider a reordering of the units $i=1,\dots ,n$ in the following manner: first by strata $s\in \mathcal{S}$, then by treatment assignment $a\in \{ 0,1\} $, and then by decision $d\in \{0,1\}$. In other words, the units are reordered in increasing order of $ g\in \mathcal{G}$. Let $R_{n,1}^{C}$ denote the reordered sum. Let $ ( h ( i) :i=1,\dots ,n) $ denote the permutation of the units described by this reordering. Since $\breve{Y}_{i}(d,a,s) \overset{d}{=}( \tilde{Y}_{i}(d)-\mu (d,a,s)|D_{i}( a) =d,S_{i}=s) $, note that
\begin{align}
(( \breve{Y}_{h(i) }(d,a,s):(d,a,s)\in \{0,1\}^{2}\times \mathcal{S}) ^{\prime } :i=1,\dots ,n)
\overset{d}{=}(( \breve{Y}_{i}(d,a,s):(d,a,s)\in \{0,1\}^{2}\times \mathcal{S}) ^{\prime } :i=1,\dots ,n).\label{eq:ad_step1_eq10}
\end{align}
As a corollary of \eqref{eq:ad_step1_eq10},
\begin{equation}
(R_{n,1}^{B}|( ( D_{i},A_{i},S_{i})) _{i=1}^{n}) ~\overset{d}{=}~( R_{n,1}^{C}|( ( D_{i},A_{i},S_{i}) ) _{i=1}^{n}),
\label{eq:ad_step1_eq11}
\end{equation}
where
\begin{equation}
R_{n,1}^{C}~\equiv ~\left( \frac{1}{\sqrt{n}}\sum_{i=N_{G( d,a,s)} +1}^{N_{(G( d,a,s) +1)}}\check{Y} _{i}(d,a,s):(d,a,s)\in \{0,1\}^{2}\times \mathcal{S}\right)\label{eq:ad_step1_eq11B}
\end{equation}
and $(( \check{Y}_{i}(d,a,s):(d,a,s)\in \{0,1\}^{2}\times \mathcal{S} ) ^{\prime }:i=1,\dots ,n)$ denotes a matrix of $ 4|\mathcal{S}|\times n$ independent random variables, independent of $( ( D_{i},A_{i},S_{i}) ) _{i=1}^{n}$, with $\check{Y}_{i}(d,a,s) \overset{d}{=}( \tilde{Y}(d)-\mu (d,a,s)|D( a) =d,S=s) $.

By \eqref{eq:ad_step1_eq8} and \eqref{eq:ad_step1_eq11},
\begin{equation}
( R_{n,1}|( ( D_{i},A_{i},S_{i}) ) _{i=1}^{n}) ~\overset{d}{=}~( R_{n,1}^{C}|( ( D_{i},A_{i},S_{i}) ) _{i=1}^{n}) .
\label{eq:ad_step1_eq12}
\end{equation}

For any $( h_{1},h_{2},h_{3},h_{4}) \in \mathbb{R} ^{4|\mathcal{S}| }\times \mathbb{R} ^{|\mathcal{S}| }\times\mathbb{R} ^{|\mathcal{S}| }\times \mathbb{R} ^{|\mathcal{S}| }$, consider the following derivation.
\begin{align*}
&P( R_{n,1}\leq h_{1},R_{n,2}\leq h_{2},R_{n,3}\leq h_{3},R_{n,4}\leq h_{4})\\
&=E[ P( R_{n,1}\leq h_{1},R_{n,2}\leq h_{2},R_{n,3}\leq h_{3},R_{n,4}\leq h_{4}|( ( D_{i},A_{i},S_{i}) ) _{i=1}^{n}) ] \\
&\overset{(1)}{=}E[ P( R_{n,1}\leq h_{1}|( ( D_{i},A_{i},S_{i}) ) _{i=1}^{n}) 1[ R_{n,2}\leq h_{2},R_{n,3}\leq h_{3},R_{n,4}\leq h_{4}] ] \\
&\overset{(2)}{=}E[ P( R_{n,1}^{C}\leq h_{1}|( ( D_{i},A_{i},S_{i}) ) _{i=1}^{n}) 1[ R_{n,2}\leq h_{2},R_{n,3}\leq h_{3},R_{n,4}\leq h_{4}] ] \\
&\overset{(3)}{=}E[ P( R_{n,1}^{C}\leq h_{1},R_{n,2}\leq h_{2},R_{n,3}\leq h_{3},R_{n,4}\leq h_{4}|( ( D_{i},A_{i},S_{i}) ) _{i=1}^{n}) ] \\
&=P( R_{n,1}^{C}\leq h_{1},R_{n,2}\leq h_{2},R_{n,3}\leq h_{3},R_{n,4}\leq h_{4}) ,
\end{align*}
where (1) and (3) hold because $( R_{n,2},R_{n,3},R_{n,4}) $ is a nonstochastic function of $( ( D_{i},A_{i},S_{i}) ) _{i=1}^{n}$, and (2) holds by \eqref{eq:ad_step1_eq12}. Since the choice of $( h_{1},h_{2},h_{3},h_{4}) $ was arbitrary, \eqref{eq:ad_step1_eq1} follows.

For each $g\in \mathcal{G}$, let 
\begin{equation}
F_{g}~\equiv~ \sum_{s\in \mathcal{S}}[P(G(D(1),1,s)<g|S=s)\pi _{A}(s)p(s)+P(G(D(0),0,s)<g|S=s)(1-\pi _{A}(s))p(s)],
\label{eq:ad_step1_eq14}
\end{equation}
and also $F_{|\mathcal{G}|+1} =N_{4|\mathcal{S}|+1} = 1$. Also, define
\begin{equation}
R_{n,1}^{D}~\equiv ~\left( \frac{1}{\sqrt{n}}\sum_{i=\lfloor nF_{G(d,a,s)}\rfloor +1}^{\lfloor nF_{G(d,a,s)+1}\rfloor }\check{Y} _{i}(d,a,s):(d,a,s)\in \{0,1\}^{2}\times \mathcal{S}\right) . \label{eq:ad_step1_eq15}
\end{equation}
Since $R_{n,1}^{D}$ is a nonstochastic function of $((\check{Y} _{i}(d,a,s):(d,a,s)\in \{0,1\}^{2}\times \mathcal{S})^{\prime }:i=1,\dots ,n)$, $(R_{n,2},R_{n,3},R_{n,4})$ is a nonstochastic function of $ ((D_{i},A_{i},S_{i}))_{i=1}^{n}$, and $((\check{Y}_{i}(d,a,s):(d,a,s)\in \{0,1\}^{2}\times \mathcal{S})^{\prime }:i=1,\dots ,n)\perp ((D_{i},A_{i},S_{i}))_{i=1}^{n}$, we conclude that \eqref{eq:ad_step1_eq2} holds.

For each $(d,a,s,u)\in \{0,1\}^{2}\times \mathcal{S}\times (0,1]$, consider the following partial sum process:
\begin{equation*}
L_{n}(u)~=~\frac{1}{\sqrt{n}}\sum_{i=1}^{\lfloor nu\rfloor }\check{Y} _{i}(d,a,s).
\end{equation*}
Note that $\check{Y}_{i}(d,a,s)\overset{d}{=}(\tilde{Y}(d)-\mu (d,a,s)|D(a)=d,S=s)$ and so \eqref{eq:defnY} implies that $E[\check{Y} _{i}(d,a,s)]=0$ and $V[\check{Y}_{i}(d,a,s)]=\sigma ^{2}(d,a,s)$. By repeating arguments in the proof of \citet[Lemma B.2]{bugni/canay/shaikh:2018},
\begin{equation}
L_{n}(u)~\overset{d}{\to }~N(0,u\sigma ^{2}(d,a,s)).
\label{eq:ad_step1_eq16}
\end{equation}
By \eqref{eq:ad_step1_eq14}, \eqref{eq:ad_step1_eq16}, and the Brownian scaling relation,
%Since $((\check{Y}_{i}(d,a,s):(d,a,s)\in \{0,1\}^{2}\times \mathcal{S} )^{\prime }:i=1,\dots ,n)$, is i.i.d., $L_{n}(F_{G(d,a,s)})\perp L_{n}(F_{G(d,a,s)+1})-L_{n}(F_{G(d,a,s)})$. By this and that \eqref{eq:ad_step1_eq16} holds for $u\in \{F_{G(d,a,s)},F_{G(d,a,s)+1}\}$,
\begin{equation}
\frac{1}{\sqrt{n}}\sum_{i=\lfloor nF_{G(d,a,s)}\rfloor +1}^{\lfloor nF_{G(d,a,s)+1}\rfloor }\check{Y} _{i}(d,a,s)=L_{n}(F_{G(d,a,s)+1})-L_{n}(F_{G(d,a,s)})~\overset{d}{\to }~N(0,(F_{G(d,a,s)+1}-F_{G(d,a,s)})\sigma ^{2}(d,a,s)). \label{eq:ad_step1_eq17}
\end{equation}
Since $((\check{Y}_{i}(d,a,s):(d,a,s)\in \{0,1\}^{2}\times \mathcal{S} )^{\prime }:i=1,\dots ,n)$ are independent random variables, we conclude that for any $(d,a,s),(\tilde{d},\tilde{a},\tilde{s})\in \{0,1\}^{2}\times \mathcal{S}$ with $(d,a,s)\neq (\tilde{d},\tilde{a},\tilde{s})$,
\begin{equation}
\frac{1}{\sqrt{n}}\sum_{i=\lfloor nF_{G(d,a,s)}\rfloor +1}^{\lfloor nF_{G(d,a,s)+1}\rfloor }\check{Y}_{i}(d,a,s)~\perp ~\frac{1}{\sqrt{n}} \sum_{i=\lfloor nF_{G(\tilde{d},\tilde{a},\tilde{s})}\rfloor +1}^{\lfloor nF_{G(\tilde{d},\tilde{a},\tilde{s})+1}\rfloor }\check{Y}_{i}(\tilde{d}, \tilde{a},\tilde{s}). \label{eq:ad_step1_eq18}
\end{equation}
By \eqref{eq:ad_step1_eq17} and \eqref{eq:ad_step1_eq18},
\begin{equation}
R_{n,1}^{D}\overset{d}{\to }N(\mathbf{0},diag ((F_{G(d,a,s)+1}-F_{G(d,a,s)})\sigma ^{2}(d,a,s):(d,a,s)\in \{0,1\}^{2}\times \mathcal{S})) \label{eq:ad_step1_eq19}
\end{equation}
To show \eqref{eq:ad_step1_eq3} from \eqref{eq:ad_step1_eq19}, it then suffices to show that for all $(d,a,s)\in \{0,1\}^{2}\times \mathcal{S}$,
\begin{equation}
F_{G(d,a,s)+1}-F_{G(d,a,s)}~=~\left[
\begin{array}{c}
1[(a,d)=(0,0)](1-\pi _{D(0)}(s))(1-\pi _{A}(s)) \\ 
+1[(a,d)=(0,1)]\pi _{D(0)}(s)(1-\pi _{A}(s)) \\ 
+1[(a,d)=(1,0)](1-\pi _{D(1)}(s))\pi _{A}(s) \\ 
+1[(a,d)=(1,1)]\pi _{D(1)}(s)\pi _{A}(s)
\end{array}
\right] p(s).  \label{eq:ad_step1_eq20}
\end{equation}
We can show this from \eqref{eq:ad_step1_eq14} by using an inductive argument. As an initial step, note that \eqref{eq:ad_step1_eq14} implies that $F_{1}=0$, $F_{2}=(1-\pi _{D(0)}(1))(1-\pi _{A}(1))p(1)$, $F_{3}=(1-\pi _{A}(1))p(1)$, $F_{4}=(1-\pi _{A}(1))p(1)+(1-\pi _{D(1)}(1))\pi _{A}(1)p(1)$, and $F_{5} = p(1)$. As the inductive step, note that for $g=1,5,9,\dots ,4(|S|-1)$, \eqref{eq:ad_step1_eq14} implies that $F_{g+1}=F_{g} + (1-\pi _{D(0)}(s))(1-\pi _{A}(s))p(s)$, $F_{g+2}=F_{g} +(1-\pi _{A}(s))p(s)$, $F_{g+3}=F_{g} +(1-\pi _{A}(s))p(s)+(1-\pi _{D(1)}(s))\pi _{A}(s)p(s)$, and $F_{g+4}=F_{g} +p(s)$. By finite induction, \eqref{eq:ad_step1_eq20} follows.

By repeating arguments in the proof of \citet[Lemma B.2]{bugni/canay/shaikh:2018}, we can show \eqref{eq:ad_step1_eq4} follows from showing that $N_{g}/n\overset{p}{\to }F_{g}$ for all $g\in \mathcal{G}\cup\{|\mathcal{G}|+1\}$. To this end, consider the following argument for any $(d,a,s)\in \{0,1\}^{2}\times \mathcal{S}$.
\begin{align}
\frac{N_{G(d,a,s)+1}}{n}-\frac{N_{G(d,a,s)}}{n}& ~\overset{(1)}{=}~\left[ 
\begin{array}{c}
1[(a,d)=(0,0)](n(s)-n_{D}(s)-n_{A}(s)+n_{AD}(s)) \\ 
+1[(a,d)=(0,1)](n_{D}(s)-n_{AD}(s)) \\ 
+1[(a,d)=(1,0)](n_{A}(s)-n_{AD}(s)) \\ 
+1[(a,d)=(1,1)]n_{AD}(s)
\end{array}
\right] \frac{1}{n}  \notag \\
& =\left[ 
\begin{array}{c}
1[(a,d)=(0,0)](1-\frac{n_{D}(s)-n_{AD}(s)}{n(s)-n_{A}(s)})(1-\frac{n_{A}(s)}{
n(s)}) \\ 
+1[(a,d)=(0,1)](\frac{n_{D}(s)-n_{AD}(s)}{n(s)-n_{A}(s)})(1-\frac{n_{A}(s)}{
n(s)}) \\ 
+1[(a,d)=(1,0)](1-\frac{n_{AD}(s)}{n_{A}(s)})\frac{n_{A}(s)}{n(s)} \\ 
+1[(a,d)=(1,1)]\frac{n_{AD}(s)}{n_{A}(s)}\frac{n_{A}(s)}{n(s)}
\end{array}
\right] \frac{n(s)}{n}  \notag \\
& \overset{(2)}{=}F_{G(d,a,s)+1}-F_{G(d,a,s)}+o_{p}(1),
\label{eq:ad_step1_eq21}
\end{align}
where (1) follows from an induction argument similar to the one used to show \eqref{eq:ad_step1_eq20} and (2) follows from Assumptions \ref{ass:1}, \ref{ass:2}(b), Lemma \ref{lem:A1and2_impliesold3}, and the LLN, which implies that $\frac{n(s)}{n}=p(s)+o_{p}(1)$. By combining \eqref{eq:ad_step1_eq21} and $F_{1}=N_{1}/n=0$, the desired result follows.

\underline{Step 2.} Under Assumptions \ref{ass:1} and \ref{ass:3}, we show that  $( R_{n,2}',R_{n,3}',R_{n,4}') \overset{d}{\to}( \zeta _{2}',\zeta _{3}',\zeta _{4}') $.

By definition, $\zeta _{2}$ and $\zeta _{4}$ are continuously distributed, and $\zeta _{3}=( \zeta _{3,s}:s\in \mathcal{S}) $ is a vector of $|\mathcal{S}| $ independent coordinates, $\zeta _{3,s}$ is continuously distributed if $\tau ( s) >0$ and $\zeta _{3,s}=0$ if $\tau ( s) =0$. Then, $( h_{2}',h_{3}',h_{4}')' $ is continuity point of the CDF of $( \zeta _{2}',\zeta _{3}',\zeta _{4}')' $ if and only if $h_{3,s}\not=0$ for all $ s \in \mathcal{S}$ with $\tau ( s) =0$. Therefore, $ ( h_{2}',h_{3}',h_{4}')' $ is continuity point of the CDF of $( \zeta _{2}',\zeta _{3}',\zeta _{4}')' $ if and only if $h_{3}$ is continuity point of $\zeta _{3}$. For any such $( h_{2}',h_{3}',h_{4}')'$, consider the following argument.
\begin{align}
&\lim P( R_{n,2}\leq h_{2},R_{n,3}\leq h_{3},R_{n,4}\leq h_{4})\notag \\
&\overset{(1)}{=}\lim E[ E[ E[ 1( R_{n,2}\leq h_{2}) 1( R_{n,3}\leq h_{3}) 1( R_{n,4}\leq h_{4}) |( A_{i}) _{i=1}^{n},( S_{i}) _{i=1}^{n}] |( S_{i}) _{i=1}^{n}] ] \notag \\
&\overset{(2)}{=}\lim E[ E[ E[ 1( R_{n,2}\leq h_{2}) |( (A_{i},S_{i})) _{i=1}^{n}] 1( R_{n,3}\leq h_{3}) |( S_{i}) _{i=1}^{n}] 1( R_{n,4}\leq h_{4}) ] \notag \\
&=\lim \left[ 
\begin{array}{c}
E[ E[ ( P( R_{n,2}\leq h_{2}|( (A_{i},S_{i})) _{i=1}^{n}) -P( \zeta _{2}\leq h_{2}) ) 1( R_{n,3}^{c}\leq h_{3}) |( S_{i}) _{i=1}^{n}] 1( R_{n,4}\leq h_{4}) ] \\
+P( \zeta _{2}\leq h_{2}) E[ ( P( R_{n,3}\leq h_{3}|( S_{i}) _{i=1}^{n}) -P( \zeta _{3}\leq h_{3}) ) 1( R_{n,4}\leq h_{4}) ] \\
+P( \zeta _{2}\leq h_{2}) P( \zeta _{3}\leq h_{3}) ( P( R_{n,4}\leq h_{4}) -P( \zeta _{4}\leq h_{4}) ) +P( \zeta _{2}\leq h_{2}) P( \zeta _{3}\leq h_{3}) P( \zeta _{4}\leq h_{4})
\end{array}
\right], \label{eq:ad_step2_eq1}
\end{align}
where (1) follows from the LIE, and (2) follows from the fact that $R_{n,3}$ is nonstochastic conditional on $( (A_{i},S_{i})) _{i=1}^{n}$ and $ R_{n,4}$ is nonstochastic conditional on $( S_{i}) _{i=1}^{n}$. By \eqref{eq:ad_step2_eq1}, 
\begin{align}
&\vert \lim P(R_{n,2}\leq h_{2},R_{n,3}\leq h_{3},R_{n,4}\leq h_{4})-P(\zeta _{2}\leq h_{2})P(\zeta _{3}\leq h_{3})P(\zeta _{4}\leq h_{4})\vert \notag\\
&\leq \left[
\begin{array}{c}
\lim E[E[\vert P(R_{n,2}\leq h_{2}|((A_{i},S_{i}))_{i=1}^{n})-P(\zeta _{2}\leq h_{2})\vert |(S_{i})_{i=1}^{n}]] \\ +\lim E[\vert P(R_{n,3}\leq h_{3}|(S_{i})_{i=1}^{n})-P(\zeta _{3}\leq h_{3})\vert ] \\
+\lim \vert P(R_{n,4}\leq h_{4})-P(\zeta _{4}\leq h_{4})\vert
\end{array}
\right] \label{eq:ad_step2_eq2} 
\end{align}
The proof of this step is completed by showing that the three terms on the right hand side of \eqref{eq:ad_step2_eq2} are zero.

We begin with the first term. Fix $\varepsilon >0$ arbitrarily. It then suffices to find $N\in \mathbb{N} $ s.t.\ $\forall n\geq N$, $E[E[\vert P(R_{n,2}\leq h_{2}|((A_{i},S_{i}))_{i=1}^{n})-P(\zeta _{2}\leq h_{2})\vert |(S_{i})_{i=1}^{n}]]\leq \varepsilon$. By Assumption \ref{ass:2}(b) and Lemma \ref{lem:A1and2_impliesold3}, there exists a set of values of $((A_{i},S_{i}))_{i=1}^{n}$ denoted by $ M_{n}$ s.t.\ $P( ((A_{i},S_{i}))_{i=1}^{n}\in M_{n}) \to 1$ and for all $( ( a_{i},s_{i}) ) _{i=1}^{n}\in M_{n}$, $P(R_{n,2}\leq h_{2}|((A_{i},S_{i}))_{i=1}^{n}=( ( a_{i},s_{i}) ) _{i=1}^{n})\to P(\zeta _{2}\leq h_{2})$, where we are using that $\zeta _{2}$ is continuously distributed. This implies that $\exists N\in \mathbb{N} $ s.t.\ $\forall n\geq N$ and $\forall ( ( a_{i},s_{i})
) _{i=1}^{n}\in M_{n}$,
\begin{align}
&\vert P(R_{n,2}\leq h_{2}|((A_{i},S_{i}))_{i=1}^{n}=( ( a_{i},s_{i}) ) _{i=1}^{n})-P(\zeta _{2}\leq h_{2})\vert \leq \varepsilon /2 \label{eq:ad_step2_eq3} \\
&P( ((A_{i},S_{i}))_{i=1}^{n}\in M_{n}) \geq 1-\varepsilon /2.\label{eq:ad_step2_eq4}
\end{align}

Then,
\begin{align*}
&E[E[|P(R_{n,2} \leq h_{2}|((A_{i},S_{i}))_{i=1}^{n})-P(\zeta _{2}\leq h_{2})||(S_{i})_{i=1}^{n}]] \\
&=\left[ 
\begin{array}{c}
\int_{( ( a_{i},s_{i}) ) _{i=1}^{n}\in M_{n}}E[|P(R_{n,2}\leq h_{2}|((a_{i},s_{i}))_{i=1}^{n})-P(\zeta _{2}\leq h_{2})||(S_{i})_{i=1}^{n}=( s_{i}) _{i=1}^{n}]\times \\dP( ((A_{i},S_{i}))_{i=1}^{n}=( ( a_{i},s_{i}) ) _{i=1}^{n})+ \\
\int_{( ( a_{i},s_{i}) ) _{i=1}^{n}\in M_{n}^{c}}E[E[|P(R_{n,2}\leq h_{2}|((a_{i},s_{i}))_{i=1}^{n})-P(\zeta _{2}\leq h_{2})||(S_{i})_{i=1}^{n}=( s_{i}) _{i=1}^{n}]]\times\\
dP( ((A_{i},S_{i}))_{i=1}^{n}=( ( a_{i},s_{i}) ) _{i=1}^{n})
\end{array}
\right] \\
&\overset{(1)}{\leq}P( ((A_{i},S_{i}))_{i=1}^{n}\in M_{n}) \varepsilon /2+P( ((A_{i},S_{i}))_{i=1}^{n}\in M_{n}^{c}) \overset{(2)}{\leq}\varepsilon ,
\end{align*}
where (1) holds by \eqref{eq:ad_step2_eq3} and (2) holds by \eqref{eq:ad_step2_eq4}. This completes the proof for the first term on the right hand side of \eqref{eq:ad_step2_eq2}. The argument for the second term is similar, except that the argument that relies on Assumption \ref{ass:2}(b) would instead rely on Lemma \ref{lem:A1and2_impliesold3}. Finally, the argument for the third term holds by $ \zeta _{4}$ is continuously distributed and $R_{n,4}\overset{d}{\to }\zeta _{4}$, which holds by Assumption \ref{ass:1}, $S(Z_{i})=S_{i}$, and the CLT.

\underline{Step 3.} We now combine steps 1 and 2 to complete the proof. Let $ ( h_{1}',h_{2}',h_{3}',h_{4}') $ be a continuity point of the CDF of $( \zeta _{1}',\zeta _{2}',\zeta _{3}',\zeta _{4}') $. By the same argument as in step 2, this implies that $h_{3,s}\not=0$ for all $s \in \mathcal{S}$ with $\tau ( s) =0$. Under these conditions, consider the following derivation.
\begin{align*}
\lim P( R_{n,1}\leq h_{1},R_{n,2}\leq h_{2},R_{n,3}\leq h_{3},R_{n,4}\leq h_{4}) &\overset{(1)}{=}\lim P( R_{n,1}^{C}\leq h_{1},R_{n,2}\leq h_{2},R_{n,3}\leq h_{3},R_{n,4}\leq h_{4}) \\ 
&\overset{(2)}{=}\lim P( R_{n,1}^{D}\leq h_{1},R_{n,2}\leq h_{2},R_{n,3}\leq h_{3},R_{n,4}\leq h_{4}) \\
&\overset{(3)}{=}\lim P( R_{n,1}^{D}\leq h_{1}) \lim P( R_{n,2}\leq h_{2},R_{n,3}\leq h_{3},R_{n,4}\leq h_{4}) \\
&\overset{(4)}{=}P( \zeta _{1}\leq h_{1}) P( \zeta _{2}\leq h_{2}) P( \zeta _{3}\leq h_{3}) P( \zeta _{4}\leq h_{4}) ,
\end{align*}
as desired, where (1) holds by \eqref{eq:ad_step1_eq1} in step 1, (2)  holds by \eqref{eq:ad_step1_eq4} in step 1, (3) holds by \eqref{eq:ad_step1_eq2} in step 1, and (4) holds by \eqref{eq:ad_step1_eq4} in step 1 and \eqref{eq:ad_step2_eq2} in step 2.
\end{proof}
%%%%%%%% DIVIDER %%%%%%%%%%%%

%%%%%%%% DIVIDER %%%%%%%%%%%%
\begin{lemma}\label{lem:AsyDist2}
Assume Assumptions \ref{ass:1} and \ref{ass:2}. For any $( d,a,s) \in \{ 0,1\}^2 \times \mathcal{S}$,
\begin{align}
\frac{R_{n,1}(d,a,s)}{\sqrt{n}} &~=~ \frac{1}{n}\sum_{i=1}^{n}1[ D_{i}=d,A_{i}=a,S_{i}=s] ( \tilde{Y}_{i}( d) -\mu ( d,a,s) )~=~o_p(1) \label{eq:AsyLimit2_eq1}
\end{align}
and
\begin{align}
R_{n,5}(d,a,s) &~\equiv~ \frac{1}{n}\sum_{i=1}^{n}1[ D_{i}=d,A_{i}=a,S_{i}=s] ( \tilde{Y}_{i}(d)-\mu ( d,a,s) )^{2}\notag\\
& ~=~\left[
\begin{array}{c}
1[ ( a,d) =( 0,0) ] ( 1-\pi _{A}( s) ) ( 1-\pi _{D( 0) }( s) ) \\
+1[ ( a,d) =( 0,1) ] ( 1-\pi _{A}( s) ) \pi _{D( 0) }( s) \\
+1[ ( a,d) =( 1,0) ] \pi _{A}( s) ( 1-\pi _{D( 1) }( s) ) \\
+1[ ( a,d) =( 1,1) ] \pi _{A}( s) \pi _{D( 1) }( s)
\end{array}
\right] p( s) \sigma ^{2}( d,a,s)~+~o_p(1) ,\label{eq:AsyLimit2_eq2}
\end{align}
where $R_{n}$ is as in \eqref{eq:Rn_defn}.
\end{lemma}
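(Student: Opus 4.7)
The plan is to establish both claims by conditioning on $\{(D_{i},A_{i},S_{i})\}_{i=1}^{n}$ and exploiting the conditional independence structure derived in Step~1 of the proof of Lemma~\ref{lem:AsyDist}. Specifically, the derivation culminating in \eqref{eq:ad_step1_eq5} only invokes Assumptions~\ref{ass:1} and~\ref{ass:2}(a), and therefore remains valid under the weaker hypothesis of the present lemma. It shows that, conditional on $\{(D_{i},A_{i},S_{i})\}_{i=1}^{n}$, the observations $\{\tilde{Y}_{i}(D_{i})\}_{i=1}^{n}$ are mutually independent and, for any $i$ with $(D_{i},A_{i},S_{i})=(d,a,s)$, $\tilde{Y}_{i}(d)$ is distributed as $\tilde{Y}(d)\vert D(a)=d, S=s$, which by \eqref{eq:defnY} has mean $\mu(d,a,s)$ and variance $\sigma^{2}(d,a,s)$.

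For part~\eqref{eq:AsyLimit2_eq1}, I would note that $R_{n,1}(d,a,s)/\sqrt{n}$ is $n^{-1}$ times a sum of $n_{das}(s)\equiv\sum_{i=1}^{n}I\{D_{i}=d,A_{i}=a,S_{i}=s\}$ conditionally independent, centered summands, each with conditional variance $\sigma^{2}(d,a,s)$. Its conditional mean is therefore zero and its conditional variance is $\sigma^{2}(d,a,s)\,n_{das}(s)/n^{2}\leq \sigma^{2}(d,a,s)/n$. Applying the conditional Chebyshev inequality and integrating out $\{(D_{i},A_{i},S_{i})\}_{i=1}^{n}$ gives $P(|R_{n,1}(d,a,s)|/\sqrt{n}>\epsilon)\leq \sigma^{2}(d,a,s)/(n\epsilon^{2})\to 0$ for every $\epsilon>0$, which is the desired claim.

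For part~\eqref{eq:AsyLimit2_eq2}, I would decompose $R_{n,5}(d,a,s)=(n_{das}(s)/n)\cdot A_{n}$, where $A_{n}$ is the sample average of $(\tilde{Y}_{i}(d)-\mu(d,a,s))^{2}$ over the $n_{das}(s)$ indices with $(D_{i},A_{i},S_{i})=(d,a,s)$. The first factor $n_{das}(s)/n$ converges in probability to the weight in \eqref{eq:AsyLimit2_eq2} by direct computation from Assumption~\ref{ass:1}(b), Assumption~\ref{ass:2}(b) and Lemma~\ref{lem:A1and2_impliesold3}: for instance, when $(a,d)=(1,1)$, $n_{AD}(s)/n = (n(s)/n)(n_{A}(s)/n(s))(n_{AD}(s)/n_{A}(s)) \overset{p}{\to} p(s)\pi_{A}(s)\pi_{D(1)}(s)$, and the remaining three cases follow analogously by writing $n_{A}(s)-n_{AD}(s)$, $n_{D}(s)-n_{AD}(s)$ and $n(s)-n_{A}(s)-n_{D}(s)+n_{AD}(s)$ as analogous products. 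Conditional on $\{(D_{i},A_{i},S_{i})\}_{i=1}^{n}$, $A_{n}$ is the average of $n_{das}(s)$ i.i.d.\ nonnegative random variables with common finite mean $\sigma^{2}(d,a,s)$ (finite by Assumption~\ref{ass:1}(a)), so on the w.p.a.1 event $\{n_{das}(s)\to\infty\}$ Khintchine's WLLN delivers $A_{n}\overset{p}{\to}\sigma^{2}(d,a,s)$; Slutsky's theorem then yields~\eqref{eq:AsyLimit2_eq2}.

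The main technical hurdle is the conditional-to-unconditional step in part~\eqref{eq:AsyLimit2_eq2}: $A_{n}$ is an average of a random number $n_{das}(s)$ of i.i.d.\ summands, and the WLLN is only established conditionally on $\{(D_{i},A_{i},S_{i})\}_{i=1}^{n}$. The standard device is to restrict to the w.p.a.1 set on which $n_{das}(s)/n$ is within $\delta$ of its deterministic positive limit, which forces $n_{das}(s)\to\infty$ along every sample path in the set, apply Khintchine's theorem there, and absorb the complement into the $o_{p}(1)$ term. A secondary but related care point is that Assumption~\ref{ass:1}(a) only delivers finite second moments of $Y(d)$; this precludes Chebyshev-type arguments for $A_{n}$ (which would require finite fourth moments) and necessitates the mean-only Khintchine WLLN.
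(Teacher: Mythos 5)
Your proof is correct, but it takes a different technical route from the paper's. The paper proves both claims by reusing the coupling constructed in Step 1 of the proof of Lemma \ref{lem:AsyDist}: it replaces the sum over the random index set $\{i:(D_i,A_i,S_i)=(d,a,s)\}$ by an equal-in-distribution sum of i.i.d.\ draws $\check{Y}_i(d,a,s)$ (resp.\ $U_i(d,a,s)$ for the squares) over the \emph{deterministic} index range $(\lfloor nF_{G(d,a,s)}\rfloor, \lfloor nF_{G(d,a,s)+1}\rfloor]$, after which the ordinary LLN applies with no random-index issues. You instead work directly with the conditional independence structure of \eqref{eq:ad_step1_eq5}: for \eqref{eq:AsyLimit2_eq1} a conditional Chebyshev bound integrated over $\{(D_i,A_i,S_i)\}_{i=1}^{n}$ gives $P(|R_{n,1}(d,a,s)|/\sqrt{n}>\epsilon)\leq \sigma^2(d,a,s)/(n\epsilon^2)$, which is arguably cleaner and more elementary than the paper's argument for that part; for \eqref{eq:AsyLimit2_eq2} you pay the price the coupling was designed to avoid, namely a WLLN for an average over a random number of summands, which you resolve correctly by noting that the conditional law of the summands is fixed (so the Khintchine rate $h(m)=P(|\bar X_m-\sigma^2|>\epsilon)$ depends only on the count $m$) and restricting to the w.p.a.1 event $\{n_{das}(s)\geq cn\}$. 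Your identification of the moment requirements (Khintchine rather than Chebyshev for the squares, since Assumption \ref{ass:1}(a) gives only second moments) matches what the paper implicitly uses. One small imprecision: the limit of $n_{das}(s)/n$ is not always positive --- e.g.\ for $(d,a)=(1,0)$ it is $p(s)\pi_A(s)(1-\pi_{D(1)}(s))$, which vanishes when there are no never takers in stratum $s$ --- but in those degenerate cells the indicator is zero almost surely, both sides of \eqref{eq:AsyLimit2_eq1} and \eqref{eq:AsyLimit2_eq2} are identically zero under the paper's conventions, and the claim is trivial; you should flag this case rather than assert a positive limit.
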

%%%%%%%% DIVIDER %%%%%%%%%%%%
\begin{proof}
Fix $( d,a,s) \in \{0,1\}^{2}\times \mathcal{S}$ arbitrarily throughout this proof.  We begin by showing \eqref{eq:AsyLimit2_eq1}. Under our current assumptions, step 1 of the proof of Lemma \ref{lem:AsyDist} implies that
\begin{equation*}
\frac{R_{n,1}(d,a,s)}{\sqrt{n}}~\overset{d}{=}~\frac{R_{n,1}^{C}(d,a,s)}{\sqrt{ n}}~=~\frac{R_{n,1}^{D}(d,a,s)}{\sqrt{n}}+ \frac{o_{p}(1)}{\sqrt{n}}  ~=~\frac{R_{n,1}^{D}(d,a,s)}{\sqrt{n}}+o_{p}( 1) , 
\end{equation*}
where $R_{n,1}^{C}$ and $R_{n,1}^{D}$ are defined in \eqref{eq:ad_step1_eq11B} and \eqref{eq:ad_step1_eq15}, respectively. Therefore, \eqref{eq:AsyLimit2_eq1} follows from the following derivation.
\begin{align*}
\frac{R_{n,1}^{D}(d,a,s)}{\sqrt{n}} &~\overset{(1)}{=}~\frac{1}{n}\sum_{i=\lfloor n F_{G( d,a,s) }\rfloor +1}^{\lfloor n F_{G( d,a,s) +1}\rfloor }\check{Y} _{i}( d,a,s)\\
&~=~1[F_{G( d,a,s) +1}>F_{G( d,a,s)}]\frac{\lfloor nF_{G( d,a,s) +1}\rfloor -\lfloor nF_{G( d,a,s) }\rfloor }{n} \frac{\sum_{i=\lfloor nF_{G( d,a,s) }\rfloor +1}^{\lfloor nF_{G( d,a,s) +1}\rfloor }\check{Y}_{i}( d,a,s)}{ \lfloor nF_{G( d,a,s) +1}\rfloor -\lfloor nF_{G( d,a,s) }\rfloor }  \\
&~\overset{(2)}{=}~1[F_{G( d,a,s) +1}>F_{G( d,a,s)}]( F_{G( d,a,s) +1}-F_{G( d,a,s) }+o( 1) ) o_{p}( 1) =o_{p}( 1) ,
\end{align*}
as required, where (1) holds by \eqref{eq:ad_step1_eq15} with $ F_{g}$ as defined in \eqref{eq:ad_step1_eq14} in step 1 of the proof of Lemma \ref{lem:AsyDist}, and $( \check{Y}_{i}( d,a,s) :i=1,\ldots ,n) $ given by an i.i.d.\ sequence with $\check{Y}_{i}(d,a,s)\overset{d}{=}\{\tilde{Y} (d)-\mu (d,a,s)|D(a)=d,S=s\}$, (2) holds by $E[\check{ Y}_{i}(d,a,s)]=0$ (due to \eqref{eq:defnY}) and the LLN.

We now show \eqref{eq:AsyLimit2_eq2}. By repeating arguments used in step 1 of the proof of Lemma \ref{lem:AsyDist}, we can show that
\begin{equation*}
R_{n,5}(d,a,s)~\overset{d}{=}~R_{n,5}^{D}(d,a,s)+o_{p}( 1) ,
\end{equation*}
where
\begin{equation*}
R_{n,5}^{D}(d,a,s)~\equiv~\frac{1}{n}\sum_{i=\lfloor nF_{G( d,a,s) }\rfloor +1}^{\lfloor nF_{G( d,a,s) +1}\rfloor }U_{i}( d,a,s)
\end{equation*}
and $( U_{i}( d,a,s) :i=1,\ldots ,n) $ is an i.i.d.\ sequence with $U_{i}( d,a,s) \overset{d}{=}\{(\tilde{Y}(d)-\mu (d,a,s))^{2}|D(a)=d,S=s\}$. To show \eqref{eq:AsyLimit2_eq2}, consider the following argument.
\begin{align*}
R_{n,5}^{D}(d,a,s) 
&~=~\frac{\lfloor nF_{G( d,a,s) +1}\rfloor -\lfloor nF_{G( d,a,s) }\rfloor }{n} \frac{\sum_{i=\lfloor nF_{G( d,a,s) }\rfloor +1}^{\lfloor nF_{G( d,a,s) +1}\rfloor }U_{i}( d,a,s)}{ \lfloor nF_{G( d,a,s) +1}\rfloor -\lfloor nF_{G( d,a,s) }\rfloor } \\
&~\overset{(1)}{=}~1[F_{G( d,a,s) +1}>F_{G( d,a,s)}]( F_{G( d,a,s) +1}-F_{G( d,a,s) }+o( 1) ) (\sigma ^{2}(d,a,s)+o_{p}( 1) ) \\
&~\overset{(2)}{=}~\left[
\begin{array}{c}
1[(a,d)=(0,0)](1-\pi _{A}(s))(1-\pi _{D(0)}(s)) \\ 
+1[(a,d)=(0,1)](1-\pi _{A}(s))\pi _{D(0)}(s) \\ 
+1[(a,d)=(1,0)]\pi _{A}(s)(1-\pi _{D(1)}(s)) \\ 
+1[(a,d)=(1,1)]\pi _{A}(s)\pi _{D(1)}(s)
\end{array}
\right] p(s)\sigma ^{2}(d,a,s)+o_{p}( 1) ,
\end{align*}
where (1) holds by $ E[U_{i}( d,a,s) ]=\sigma ^{2}(d,a,s)$ (due to \eqref{eq:defnY}) and the LLN, and (2) follows from \eqref{eq:ad_step1_eq20}.
\end{proof}
%%%%%%%% DIVIDER %%%%%%%%%%%%

%%%%%%%%%%%%%%%%%%%%%%%%%%%%%%%%%%%%%%%%%%%%%%%%%%%%%%%%%%%%%
\subsection{Proofs of results related to Section \ref{sec:SAT}}

%%%%%%%% DIVIDER %%%%%%%%%%%%
\begin{lemma}[SAT matrices]\label{lem:MatrixSAT}
Assume Assumptions \ref{ass:1} and \ref{ass:2}. Then, 
\begin{align*}
{{\mathbf{Z}}^{\mathrm{sat}}_{n}}'\mathbf{X}^{\mathrm{sat}}_{n}/{n} &=\left[ 
\begin{array}{cc}
diag( n( s) /n:s\in \mathcal{S}) & diag( n_{D}(s)/n:s\in \mathcal{S}) \\
diag( n_{A}(s)/n:s\in \mathcal{S}) & diag( n_{AD}(s)/n:s\in \mathcal{S})
\end{array}
\right]  \\
&=\left[ 
\begin{array}{cc}
diag( p( s) :s\in \mathcal{S}) & diag( [ \pi _{D( 1) }( s) \pi _{A}( s) +\pi _{D( 0) }( s) ( 1-\pi _{A}( s) ) ] p( s) :s\in \mathcal{S}) \\
diag( \pi _{A}( s) p( s) :s\in \mathcal{S} ) & diag( \pi _{D( 1) }( s) \pi _{A}( s) p( s) :s\in \mathcal{S})
\end{array}
\right]  +o_{p}(1),
\end{align*}
and thus
\begin{align*}
&( {{{\mathbf{Z}}^{\mathrm{sat}}_{n}}'\mathbf{X}_{n}^{\mathrm{sat}}}/{n}) ^{-1} \\
&=
\left[ 
\begin{array}{cc}
diag( \frac{n_{AD}(s)n}{n( s) n_{AD}(s)-n_{A}(s)n_{D}(s)} :s\in \mathcal{S}) & diag( \frac{-n_{D}(s)n}{n( s) n_{AD}(s)-n_{A}(s)n_{D}(s)}:s\in \mathcal{S}) \\
diag( \frac{-n_{A}(s)n}{n( s) n_{AD}(s)-n_{A}(s)n_{D}(s)} :s\in \mathcal{S}) & diag( \frac{n(s)n}{ n(s)n_{AD}(s)-n_{A}(s)n_{D}(s)}:s\in \mathcal{S})
\end{array}
\right]  +o_{p}(1) \\
&=\left[ 
\begin{array}{cc}
diag( \frac{\pi _{D( 1) }( s) }{p(s)( 1-\pi _{A}( s) ) [ \pi _{D( 1) }( s) -\pi _{D( 0) }( s) ] }:s\in \mathcal{S})
& diag( \frac{-[ \pi _{D( 1) }( s) \pi _{A}( s) +\pi _{D( 0) }( s) ( 1-\pi _{A}( s) ) ] }{p(s)\pi _{A}( s) ( 1-\pi _{A}( s) ) [ \pi _{D( 1) }( s) -\pi _{D( 0) }( s) ] }:s\in \mathcal{S} ) \\
diag( \frac{-1}{p(s)( 1-\pi _{A}( s) ) [ \pi _{D( 1) }( s) -\pi _{D( 0) }( s) ] }:s\in \mathcal{S}) & diag( \frac{1}{p(s)\pi _{A}( s) ( 1-\pi _{A}( s) ) [ \pi _{D( 1) }( s) -\pi _{D( 0) }( s) ] }:s\in \mathcal{S})
\end{array}
\right]  +o_{p}(1).
\end{align*}
Also,
\begin{equation*}
{{\mathbf{Z}}^{\mathrm{sat}}_{n}}'  \mathbf{Y}_{n}/{n}=\left[ 
\begin{array}{c}
( \frac{1}{n}\sum_{i=1}^{n}1[ S_{i}=s] Y_{i} :s\in \mathcal{S}) , \\
( \frac{1}{n}\sum_{i=1}^{n} 1[A_{i}=1, S_{i}=s] Y_{i} :s\in \mathcal{S} )
\end{array}
\right] .
\end{equation*}
\end{lemma}
%%%%%%%% DIVIDER %%%%%%%%%%%%
\begin{proof}
This equalities follow from algebra and the convergences follow from the CMT. In particular, the first equality in the second display has an $o_p(1)$ to allow for the possibility that ${{{\mathbf{Z}}^{\mathrm{sat}}_{n}}'\mathbf{X}_{n}^{\mathrm{sat}}}/{n}$ is singular or $n(s)n_{AD}(s)=n_{A}(s)n_{D}(s)$ for some $s \in \mathcal{S}$. Both of these events occur with vanishing probability under our assumptions.
\end{proof}
%%%%%%%% DIVIDER %%%%%%%%%%%%

%%%%%%%% DIVIDER %%%%%%%%%%%%
\begin{theorem}[SAT limits]\label{thm:plim_SAT}
Assume Assumptions \ref{ass:1} and \ref{ass:2}. Then, for every $ s\in \mathcal{S}$,
\begin{align}
\hat{\beta}_{\mathrm{sat}}(s) &~\overset{p}{\to }~\beta ( s) ~\equiv~E[ Y( 1)-Y( 0) |C,S=s]\notag\\
\hat{\gamma}_{\mathrm{sat}}(s) &~\overset{p}{\to }~\gamma ( s)~\equiv~ \left[ 
\begin{array}{c}
\pi _{D( 1) }( s) E[ Y( 0) |C,S=s] -\pi _{D( 0) }( s) E[ Y( 1) |C,S=s ] + \\
\pi _{D( 0) }( s) E[ Y( 1) |AT,S=s ] +( 1-\pi _{D( 1) }( s) ) E[ Y( 0) |NT,S=s]
\end{array}
\right]\notag\\
\hat{P}(S=s,C)& ~\overset{p}{\to }~P(S=s,C)~\equiv ~p(s)(\pi _{D(1)}(s)-\pi _{D(0)}(s)) \notag\\\
\hat{P}(S=s|C)&~\overset{p}{\to } ~P(S=s|C)~\equiv ~\frac{p(s)(\pi _{D(1)}(s)-\pi _{D(0)}(s))}{\sum_{\tilde{s}\in \mathcal{S} }p(\tilde{s})(\pi _{D(1)}(\tilde{s})-\pi _{D(0)}(\tilde{s}))},\label{eq:hat_limits_1}
\end{align}
where $(\hat{\beta}_{\mathrm{sat}}(s),\hat{\gamma}_{\mathrm{sat}}(s))$ is as in \eqref{eq:IV_SAT_estimators} and $(\hat{P}(S=s,C),\hat{P}(S=s|C))$ is as is in \eqref{eq:P_C}. Also,
\begin{align}
\hat{\beta}_{\mathrm{sat}} &~\overset{p}{\to }~\beta  ~\equiv~E[ Y( 1)-Y( 0) |C] \notag\\
\hat{P}(C)& ~\overset{p}{\to }~P(C)~\equiv ~\sum_{s\in \mathcal{S} }p(s)(\pi _{D(1)}(s)-\pi _{D(0)}(s)),\label{eq:hat_limits_2}
\end{align}
where $\hat{\beta}_{\mathrm{sat}}$ is as in \eqref{eq:beta_SAT} and $\hat{P}(C)$ is as in \eqref{eq:P_C}.
\end{theorem}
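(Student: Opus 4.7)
The plan is to derive all probability limits by combining Lemma~\ref{lem:MatrixSAT}, which provides the probability limit of $({\mathbf{Z}_{n}^{\mathrm{sat}}}'\mathbf{X}_{n}^{\mathrm{sat}}/n)^{-1}$, with direct computation of the limit of ${\mathbf{Z}_{n}^{\mathrm{sat}}}'\mathbf{Y}_{n}/n$, and then applying the continuous mapping theorem. The block-diagonal structure in Lemma~\ref{lem:MatrixSAT} makes $(\hat{\gamma}_{\mathrm{sat}}(s),\hat{\beta}_{\mathrm{sat}}(s))$ depend only on within-stratum sample moments, so the argument reduces to computing limits one stratum at a time.

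First I would identify the two new sample means that need to be handled for each $s$: $\tfrac{1}{n}\sum_i I\{S_i=s\}Y_i$ and $\tfrac{1}{n}\sum_i I\{A_i=1, S_i=s\}Y_i$ (the other ingredients, $n(s)/n$, $n_A(s)/n$, $n_D(s)/n$, $n_{AD}(s)/n$, are already covered). Writing $Y_i=Y_i(1)D_i+Y_i(0)(1-D_i)$ with $D_i=D_i(A_i)$, and partitioning the population by types $\{C,AT,NT\}$ (defiers are absent by Assumption~\ref{ass:1}(c)), I would argue via the same decomposition used in Lemma~\ref{lem:AsyDist2}---which, under Assumptions~\ref{ass:1}-\ref{ass:2}, applies equally to $Y_i$ in place of $\tilde{Y}_i$---that
\begin{align*}
\tfrac{1}{n}\sum_i I\{A_i=1,S_i=s\}Y_i &~\overset{p}{\to}~ \pi_A(s)\,p(s)\bigl[\pi_{D(0)}(s) E[Y(1)|AT,S=s] + (\pi_{D(1)}(s)-\pi_{D(0)}(s)) E[Y(1)|C,S=s] \\
&\qquad\qquad\qquad+ (1-\pi_{D(1)}(s)) E[Y(0)|NT,S=s]\bigr],
\end{align*}
with an analogous limit for $\tfrac{1}{n}\sum_i I\{S_i=s\}Y_i$ that mixes over both assignment arms. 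These follow from Assumption~\ref{ass:2}(a) (conditional independence of assignment and potential outcomes given $S$), Assumption~\ref{ass:2}(b), Lemma~\ref{lem:A1and2_impliesold3}, and the LLN.

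Next I would combine these limits with Lemma~\ref{lem:MatrixSAT} via the CMT to obtain the probability limit of $(\hat{\gamma}_{\mathrm{sat}}(s),\hat{\beta}_{\mathrm{sat}}(s))$. After algebraic simplification, $\hat{\beta}_{\mathrm{sat}}(s)$ should collapse to the within-stratum Wald ratio, which equals $E[Y(1)-Y(0)|C,S=s]$ by the standard Imbens-Angrist identification argument; the claimed expression for $\gamma(s)$ then follows by back-substitution. The main obstacle I anticipate is the bookkeeping: the raw matrix formula produces nested type-mixture expectations that must be regrouped using $P(AT|S=s)=\pi_{D(0)}(s)$, $P(NT|S=s)=1-\pi_{D(1)}(s)$, and $P(C|S=s)=\pi_{D(1)}(s)-\pi_{D(0)}(s)$ from Lemma~\ref{lem:P_type_representation} to match the stated closed forms exactly.

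Finally, the limits for $\hat{P}(S=s,C)$ and $\hat{P}(C)$ follow from Lemma~\ref{lem:A1and2_impliesold3} (giving $n_{AD}(s)/n_A(s)\overset{p}{\to}\pi_{D(1)}(s)$ and $(n_D(s)-n_{AD}(s))/(n(s)-n_A(s))\overset{p}{\to}\pi_{D(0)}(s)$), combined with $n(s)/n\overset{p}{\to}p(s)$ and the CMT applied to the definitions in~\eqref{eq:P_C}. Strict positivity of $P(C)$ (by Assumption~\ref{ass:1}(b),(d)) legitimizes the ratio, so $\hat{P}(S=s|C)\overset{p}{\to}P(S=s|C)$. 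The limit $\hat{\beta}_{\mathrm{sat}}\overset{p}{\to}\beta$ then follows from $\hat{\beta}_{\mathrm{sat}}=\sum_{s}\hat{P}(S=s|C)\hat{\beta}_{\mathrm{sat}}(s)$, the CMT, and the LIE applied to $E[Y(1)-Y(0)|C]=\sum_{s}P(S=s|C)\,E[Y(1)-Y(0)|C,S=s]$.
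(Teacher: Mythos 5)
Your proposal is correct and follows essentially the same route as the paper's proof: invert ${\mathbf{Z}_{n}^{\mathrm{sat}}}'\mathbf{X}_{n}^{\mathrm{sat}}/n$ via Lemma \ref{lem:MatrixSAT}, compute the stratum-by-stratum limits of the cross moments using the type decomposition together with Assumption \ref{ass:2}(a), Lemmas \ref{lem:A1and2_impliesold3} and \ref{lem:AsyDist2}, and finish with the CMT and LIE. The only cosmetic difference is that the paper centers $Y_i$ at $E[Y(d)|S=s]$ and routes the algebra through the $\mu(d,a,s)$ expressions of Lemma \ref{lem:MeanTraslation} and the $R_{n,1}$ terms, whereas you work with the raw conditional means and appeal to the within-stratum Wald-ratio identification; both reduce to the same bookkeeping.
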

%%%%%%%% DIVIDER %%%%%%%%%%%%
\begin{proof}
We focus on showing \eqref{eq:hat_limits_1}, as \eqref{eq:hat_limits_2} follows from \eqref{eq:hat_limits_1} and CMT. To show the first line of \eqref{eq:hat_limits_1}, consider the following derivation.
\begin{align}
\hat{\beta}_{\mathrm{sat}}(s)&\overset{(1)}{=}\tfrac{n(s)n}{n(s)n_{AD}(s)-n_{A}(s)n_{D}(s)} \frac{1}{n}\sum_{i=1}^{n}1[A_{i}=1,S_{i}=s]Y_{i}-\tfrac{n_{A}(s)n}{ n(s)n_{AD}(s)-n_{A}(s)n_{D}(s)}\frac{1}{n}\sum_{i=1}^{n}1[S_{i}=s]Y_{i} +o_{P}(1)\notag \\
& \overset{(2)}{=}\tfrac{\left[ 
\begin{array}{c}
(n(s)-n_{A}(s))\sum_{i=1}^{n}1[D_{i}=1,A_{i}=1,S_{i}=s][\tilde{Y} _{i}(1)+E[Y(1)|S=s]] \\
+(n(s)-n_{A}(s))\sum_{i=1}^{n}1[D_{i}=0,A_{i}=1,S_{i}=s][\tilde{Y} _{i}(0)+E[Y(0)|S=s]] \\
-n_{A}(s)\sum_{i=1}^{n}1[D_{i}=1,A_{i}=0,S_{i}=s][\tilde{Y} _{i}(1)+E[Y(1)|S=s]] \\
-n_{A}(s)\sum_{i=1}^{n}1[D_{i}=0,A_{i}=0,S_{i}=s][\tilde{Y} _{i}(0)+E[Y(0)|S=s]]
\end{array}
\right] }{n(s)n_{AD}(s)-n_{A}(s)n_{D}(s)}  +o_{P}(1)\notag \\
& \overset{(3)}{=}\tfrac{\left[ 
\begin{array}{c}
(n(s)-n_{A}(s))\sum_{i=1}^{n}1[D_{i}=1,A_{i}=1,S_{i}=s]\tilde{Y}_{i}(1) \\ 
+(n(s)-n_{A}(s))\sum_{i=1}^{n}1[D_{i}=0,A_{i}=1,S_{i}=s]\tilde{Y}_{i}(0) \\ 
-n_{A}(s)\sum_{i=1}^{n}1[D_{i}=1,A_{i}=0,S_{i}=s]\tilde{Y}_{i}(1) \\ 
-n_{A}(s)\sum_{i=1}^{n}1[D_{i}=0,A_{i}=0,S_{i}=s]\tilde{Y}_{i}(0)
\end{array}
\right] }{n(s)n_{AD}(s)-n_{A}(s)n_{D}(s)}+E[Y(1)-Y(0)|S=s]+o_{P}(1),
\label{eq:derivation1}
\end{align}
where (1) holds by \eqref{eq:IV_SAT_estimators} and Lemma \ref{lem:MatrixSAT}, (2) holds by the fact that, conditional on $ (D_{i},S_{i})=(d,s)$, $Y_{i}=Y_{i}(d)=\tilde{Y}_{i}(d)+E[Y(d)|S=s]$ (by \eqref{eq:defnY_pre}), and (3) holds by \eqref{eq:defnY} and the following algebraic derivation:
\begin{align*}
E[Y(1)-Y(0)|S=s] 
%&= 
%\tfrac{\left[ 
%\begin{array}{c}
%(n(s)-n_{A}(s))n_{AD}(s)E[Y(1)|S=s] \\ 
%+(n(s)-n_{A}(s))(n_{A}(s)-n_{AD}(s))E[Y(0)|S=s] \\ 
%-n_{A}(s)(n_{D}(s)-n_{AD}(s))E[Y(1)|S=s] \\ 
%-n_{A}(s)(n(s)-n_{A}(s)-n_{D}(s)+n_{AD}(s))E[Y(0)|S=s]
%\end{array}
%\right] }{n(s)n_{AD}(s)-n_{A}(s)n_{D}(s)} \\
& =\tfrac{\left[ 
\begin{array}{c}
(n(s)-n_{A}(s))\sum_{i=1}^{n}1[D_{i}=1,A_{i}=1,S_{i}=s]E[Y(1)|S=s] \\ 
+(n(s)-n_{A}(s))\sum_{i=1}^{n}1[D_{i}=0,A_{i}=1,S_{i}=s]E[Y(0)|S=s] \\ 
-n_{A}(s)\sum_{i=1}^{n}1[D_{i}=1,A_{i}=0,S_{i}=s]E[Y(1)|S=s] \\ 
-n_{A}(s)\sum_{i=1}^{n}1[D_{i}=0,A_{i}=0,S_{i}=s]E[Y(0)|S=s]
\end{array}
\right] }{n(s)n_{AD}(s)-n_{A}(s)n_{D}(s)}+o_{P}(1) .
\end{align*}

To complete the proof of the first line of \eqref{eq:hat_limits_1}, consider the following derivation.
\begin{align}
& \hat{\beta}_{\mathrm{sat}}(s)-\beta ( s)  \notag\\
& \overset{(1)}{=}\tfrac{\left[ 
\begin{array}{c}
\frac{n}{n( s) }(1-\frac{n_{A}(s)}{n( s) })\frac{1}{ \sqrt{n}}R_{n,1}( 1,1,s) +\frac{n_{A}( s) }{n( s) }(1-\frac{n_{A}(s)}{n( s) })\frac{n_{AD}( s) }{n_{A}( s) }\mu ( 1,1,s) \\
+\frac{n}{n( s) }(1-\frac{n_{A}(s)}{n( s) })\frac{1}{ \sqrt{n}}R_{n,1}( 0,1,s) +\frac{n_{A}( s) }{n( s) }(1-\frac{n_{A}(s)}{n( s) })( 1-\frac{n_{AD}( s) }{n_{A}( s) }) \mu ( 0,1,s) \\
-\frac{n}{n( s) }\frac{n_{A}(s)}{n( s) }\frac{1}{\sqrt{ n}}R_{n,1}( 1,0,s) -\frac{n_{A}(s)}{n( s) }(1-\frac{ n_{A}(s)}{n( s) })( \frac{n_{D}( s) -n_{AD}( s) }{n( s) -n_{A}( s) }) \mu ( 1,0,s) \\
-\frac{n}{n( s) }\frac{n_{A}(s)}{n( s) }\frac{1}{\sqrt{ n}}R_{n,1}( 0,0,s) -\frac{n_{A}(s)}{n( s) }( 1- \frac{n_{A}(s)}{n( s) }) ( 1-\frac{n_{D}( s) -n_{AD}( s) }{n( s) -n_{A}( s) } ) \mu ( 0,0,s) \\
+\frac{n_{A}( s) }{n( s) }( 1-\frac{n_{A}(s)}{ n( s) }) ( \frac{n_{AD}( s) }{n_{A}( s) }-\frac{n_{D}(s)-n_{AD}( s) }{n( s) -n_{A}( s) }) ( E[Y(1)-Y(0)|S=s]-\beta ( s) )
\end{array}
\right] }{\frac{n_{A}( s) }{n( s) }( 1-\frac{ n_{A}(s)}{n( s) }) ( \frac{n_{AD}( s) }{ n_{A}( s) }-\frac{n_{D}(s)-n_{AD}( s) }{n( s) -n_{A}( s) }) } +o_{P}(1)\notag \\
& \overset{(2)}{=}\tfrac{\left[ 
\begin{array}{c}
\frac{n_{AD}( s) }{n_{A}( s) }( E[Y(1)|AT,S=s] \tfrac{\pi _{D(0)}(s)}{\pi _{D(1)}(s)}+E[Y(1)|C,S=s]\tfrac{\pi _{D(1)}(s)-\pi _{D(0)}(s)}{\pi _{D(1)}(s)}-E[Y(1)|S=s]) \\
+( 1-\frac{n_{AD}( s) }{n_{A}( s) }) ( E[Y(0)|NT,S=s]-E[Y(0)|S=s]) \\
-( \frac{n_{D}( s) -n_{AD}( s) }{n( s) -n_{A}( s) }) ( E[Y(1)|AT,S=s]-E[Y(1)|S=s]) \\
-( 1-\frac{n_{D}( s) -n_{AD}( s) }{n( s) -n_{A}( s) })\times\\ (
E[Y(0)|NT,S=s]\tfrac{1-\pi _{D(1)}(s)}{1-\pi _{D(0)}(s)}+E[Y(0)|C,S=s]\tfrac{ \pi _{D(1)}(s)-\pi _{D(0)}(s)}{1-\pi _{D(0)}(s)}-E[Y(0)|S=s])  \\ 
+( \frac{n_{AD}( s) }{n_{A}( s) }-\frac{ n_{D}(s)-n_{AD}( s) }{n( s) -n_{A}( s) } ) ( E[Y(1)-Y(0)|S=s]-\beta ( s) )
\end{array}
\right] }{( \frac{n_{AD}( s) }{n_{A}( s) }-\frac{ n_{D}(s)-n_{AD}( s) }{n( s) -n_{A}( s) } ) }+o_{p}( 1)\notag \\
& \overset{(3)}{=}o_{p}( 1) ,\notag
\end{align}
as desired, where (1) holds by \eqref{eq:Rn_defn}, \eqref{eq:derivation1}, and $\beta ( s) =E[Y(1)-Y(0)|C, S=s]$, (2) holds by Lemma \ref{lem:MeanTraslation}, and (3) holds by Assumption \ref{ass:1} and Lemma \ref{lem:A1and2_impliesold3}.

To show the second line of \eqref{eq:hat_limits_1}, consider the following argument.
\begin{align*}
\hat{\gamma}_{\mathrm{sat}}(s) &\overset{(1)}{=}\tfrac{n_{AD}(s)n}{n( s) n_{AD}(s)-n_{A}(s)n_{D}(s)}\frac{1}{n}\sum_{i=1}^{n}Y_{i}1[ S_{i}=s] -\tfrac{n_{D}(s)n}{n( s) n_{AD}(s)-n_{A}(s)n_{D}(s) }\frac{1}{n}\sum_{i=1}^{n}Y_{i}A_{i}1[ S_{i}=s] +o_{P}(1)\\
&\overset{(2)}{=}
 \frac{\left[
 \begin{array}{c}
 -[ ( \frac{n_{D}(s)-n_{AD}(s)}{n( s) -n_{A}( s) }) ( 1-\frac{n_{A}(s)}{n( s) }) \frac{n }{n( s) }] [\frac{1}{ \sqrt{n}}(R_{n,1}( 1,1,s)+R_{n,1}( 0,1,s))] \\
 +[ \frac{n_{AD}(s)}{n_{A}( s) }\frac{n_{A}( s) }{ n( s) }\frac{n}{n( s) }]  [\frac{1}{ \sqrt{n}}(R_{n,1}( 1,0,s)+R_{n,1}( 0,0,s))] \\
 -[ \frac{n_{A}( s) }{n( s) }( 1-\frac{ n_{A}( s) }{n( s) }) \frac{n_{AD}( s) }{n_{A}( s) }\frac{n_{D}(s)-n_{AD}(s)}{n( s) -n_{A}( s) }] [ \mu ( 1,1,s) +E[ Y( 1) |S=s] ] \\
 -[ \frac{n_{A}( s) }{n( s) }( 1-\frac{ n_{A}( s) }{n( s) }) ( 1-\frac{n_{AD}( s) }{n_{A}( s) }) \frac{n_{D}(s)-n_{AD}(s)}{n( s) -n_{A}( s) }] [ \mu ( 0,1,s) +E [ Y( 0) |S=s] ] \\
 +[ \frac{n_{A}( s) }{n( s) }( 1-\frac{ n_{A}( s) }{n( s) }) \frac{n_{AD}(s)}{ n_{A}( s) }\frac{n_{D}(s)-n_{AD}(s)}{n( s) -n_{A}( s) }] [ \mu ( 1,0,s) +E[ Y( 1) |S=s] ] \\
 +[ \frac{n_{A}( s) }{n( s) }( 1-\frac{ n_{A}( s) }{n( s) }) \frac{n_{AD}(s)}{ n_{A}( s) }( 1-\frac{n_{D}(s)-n_{AD}(s)}{n( s) -n_{A}( s) }) ] [ \mu ( 0,0,s) +E [ Y( 0) |S=s] ]
 \end{array}
 \right] }{\frac{n_{A}( s) }{n( s) }( 1-\frac{ n_{A}(s)}{n( s) }) ( \frac{n_{AD}(s)}{n_{A}( s) }-\frac{n_{D}(s)-n_{AD}( s) }{n( s) -n_{A}( s) }) } +o_{P}(1)\\
&\overset{(3)}{=}\frac{\left[
\begin{array}{c}
-[ \pi _{D( 0) }( s) \pi _{D( 1) }( s) ] [ \mu ( 1,1,s) +E[ Y( 1) |S=s] ] \\
-[ \pi _{D( 0) }( s) ( 1-\pi _{D( 1) }( s) ) ] [ \mu ( 0,1,s) +E [ Y( 0) |S=s] ] \\
+[ \pi _{D( 1) }( s) \pi _{D( 0) }( s) ] [ \mu ( 1,0,s) +E[ Y( 1) |S=s] ] \\
+[ \pi _{D( 1) }( s) ( 1-\pi _{D( 0) }( s) ) ] [ \mu ( 0,0,s) +E [ Y( 0) |S=s] ]
\end{array}
\right] }{( \pi _{D( 1) }( s) -\pi _{D( 0) }( s) ) }+o_{p}( 1) \\
&\overset{(4)}{=}\left[ 
\begin{array}{c}
\pi _{D( 0) }( s) E[ Y( 1) |AT,S=s ] +( 1-\pi _{D( 1) }( s) ) E[ Y( 0) |NT,S=s]\\
-\pi _{D( 0) }( s) E[ Y( 1) |C,S=s ] +\pi _{D( 1) }( s) E[ Y( 0) |C,S=s]
\end{array}
\right] +o_{p}( 1) ,
\end{align*}
where (1) follows from \eqref{eq:IV_SAT_estimators} and Lemma \ref{lem:MatrixSAT}, (2) follows from the fact that, conditional on $(D_i,S_i)=(d,s)$, $Y_{i}=Y_{i}(d) =\tilde{Y}_{i}( d) +E[ Y( d) |S=s] $ (by \eqref{eq:defnY_pre}), (3) follows from Assumptions \ref{ass:1} and \ref{ass:2}, the LLN, and Lemmas \ref{lem:A1and2_impliesold3} and \ref{lem:AsyDist2}, and (4) follows from Lemma \ref{lem:MeanTraslation}.

To conclude the proof, note that the third line of \eqref{eq:hat_limits_1} holds by Lemma \ref{lem:AsyDist}. In turn, this and the CMT implies the fourth line of \eqref{eq:hat_limits_1}.
\end{proof}

%%%%%%%% DIVIDER %%%%%%%%%%%%
\begin{theorem}[SAT representation]\label{thm:Expression}
Assume Assumptions \ref{ass:1} and \ref{ass:2}. Then, for any $s\in \mathcal{S}$,
\begin{equation*}
\sqrt{n}(\hat{\beta}_{\mathrm{sat}}(s)-\beta (s))=\xi _{n,1}(s)+\xi _{n,2}(s)+\xi _{n,3}(s)+o_{p}(1),
\end{equation*}
where $\hat{\beta}_{\mathrm{sat}}(s)$ is as in \eqref{eq:IV_SAT_estimators}, $\beta(s)$ is as in \eqref{eq:SAT_limit}, and also
\begin{align}
%\beta (s) &\equiv E[Y(1)-Y(0)|C,S=s] \notag \\
\xi _{n,1}(s) &\equiv \frac{(1-\pi _{A}(s))(R_{n,1}(1,1,s)+R_{n,1}(0,1,s))-\pi _{A}(s)(R_{n,1}(1,0,s)+R_{n,1}(0,0,s))}{p(s)\pi _{A}(s)(1-\pi _{A}(s))(\pi _{D(1)}(s)-\pi _{D(0)}(s))} \notag \\
\xi _{n,2}(s) &\equiv \frac{R_{n,2}(1,s)}{\pi _{D( 1) }( s) -\pi _{D( 0) }( s) }\left[
\begin{array}{c}
( E[ Y( 0) |C,S=s] -E[ Y( 0) |NT,S=s] ) \\
-\frac{\pi _{D( 0) }( s) }{\pi _{D( 1) }( s) }( E[ Y( 1) |C,S=s] -E[ Y( 1) |AT,S=s] )
\end{array}
\right] \notag \\
\xi_{n,3}(s) &\equiv \frac{R_{n,2}(2,s)}{\pi _{D( 1) }( s) -\pi _{D( 0) }( s) }\left[
\begin{array}{c}
( E[ Y( 1) |C,S=s] -E[ Y( 1) |AT,S=s] ) \\
-\frac{1-\pi _{D( 1) }( s) }{1-\pi _{D( 0) }( s) }( E[ Y( 0) |C,S=s] -E[ Y( 0) |NT,S=s] )
\end{array}
\right], \label{eq:Expression1}
\end{align}
with $R_{n,1}$ and $R_{n,2}$ as in \eqref{eq:Rn_defn}.
\end{theorem}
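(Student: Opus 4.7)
My plan is to refine the algebraic identity \eqref{eq:derivation1} from the proof of Theorem \ref{thm:plim_SAT} to extract the $\sqrt{n}$-rate representation. That identity writes $\hat{\beta}_{\mathrm{sat}}(s)-\beta(s) = \mathcal{N}_n(s)/\mathcal{D}_n(s)$, where the denominator $\mathcal{D}_n(s) \equiv \tfrac{n_A(s)}{n(s)}(1-\tfrac{n_A(s)}{n(s)})(\tfrac{n_{AD}(s)}{n_A(s)}-\tfrac{n_D(s)-n_{AD}(s)}{n(s)-n_A(s)})$, which by Lemma \ref{lem:A1and2_impliesold3} together with Assumptions \ref{ass:1}(d) and \ref{ass:2}(b) converges in probability to $\pi_A(s)(1-\pi_A(s))(\pi_{D(1)}(s)-\pi_{D(0)}(s)) > 0$. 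The numerator $\mathcal{N}_n(s)$ decomposes naturally into three pieces: (i) four terms linear in $R_{n,1}(d,a,s)/\sqrt{n}$ coming from the mean-zero residuals $\tilde{Y}_i(d)$, (ii) four $\mu(d,a,s)$-terms weighted by the sample conditional proportions $\tfrac{n_{AD}(s)}{n_A(s)}$ and $\tfrac{n_D(s)-n_{AD}(s)}{n(s)-n_A(s)}$, and (iii) a correction term $\mathcal{D}_n(s)(E[Y(1)-Y(0)|S=s]-\beta(s))$.

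Multiplying by $\sqrt{n}$, piece (i) yields a linear combination of $R_{n,1}(d,a,s)$; after replacing the coefficients $n/n(s)$ and $n_A(s)/n(s)$ by their probability limits $1/p(s)$ and $\pi_A(s)$ and dividing by the limit of $\mathcal{D}_n(s)$, we obtain $\xi_{n,1}(s) + o_p(1)$. For pieces (ii) and (iii) I substitute the first-order expansions $\tfrac{n_{AD}(s)}{n_A(s)} = \pi_{D(1)}(s) + R_{n,2}(1,s)/\sqrt{n}$ and $\tfrac{n_D(s)-n_{AD}(s)}{n(s)-n_A(s)} = \pi_{D(0)}(s) + R_{n,2}(2,s)/\sqrt{n}$. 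The ``zeroth-order'' part, obtained by replacing the sample proportions with their population analogues in (ii) and (iii), must vanish identically---this is precisely the algebraic fact that underlies the consistency result of Theorem \ref{thm:plim_SAT}. One checks the cancellation by expanding $\mu(d,a,s)$ via Lemma \ref{lem:MeanTraslation} and using the decomposition $E[Y(d)|S=s] = \pi_{D(0)}(s)E[Y(d)|AT,S=s] + (1-\pi_{D(1)}(s))E[Y(d)|NT,S=s] + (\pi_{D(1)}(s)-\pi_{D(0)}(s))E[Y(d)|C,S=s]$ for $d \in \{0,1\}$, which follows from the law of total expectation and Lemma \ref{lem:P_type_representation}.

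What remains are the linear-in-$R_{n,2}$ corrections. After multiplying by $\sqrt{n}$ and dividing by the limit of $\mathcal{D}_n(s)$, these yield a term proportional to $R_{n,2}(1,s)\bigl[\mu(1,1,s)-\mu(0,1,s) + E[Y(1)-Y(0)|S=s] - \beta(s)\bigr]$ and an analogous term in $R_{n,2}(2,s)$ involving $\mu(1,0,s)-\mu(0,0,s)$. Applying Lemma \ref{lem:MeanTraslation} once more and using the same total-expectation identity collapses the bracketed coefficients to the clean forms appearing in $\xi_{n,2}(s)$ and $\xi_{n,3}(s)$ in \eqref{eq:Expression1}. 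All remaining error terms have the form (bounded random) $\times o_p(1)$ or $R_{n,2}(\cdot,s) \times o_p(1)$, where $R_{n,2}(\cdot,s) = O_p(1)$ by Lemma \ref{lem:AsyDist}, so they are $o_p(1)$. The main obstacle will be the bookkeeping in this final algebraic collapse---verifying that the first-order coefficients of $R_{n,2}(1,s)$ and $R_{n,2}(2,s)$ reduce exactly to the bracketed expressions in \eqref{eq:Expression1}---but the reduction follows the same mechanism that forces the zeroth-order cancellation, so it is routine if tedious.
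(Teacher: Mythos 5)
Your proposal is correct and follows essentially the same route as the paper: the paper's proof also starts from the identity \eqref{eq:derivation1}, splits the numerator into the $R_{n,1}$ part (auxiliary derivation \eqref{eq:derivation2}, giving $\xi_{n,1}(s)$) and the $\mu$-terms-plus-correction part (auxiliary derivation \eqref{eq:derivation3}, giving $\xi_{n,2}(s)+\xi_{n,3}(s)$), uses Lemma \ref{lem:MeanTraslation} to translate the $\mu(d,a,s)$ into type-conditional means, and controls the remainders via Lemma \ref{lem:AsyDist}. The zeroth-order cancellation and the first-order $R_{n,2}$ coefficients you describe reduce exactly to the brackets in \eqref{eq:Expression1}, matching the paper's computation.
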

% %%%%%%%% DIVIDER %%%%%%%%%%%%
\begin{proof}
Consider the following derivation.
\begin{align*}
\sqrt{n}( \hat{\beta}_{\mathrm{sat}}( s) -\beta ( s) ) &\overset{(1)}{=}\left[
\begin{array}{c}
\frac{n( n(s)-n_{A}(s)) }{n(s)n_{AD}(s)-n_{A}(s)n_{D}(s)}( R_{n,1}(1,1,s)+R_{n,1}(0,1,s)) \\
-\frac{n_{A}(s) n}{n(s)n_{AD}(s)-n_{A}(s)n_{D}(s)}( R_{n,1}(1,0,s)+R_{n,1}(0,0,s)) \\
+\frac{R_{n,2}(1,s)}{\frac{n_{AD}( s) }{n_{A}( s) }- \frac{n_{D}(s)-n_{AD}( s) }{n(s)-n_{A}(s)}}\left[
\begin{array}{c}
( E[ Y( 0) |C,S=s] -E[ Y( 0) |NT,S=s] ) \\
-\frac{\pi _{D( 0) }( s) }{\pi _{D( 1) }( s) }( E[ Y( 1) |C,S=s] -E[ Y( 1) |AT,S=s] )
\end{array}
\right] \\
+\frac{R_{n,2}(2,s)}{\frac{n_{AD}( s) }{n_{A}( s) }- \frac{n_{D}(s)-n_{AD}( s) }{n(s)-n_{A}(s)}}\left[
\begin{array}{c}
( E[ Y( 1) |C,S=s] -E[ Y( 1) |AT,S=s] ) \\
-\frac{1-\pi _{D( 1) }( s) }{1-\pi _{D( 0) }( s) }( E[ Y( 0) |C,S=s] -E[ Y( 0) |NT,S=s] )
\end{array}
\right]
\end{array}
\right) +o_p(1)\\
&\overset{(2)}{=}\xi _{n,1}(s)+\xi _{n,2}(s)+\xi _{n,3}(s)+o_{p}( 1) ,
\end{align*}
where (1) holds by \eqref{eq:Rn_defn}, \eqref{eq:derivation1}, and $\beta ( s) =E[ Y( 1) -Y( 0) |C,S=s] $,
% (the $o_p(1)$ captures probability of ${{{\mathbf{Z}}^{\mathrm{sat}}_{n}}'\mathbf{X}_{n}^{\mathrm{sat}}}/{n}$ being singular or any of the denominators being equal to zero), 
and (2) holds by the auxiliary derivations in \eqref{eq:derivation2} and \eqref{eq:derivation3} that appear below.

The first auxiliary derivation is 
\begin{align}
& \frac{n}{n(s)n_{AD}(s)-n_{A}(s)n_{D}(s)}\left[ 
\begin{array}{c}
(n(s)-n_{A}(s))\frac{1}{\sqrt{n}}\sum_{i=1}^{n}1[D_{i}=1,A_{i}=1,S_{i}=s]( \tilde{Y}_{i}(1)-\mu (1,1,s)) \\
+(n(s)-n_{A}(s))\frac{1}{\sqrt{n}}\sum_{i=1}^{n}1[D_{i}=0,A_{i}=1,S_{i}=s]( \tilde{Y}_{i}(0)-\mu (0,1,s)) \\
-n_{A}(s)\frac{1}{\sqrt{n}}\sum_{i=1}^{n}1[D_{i}=1,A_{i}=0,S_{i}=s](\tilde{Y} _{i}(1)-\mu (1,0,s)) \\
-n_{A}(s)\frac{1}{\sqrt{n}}\sum_{i=1}^{n}1[D_{i}=0,A_{i}=0,S_{i}=s](\tilde{Y} _{i}(0)-\mu (0,0,s))
\end{array}
\right]   \nonumber \\
& \overset{(1)}{=}\left[ 
\begin{array}{c}
\frac{\frac{n(s)}{n}(1-\frac{n_{A}(s)}{n(s)})}{(\frac{n(s)}{n})^{2}\frac{ n_{A}(s)}{n(s)}(1-\frac{n_{A}(s)}{n(s)})(\frac{n_{AD}(s)}{n_{A}(s)}-\frac{ n_{D}(s)-n_{AD}(s)}{n(s)-n_{A}(s)})}(R_{n,1}(1,1,s)+R_{n,1}(0,1,s)) \\
-\tfrac{\frac{n(s)}{n}\frac{n_{A}(s)}{n(s)}}{(\frac{n(s)}{n})^{2}\frac{ n_{A}(s)}{n(s)}(1-\frac{n_{A}(s)}{n(s)})(\frac{n_{AD}(s)}{n_{A}(s)}-\frac{ n_{D}(s)-n_{AD}(s)}{n(s)-n_{A}(s)})}(R_{n,1}(1,0,s)+R_{n,1}(0,0,s))
\end{array}
\right] + o_{p}(1)\notag  \\
& \overset{(2)}{=}\xi _{n,1}(s)+o_{p}(1), \label{eq:derivation2}
\end{align}
where (1) holds by \eqref{eq:Rn_defn}, and (2) holds by Lemma \ref{lem:AsyDist}, as this implies that $R_{n,1}=O_{p}(1)$ and also that $\frac{n(s)}{n} \overset{p}{\to}p( s) $, $\frac{n_{A}(s)}{n(s)}\overset{p}{\to}\pi _{A}( s) $, $\frac{n_{AD}(s)}{n_{A}(s)}\overset{p}{\to}\pi _{D( 1) }( s) $, and $\frac{n_{D}(s)-n_{AD}(s) }{n(s)-n_{A}(s)}\overset{p}{\to}\pi _{D( 0) }( s) $ for all $s\in \mathcal{S}$.

The second auxiliary derivation is
\begin{align}
&\sqrt{n}\left[ 
\begin{array}{c}
\frac{1}{n(s)n_{AD}(s)-n_{A}(s)n_{D}(s)}\left[ 
\begin{array}{c}
( n(s)-n_{A}(s)) n_{AD}( s) \mu ( 1,1,s) \\
( n(s)-n_{A}(s)) ( n_{A}( s) -n_{AD}( s) ) \mu ( 0,1,s) \\
-n_{A}(s)( n_{D}( s) -n_{AD}( s) ) \mu ( 1,0,s) \\
-n_{A}(s)( n( s) -n_{A}( s) -n_{D}( s) +n_{AD}( s) ) \mu ( 0,0,s)
\end{array}
\right] \\
+( E[ Y( 1) -Y( 0) |S=s] -E[ Y( 1) -Y( 0) |C,S=s] )
\end{array}
\right] \notag \\
&\overset{(1)}{=}\sqrt{n}\left[ 
\begin{array}{c}
( \frac{( n(s)-n_{A}(s)) n_{AD}( s) }{ n(s)n_{AD}(s)-n_{A}(s)n_{D}(s)}\frac{\pi _{D( 1) }( s) -\pi _{D( 0) }( s) }{\pi _{D( 1) }( s) }-1) E[ Y( 1) |C,S=s] + \\
( 1-\frac{n_{A}(s)( n( s) -n_{A}( s) -n_{D}( s) +n_{AD}( s) ) }{ n(s)n_{AD}(s)-n_{A}(s)n_{D}(s)}\frac{\pi _{D( 1) }( s) -\pi _{D( 0) }( s) }{1-\pi _{D( 0) }( s) }) E[ Y( 0) |C,S=s] + \\
( \frac{( n(s)-n_{A}(s)) n_{AD}( s) }{ n(s)n_{AD}(s)-n_{A}(s)n_{D}(s)}\frac{\pi _{D( 0) }( s) }{\pi _{D( 1) }( s) }-\frac{n_{A}(s)( n_{D}( s) -n_{AD}( s) ) }{n(s)n_{AD}(s)-n_{A}(s)n_{D}(s)} ) E[ Y( 1) |AT,S=s] + \\
( \frac{( n(s)-n_{A}(s)) ( n_{A}( s) -n_{AD}( s) ) }{n(s)n_{AD}(s)-n_{A}(s)n_{D}(s)}-\frac{ n_{A}(s)( n( s) -n_{A}( s) -n_{D}( s) +n_{AD}( s) ) }{n(s)n_{AD}(s)-n_{A}(s)n_{D}(s)}\frac{1-\pi _{D( 1) }( s) }{1-\pi _{D( 0) }( s) }) E[ Y( 0) |NT,S=s]
\end{array}
\right]  + o_{p}(1)\notag \\
&\overset{(2)}{=}\left[ 
\begin{array}{c}
\frac{( n( s) ) ^{2}\frac{n_{A}(s)}{n( s) } ( 1-\frac{n_{A}(s)}{n( s) }) ( R_{n,2}(2,s)-R_{n,2}(1,s)\frac{\pi _{D( 0) }( s) }{\pi _{D( 1) }( s) }) }{( n( s) ) ^{2}\frac{n_{A}( s) }{n( s) }( 1-\frac{ n_{A}(s)}{n( s) }) ( \frac{n_{AD}( s) }{ n_{A}( s) }-\frac{n_{D}(s)-n_{AD}( s) }{n(s)-n_{A}(s)} ) }( E[ Y( 1) |C,S=s] -E[ Y( 1) |AT,S=s] ) + \\
\frac{( n( s) ) ^{2}\frac{n_{A}( s) }{ n( s) }( 1-\frac{n_{A}(s)}{n( s) }) ( R_{n,2}(1,s)-R_{n,2}(2,s)\frac{1-\pi _{D( 1) }( s) }{ 1-\pi _{D( 0) }( s) }) }{( n( s) ) ^{2}\frac{n_{A}( s) }{n( s) }( 1-\frac{ n_{A}(s)}{n( s) }) ( \frac{n_{AD}( s) }{ n_{A}( s) }-\frac{n_{D}(s)-n_{AD}( s) }{n(s)-n_{A}(s)} ) }( E[ Y( 0) |C,S=s] -E[ Y( 0) |NT,S=s] )
\end{array}
\right] + o_{p}(1) \notag \\
&\overset{(3)}{=}\xi _{n,2}(s)+\xi _{n,3}(s)+o_{p}( 1) ,\label{eq:derivation3}
\end{align}
where (1) holds by Lemma \ref{lem:MeanTraslation}, (2) holds by \eqref{eq:Rn_defn}, and (3) by Lemma \ref{lem:AsyDist}, as this implies that $R_{n,2}=O_{p}(1)$ and also that $\frac{n(s)}{n} \overset{p}{\to}p( s) $ and $\frac{n_{A}(s)}{n(s)}\overset{p}{\to}\pi _{A}( s) $ for all $s\in \mathcal{S}$.
\end{proof}
%%%%%%%% DIVIDER %%%%%%%%%%%%

%%%%%%%% DIVIDER %%%%%%%%%%%%
\begin{theorem}[SAT strata specific asy.\ dist.]
\label{thm:CondLATE}
Under Assumptions \ref{ass:1} and \ref{ass:2}, and for
any $s\in \mathcal{S}$,
\begin{equation*}
( \sqrt{n}(\hat{\beta}_{\mathrm{sat}}(s)-\beta (s)):s\in \mathcal{S})' ~\overset {d}{\to }~N( {\bf 0},diag( ( V_{{Y} ,1}^{\mathrm{sat}}(s)+V_{{Y},0}^{\mathrm{sat}}(s)+V_{D,1}^{\mathrm{sat}}(s)+V_{D,0}^{\mathrm{sat}}(s):s\in \mathcal{S} ) ) ) ,
\end{equation*}
where $\hat{\beta}_{\mathrm{sat}}(s)$ is as \eqref{eq:IV_SAT_estimators}, $\beta(s)$ is as in \eqref{eq:SAT_limit}, and
\begin{align*}
V_{{Y} ,1}^{\mathrm{sat}}(s) &\equiv \frac{\left[
\begin{array}{c}
V[Y(1)|S=s,AT]\pi _{D(0)}(s)+V[Y(0)|S=s,NT](1-\pi _{D(1)}(s)) \\
+V[Y(1)|S=s,C](\pi _{D(1)}(s)-\pi _{D(0)}(s))+ \\
(E[Y(1)|S=s,C]-E[Y(1)|S=s,AT])^{2}
\pi _{D(0)}(s)( \pi _{D(1)}(s)-\pi _{D(0)}(s)) /( \pi _{D(1)}(s)) ^{2}
\end{array}
\right] }{p(s)(\pi _{D(1)}(s)-\pi _{D(0)}(s))^{2}\pi _{A}(s)} \\
V_{{Y} ,0}^{\mathrm{sat}}(s) &\equiv \frac{\left[
\begin{array}{c}
V[Y(1)|S=s,AT]\pi _{D(0)}(s)+V[Y(0)|S=s,NT](1-\pi _{D(1)}(s)) \\
+V[Y(0)|S=s,C](\pi _{D(1)}(s)-\pi _{D(0)}(s))+ \\
(E[Y(0)|S=s,C]-E[Y(0)|S=s,NT])^{2}
( 1-\pi _{D(1)}(s)) ( \pi _{D(1)}(s)-\pi _{D(0)}(s)) /( 1-\pi _{D(0)}(s)) ^{2}
\end{array}
\right] }{p(s)(\pi _{D(1)}(s)-\pi _{D(0)}(s))^{2}(1-\pi _{A}(s))} \\
V_{D,1}^{\mathrm{sat}}(s) &\equiv \frac{(1-\pi _{D(1)}(s))\left[
\begin{array}{c}
\pi _{D(0)}(s)(E[Y(1)|C,S=s]-E[Y(1)|AT,S=s]) \\
-\pi _{D(1)}(s)(E[Y(0)|C,S=s]-E[Y(0)|NT,S=s])
\end{array}
\right] ^{2}}{p(s)\pi _{A}(s)\pi _{D(1)}(s)(\pi _{D(1)}(s)-\pi _{D(0)}(s))^{2}} \\
V_{D,0}^{\mathrm{sat}}(s) &\equiv \frac{\pi _{D(0)}(s)\left[
\begin{array}{c}
(1-\pi _{D(0)}(s))(E[Y(1)|C,S=s]-E[Y(1)|AT,S=s]) \\
-(1-\pi _{D(1)}(s))(E[Y(0)|C,S=s]-E[Y(0)|NT,S=s])
\end{array}
\right] ^{2}}{p(s)(1-\pi _{A}(s))(1-\pi _{D(0)}(s))(\pi _{D(1)}(s)-\pi _{D(0)}(s))^{2}}.
\end{align*}
\end{theorem}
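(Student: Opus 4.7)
The plan is to leverage the linear representation already provided by Theorem \ref{thm:Expression}, which gives, for each $s \in \mathcal{S}$,
$$\sqrt{n}(\hat{\beta}_{\mathrm{sat}}(s) - \beta(s)) ~=~ \xi_{n,1}(s) + \xi_{n,2}(s) + \xi_{n,3}(s) + o_p(1),$$
where $\xi_{n,1}(s)$ is a linear combination of the four components $\{R_{n,1}(d,a,s)\}_{(d,a)\in\{0,1\}^2}$, while $\xi_{n,2}(s)$ and $\xi_{n,3}(s)$ are scalar multiples of $R_{n,2}(1,s)$ and $R_{n,2}(2,s)$, respectively. Stacking these representations across $s \in \mathcal{S}$ expresses the whole vector $\{\sqrt{n}(\hat{\beta}_{\mathrm{sat}}(s) - \beta(s)):s\in\mathcal{S}\}$ as a fixed linear map of $(R_{n,1}',R_{n,2}')'$ plus $o_p(1)$. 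Applying Lemma \ref{lem:AsyDist} (together with Slutsky's lemma to absorb the $o_p(1)$) and the CMT delivers joint convergence to a centered multivariate normal.

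The diagonal structure of the limiting covariance will follow from three independence facts embedded in the limit of $R_n$ in Lemma \ref{lem:AsyDist}: (i) $R_{n,1}$ and $R_{n,2}$ sit in separate blocks and are asymptotically independent; (ii) $\Sigma_1$ is diagonal across all triples $(d,a,s)$; and (iii) $\Sigma_2$ is block diagonal in $s$ with diagonal $2\times 2$ blocks. Consequence (i) makes $\xi_{n,1}(s)$ asymptotically independent of the pair $(\xi_{n,2}(s),\xi_{n,3}(s))$ for every $s$; consequences (ii)--(iii) make all components indexed by different strata mutually independent in the limit, yielding vanishing cross-covariances between the $s$-th and $s'$-th entries for $s \ne s'$. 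The diagonal $(s,s)$ entry therefore reduces to $\mathrm{Var}_\infty(\xi_{n,1}(s)) + \mathrm{Var}_\infty(\xi_{n,2}(s)) + \mathrm{Var}_\infty(\xi_{n,3}(s))$, using additionally that $R_{n,2}(1,s) \perp R_{n,2}(2,s)$ in the limit to separate $\mathrm{Var}_\infty(\xi_{n,2}(s))$ from $\mathrm{Var}_\infty(\xi_{n,3}(s))$.

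The final task is to identify these three limit variances with $V_{Y,1}^{\mathrm{sat}}(s) + V_{Y,0}^{\mathrm{sat}}(s) + V_{D,1}^{\mathrm{sat}}(s) + V_{D,0}^{\mathrm{sat}}(s)$. Reading the entries of $\Sigma_1$ off Lemma \ref{lem:AsyDist}, the independence of the four $R_{n,1}(d,a,s)$'s splits $\mathrm{Var}_\infty(\xi_{n,1}(s))$ cleanly into an ``$a=1$'' piece and an ``$a=0$'' piece, each of which becomes a weighted combination of $\sigma^2(d,a,s)$ terms. Similarly, the entries of $\Sigma_2$ turn $\mathrm{Var}_\infty(\xi_{n,2}(s))$ and $\mathrm{Var}_\infty(\xi_{n,3}(s))$ into the squared bracketed expressions displayed in $V_{D,1}^{\mathrm{sat}}(s)$ and $V_{D,0}^{\mathrm{sat}}(s)$, after noticing $[a-b]^2=[b-a]^2$. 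What will remain is purely algebraic: I will apply Lemma \ref{lem:MeanTraslation} to rewrite each $\sigma^2(d,a,s)$ in terms of the primitives $V[Y(d)|\cdot,S=s]$, $E[Y(d)|\cdot,S=s]$, and $\pi_{D(a)}(s)$, and then regroup the ``$a=1$'' piece of $\mathrm{Var}_\infty(\xi_{n,1}(s))$ with $\mathrm{Var}_\infty(\xi_{n,2}(s))$ into $V_{Y,1}^{\mathrm{sat}}(s) + V_{D,1}^{\mathrm{sat}}(s)$, and symmetrically for the ``$a=0$'' side. The main obstacle is precisely this last bookkeeping step, which tracks how the squared-mean terms arising from $\sigma^2(1,1,s)$ and $\sigma^2(0,0,s)$ partition between the $V_Y^{\mathrm{sat}}$ and $V_D^{\mathrm{sat}}$ pieces of the claimed formula; no new probabilistic input is required beyond what has been collected in Lemmas \ref{lem:AsyDist} and \ref{lem:MeanTraslation}.
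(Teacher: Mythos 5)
Your plan is exactly the paper's proof: the paper disposes of this theorem in one line by citing Lemma \ref{lem:AsyDist} and Theorem \ref{thm:Expression}, and your proposal is a faithful, correctly reasoned expansion of that citation --- the linear representation, the joint normality of $(R_{n,1}',R_{n,2}')'$, the block/diagonal structure of $\Sigma_1$ and $\Sigma_2$ giving both the cross-strata independence and the additive split of the limit variance, and Lemma \ref{lem:MeanTraslation} for the translation into primitives. (One cosmetic wrinkle you inherit from the paper: Lemma \ref{lem:AsyDist} is stated under Assumption \ref{ass:3}, but the sub-blocks you need, $R_{n,1}$ and $R_{n,2}$, only require Assumptions \ref{ass:1}--\ref{ass:2}, as Steps 1--2 of its proof make clear; it is worth a sentence to say so since the theorem you are proving assumes only \ref{ass:1}--\ref{ass:2}.)

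Be warned, however, that the ``purely algebraic'' bookkeeping you defer to the end will not close onto the displayed formulas. Carrying out your own recipe, the $a=1$ piece of $\mathrm{Var}_\infty(\xi_{n,1}(s))$ has numerator $\pi_{D(1)}(s)\sigma^2(1,1,s)+(1-\pi_{D(1)}(s))\sigma^2(0,1,s)$, and Lemma \ref{lem:MeanTraslation} gives
\begin{equation*}
\pi_{D(1)}(s)\,\sigma^2(1,1,s)=V[Y(1)|AT,S=s]\pi_{D(0)}(s)+V[Y(1)|C,S=s](\pi_{D(1)}(s)-\pi_{D(0)}(s))+\Delta_1^2\,\frac{\pi_{D(0)}(s)(\pi_{D(1)}(s)-\pi_{D(0)}(s))}{\pi_{D(1)}(s)},
\end{equation*}
with $\Delta_1\equiv E[Y(1)|C,S=s]-E[Y(1)|AT,S=s]$: the squared-mean term carries $\pi_{D(1)}(s)$ to the \emph{first} power, whereas the displayed $V_{Y,1}^{\mathrm{sat}}(s)$ has $(\pi_{D(1)}(s))^2$; symmetrically for $(1-\pi_{D(0)}(s))$ in $V_{Y,0}^{\mathrm{sat}}(s)$. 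Since $\mathrm{Var}_\infty(\xi_{n,2}(s))$ matches $V_{D,1}^{\mathrm{sat}}(s)$ exactly (and likewise $\xi_{n,3}$ with $V_{D,0}^{\mathrm{sat}}(s)$), the discrepancy is not absorbed elsewhere. A direct check confirms the first-power version is the correct one: with $\pi_{D(1)}(s)=0.8$, $\pi_{D(0)}(s)=0.2$, $\Delta_1=1$, all other means and all conditional variances zero, the Wald-estimator variance of the $A=1$ cell is $0.2-0.04=0.16$, which equals $0.12/0.8+0.01$ (first power) but not $0.12/0.64+0.01=0.1975$ (squared); the first-power version is also what appears in the aggregate $V_{Y,1}^{\mathrm{sat}}$ of Theorem \ref{thm:AsyDist_SAT}. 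So your argument is sound, but you should state and prove the theorem with $\pi_{D(1)}(s)$ and $(1-\pi_{D(0)}(s))$ to the first power in those two terms rather than trying to force the algebra onto the exponents as printed.
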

%%%%%%%% DIVIDER %%%%%%%%%%%%
\begin{proof}
This result follows from Lemma \ref{lem:AsyDist} and Theorem \ref{thm:Expression}.
\end{proof}
%%%%%%%% DIVIDER %%%%%%%%%%%%

%%%%%%%% DIVIDER %%%%%%%%%%%%
\begin{proof}[Proof of Theorem \ref{thm:AsyDist_SAT}] 
Before proving the result, we generate several auxiliary derivations for any arbitrary $s\in \mathcal{S}$.  The first auxiliary derivation is as follows. 
\begin{align}
\sqrt{n}[\hat{P}(S=s|C)-P(S=s|C)]& \overset{(1)}{=}\sqrt{n}\left[ \frac{ \frac{n(s)}{n}(\frac{n_{AD}(s)}{n_{A}(s)}-\frac{n_{D}(s)-n_{AD}(s)}{ n(s)-n_{A}(s)})}{\sum_{\tilde{s}\in \mathcal{S}}\frac{n(\tilde{s})}{n}(\frac{ n_{AD}(\tilde{s})}{n_{A}(\tilde{s})}-\frac{n_{D}(\tilde{s})-n_{AD}(\tilde{s}) }{n(\tilde{s})-n_{A}(\tilde{s})})}-\frac{P(S=s,C)}{P(C)}\right] \notag \\
& \overset{(2)}{=}\frac{1}{\hat{P}(C)}\left[
\begin{array}{c}
\frac{n(s)}{n}(R_{n,2}(1,s)-R_{n,2}(2,s)) +(\pi _{D(1)}(s)-\pi _{D(0)}(s))R_{n,4}(s) \\ 
-P(S=s|C)\sum_{\tilde{s}\in \mathcal{S}}\frac{n(\tilde{s})}{n}(R_{n,2}(1, \tilde{s})-R_{n,2}(2,\tilde{s})) \\
-P(S=s|C)\sum_{\tilde{s}\in \mathcal{S}}(\pi _{D(1)}(\tilde{s})-\pi _{D(0)}( \tilde{s}))R_{n,4}(\tilde{s})
\end{array}
\right]  +o_{p}(1)  \notag \\
& \overset{(3)}{=}\frac{1}{P(C)}\left[ 
\begin{array}{c}
p(s)(R_{n,2}(1,s)-R_{n,2}(2,s)) +(\pi _{D(1)}(s)-\pi _{D(0)}(s))R_{n,4}(s) \\ 
-P(S=s|C)\sum_{\tilde{s}\in \mathcal{S}}p(\tilde{s})(R_{n,2}(1,\tilde{s} )-R_{n,2}(2,\tilde{s})) \\
-P(S=s|C)\sum_{\tilde{s}\in \mathcal{S}}(\pi _{D(1)}(\tilde{s})-\pi _{D(0)}( \tilde{s}))R_{n,4}(\tilde{s})
\end{array}
\right] +o_{p}(1),  \label{eq:deriv4}
\end{align}
where (1) holds by \eqref{eq:hat_limits_2}, (2) holds by Lemma \ref{lem:AsyDist}, (3) follows from \eqref{eq:hat_limits_2} and Lemma \ref{lem:AsyDist}, as it implies that $R_{n,2}=O_{p}(1)$, $R_{n,4}=O_{p}(1)$, and Assumption \ref{ass:1}, as it implies that $\frac{n(s)}{n} \overset{p}{\to }p(s)$ for all $s\in \mathcal{S}$. 

The second auxiliary derivation is as follows.
\begin{align}
& \sum_{s\in \mathcal{S}}\hat{P}(S=s|C)\sqrt{n}(\hat{\beta}_{\mathrm{sat}}(s)-\beta (s)) \notag \\
& \overset{(1)}{=}\frac{1}{P(C)}\sum_{s\in \mathcal{S}}p(s)(\pi _{D(1)}(s)-\pi _{D(0)}(s))\sqrt{n}(\hat{\beta}_{\mathrm{sat}}(s)-\beta (s))+o_{p}(1) \notag \\
& \overset{(2)}{=}\frac{1}{P(C)}\left[
\begin{array}{c}
\sum_{s\in \mathcal{S}}\frac{1}{\pi _{A}(s)}(R_{n,1}(1,1,s)+R_{n,1}(0,1,s)) 
-\sum_{s\in \mathcal{S}}\frac{1}{(1-\pi _{A}(s))} (R_{n,1}(1,0,s)+R_{n,1}(0,0,s)) \\
+\sum_{s\in \mathcal{S} }p(s)(E[Y(1)|C,S=s]-E[Y(1)|AT,S=s])(R_{n,2}(2,s)-R_{n,2}(1,s)\frac{\pi _{D(0)}(s)}{\pi _{D(1)}(s)}) \\
+\sum_{s\in \mathcal{S} }p(s)(E[Y(0)|C,S=s]-E[Y(0)|NT,S=s])(R_{n,2}(1,s)-R_{n,2}(2,s)\frac{1-\pi _{D(1)}(s)}{1-\pi _{D(0)}(s)})
\end{array}
\right] +o_{p}(1), \label{eq:deriv5}
\end{align}
where (1) holds by \eqref{eq:hat_limits_1} and Theorem \ref{thm:CondLATE}, as this implies $\sqrt{n }(\hat{\beta}_{\mathrm{sat}}(s)-\beta (s))=O_{p}(1)$ for all $s\in \mathcal{S}$, and (2) follows from Theorem \ref{thm:Expression}.

We are now ready to complete the proof of the desired result. To this end, consider the following derivation.
\begin{align}
& \sqrt{n}(\hat{\beta}_{\mathrm{sat}}-\beta )  =\sum_{s\in \mathcal{S}}\hat{P}(S=s|C)\sqrt{n}(\hat{\beta}_{\mathrm{sat}}(s)-\beta (s))+\sum_{s\in \mathcal{S}}\beta (s)\sqrt{n}(\hat{P}(S=s|C)-P(S=s|C)) \notag \\
& \overset{(1)}{=}\frac{1}{P(C)}\left[ 
\begin{array}{c}
\sum_{s\in \mathcal{S}}\frac{1}{\pi _{A}(s)}(R_{n,1}(1,1,s)+R_{n,1}(0,1,s)) 
-\sum_{s\in \mathcal{S}}\frac{1}{(1-\pi _{A}(s))} (R_{n,1}(1,0,s)+R_{n,1}(0,0,s)) \\
+\sum_{s\in \mathcal{S} }p(s)(E[Y(1)|C,S=s]-E[Y(1)|AT,S=s])(R_{n,2}(2,s)-R_{n,2}(1,s)\frac{\pi _{D(0)}(s)}{\pi _{D(1)}(s)}) \\
+\sum_{s\in \mathcal{S} }p(s)(E[Y(0)|C,S=s]-E[Y(0)|NT,S=s])(R_{n,2}(1,s)-R_{n,2}(2,s)\frac{1-\pi _{D(1)}(s)}{1-\pi _{D(0)}(s)}) \\
+\sum_{s\in \mathcal{S}}\beta (s)p(s)(R_{n,2}(1,s)-R_{n,2}(2,s)) +\sum_{s\in \mathcal{S}}\beta (s)(\pi _{D(1)}(s)-\pi _{D(0)}(s))R_{n,4}(s) \\
-\sum_{s\in \mathcal{S}}\beta (s)P(S=s|C)[\sum_{\tilde{s}\in \mathcal{S}}p( \tilde{s})(R_{n,2}(1,\tilde{s})-R_{n,2}(2,\tilde{s}))] \\
-\sum_{s\in \mathcal{S}}\beta (s)P(S=s|C)[\sum_{\tilde{s}\in \mathcal{S} }(\pi _{D(1)}(\tilde{s})-\pi _{D(0)}(\tilde{s}))R_{n,4}(\tilde{s})]
\end{array}
\right] +o_{p}(1)  \notag \\
& \overset{(2)}{=}\frac{1}{P(C)}\left[ 
\begin{array}{c}
\sum_{s\in \mathcal{S}}\frac{1}{\pi _{A}(s)}(R_{n,1}(1,1,s)+R_{n,1}(0,1,s)) 
-\sum_{s\in \mathcal{S}}\frac{1}{(1-\pi _{A}(s))} (R_{n,1}(1,0,s)+R_{n,1}(0,0,s)) \\
+\sum_{s\in \mathcal{S}}p(s)\left[ 
\begin{array}{c}
(\beta (s)-\beta ) \\ 
+E[Y(0)|C,S=s]-E[Y(0)|NT,S=s] \\ 
-(E[Y(1)|C,S=s]-E[Y(1)|AT,S=s])\frac{\pi _{D(0)}(s)}{\pi _{D(1)}(s)}
\end{array}
\right] R_{n,2}(1,s) \\ 
+\sum_{s\in \mathcal{S}}p(s)\left[ 
\begin{array}{c}
E[Y(1)|C,S=s]-E[Y(1)|AT,S=s] \\ 
-(E[Y(0)|C,S=s]-E[Y(0)|NT,S=s])\frac{1-\pi _{D(1)}(s)}{1-\pi _{D(0)}(s)} \\ 
-(\beta (s)-\beta )
\end{array}
\right] R_{n,2}(2,s) \\ 
+\sum_{s\in \mathcal{S}}(\beta (s)-\beta )(\pi _{D(1)}(s)-\pi_{D(0)}(s))R_{n,4}(s)
\end{array}
\right] +o_{p}(1),\label{eq:deriv6}
\end{align}
where (1) holds by \eqref{eq:deriv4} and \eqref{eq:deriv5}, and (2) holds by $\beta =\sum_{s\in \mathcal{S}}\beta (s)P(S=s|C)$. The desired result follows from combining \eqref{eq:deriv6}, Lemmas \ref{lem:MeanTraslation} and \ref{lem:AsyDist}.
\end{proof}
%%%%%%%% DIVIDER %%%%%%%%%%%%

%%%%%%%% DIVIDER %%%%%%%%%%%%
\begin{lemma}[SAT residuals] \label{lem:U_asymptotics} 
Assume Assumptions \ref{ass:1} and \ref{ass:2}. Let $(u_{i})_{i=1}^{n}$ denote the population version of the SAT regression residuals, defined by
\begin{equation}
u_{i}~\equiv~ Y_{i}-\sum_{s\in \mathcal{S}}1[S_{i}=s]\gamma (s)-\sum_{s\in \mathcal{S}}D_{i}1[S_{i}=s]\beta (s),
\label{eq:resid_pop_SAT}
\end{equation}
where $(( \beta ( s) ,\gamma ( s) ) :s\in \mathcal{S}) $ is as in \eqref{eq:SAT_limit}. Then, the SAT residuals $(\hat{u}_i)_{i=1}^{n}$ defined in \eqref{eq:resid_SAT} are such that, for any $ (d,a,s)\in \{0,1\}^{2}\times S$,
\begin{align}
& \frac{1}{n}\sum_{i=1}^{n}1[D_{i}=d,A_{i}=a,S_{i}=s]\hat{u}_{i}=\frac{1}{n }\sum_{i=1}^{n}1[D_{i}=d,A_{i}=a,S_{i}=s]u_{i}+o_{p}(1)=o_{p}(1) + \notag \\
& p(s)\left[ 
\begin{array}{c}
-1[(d,a)=(1,1)]\pi _{A}(s)(1-\pi _{D(1)}(s))\left(
\begin{array}{c}
\pi _{D(0)}(s)(E[Y(1)|C,S=s]-E[Y(1)|AT,S=s]) \\ 
-\pi _{D(1)}(s)(E[Y(0)|C,S=s]-E[Y(0)|NT,S=s])
\end{array}
\right) \\ 
+1[(d,a)=(0,1)]\pi _{A}(s)(1-\pi _{D(1)}(s))\left( 
\begin{array}{c}
\pi _{D(0)}(s)(E[Y(1)|C,S=s]-E[Y(1)|AT,S=s]) \\ 
-\pi _{D(1)}(s)(E[Y(0)|C,S=s]-E[Y(0)|NT,S=s])
\end{array}
\right) \\ 
-1[(d,a)=(1,0)](1-\pi _{A}(s))\pi _{D(0)}(s)\left( 
\begin{array}{c}
(1-\pi _{D(0)}(s))(E[Y(1)|C,S=s]-E[Y(1)|AT,S=s]) \\ 
-(1-\pi _{D(1)}(s))(E[Y(0)|C,S=s]-E[Y(0)|NT,S=s])
\end{array}
\right) \\ 
+1[(d,a)=(0,0)](1-\pi _{A}(s))\pi _{D(0)}(s)\left( 
\begin{array}{c}
(1-\pi _{D(0)}(s))(E[Y(1)|C,S=s]-E[Y(1)|AT,S=s]) \\ 
-(1-\pi _{D(1)}(s))(E[Y(0)|C,S=s]-E[Y(0)|NT,S=s])
\end{array}
\right)
\end{array}
\right] \label{eq:lem_u1}
\end{align}
and
\begin{align}
& \frac{1}{n}\sum_{i=1}^{n}1[D_{i}=d,A_{i}=a,S_{i}=s]\hat{u}_{i}^{2}=\frac{
1}{n}\sum_{i=1}^{n}1[D_{i}=d,A_{i}=a,S_{i}=s]u_{i}^{2}+o_{p}(1)=o_{p}(1)+  \notag \\
& p(s)\left(
\begin{array}{l}
1[(d,a)=(1,1)]\pi _{A}(s)\pi _{D(1)}(s)\times \\ 
\left[\sigma ^{2}(1,1,s)+\left( 
\begin{array}{c}
\pi _{D(0)}(s)(E[Y(1)|C,S=s]-E[Y(1)|AT,S=s]) \\ 
-\pi _{D(1)}(s)(E[Y(0)|C,S=s]-E[Y(0)|NT,S=s])
\end{array}
\right) ^{2}(\frac{1-\pi _{D(1)}(s)}{\pi _{D(1)}(s)})^{2}\right] \\ 
+1[(d,a)=(0,1)]\pi _{A}(s)(1-\pi _{D(1)}(s))\times \\ 
\left[\sigma ^{2}(0,1,s)+\left( 
\begin{array}{c}
\pi _{D(0)}(s)(E[Y(1)|C,S=s]-E[Y(1)|AT,S=s]) \\ 
-\pi _{D(1)}(s)(E[Y(0)|C,S=s]-E[Y(0)|NT,S=s])
\end{array}
\right) ^{2}\right] \\ 
+1[(d,a)=(1,0)](1-\pi _{A}(s))\pi _{D(0)}(s)\times \\ 
\left[ \sigma ^{2}(1,0,s)+\left( 
\begin{array}{c}
(1-\pi _{D(0)}(s))(E[Y(1)|C,S=s]-E[Y(1)|AT,S=s]) \\ 
-(1-\pi _{D(1)}(s))(E[Y(0)|C,S=s]-E[Y(0)|NT,S=s])
\end{array}
\right) ^{2}\right] \\ 
+1[(d,a)=(0,0)](1-\pi _{A}(s))(1-\pi _{D(0)}(s))\times \\ 
\left[ \sigma ^{2}(0,0,s)+\left( 
\begin{array}{c}
(1-\pi _{D(0)}(s))(E[Y(1)|C,S=s]-E[Y(1)|AT,S=s]) \\ 
-(1-\pi _{D(1)}(s))(E[Y(0)|C,S=s]-E[Y(0)|NT,S=s])
\end{array}
\right) ^{2}(\frac{\pi _{D(0)}(s)}{1-\pi _{D(0)}(s)})^{2}\right]
\end{array}
\right).  \label{eq:lem_u2}
\end{align}
\end{lemma}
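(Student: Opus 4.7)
My plan is to split the argument into two parts: first establish the approximation $\hat{u}_i \approx u_i$ in the relevant sample averages, then compute the limits of the ``clean'' sums involving the population residuals $u_i$ defined in \eqref{eq:resid_pop_SAT}. For the first part, I would write $\hat{u}_i - u_i = \sum_{s'\in \mathcal{S}} I\{S_i=s'\}(\gamma(s')-\hat{\gamma}_{\mathrm{sat}}(s')) + \sum_{s'\in \mathcal{S}} D_i I\{S_i=s'\}(\beta(s')-\hat{\beta}_{\mathrm{sat}}(s'))$, which by Theorem \ref{thm:plim_SAT} is $o_p(1)$ uniformly over $i$. Since on the event $\{D_i=d, A_i=a, S_i=s\}$ this difference reduces to the single (random) constant $\delta_n(d,s) \equiv (\gamma(s)-\hat{\gamma}_{\mathrm{sat}}(s)) + d(\beta(s)-\hat{\beta}_{\mathrm{sat}}(s)) = o_p(1)$, and the cell counts $n(s,d,a)/n = O_p(1)$, I get immediately that $\frac{1}{n}\sum_i I\{D_i=d,A_i=a,S_i=s\}\hat{u}_i - \frac{1}{n}\sum_i I\{D_i=d,A_i=a,S_i=s\}u_i = o_p(1)$, which yields the first equality in \eqref{eq:lem_u1}. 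For the squared version, I would expand $\hat{u}_i^2 = u_i^2 + 2u_i\delta_n(D_i,S_i) + \delta_n(D_i,S_i)^2$; the first sum is $O_p(1)$ by the previous step, so multiplying by $\delta_n = o_p(1)$ gives $o_p(1)$, and the $\delta_n^2$ term is clearly $o_p(1)$, establishing the first equality in \eqref{eq:lem_u2}.

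For the second part, the key observation is that on $\{D_i=d, A_i=a, S_i=s\}$ we have $Y_i=Y_i(d)=\tilde{Y}_i(d)+E[Y(d)|S=s]$, so $u_i = \tilde{Y}_i(d) + c(d,s)$ where $c(d,s) \equiv E[Y(d)|S=s] - \gamma(s) - d\,\beta(s)$ is deterministic. Decomposing further, $u_i = (\tilde{Y}_i(d)-\mu(d,a,s)) + (\mu(d,a,s)+c(d,s))$, so that
\begin{equation*}
\tfrac{1}{n}\sum_i I\{D_i=d,A_i=a,S_i=s\}u_i ~=~ \tfrac{R_{n,1}(d,a,s)}{\sqrt{n}} + (\mu(d,a,s)+c(d,s))\cdot \tfrac{n(s,d,a)}{n}.
\end{equation*}
Lemma \ref{lem:AsyDist2} gives $R_{n,1}(d,a,s)/\sqrt{n}=o_p(1)$, and Assumptions \ref{ass:1}--\ref{ass:2} combined with Lemma \ref{lem:A1and2_impliesold3} yield $n(s,d,a)/n \overset{p}{\to} p(s)\pi_A(s)^a(1-\pi_A(s))^{1-a}\pi_{D(a)}(s)^d(1-\pi_{D(a)}(s))^{1-d}$. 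Similarly, for the squared sum, $u_i^2 = (\tilde{Y}_i(d)-\mu(d,a,s))^2 + 2(\mu(d,a,s)+c(d,s))(\tilde{Y}_i(d)-\mu(d,a,s)) + (\mu(d,a,s)+c(d,s))^2$, so Lemma \ref{lem:AsyDist2} (applied to both $R_{n,1}$ and $R_{n,5}$) delivers the $\sigma^2(d,a,s)$ term and leaves a $(\mu(d,a,s)+c(d,s))^2$ contribution weighted by $n(s,d,a)/n$.

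It remains to identify the quantity $\mu(d,a,s)+c(d,s)$ for each $(d,a,s)\in\{0,1\}^2\times\mathcal{S}$ with the expression multiplying $p(s)\times(\text{treatment/strata weight})$ in the statement. Substituting the formulas for $\mu(d,a,s)$ from Lemma \ref{lem:MeanTraslation} and the expressions for $\gamma(s)$ and $\beta(s)$ from \eqref{eq:SAT_limit}, and using $P(C|S=s)=\pi_{D(1)}(s)-\pi_{D(0)}(s)$, tedious but mechanical algebra yields
\begin{equation*}
\mu(1,1,s)+c(1,s) ~=~ \tfrac{1-\pi_{D(1)}(s)}{\pi_{D(1)}(s)}\Bigl[-\pi_{D(0)}(s)(E[Y(1)|C,S=s]-E[Y(1)|AT,S=s])+\pi_{D(1)}(s)(E[Y(0)|C,S=s]-E[Y(0)|NT,S=s])\Bigr],
\end{equation*}
and analogous expressions for the other three cells; multiplying by the corresponding probability weights reproduces the right-hand sides of \eqref{eq:lem_u1} and \eqref{eq:lem_u2}, with the $\sigma^2(d,a,s)$ terms converted back to conditional variances via Lemma \ref{lem:MeanTraslation}.

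The main obstacle is the algebraic identification step in the last paragraph: while conceptually straightforward, verifying that the four expressions $\mu(d,a,s)+c(d,s)$ collapse into the stated symmetric form requires carefully tracking cancellations involving the identities $P(AT|S=s)+P(C|S=s)=\pi_{D(1)}(s)$ and $P(NT|S=s)+P(C|S=s)=1-\pi_{D(0)}(s)$, and then reconciling $\sigma^2(d,a,s)$ from Lemma \ref{lem:MeanTraslation} with the squared-difference terms appearing in \eqref{eq:lem_u2}. Everything else reduces to direct application of the asymptotic results already proved.
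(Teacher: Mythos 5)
Your proposal is correct and follows essentially the same route as the paper's proof: replace $\hat{u}_i$ by $u_i$ via consistency of $(\hat{\gamma}_{\mathrm{sat}}(s),\hat{\beta}_{\mathrm{sat}}(s))$ from Theorem \ref{thm:plim_SAT}, decompose $u_i$ on each cell into $(\tilde{Y}_i(d)-\mu(d,a,s))$ plus a deterministic constant, and then apply Lemmas \ref{lem:AsyDist2}, \ref{lem:A1and2_impliesold3}, and \ref{lem:MeanTraslation} together with the cell-frequency limits to identify the constants. Your explicit computation of $\mu(1,1,s)+c(1,s)$ matches the stated expression, and your expansion $\hat{u}_i^2=u_i^2+2u_i\delta_n+\delta_n^2$ for the squared case is exactly the "analogous argument" the paper leaves implicit.
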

%%%%%%%% DIVIDER %%%%%%%%%%%%
\begin{proof}
We only show \eqref{eq:lem_u1}, as the proof of \eqref{eq:lem_u2} follows from analogous arguments. Fix $(d,a,s)\in \{0,1\}^{2}\times S$ arbitrarily. To show the first equality in \eqref{eq:lem_u1}, consider the following argument.
\begin{align*}
\frac{1}{n}\sum_{i=1}^{n}1[D_{i}=d,A_{i}=a,S_{i}=s]\hat{u}_{i}
& =\frac{1}{n }\sum_{i=1}^{n}1[D_{i}=d,A_{i}=a,S_{i}=s][u_{i}+\gamma (s)-\hat{\gamma}_{\mathrm{sat}} (s)+1[d=1](\beta (s)-\hat{\beta}_{\mathrm{sat}}(s))] \\
& \overset{(1)}{=}\frac{1}{n}\sum_{i=1}^{n}1[D_{i}=d,A_{i}=a,S_{i}=s ]u_{i}+o_{p}(1),
\end{align*}
where (1) holds by Theorem \ref{thm:plim_SAT} and $(\frac{1}{n} \sum_{i=1}^{n}1[D_{i}=d,A_{i}=a,S_{i}=s]u_{i})^{2}\leq \frac{1}{n} \sum_{i=1}^{n}u_{i}^{2}=O_{p}(1)$. To show the second equality in \eqref{eq:lem_u1}, consider the following argument.
\begin{align*}
& \frac{1}{n}\sum_{i=1}^{n}1[D_{i}=d,A_{i}=a,S_{i}=s]u_{i}=\frac{1}{n} \sum_{i=1}^{n}1[D_{i}=d,A_{i}=a,S_{i}=s](Y_{i}-\gamma (s)-d\beta (s)) \\
& \overset{(1)}{=}\frac{1}{n}\sum_{i=1}^{n}1[D_{i}=d,A_{i}=a,S_{i}=s]
\left[ 
\begin{array}{c}
(\tilde{Y}_{i}(d)-\mu (d,a,s))+(\mu (d,a,s)+E[Y(d)|S=s])+ \\ 
(\pi _{D(0)}(s)-d)E[Y(1)|C,S=s]+(d-\pi _{D(1)}(s))E[Y(0)|C,S=s] \\ 
-\pi _{D(0)}(s)E[Y(1)|AT,S=s]-(1-\pi _{D(1)}(s))E[Y(0)|NT,S=s]
\end{array}
\right]  \\
%% FEDE: The derivation below is very valuable but it is just algebra. I want to keep it in the paper but it probably will not make it to the published version.
%& =\left\{
%\begin{array}{c}
%\frac{1}{n}\sum_{i=1}^{n}I\{D_{i}=d,A_{i}=a,S_{i}=s\}(\tilde{Y}_{i}(d)-\mu (d,a,s)) \\
%+\left[ 
%\begin{array}{c}
%I\{(d,a)=(1,1)\}\frac{n_{AD}( s) }{n_{A}( s) }\frac{ n_{A}( s) }{n( s) }\frac{n( s) }{n}+ \\
%I\{(d,a)=(0,1)\}( 1-\frac{n_{AD}( s) }{n_{A}( s) } ) \frac{n_{A}( s) }{n( s) }\frac{n( s) }{n}+ \\
%I\{(d,a)=(1,0)\}\frac{n_{D}( s) -n_{AD}( s) }{n( s) -n_{A}( s) }( 1-\frac{n_{A}( s) }{ n( s) }) \frac{n( s) }{n}+ \\
%I\{(d,a)=(0,0)\}( 1-\frac{n_{D}( s) -n_{AD}( s) }{ n( s) -n_{A}( s) }) ( 1-\frac{n_{A}( s) }{n( s) }) \frac{n( s) }{n}
%\end{array}
%\right] \times  \\ 
%\left[ 
%\begin{array}{c}
%(\mu (d,a,s)+E[Y(d)|S=s])+ \\ 
%(\pi _{D(0)}(s)-d)E[Y(1)|C,S=s]+(d-\pi _{D(1)}(s))E[Y(0)|C,S=s] \\ 
%-\pi _{D(0)}(s)E[Y(1)|AT,S=s]-(1-\pi _{D(1)}(s))E[Y(0)|NT,S=s]
%\end{array}
%\right] 
%\end{array}
%\right\}  \\
& \overset{(2)}{=}p(s)\left[
\begin{array}{l}
[ 1[(d,a)=(0,1)]-1[(d,a)=(1,1)]]\pi _{A}(s)(1-\pi
_{D(1)}(s))\times  \\ 
\left( 
\begin{array}{c}
\pi _{D(0)}(s)(E[Y(1)|C,S=s]-E[Y(1)|AT,S=s]) \\ 
-\pi _{D(1)}(s)(E[Y(0)|C,S=s]-E[Y(0)|NT,S=s])
\end{array}
\right)  \\ 
+[1[(d,a)=(0,0)]-1[(d,a)=(1,0)]](1-\pi _{A}(s))\pi _{D(0)}(s)\times  \\ 
\left( 
\begin{array}{c}
(1-\pi _{D(0)}(s))(E[Y(1)|C,S=s]-E[Y(1)|AT,S=s]) \\ 
-(1-\pi _{D(1)}(s))(E[Y(0)|C,S=s]-E[Y(0)|NT,S=s])
\end{array}
\right) 
\end{array}
\right] +o_{p}(1),
\end{align*}
where (1) holds by Theorem \ref{thm:plim_SAT}, $\beta (s)=E[Y(1)-Y(0)|C,S=s]$, and the fact that, conditional on $(D_{i},S_{i})=(d,s)$, $Y_{i}=Y_{i}(d)= \tilde{Y}_{i}(d)+E[Y(d)|S=s]$ (by \eqref{eq:defnY_pre}), (2) holds by Assumption \ref{ass:2}(b) and Lemmas \ref{lem:MeanTraslation}, \ref{lem:A1and2_impliesold3}, and \ref{lem:AsyDist2}.
\end{proof}
%%%%%%%% DIVIDER %%%%%%%%%%%%

%%%%%%%% DIVIDER %%%%%%%%%%%%
\begin{proof}[Proof of Theorem \ref{thm:SAT_se}]
The desired result follows from showing that
\begin{align}
\hat{V}_{1}^{\textrm{sat}} &~\overset{p}{\to }~ V_{{Y},1}^{\textrm{sat}}+V_{D,1}^{\textrm{sat}},~~~~
\hat{V}_{0}^{\textrm{sat}}  ~\overset{p}{\to }~V_{{Y},0}^{\textrm{sat}}+V_{D,0}^{\textrm{sat}},~~~~\text{and}~~
\hat{V}_{H}^{\textrm{sat}} ~\overset{p}{\to }~ V_{{H}}^{\textrm{sat}}.\label{eq:sat_asyvar1}
\end{align}
We only show the result in \eqref{eq:sat_asyvar1}, as the others can be shown analogously. Consider the following derivation.
\begin{align*}
\hat{V}_{1}^{\textrm{sat}}& =\frac{1}{\hat{P}(C)^{2}}\sum_{s\in \mathcal{S}}( \frac{1}{n_{A}(s)/n(s)})^{2}\left[
\begin{array}{c}
\frac{1}{n}\sum_{i=1}^{n}1[D_{i}=1,A_{i}=1,S_{i}=s](\hat{u}_{i}+(1-\frac{ n_{AD}(s)}{n_{A}(s)})(\hat{\beta}_{\mathrm{sat}}(s)-\hat{\beta}_{\mathrm{sat}}))^{2} \\
+\frac{1}{n}\sum_{i=1}^{n}1[D_{i}=0,A_{i}=1,S_{i}=s](\hat{u}_{i}-\frac{ n_{AD}(s)}{n_{A}(s)}(\hat{\beta}_{\mathrm{sat}}(s)-\hat{\beta}_{\mathrm{sat}}))^{2}
\end{array}
\right]  \\
%& =\frac{1}{\hat{P}(C)^{2}}\sum_{s\in\mathcal{S}}(\frac{1}{n_{A}(s)/n(s)})^{2}\left[ 
%\begin{array}{c}
%\frac{1}{n}\sum_{i=1}^{n}(I\{D_{i}=1,A_{i}=1,S_{i}=s\}+I \{D_{i}=0,A_{i}=1,S_{i}=s\})\hat{u}_{i}^{2} \\
%+\frac{n(s)}{n}\frac{n_{A}(s)}{n(s)}\frac{n_{AD}(s)}{n_{A}(s)}(1-\frac{ n_{AD}(s)}{n_{A}(s)})(\hat{\beta}_{\mathrm{sat}}(s)-\hat{\beta}_{\mathrm{sat}})^{2} \\
%+2(1-\frac{n_{AD}(s)}{n_{A}(s)})(\hat{\beta}_{\mathrm{sat}}(s)-\hat{\beta}_{\mathrm{sat}})\frac{1}{n} \sum_{i=1}^{n}I\{D_{i}=1,A_{i}=1,S_{i}=s\}\hat{u}_{i} \\
%-2\frac{n_{AD}(s)}{n_{A}(s)}(\hat{\beta}_{\mathrm{sat}}(s)-\hat{\beta}_{\mathrm{sat}})\frac{1}{n} \sum_{i=1}^{n}I\{D_{i}=0,A_{i}=1,S_{i}=s\}\hat{u}_{i}
%\end{array}
%\right]  \\
& \overset{(1)}{=}\frac{1}{P(C)^{2}}\sum_{s\in \mathcal{S}}\frac{p(s)}{\pi _{A}(s)}\left[
\begin{array}{c}
\pi _{D(1)}(s)\sigma ^{2}(1,1,s)+(1-\pi _{D(1)}(s))\sigma ^{2}(0,1,s)+ \\
\frac{1-\pi _{D(1)}(s)}{\pi _{D(1)}(s)}\left[
\begin{array}{c}
\pi _{D(0)}(s)(E[Y(1)|C,S=s]-E[Y(1)|AT,S=s]) \\
-\pi _{D(1)}(s)(E[Y(0)|C,S=s]-E[Y(0)|NT,S=s]) \\
-\pi _{D(1)}(s)(\beta (s)-\beta )
\end{array}
\right] ^{2}
\end{array}
\right] +o_{p}(1) \\
& \overset{(2)}{=}V_{{Y},1}^{\textrm{sat}}+V_{D,1}^{\textrm{sat}}+o_{p}(1),
\end{align*}
where (1) holds by Theorem \ref{thm:plim_SAT} and Lemma \ref{lem:U_asymptotics}, and (2) holds by Lemma \ref{lem:MeanTraslation}.
\end{proof}
%%%%%%%% DIVIDER %%%%%%%%%%%%

\begin{theorem}[Estimation of primitive parameters]\label{thm:primitiveEstimation}
Under Assumptions \ref{ass:1} and \ref{ass:2}, the primitive parameters in \eqref{eq:pi_defns} can be consistently estimated. In particular, for any $s\in \mathcal{S}$,
\begin{align}
    &\left(\frac{n(s)}{n},\frac{n_{A}(s)}{n(s)},\frac{n_{AD}(s)}{n_{A}(s)},\frac{n_{D}(s)-n_{AD}(s)}{n(s)-n_{A}(s)}\right)~ \overset{p}{\to} ~(p(s), \pi_{A}(s), \pi _{D(1)}(s), \pi _{D(0)}(s)).\label{eq:primitiveEstimation1}
\end{align}
Also, provided that the conditioning event has positive probability,
\begin{align}
\hat{E}[Y(0)|NT,S =s]&\equiv\tfrac{n/2}{n_{A}( s) -n_{AD}( s) }\left[
\begin{array}{c}
\frac{1}{n}\sum_{i=1}^{n}1[D_{i}=0,A_{i}=1,S_{i}=s]\hat{u}_{i} \\
-\frac{1}{n}\sum_{i=1}^{n}1[D_{i}=1,A_{i}=1,S_{i}=s]\hat{u}_{i}
\end{array}
\right] +\hat{\gamma}_{\mathrm{sat}}(s)  \overset{p}{\to} E[Y(0)|NT,S =s]\notag\\%\label{eq:primitiveEstimation2a} \\
\hat{E}[Y(1)|AT,S =s]&\equiv\tfrac{n/2}{n_{D}( s) -n_{AD}( s) }\left[
\begin{array}{c}
\frac{1}{n}\sum_{i=1}^{n}1[D_{i}=1,A_{i}=0,S_{i}=s]\hat{u}_{i} \\ 
-\frac{1}{n}\sum_{i=1}^{n}1[D_{i}=0,A_{i}=0,S_{i}=s]\hat{u}_{i}
\end{array}
\right] +\hat{\beta}_{\mathrm{sat}}(s)+\hat{\gamma}_{\mathrm{sat}}(s)\notag\\
&\overset{p}{\to} E[Y(1)|AT,S =s]\notag\\%\label{eq:primitiveEstimation2b} \\
\hat{E}[Y(0)|C,S =s]&\equiv\tfrac{1}{\frac{n_{AD}( s) }{n_{A}( s) }-\frac{n_{D}( s) -n_{AD}( s) }{n( s) -n_{A}( s) }}\left[
\begin{array}{c}
\hat{\gamma}_{\mathrm{sat}}(s)+\frac{n_{D}( s) -n_{AD}( s) }{n( s) -n_{A}( s) }\hat{\beta}_{\mathrm{sat} }( s) \\
-( 1-\frac{n_{AD}( s) }{n_{A}( s) }) \hat{E}[Y(0)|NT,S=s]\\
-\frac{n_{D}( s) -n_{AD}( s) }{n( s) -n_{A}( s) }\hat{E}[Y(1)|AT,S=s]
\end{array}
\right] \overset{p}{\to} E[Y(0)|C,S =s]\notag\\%\label{eq:primitiveEstimation2c} \\
\hat{E}[Y(1)|C,S =s]&\equiv\hat{\beta}_{\mathrm{sat}}( s) +\hat{E}[Y(0)|C,S=s]\overset{p}{\to} E[Y(1)|C,S =s]\label{eq:primitiveEstimation2}
\end{align}
and
\begin{align}
\hat{V}[Y(1)|AT,S =s]&\equiv\left[ 
\begin{array}{c}
\frac{1}{\frac{n( s) }{n}( 1-\frac{n_{A}( s) }{ n( s) }) \frac{n_{D}( s) -n_{AD}( s) }{n( s) -n_{A}( s) }}\frac{1}{n}\sum_{i=1}^{n}1[D_{i}=1,A_{i}=0,S_{i}=s]\hat{u}_{i}^{2} \\
-\left( 
\begin{array}{c}
( 1-\frac{n_{D}( s) -n_{AD}( s) }{n( s) -n_{A}( s) }) (\hat{E}[Y(1)|C,S=s]-\hat{E} [Y(1)|AT,S=s]) \\
-( 1-\frac{n_{AD}( s) }{n_{A}( s) }) (\hat{E }[Y(0)|C,S=s]-\hat{E}[Y(0)|NT,S=s])
\end{array}
\right) ^{2}
\end{array}
\right]  \notag\\
&\overset{p}{\to} V[Y(1)|AT,S =s]\notag\\% \label{eq:primitiveEstimation3a} \\
\hat{V}[Y(0)|NT,S =s]&\equiv\left[ 
\begin{array}{c}
\frac{n}{n_{A}( s) -n_{AD}( s) }\frac{1}{n}
\sum_{i=1}^{n}1[D_{i}=0,A_{i}=1,S_{i}=s]\hat{u}_{i}^{2} \\ 
-\left( 
\begin{array}{c}
\frac{n_{D}( s) -n_{AD}( s) }{n( s) -n_{A}( s) }(\hat{E}[Y(1)|C,S=s]-\hat{E}[Y(1)|AT,S=s]) \\
-\frac{n_{AD}( s) }{n_{A}( s) }(\hat{E}[Y(0)|C,S=s]- \hat{E}[Y(0)|NT,S=s])
\end{array}
\right) ^{2}
\end{array}
\right] \overset{p}{\to} V[Y(0)|NT,S =s]\notag\\% \label{eq:primitiveEstimation3b}\\
\hat{V}[Y(1)|C,S =s]&\equiv\tfrac{1}{\frac{n_{AD}( s) }{n_{A}( s) }-\frac{n_{D}( s) -n_{AD}( s) }{n( s) -n_{A}( s) }}\times \notag\\
&\left[
\begin{array}{c}
\frac{n}{n_{A}( s) }\frac{1}{n}\sum_{i=1}^{n}1[D_{i}=1,A_{i}=1,S_{i}=s]\hat{u}_{i}^{2} \\
-\frac{( 1-\frac{n_{AD}( s) }{n_{A}( s) })^{2}}{\frac{n_{AD}( s) }{n_{A}( s) }}
\left( 
\begin{array}{c}
\frac{n_{D}( s) -n_{AD}( s) }{n( s) -n_{A}( s) }(\hat{E}[Y(1)|C,S=s]-\hat{E}[Y(1)|AT,S=s]) \\
-\frac{n_{AD}( s) }{n_{A}( s) }(\hat{E}[Y(0)|C,S=s]- \hat{E}[Y(0)|NT,S=s])
\end{array}
\right) ^{2} \\ 
-\tfrac{\frac{n_{D}( s) -n_{AD}( s) }{n( s) -n_{A}( s) }( \frac{n_{AD}( s) }{n_{A}( s) }-\frac{n_{D}( s) -n_{AD}( s) }{n( s) -n_{A}( s) }) }{\frac{n_{AD}( s) }{ n_{A}( s) }}(\hat{E}[Y(1)|C,S=s]-\hat{E}[Y(1)|AT,S=s])^{2} \\
-\tfrac{n_{D}( s) -n_{AD}( s) }{n( s) -n_{A}( s) }\hat{V}[Y(1)|AT,S=s]
\end{array}
\right]\notag\\
&\overset{p}{\to} V[Y(1)|C,S =s] \notag\\% \label{eq:primitiveEstimation3c} \\
\hat{V}[Y(0)|C,S =s]
&\equiv\tfrac{1}{\frac{n_{AD}( s) }{n_{A}( s) }-\frac{n_{D}( s) -n_{AD}( s) }{n( s) -n_{A}( s) }}\times \notag\\
&\left[
\begin{array}{c}
\frac{n}{( n( s) -n_{A}( s) ) }\frac{1}{n} \sum_{i=1}^{n}1[D_{i}=0,A_{i}=0,S_{i}=s]\hat{u}_{i}^{2} \\
-\frac{( \frac{n_{D}( s) -n_{AD}( s) }{n( s) -n_{A}( s) }) ^{2}}{( 1-\frac{n_{D}( s) -n_{AD}( s) }{n( s) -n_{A}( s) } ) }\left(
\begin{array}{c}
( 1-\frac{n_{D}( s) -n_{AD}( s) }{n( s) -n_{A}( s) }) (\hat{E}[Y(1)|C,S=s]-\hat{E} [Y(1)|AT,S=s]) \\
-( 1-\frac{n_{AD}( s) }{n_{A}( s) }) (\hat{E }[Y(0)|C,S=s]-\hat{E}[Y(0)|NT,S=s])
\end{array}
\right) ^{2} \\ 
-\frac{\frac{n_{A}( s) -n_{AD}( s) }{n_{A}( s) }( \frac{n_{AD}( s) }{n_{A}( s) }-\frac{ n_{D}( s) -n_{AD}( s) }{n( s) -n_{A}( s) }) }{( 1-\frac{n_{D}( s) -n_{AD}( s) }{n( s) -n_{A}( s) }) }(\hat{E} [Y(0)|C,S=s]-\hat{E}[Y(0)|NT,S=s])^{2} \\
-( 1-\frac{n_{AD}( s) }{n_{A}( s) }) \hat{V}[Y(0)|NT,S=s]
\end{array}
\right]\notag\\
&\overset{p}{\to}  V[Y(0)|C,S =s]. \label{eq:primitiveEstimation3}
\end{align}
\end{theorem}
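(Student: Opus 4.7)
The plan is to handle the three families of consistency statements in \eqref{eq:primitiveEstimation1}--\eqref{eq:primitiveEstimation3} in sequence, using results already in the appendix together with the continuous mapping theorem. The primitive-parameter convergences \eqref{eq:primitiveEstimation1} are immediate: $n(s)/n \overset{p}{\to} p(s)$ by the LLN under Assumption \ref{ass:1}, $n_A(s)/n(s) \overset{p}{\to} \pi_A(s)$ by Assumption \ref{ass:2}(b), and the ratios $n_{AD}(s)/n_A(s)$, $(n_D(s)-n_{AD}(s))/(n(s)-n_A(s))$ converge to $\pi_{D(1)}(s)$ and $\pi_{D(0)}(s)$ by Lemma \ref{lem:A1and2_impliesold3}.

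For the conditional-mean estimators in \eqref{eq:primitiveEstimation2}, the engine is Lemma \ref{lem:U_asymptotics}, which gives the probability limit of every sum $\tfrac{1}{n}\sum_i I\{D_i=d,A_i=a,S_i=s\}\hat u_i$ as an explicit function of primitives. Applying it to $(d,a)=(0,1)$ and $(1,1)$ inside $\hat E[Y(0)|NT,S=s]$, the two residual-sum limits are negatives of one another up to a common factor, so their difference picks out $2\,p(s)\pi_A(s)(1-\pi_{D(1)}(s))$ times the bracket $[\pi_{D(0)}(s)(E[Y(1)|C,S=s]-E[Y(1)|AT,S=s]) - \pi_{D(1)}(s)(E[Y(0)|C,S=s]-E[Y(0)|NT,S=s])]$; the normalizer $\tfrac{n/2}{n_A(s)-n_{AD}(s)}$ cancels that prefactor, and adding the limit of $\hat\gamma_{\mathrm{sat}}(s)$ from Theorem \ref{thm:plim_SAT} collapses to $E[Y(0)|NT,S=s]$ once $\pi_{D(1)}(s)+(1-\pi_{D(1)}(s))=1$ is used. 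The argument for $\hat E[Y(1)|AT,S=s]$ is symmetric. Given these two consistencies, $\hat E[Y(0)|C,S=s]$ is a direct plug-in inversion of the population identity $\gamma(s)+\pi_{D(0)}(s)\beta(s) = (\pi_{D(1)}(s)-\pi_{D(0)}(s))E[Y(0)|C,S=s] + \pi_{D(0)}(s)E[Y(1)|AT,S=s] + (1-\pi_{D(1)}(s))E[Y(0)|NT,S=s]$, and $\hat E[Y(1)|C,S=s]$ is immediate from $\hat\beta_{\mathrm{sat}}(s)\overset{p}{\to}\beta(s)$.

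For the variance estimators in \eqref{eq:primitiveEstimation3}, I would use the second display of Lemma \ref{lem:U_asymptotics}: each $\tfrac{1}{n}\sum_i I\{D_i=d,A_i=a,S_i=s\}\hat u_i^2$, after division by the matching sample-proportion ratio, converges to $\sigma^2(d,a,s)$ plus a squared cross-term of the same form as the one appearing in the $\hat u_i$ sums. The subtracted correction in each formula is built from the already-consistent conditional-mean estimators and converges to exactly that squared cross-term, so it cancels the unwanted piece. Lemma \ref{lem:MeanTraslation} then identifies $\sigma^2(1,0,s)$ and $\sigma^2(0,1,s)$ directly with $V[Y(1)|AT,S=s]$ and $V[Y(0)|NT,S=s]$; for $\hat V[Y(1)|C,S=s]$ and $\hat V[Y(0)|C,S=s]$ the same lemma decomposes $\sigma^2(1,1,s)$ and $\sigma^2(0,0,s)$ as a convex combination of two type-specific variances plus a squared mean gap, and the further correction terms in the formulas are designed to strip the contaminating variances and the squared gap using $\hat V[Y(1)|AT,S=s]$, $\hat V[Y(0)|NT,S=s]$ and the conditional-mean estimators, leaving $V[Y(1)|C,S=s]$ or $V[Y(0)|C,S=s]$ after multiplication by $(\pi_{D(1)}(s)-\pi_{D(0)}(s))^{-1}$. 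The main obstacle is bookkeeping rather than anything conceptual: the longest step will be verifying for $\hat V[Y(1)|C,S=s]$ and $\hat V[Y(0)|C,S=s]$ that the proposed correction coefficients line up exactly with those produced by the Lemma \ref{lem:MeanTraslation} decomposition, after which Slutsky and the CMT assemble the component convergences into the final $\overset{p}{\to}$ statements.
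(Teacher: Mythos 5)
Your proposal is correct and follows essentially the same route as the paper's proof: \eqref{eq:primitiveEstimation1} via the LLN, Assumption \ref{ass:2}(b), and Lemma \ref{lem:A1and2_impliesold3}; the conditional means via the first display of Lemma \ref{lem:U_asymptotics} combined with Theorem \ref{thm:plim_SAT} (your algebraic identity for $\hat{E}[Y(0)|C,S=s]$ is exactly the inversion the paper performs implicitly); and the variances via the second display of Lemma \ref{lem:U_asymptotics} together with the $\sigma^{2}(d,a,s)$ decompositions in Lemma \ref{lem:MeanTraslation}. The cancellation bookkeeping you flag as the main remaining work is indeed all that separates your outline from the paper's argument.
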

%%%%%%%% DIVIDER %%%%%%%%%%%%
\begin{proof}
The first convergence in \eqref{eq:primitiveEstimation1} holds by Assumption \ref{ass:1} and the LLN. The second convergence of \eqref{eq:primitiveEstimation1} is imposed in Assumption \ref{ass:2}(b). The remaining results hold by Lemma \ref{lem:A1and2_impliesold3}.

We next show the first line of \eqref{eq:primitiveEstimation2}. By Lemma \ref{lem:U_asymptotics},
\begin{align}
&\frac{1}{2}\left[ 
\begin{array}{c}
\frac{1}{n}\sum_{i=1}^{n}1[D_{i}=0,A_{i}=1,S_{i}=s]\hat{u}_{i} \\ 
-\frac{1}{n}\sum_{i=1}^{n}1[D_{i}=1,A_{i}=1,S_{i}=s]\hat{u}_{i}
\end{array}
\right] \notag\\
&\overset{p}{\to}p(s)\pi _{A}(s)(1-\pi _{D(1)}(s))\left[
\begin{array}{c}
\pi _{D(0)}(s)(E[Y(1)|C,S=s]-E[Y(1)|AT,S=s]) \\ 
-\pi _{D(1)}(s)(E[Y(0)|C,S=s]-E[Y(0)|NT,S=s])
\end{array}
\right] .\label{eq:primitiveEstimation4} 
\end{align}
Then, consider the following derivation.
\begin{align}
\hat{E}[Y(0)|NT,S=s] 
&=\frac{1/2}{\frac{n( s) }{n}\frac{n_{A}( s) }{ n( s) }( 1-( \frac{n_{AD}( s) }{n_{A}( s) }) ) }\left[
\begin{array}{c}
\frac{1}{n}\sum_{i=1}^{n}1[D_{i}=0,A_{i}=1,S_{i}=s]\hat{u}_{i} \\ 
-\frac{1}{n}\sum_{i=1}^{n}1[D_{i}=1,A_{i}=1,S_{i}=s]\hat{u}_{i}
\end{array}
\right] +\hat{\gamma}_{\mathrm{sat}}(s) +o_{p}(1)\notag \\
&\overset{p}{\to} E[Y(0)|NT,S=s],\label{eq:primitiveEstimation5}
\end{align}
where the convergence follows from \eqref{eq:primitiveEstimation4}, Assumptions \ref{ass:1} and \ref{ass:2}(b), Lemma \ref{lem:A1and2_impliesold3}, and Theorem \ref{thm:plim_SAT}. An analogous argument can be used to show the second line of \eqref{eq:primitiveEstimation2}. Next, note that the third line of \eqref{eq:primitiveEstimation2} follows from Assumption \ref{ass:1}, \ref{ass:2}(b), Lemma \ref{lem:A1and2_impliesold3}, and the first and second lines of \eqref{eq:primitiveEstimation2}, and Theorem \ref{thm:plim_SAT}. Finally, note that the last line of \eqref{eq:primitiveEstimation2} follows from the third line of \eqref{eq:primitiveEstimation2} and Theorem \ref{thm:plim_SAT}.

To conclude, we note that \eqref{eq:primitiveEstimation3} follows from  \eqref{eq:lem_u2}, \eqref{eq:primitiveEstimation1}, and \eqref{eq:primitiveEstimation2}, and Assumptions \ref{ass:1} and \ref{ass:2}(b).
%To show the first line of \eqref{eq:primitiveEstimation3}, consider the following argument.
%\begin{align*}
%\hat{V}[Y(1)|AT,S=s]
%&=\left[ 
%\begin{array}{c}
%\frac{1}{\frac{n( s) }{n}( 1-\frac{n_{A}( s) }{ n( s) }) \frac{n_{D}( s) -n_{AD}( s) }{n( s) -n_{A}( s) }}\frac{1}{n}\sum_{i=1}^{n}I \{D_{i}=1,A_{i}=0,S_{i}=s\}\hat{u}_{i}^{2} \\
%-\left( 
%\begin{array}{c}
%( 1-\frac{n_{D}( s) -n_{AD}( s) }{n( s) -n_{A}( s) }) (\hat{E}[Y(1)|C,S=s]-\hat{E} [Y(1)|AT,S=s]) \\
%-( 1-\frac{n_{AD}( s) }{n_{A}( s) }) (\hat{E }[Y(0)|C,S=s]-\hat{E}[Y(0)|NT,S=s])
%\end{array}
%\right) ^{2}
%\end{array}
%\right] \\
%&\overset{p}{\to}V[Y(1)|AT,S=s],
%\end{align*}
%where the convergence holds by \eqref{eq:lem_u2} and \eqref{eq:primitiveEstimation2}, and Assumptions \ref{ass:1} and \ref{ass:2}(b). An analogous argument can be used to show the remaining lines of \eqref{eq:primitiveEstimation3}.
\end{proof}

\begin{proof}[Proof of Theorem \ref{thm:SAT_test}]
This result follows from elementary convergence arguments and Theorems \ref{thm:AsyDist_SAT} and \ref{thm:SAT_se}.
\end{proof}

%%%%%%%%%%%%%%%%%%%%%%%%%%%%%%%%%%%%%%%%%%%%%%%%%%%%%%%%%%%%%%%%%%%%%%%%%%%%%
\subsection{Proofs of results related to Section \ref{sec:SFE}}\label{sec:A_SFE}

%%%%%%%% DIVIDER %%%%%%%%%%%%
\begin{lemma}[SFE matrices]\label{lem:MatrixSFE}
Assume Assumptions \ref{ass:1} and \ref{ass:2}. Then,
\begin{align*}
{{{\mathbf{Z}_{n}^{\mathrm{sfe}}}'}\mathbf{X}_{n}^{\mathrm{sfe}}}/{n} &=\left[
\begin{array}{cc}
diag( n( s) /n:s\in \mathcal{S}) & ( n_{D}(s)/n:s\in \mathcal{S}) \\
( n_{A}(s)/n:s\in \mathcal{S}) ^{\prime } & n_{AD}/n 
\end{array} 
\right] \\
&=\left[ 
\begin{array}{cc}
diag( p( s) :s\in \mathcal{S})  & ( [
\pi _{D( 1) }( s) \pi _{A}( s) + 
\pi _{D( 0) }( s) ( 1-\pi _{A}( s)
] p( s) :s\in \mathcal{S})  \\ 
( \pi _{A}( s) p( s) :s\in \mathcal{S}) ^{\prime } & \sum_{s\in \mathcal{S}}\pi _{D( 1) }( s) \pi _{A}( s) p( s)
\end{array}
\right] +o_{p}(1) .
\end{align*}
Thus, 
\begin{align*}
&( {{{\mathbf{Z}_{n}^{\mathrm{sfe}}}'}\mathbf{X}_{n}^{\mathrm{sfe}}}/{n} )
^{-1} \\
&=\left[ 
\left[ 
\begin{array}{cc}
diag( \frac{n}{n( s) }:s\in \mathcal{S}) & \mathbf{0 }_{\vert \mathcal{S}\vert \times 1} \\
\mathbf{0}_{1\times \vert \mathcal{S}\vert } & 0 
\end{array}
\right] + 
\frac{\left[ 
\begin{array}{cc}
( \frac{n_{D}(s)}{n( s) }:s\in \mathcal{S}) \times ( \frac{n_{A}(s)}{n( s) }:s\in \mathcal{S}) ^{\prime } & ( -\frac{n_{D}(s)}{n( s) }:s\in \mathcal{S}) \\
( -\frac{n_{A}(s)}{n( s) }:s\in \mathcal{S}) ^{\prime } & 1
\end{array}
\right] }{\frac{n_{AD}}{n}-\sum_{s\in \mathcal{S}}\frac{n_{A}(s)}{n( s) }\frac{n_{D}(s)}{n( s) }\frac{n( s) }{n}}
\right] +o_{p}(1) \\
&=\left[ 
\begin{array}{c}
\left[ 
\begin{array}{cc}
diag( \frac{1}{p( s) }:s\in \mathcal{S}) & \mathbf{0 }_{\vert \mathcal{S}\vert \times 1} \\
\mathbf{0}_{1\times \vert \mathcal{S}\vert } & 0
\end{array}
\right] + \\ 
\frac{\left[ 
\begin{array}{cc}
\left[ 
\begin{array}{c}
\left( \left[ 
\begin{array}{c}
\pi _{D( 1) }( s) \pi_{A}( s) + \\ 
\pi _{D( 0) }( s) ( 1-\pi_{A}( s)) 
\end{array}
\right] :s\in \mathcal{S}\right)  \\ 
\times ( \pi _{A}( s) :s\in \mathcal{S}) ^{\prime }
\end{array}
\right]  & \left( - \left[ 
\begin{array}{c}
\pi _{D( 1) }( s) \pi _{A}( s) + \\ 
\pi _{D( 0) }( s) ( 1-\pi _{A}( s)
) 
\end{array}
\right] :s\in \mathcal{S}\right) +o_{p}(1) \\ 
( - \pi _{A}( s) :s\in \mathcal{S}) ^{\prime } & 1
\end{array}
\right] }{\sum_{s\in \mathcal{S}}p( s) \pi _{A}( s)
( 1-\pi _{A}( s) ) ( \pi _{D( 1)
}( s) -\pi _{D( 0) }( s) ) }
\end{array}
\right]  +o_{p}(1).
\end{align*}
Also,
\begin{equation*}
{{\mathbf{Z}_{n}^{\mathrm{sfe}}}'\mathbf{Y}_{n}}/{n}=\left[ 
\begin{array}{c}
( \frac{1}{n}\sum_{i=1}^{n}1[S_{i}=s] Y_{i}:s\in \mathcal{S}) , \\
( \frac{1}{n}\sum_{i=1}^{n}A_{i}Y_{i})
\end{array}
\right] .
\end{equation*}
\end{lemma}
%%%%%%%% DIVIDER %%%%%%%%%%%%
\begin{proof}
The equalities follow from algebra and the convergences follow from the CMT. In particular, the first equality in the second display has an $o_p(1)$ to allow for the possibility that ${{{\mathbf{Z}}^{\mathrm{sfe}}_{n}}'\mathbf{X}_{n}^{\mathrm{sfe}}}/{n}$ is singular or any of the denominators being equal to zero. These events occur with vanishing probability under our assumptions.
\end{proof}
%%%%%%%% DIVIDER %%%%%%%%%%%%

\begin{theorem}[SFE limits]\label{thm:plim_SFE}
Assume Assumptions \ref{ass:1} and \ref{ass:2}. For every $s\in S$,
\begin{align}
&\hat{\gamma}_{\mathrm{sfe}}( s) ~\overset{p}{\to}~\left[
\begin{array}{c}
\pi _{D(0)}(s)E[Y(1)|AT,S=s]+(1-\pi_{D(1)}( s) )E[Y(0)|NT,S=s] +\\
( \pi _{D(1)}(s)-\pi _{D(0)}(s)) [ \pi _{A}( s) E[Y(1)|C,S=s]+(1-\pi_{A}(s) )E[Y(0)|C,S=s]] \\
-[(1-\pi _{A}( s) )\pi _{D( 0) }( s) +\pi _{A}( s) \pi _{D( 1) }( s) ]\times\\
\frac{\sum_{\tilde{s}\in \mathcal{S}}p( \tilde{s}) \pi _{A}( \tilde{s}) (1-\pi _{A}( \tilde{s}) )( \pi _{D(1)}( \tilde{s})-\pi _{D(0)}(\tilde{s})) E[Y(1)-Y(0)|C,S=\tilde{s}]}{\sum_{ \tilde{s}\in \mathcal{S}}p( \tilde{s}) \pi _{A}( \tilde{s} ) (1-\pi _{A}( \tilde{s}) )( \pi _{D(1)}(\tilde{s} )-\pi _{D(0)}(\tilde{s})) }
\end{array}
\right] \notag\\
&\hat{\beta}_{\mathrm{sfe}}~\overset{p}{\to}~\frac{\sum_{s\in \mathcal{S}}p( s) \pi _{A}( s) ( 1-\pi _{A}( s) ) ( \pi _{D(1)}(s)-\pi _{D(0)}(s)) E[Y(1)-Y(0)|C,S=s]}{\sum_{\tilde{ s}\in \mathcal{S}}p( \tilde{s}) \pi _{A}( \tilde{s}) ( 1-\pi _{A}( \tilde{s}) ) ( \pi _{D(1)}(s)-\pi _{D(0)}(\tilde{s})) }. \label{eq:plim_SFE2}
\end{align}
If we add Assumption \ref{ass:3}(c),
\begin{equation}
\hat{\beta}_{\mathrm{sfe}}~\overset{p}{\to}~\beta.
\label{eq:plim_SFE3}
\end{equation}
\end{theorem}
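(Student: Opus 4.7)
The plan is to derive the plim of $\hat{\beta}_{\mathrm{sfe}}$ (and of $\hat{\gamma}_{\mathrm{sfe}}(s)$) from the explicit IV formula $\hat{\theta}_{\mathrm{sfe}} = ({\mathbf{Z}_n^{\mathrm{sfe}}}'\mathbf{X}_n^{\mathrm{sfe}}/n)^{-1} ({\mathbf{Z}_n^{\mathrm{sfe}}}'\mathbf{Y}_n/n)$. Lemma \ref{lem:MatrixSFE} already supplies the probability limit of the design-matrix inverse in closed form, so the main task reduces to computing the plim of each component of ${\mathbf{Z}_n^{\mathrm{sfe}}}'\mathbf{Y}_n/n$ and then performing the algebraic simplification.

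First I would compute the plims of the two kinds of sample averages in ${\mathbf{Z}_n^{\mathrm{sfe}}}'\mathbf{Y}_n/n$. By Assumption \ref{ass:1} and the LLN, $\tfrac{1}{n}\sum_{i=1}^n I\{S_i=s\}Y_i \overset{p}{\to} p(s)E[Y|S=s]$. For $\tfrac{1}{n}\sum_{i=1}^n A_i Y_i I\{S_i=s\}$, I would condition on $\{(A_i,S_i)\}_{i=1}^n$, use Assumption \ref{ass:2}(a) together with Lemma \ref{lem:aux_lemma} to replace the conditional distribution of $(Y_i,D_i)$ given $(A_i,S_i)$ by the corresponding unconditional one, apply the LLN, and invoke Assumption \ref{ass:2}(b) for $n_A(s)/n(s)\overset{p}{\to}\pi_A(s)$; this yields $\tfrac{1}{n}\sum_i A_i Y_i I\{S_i=s\}\overset{p}{\to} p(s)\pi_A(s)E[Y|A=1,S=s]$. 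I would then decompose $E[Y|A=1,S=s]$ and $E[Y|S=s]$ into contributions from compliers, always takers, and never takers via Lemma \ref{lem:P_type_representation}:
\begin{align*}
E[Y|A=1,S=s] &= \pi_{D(0)}(s)E[Y(1)|AT,S{=}s]+(\pi_{D(1)}(s)-\pi_{D(0)}(s))E[Y(1)|C,S{=}s]\\
&\quad + (1-\pi_{D(1)}(s))E[Y(0)|NT,S{=}s],\\
E[Y|A=0,S=s] &= \pi_{D(0)}(s)E[Y(1)|AT,S{=}s]+(\pi_{D(1)}(s)-\pi_{D(0)}(s))E[Y(0)|C,S{=}s]\\
&\quad + (1-\pi_{D(1)}(s))E[Y(0)|NT,S{=}s],
\end{align*}
and $E[Y|S=s] = \pi_A(s)E[Y|A=1,S=s] + (1-\pi_A(s))E[Y|A=0,S=s]$.

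Next, using the explicit form of $({\mathbf{Z}_n^{\mathrm{sfe}}}'\mathbf{X}_n^{\mathrm{sfe}}/n)^{-1}$ from Lemma \ref{lem:MatrixSFE}, CMT delivers $\hat{\beta}_{\mathrm{sfe}}\overset{p}{\to} \tfrac{N}{D}$, where the denominator $D=\sum_s p(s)\pi_A(s)(1-\pi_A(s))(\pi_{D(1)}(s)-\pi_{D(0)}(s))$ appears directly from $\tfrac{n_{AD}}{n}-\sum_s\tfrac{n_A(s)n_D(s)}{n(s)\,n}$, and the numerator $N$ equals $\sum_s p(s)\pi_A(s)\bigl(E[Y|A=1,S=s]-E[Y|S=s]\bigr)$. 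Substituting the decompositions above, the $E[Y(1)|AT]$, $E[Y(0)|NT]$, and $E[Y(0)|C]$ (resp.\ $E[Y(1)|C]$) terms factor as $\pi_A(s)(1-\pi_A(s))(\pi_{D(1)}(s)-\pi_{D(0)}(s))$ times the contrast $E[Y(1)-Y(0)|C,S=s]$, leaving exactly the claimed expression in \eqref{eq:plim_SFE2}. The plim of $\hat{\gamma}_{\mathrm{sfe}}(s)$ then follows by the same procedure from the first block-row of the inverse matrix.

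Finally, for \eqref{eq:plim_SFE3}, under Assumption \ref{ass:3}(c) the constant factor $\pi_A(1-\pi_A)$ cancels between $N$ and $D$, and what remains is $\sum_s p(s)(\pi_{D(1)}(s)-\pi_{D(0)}(s))E[Y(1)-Y(0)|C,S=s]/\sum_{\tilde s} p(\tilde s)(\pi_{D(1)}(\tilde s)-\pi_{D(0)}(\tilde s))$. Using \eqref{eq:pi_defns}, this equals $\sum_s P(S=s|C)E[Y(1)-Y(0)|C,S=s]=E[Y(1)-Y(0)|C]=\beta$ by the law of iterated expectations. The main obstacle is purely algebraic: bookkeeping the complier/AT/NT decompositions of $E[Y|A,S]$ so that the cross-type terms cancel and only the LATE contrast survives in the numerator; beyond that, the convergence step is a direct application of CMT together with Assumptions \ref{ass:1}--\ref{ass:2} and Lemmas \ref{lem:aux_lemma}, \ref{lem:A1and2_impliesold3}, and \ref{lem:MatrixSFE}.
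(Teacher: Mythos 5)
Your proposal is correct and follows essentially the same route as the paper's proof: both start from the explicit IV formula, invoke Lemma \ref{lem:MatrixSFE} for the limit of the design-matrix inverse, establish the limits of the cross-products $\tfrac{1}{n}\sum_i I\{S_i=s\}Y_i$ and $\tfrac{1}{n}\sum_i A_iY_iI\{S_i=s\}$ by conditioning on the assignments (the paper packages this step in the $\tilde{Y}(d)$, $\mu(d,a,s)$, and $R_{n,1}$ notation of Lemmas \ref{lem:MeanTraslation} and \ref{lem:AsyDist2}, which is the same type decomposition of $E[Y|A=a,S=s]$ you write out directly), and then cancel the always-taker and never-taker terms in $E[Y|A=1,S=s]-E[Y|A=0,S=s]$ so that only the complier contrast survives, with the final step for \eqref{eq:plim_SFE3} identical to the paper's use of \eqref{eq:pi_defns}. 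The only caution is that your LLN step must be the conditional-on-assignments argument you describe (not an unconditional i.i.d.\ LLN), since the $A_i$ need not be independent under CAR; you cite the right lemmas for this, so the argument goes through.
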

%%%%%%%% DIVIDER %%%%%%%%%%%%
\begin{proof}
Throughout this proof, define
\begin{align}
\Lambda _{n}
& \equiv \frac{n_{AD}}{n}-\sum_{s\in \mathcal{S}}\frac{n_{A}(s) }{n( s) }\frac{n_{D}(s)}{n( s) }\frac{n( s) }{n}
=\sum_{s\in \mathcal{S}}\frac{n( s) }{n}\frac{n_{A}(s)}{n( s) }\left( 1-\frac{n_{A}( s) }{n( s) }\right) \left( \frac{n_{AD}( s) }{n_{A}( s) }-\frac{ n_{D}(s)-n_{AD}( s) }{n( s) -n_{A}( s) } \right)\notag \\
&\overset{p}{\to}\Lambda \equiv \sum_{s\in \mathcal{S}}p( s) \pi _{A}( s) ( 1-\pi _{A}( s) ) ( \pi _{D( 1) }( s) -\pi _{D( 0) }( s) ) , \label{eq:plim_SFE4}
\end{align}
where the convergence holds by Assumptions \ref{ass:1} and \ref{ass:2}, Lemma \ref{lem:A1and2_impliesold3}, and the LLN.

To show the first line of \eqref{eq:plim_SFE2}, consider the following derivation.
\begin{align*}
&\hat{\gamma}_{\mathrm{sfe}}( s) 
=\frac{n}{n( s) }\frac{1}{n} \sum_{i=1}^{n}1[ S_{i}=s] Y_{i}-\frac{n_{D}(s)}{n( s) }\frac{1}{\Lambda _{n}}\frac{1}{n}\sum_{i=1}^{n}1[ A_{i}=1] Y_{i}+\frac{n_{D}(s)}{n( s) }\frac{1}{\Lambda _{n} }\sum_{\tilde{s}\in \mathcal{S}}\frac{n_{A}(\tilde{s})}{n( \tilde{s} ) }\frac{1}{n}\sum_{i=1}^{n}1[ S_{i}=\tilde{s}] Y_{i}  +o_p(1)\\
&\overset{(1)}{=}\left[ 
\begin{array}{c}
\frac{n}{n( s) }[ \frac{1}{\sqrt{n}}R_{n,1}( 1,1,s) +\frac{1}{\sqrt{n}}R_{n,1}( 1,0,s) +\frac{1}{\sqrt{n}}R_{n,1}( 0,1,s) +\frac{1}{\sqrt{n}}R_{n,1}( 0,0,s) ] \\
+\left[ ( 1-\frac{n_{A}( s) }{n( s) }) \frac{ n_{D}(s)-n_{AD}( s) }{n( s) -n_{A}( s) }+ \frac{n_{A}( s) }{n( s) }\frac{n_{AD}( s) }{ n_{A}( s) }\right]\frac{1}{\Lambda _{n}}\times \\
\sum_{\tilde{s}\in \mathcal{S}}[
\frac{n_{A}(\tilde{s})}{n( \tilde{ s}) }\frac{1}{\sqrt{n}}(R_{n,1}( 1,0,\tilde{s}) +R_{n,1}( 0,0,\tilde{s}))
-( 1- \frac{n_{A}(\tilde{s})}{n( \tilde{s}) }) \frac{1}{\sqrt{n}} (R_{n,1}( 1,1,\tilde{s})+R_{n,1}( 0,1,\tilde{s}) )] \\
+\frac{n_{A}( s) }{n( s) }\frac{n_{AD}( s) }{n_{A}( s) }( \mu ( 1,1,s) +E[ Y( 1) |S=s] ) +( 1-\frac{n_{A}( s) }{n( s) }) ( \frac{n_{D}( s) -n_{AD}( s) }{ n( s) -n_{A}( s) }) ( \mu ( 1,0,s) +E[ Y( 1) |S=s] ) \\
+\frac{n_{A}( s) }{n( s) }( 1-\frac{n_{AD}( s) }{n_{A}( s) }) ( \mu ( 0,1,s) +E [ Y( 0) |S=s] ) \\+( 1-\frac{n_{A}( s) }{n( s) }) ( 1-\frac{n_{D}( s) -n_{AD}( s) }{n( s) -n_{A}( s) }) ( \mu ( 0,0,s) +E[ Y( 0) |S=s] ) \\
+( ( 1-\frac{n_{A}( s) }{n( s) }) \frac{n_{D}(s)-n_{AD}( s) }{n( s) -n_{A}( s) }+\frac{n_{A}( s) }{n( s) }\frac{n_{AD}( s) }{n_{A}( s) }) \frac{1}{\Lambda _{n}}\times 
\sum_{\tilde{s}\in \mathcal{S}}\frac{n( \tilde{s}) }{n}\frac{ n_{A}(\tilde{s})}{n( \tilde{s}) }( 1-\frac{n_{A}( \tilde{s}) }{n( \tilde{s}) }) \times\\
\left(
\begin{array}{c}
-\frac{n_{AD}( \tilde{s}) }{n_{A}( \tilde{s}) }( \mu ( 1,1,\tilde{s}) +E[ Y( 1) |S=\tilde{s}] ) -( 1-\frac{n_{AD}( \tilde{s}) }{n_{A}( \tilde{s} ) }) ( \mu ( 0,1,\tilde{s}) +E[ Y( 0) |S=\tilde{s}] ) \\
+( \frac{n_{D}( \tilde{s}) -n_{AD}( \tilde{s}) }{ n( \tilde{s}) -n_{A}( \tilde{s}) }) ( \mu ( 1,0,\tilde{s}) +E[ Y( 1) |S=\tilde{s}] ) +( 1-\frac{n_{D}( \tilde{s}) -n_{AD}( \tilde{s} ) }{n( \tilde{s}) -n_{A}( \tilde{s}) }) ( \mu ( 0,0,\tilde{s}) +E[ Y( 0) |S=\tilde{s}
] ) 
\end{array}
\right) 
\end{array}
\right] +o_p(1)  \\
&\overset{(2)}{=}\left[ 
\begin{array}{c}
\frac{n}{n( s) }[ \frac{1}{\sqrt{n}}R_{n,1}( 1,1,s) +\frac{1}{\sqrt{n}}R_{n,1}( 1,0,s) +\frac{1}{\sqrt{n}}R_{n,1}( 0,1,s) +\frac{1}{\sqrt{n}}R_{n,1}( 0,0,s)] \\
+\left[ ( 1-\frac{n_{A}( s) }{n( s) }) \frac{ n_{D}(s)-n_{AD}( s) }{n( s) -n_{A}( s) }+ \frac{n_{A}( s) }{n( s) }\frac{n_{AD}( s) }{ n_{A}( s) }\right] \frac{1}{\Lambda _{n}}\times\\
\sum_{\tilde{s}\in \mathcal{S}}[
\frac{n_{A}(\tilde{s})}{n( \tilde{ s}) }\frac{1}{\sqrt{n}}(R_{n,1}( 1,0,\tilde{s}) +R_{n,1}( 0,0,\tilde{s}))
-( 1- \frac{n_{A}(\tilde{s})}{n( \tilde{s}) }) \frac{1}{\sqrt{n}} (R_{n,1}( 1,1,\tilde{s})+R_{n,1}( 0,1,\tilde{s}) )]  \\
+\frac{n_{A}( s) }{n( s) }\frac{n_{AD}( s) }{n_{A}( s) }( E[Y(1)|AT,S=s]\tfrac{\pi _{D(0)}(s)}{\pi _{D(1)}(s)}+E[Y(1)|C,S=s]\tfrac{\pi _{D(1)}(s)-\pi _{D(0)}(s)}{\pi _{D(1)}(s) }) \\
+( 1-\frac{n_{A}( s) }{n( s) }) ( \frac{n_{D}( s) -n_{AD}( s) }{n( s) -n_{A}( s) }) E[Y(1)|AT,S=s]+\frac{n_{A}( s) }{ n( s) }( 1-\frac{n_{AD}( s) }{n_{A}( s) }) E[Y(0)|NT,S=s] \\
+( 1-\frac{n_{A}( s) }{n( s) }) ( 1- \frac{n_{D}( s) -n_{AD}( s) }{n( s) -n_{A}( s) }) ( E[Y(0)|NT,S=s]\tfrac{1-\pi _{D(1)}(s)}{ 1-\pi _{D(0)}(s)}+E[Y(0)|C,S=s]\tfrac{\pi _{D(1)}(s)-\pi _{D(0)}(s)}{1-\pi _{D(0)}(s)}) \\
+( ( 1-\frac{n_{A}( s) }{n( s) }) \frac{n_{D}(s)-n_{AD}( s) }{n( s) -n_{A}( s) }+\frac{n_{A}( s) }{n( s) }\frac{n_{AD}( s) }{n_{A}( s) }) \frac{1}{\Lambda _{n}}\times 
\sum_{\tilde{s}\in \mathcal{S}}\frac{n( \tilde{s}) }{n}\frac{ n_{A}(\tilde{s})}{n( \tilde{s}) }( 1-\frac{n_{A}( \tilde{s}) }{n( \tilde{s}) }) \times\\
\left(
\begin{array}{c}
-\frac{n_{AD}( \tilde{s}) }{n_{A}( \tilde{s}) }( E[Y(1)|AT,S=\tilde{s}]\tfrac{\pi _{D(0)}(\tilde{s})}{\pi _{D(1)}(\tilde{s})} +E[Y(1)|C,S=\tilde{s}]\tfrac{\pi _{D(1)}(\tilde{s})-\pi _{D(0)}(\tilde{s})}{ \pi _{D(1)}(\tilde{s})}) \\
-( 1-\frac{n_{AD}( \tilde{s}) }{n_{A}( \tilde{s}) }) E[Y(0)|NT,S=\tilde{s}]+( \frac{n_{D}( \tilde{s}) -n_{AD}( \tilde{s}) }{n( \tilde{s}) -n_{A}( \tilde{s}) }) E[Y(1)|AT,S=\tilde{s}]+ \\
( 1-\frac{n_{D}( \tilde{s}) -n_{AD}( \tilde{s}) }{ n( \tilde{s}) -n_{A}( \tilde{s}) }) ( E[Y(0)|NT,S=\tilde{s}]\tfrac{1-\pi _{D(1)}(\tilde{s})}{1-\pi _{D(0)}(\tilde{s })}+E[Y(0)|C,S=\tilde{s}]\tfrac{\pi _{D(1)}(\tilde{s})-\pi _{D(0)}(\tilde{s}) }{1-\pi _{D(0)}(\tilde{s})})
\end{array}
\right) 
\end{array}
\right]   +o_p(1)\\
&\overset{(3)}{=}\left[
\begin{array}{c}
\pi _{D(0)}(s)E[Y(1)|AT,S=s]+(1-\pi_{D(1)}( s) )E[Y(0)|NT,S=s] \\
+( \pi _{D(1)}(s)-\pi _{D(0)}(s)) [ \pi _{A}( s) E[Y(1)|C,S=s]+(1-\pi_{A}(s) )E[Y(0)|C,S=s]] \\
-((1-\pi _{A}( s) )\pi _{D( 0) }( s) +\pi _{A}( s) \pi _{D( 1) }( s) )\frac{1}{\Lambda} \times\\
{\sum_{\tilde{s}\in \mathcal{S}}p( \tilde{s}) \pi _{A}( \tilde{s}) (1-\pi _{A}( \tilde{s}) )( \pi _{D(1)}( \tilde{s})-\pi _{D(0)}(\tilde{s})) E[Y(1)-Y(0)|C,S=\tilde{s}]}
\end{array}
\right] +o_{p}(1),
\end{align*}
where (1) holds by $Y_{i}=Y_{i}( D_{i}) $, \eqref{eq:defnY_pre}, \eqref{eq:defnY}, and \eqref{eq:Rn_defn}, (2) holds by Lemma \ref{lem:MeanTraslation}, and (3) holds by Assumptions \ref{ass:1} and \ref{ass:2}(b), Lemmas \ref{lem:A1and2_impliesold3} and \ref{lem:AsyDist2}, and \eqref{eq:plim_SFE4}.

To show the second line of \eqref{eq:plim_SFE2}, consider the following derivation.
\begin{align}
\hat{\beta}_{\mathrm{sfe}} &=\frac{1}{\Lambda _{n}}\frac{1}{n}\sum_{i=1}^{n}1[ A_{i}=1] Y_{i}-\frac{1}{\Lambda _{n}}\sum_{s\in \mathcal{S}}\frac{1}{n }\sum_{i=1}^{n}Y_{i}1[ S_{i}=s] \frac{n_{A}(s)}{n( s) } +o_p(1)\notag \\
&\overset{(1)}{=}\frac{1}{\Lambda _{n}}\sum_{\tilde{s}\in \mathcal{S}}\left[
\begin{array}{c}
( 1-\frac{n_{A}(\tilde{s})}{n( \tilde{s}) }) \frac{1}{ \sqrt{n}}(R_{n,1}( 1,1,s) +R_{n,1}( 0,1,s)) -\frac{ n_{A}(\tilde{s})}{n( \tilde{s}) }\frac{1}{\sqrt{n}}(R_{n,1}( 1,0,s)+R_{n,1}( 0,0,s)) \\
+\frac{n( \tilde{s}) }{n}\frac{n_{A}( \tilde{s}) }{ n( \tilde{s}) }( 1-\frac{n_{A}(\tilde{s})}{n( \tilde{s} ) }) \frac{n_{AD}( \tilde{s}) }{n_{A}( \tilde{s} ) }( \mu ( 1,1,\tilde{s}) +E[ Y( 1) |S= \tilde{s}] ) \\
+\frac{n( \tilde{s}) }{n}\frac{n_{A}( \tilde{s}) }{ n( \tilde{s}) }( 1-\frac{n_{A}(\tilde{s})}{n( \tilde{s} ) }) ( 1-\frac{n_{AD}( \tilde{s}) }{n_{A}( \tilde{s}) }) ( \mu ( 0,1,\tilde{s}) +E[ Y( 0) |S=\tilde{s}] ) \\
-\frac{n( \tilde{s}) }{n}\frac{n_{A}(\tilde{s})}{n( \tilde{s} ) }( 1-\frac{n_{A}(\tilde{s})}{n( \tilde{s}) }) ( \frac{n_{D}( \tilde{s}) -n_{AD}( \tilde{s}) }{ n( \tilde{s}) -n_{A}( \tilde{s}) }) ( \mu ( 1,0,\tilde{s}) +E[ Y( 1) |S=\tilde{s}] ) \\
-\frac{n( \tilde{s}) }{n}\frac{n_{A}(\tilde{s})}{n( \tilde{s} ) }( 1-\frac{n_{A}(\tilde{s})}{n( \tilde{s}) }) ( 1-\frac{n_{D}( \tilde{s}) -n_{AD}( \tilde{s}) }{ n( \tilde{s}) -n_{A}( \tilde{s}) }) ( \mu ( 0,0,\tilde{s}) +E[ Y( 0) |S=\tilde{s}] )
\end{array}
\right]  +o_p(1) \notag \\
&\overset{(2)}{=}\frac{1}{\Lambda _{n}}\sum_{s\in \mathcal{S}}\left[
\begin{array}{c}
( 1-\frac{n_{A}(s)}{n( s) }) \frac{1}{\sqrt{n}} R_{n,1}( 1,1,s) +( 1-\frac{n_{A}(s)}{n( s) }) \frac{1}{\sqrt{n}}R_{n,1}( 0,1,s) \\
-\frac{n_{A}(s)}{n( s) }\frac{1}{\sqrt{n}}R_{n,1}( 1,0,s) -\frac{n_{A}(s)}{n( s) }\frac{1}{\sqrt{n}} R_{n,1}( 0,0,s) \\
+\frac{n( s) }{n}\frac{n_{A}( s) }{n( s) } ( 1-\frac{n_{A}(s)}{n( s) }) [ \frac{n_{AD}( s) }{n_{A}( s) }\tfrac{\pi _{D(0)}(s)}{\pi _{D(1)}(s)} -( \frac{n_{D}( s) -n_{AD}( s) }{n( s) -n_{A}( s) }) ] E[Y(1)|AT,S=s] \\
+\frac{n( s) }{n}\frac{n_{A}( s) }{n( s) } ( 1-\frac{n_{A}(s)}{n( s) }) [ ( 1-\frac{ n_{AD}( s) }{n_{A}( s) }) -( 1-\frac{ n_{D}( s) -n_{AD}( s) }{n( s) -n_{A}( s) }) \tfrac{1-\pi _{D(1)}(s)}{1-\pi _{D(0)}(s)}] E[Y(0)|NT,S=s] \\
+\frac{n( s) }{n}\frac{n_{A}( s) }{n( s) } ( 1-\frac{n_{A}(s)}{n( s) }) \frac{n_{AD}( s) }{n_{A}( s) }\tfrac{\pi _{D(1)}(s)-\pi _{D(0)}(s)}{\pi _{D(1)}(s)}E[Y(1)|C,S=s] \\
-\frac{n( s) }{n}\frac{n_{A}(s)}{n( s) }( 1-\frac{ n_{A}( s) }{n( s) }) ( 1-\frac{n_{D}( s) -n_{AD}( s) }{n( s) -n_{A}( s) } ) \tfrac{\pi _{D(1)}(s)-\pi _{D(0)}(s)}{1-\pi _{D(0)}(s)}E[Y(0)|C,S=s]
\end{array}
\right]   +o_p(1) \notag \\
&\overset{(3)}{=}\frac{1}{\Lambda }\sum_{s\in \mathcal{S}}p( s) \pi _{A}( s) ( 1-\pi _{A}( s) ) ( \pi _{D(1)}(s)-\pi _{D(0)}(s)) E[Y(1)-Y(0)|C,S=s]+o_{p}( 1) ,\label{eq:plim_SFE5}
\end{align}
where (1) holds by $Y_{i}=Y_{i}( D_{i}) $, \eqref{eq:defnY_pre}, \eqref{eq:defnY}, and \eqref{eq:Rn_defn}, (2) holds by Lemma \ref{lem:MeanTraslation}, and (3) holds by Assumptions \ref{ass:1} and \ref{ass:2}(b), Lemmas \ref{lem:A1and2_impliesold3} and \ref{lem:AsyDist2}, and \eqref{eq:plim_SFE4}.

Finally, \eqref{eq:plim_SFE3} holds by the following derivation.
\begin{align*}
\hat{\beta}_{\mathrm{sfe}}&~\overset{(1)}{=}~\frac{\sum_{s\in \mathcal{S}}p( s) ( \pi _{D(1)}(s)-\pi _{D(0)}(s)) E[Y(1)-Y(0)|C,S=s]}{\sum_{\tilde{s}\in \mathcal{S}}p( \tilde{s}) ( \pi _{D(1)}(\tilde{s})-\pi _{D(0)}( \tilde{s})) } + o_p(1)\\ &~\overset{(2)}{=}~\sum_{s\in \mathcal{S}}P( S=s|C) E[Y(1)-Y(0)|C,S=s] + o_p(1)~=~\beta + o_p(1) ,
\end{align*}
where (1) holds by Assumption \ref{ass:3}(c), and \eqref{eq:plim_SFE4} and \eqref{eq:plim_SFE5}, and (2) holds by \eqref{eq:pi_defns}.
\end{proof}
%%%%%%%% DIVIDER %%%%%%%%%%%%

%%%%%%%% DIVIDER %%%%%%%%%%%%
\begin{proof}[Proof of Theorem \ref{thm:AsyDist_SFE}.]
As a preliminary result, note that \eqref{eq:plim_SFE4} and Assumption \ref{ass:3}(c) imply that
\begin{align}
\Lambda _{n} \equiv \frac{n_{AD}}{n}-\sum_{s\in \mathcal{S}}\frac{n_{A}(s)}{ n(s)}\frac{n_{D}(s)}{n(s)}\frac{n(s)}{n}  \overset{p}{\to }\Lambda \equiv \pi _{A}(1-\pi _{A})P( C) .\label{eq:SFE_av2}
\end{align}
From here, consider the following argument.
\begin{align}
\xi _{n,1} &\equiv \sqrt{n}( \Lambda _{n}-\Lambda ) \overset{(1)}{=}\sum_{s\in \mathcal{S}}\left[
\begin{array}{c}
\frac{n_{A}(s)}{n(s)}(1-\frac{n_{A}(s)}{n(s)})(\frac{n_{AD}(s)}{n_{A}(s)}- \frac{n_{D}(s)-n_{AD}(s)}{n(s)-n_{A}(s)})R_{n,4}( s) \\
+p(s)(1-\frac{n_{A}(s)}{n(s)}-\pi _{A})(\frac{n_{AD}(s)}{n_{A}(s)}-\frac{ n_{D}(s)-n_{AD}(s)}{n(s)-n_{A}(s)})R_{n,3}( s) \\
+p(s)\pi _{A}(1-\pi _{A})( R_{n,2}( 1,s) -R_{n,2}( 2,s) )
\end{array}
\right] +o_p(1) \notag \\
&\overset{(2)}{=}\sum_{s\in \mathcal{S}}\left[
\begin{array}{c}
p(s)\pi _{A}(1-\pi _{A})( R_{n,2}( 1,s) -R_{n,2}( 2,s) ) \\
+p(s)(1-2\pi _{A})(\pi _{D(1)}(s)-\pi _{D(0)}(s))R_{n,3}( s) \\
+\pi _{A}(1-\pi _{A})(\pi _{D(1)}(s)-\pi _{D(0)}(s))R_{n,4}( s)
\end{array}
\right] +o_{p}( 1) ,\label{eq:SFE_av3}
\end{align}
where (1) holds by the definitions in \eqref{eq:SFE_av2} and (2) holds by Assumption \ref{ass:2}(b) and Lemma \ref{lem:A1and2_impliesold3}.

Next, consider the following derivation.
\begin{align}
&\sqrt{n}( \hat{\beta}_{\mathrm{sfe}}-\beta ) \times \Lambda _{n}\notag \\
&\overset{(1)}{=} \sum_{s\in \mathcal{S}}\left[
\begin{array}{c}
(1-\frac{n_{A}(s)}{n(s)})R_{n,1}(1,1,s)+(1-\frac{n_{A}(s)}{n(s)} )R_{n,1}(0,1,s)\\
-\frac{n_{A}(s)}{n(s)}R_{n,1}(1,0,s)-\frac{n_{A}(s)}{n(s)} R_{n,1}(0,0,s) \\
+\sqrt{n}\frac{n(s)}{n}\frac{n_{A}(s)}{n(s)}(1-\frac{n_{A}(s)}{n(s)})[\frac{ n_{AD}(s)}{n_{A}(s)}\tfrac{\pi _{D(0)}(s)}{\pi _{D(1)}(s)}-(\frac{ n_{D}(s)-n_{AD}(s)}{n(s)-n_{A}(s)})]E[Y(1)|AT,S=s] \\
+\sqrt{n}\frac{n(s)}{n}\frac{n_{A}(s)}{n(s)}(1-\frac{n_{A}(s)}{n(s)})[(1- \frac{n_{AD}(s)}{n_{A}(s)})-(1-\frac{n_{D}(s)-n_{AD}(s)}{n(s)-n_{A}(s)}) \tfrac{1-\pi _{D(1)}(s)}{1-\pi _{D(0)}(s)}]\times\\
E[Y(0)|NT,S=s] \\
+\sqrt{n}\frac{n(s)}{n}\frac{n_{A}(s)}{n(s)}(1-\frac{n_{A}(s)}{n(s)})\frac{ n_{AD}(s)}{n_{A}(s)}\tfrac{\pi _{D(1)}(s)-\pi _{D(0)}(s)}{\pi _{D(1)}(s)} E[Y(1)|C,S=s] \\
-\sqrt{n}\frac{n(s)}{n}\frac{n_{A}(s)}{n(s)}(1-\frac{n_{A}(s)}{n(s)})(1- \frac{n_{D}(s)-n_{AD}(s)}{n(s)-n_{A}(s)})\tfrac{\pi _{D(1)}(s)-\pi _{D(0)}(s) }{1-\pi _{D(0)}(s)}E[Y(0)|C,S=s] \\
-\sqrt{n}p(s)\pi _{A}(1-\pi _{A})(\pi _{D(1)}(s)-\pi _{D(0)}(s))E[Y(1)-Y(0)|C,S=s]
\end{array}
\right] -\beta \xi _{n,1} +o_p(1)\notag \\
&\overset{(2)}{=} \sum_{s\in \mathcal{S}}\left[
\begin{array}{c}
(1-\frac{n_{A}(s)}{n(s)})R_{n,1}(1,1,s)+(1-\frac{n_{A}(s)}{n(s)} )R_{n,1}(0,1,s) \\
-\frac{n_{A}(s)}{n(s)}R_{n,1}(1,0,s)-\frac{n_{A}(s)}{n(s)}R_{n,1}(0,0,s) \\ +\xi _{n,2}( s) +\xi _{n,3}( s) +\xi _{n,4}( s)
\end{array}
\right] -\beta \xi _{n,1}  +o_p(1),\label{eq:SFE_av4}
\end{align}
where (1) follows from \eqref{eq:plim_SFE5} and \eqref{eq:SFE_av3} and (2) follows from defining $\xi _{n,2}( s) $, $\xi _{n,3}( s) $, and $ \xi _{n,4}( s) $ as in \eqref{eq:SFE_av5}, \eqref{eq:SFE_av6}, and \eqref{eq:SFE_av7}, respectively. To complete the argument in \eqref{eq:SFE_av4}, consider the following definitions. First,
\begin{align}
\xi _{n,2}( s) &\equiv \sqrt{n}\frac{n(s)}{n}\frac{n_{A}(s)}{ n(s)}(1-\frac{n_{A}(s)}{n(s)})\left[\frac{n_{AD}(s)}{n_{A}(s)}\tfrac{\pi _{D(0)}(s)}{\pi _{D(1)}(s)}-(\frac{n_{D}(s)-n_{AD}(s)}{n(s)-n_{A}(s)} )\right]E[Y(1)|AT,S=s]\notag \\
%&=\frac{n(s)}{n}\frac{n_{A}(s)}{n(s)}(1-\frac{n_{A}(s)}{n(s)})E[Y(1)|AT,S=s] [ \frac{\pi _{D(0)}(s)}{\pi _{D(1)}(s)}R_{n,2}( 1,s) -R_{n,2}( 2,s) ] \notag \\
&\overset{(1)}{=}p( s) \pi _{A}(1-\pi _{A})E[Y(1)|AT,S=s]\left[ \frac{\pi _{D(0)}(s)}{\pi _{D(1)}(s)}R_{n,2}( 1,s) -R_{n,2}( 2,s) \right] +o_{p}( 1) , \label{eq:SFE_av5}
\end{align}
where (1) uses Assumptions \ref{ass:1} and \ref{ass:2}(b). Second,
\begin{align}
\xi _{n,3}( s) &\equiv \sqrt{n}\frac{n(s)}{n}\frac{n_{A}(s)}{ n(s)}(1-\frac{n_{A}(s)}{n(s)})\left[(1-\frac{n_{AD}(s)}{n_{A}(s)})-(1-\frac{ n_{D}(s)-n_{AD}(s)}{n(s)-n_{A}(s)})\tfrac{1-\pi _{D(1)}(s)}{1-\pi _{D(0)}(s)} \right]E[Y(0)|NT,S=s] \notag\\
%&=\frac{n(s)}{n}\frac{n_{A}(s)}{n(s)}(1-\frac{n_{A}(s)}{n(s)})E[Y(0)|NT,S=s] \left[ \frac{1-\pi _{D(1)}(s)}{1-\pi _{D(0)}(s)} R_{n,2}( 2,s) -R_{n,2}( 1,s)\right] \notag \\
&\overset{(1)}{=}p( s) \pi _{A}(1-\pi _{A})E[Y(0)|NT,S=s]\left[ \frac{1-\pi _{D(1)}(s)}{1-\pi _{D(0)}(s)} R_{n,2}( 2,s) -R_{n,2}( 1,s)\right] +o_{p}( 1), \label{eq:SFE_av6} 
\end{align}
where (1) uses Assumptions \ref{ass:1} and \ref{ass:2}(b). Third,
\begin{align}
\xi _{n,4}( s)  &\equiv\left[ 
\begin{array}{c}
+\sqrt{n}\frac{n(s)}{n}\frac{n_{A}(s)}{n(s)}(1-\frac{n_{A}(s)}{n(s)})\frac{ n_{AD}(s)}{n_{A}(s)}\tfrac{\pi _{D(1)}(s)-\pi _{D(0)}(s)}{\pi _{D(1)}(s)} E[Y(1)|C,S=s] \\
-\sqrt{n}\frac{n(s)}{n}\frac{n_{A}(s)}{n(s)}(1-\frac{n_{A}(s)}{n(s)})(1- \frac{n_{D}(s)-n_{AD}(s)}{n(s)-n_{A}(s)})\tfrac{\pi _{D(1)}(s)-\pi _{D(0)}(s) }{1-\pi _{D(0)}(s)}E[Y(0)|C,S=s] \\
-\sqrt{n}p(s)\pi _{A}(1-\pi _{A})(\pi _{D(1)}(s)-\pi _{D(0)}(s))E[Y(1)-Y(0)|C,S=s]
\end{array}
\right] \notag \\
%&=\left\{ 
%\begin{array}{c}
%+p( s) \pi _{A}(1-\pi _{A})\tfrac{\pi _{D(1)}(s)-\pi _{D(0)}(s)}{ \pi _{D(1)}(s)}E[Y(1)|C,S=s]R_{n,2}( 1,s) \\
%+p(s)\pi _{A}(1-\pi _{A})\tfrac{\pi _{D(1)}(s)-\pi _{D(0)}(s)}{1-\pi _{D(0)}(s)}E[Y(0)|C,S=s]R_{n,2}( 2,s) \\
%+p( s) ( \pi _{D(1)}(s)-\pi _{D(0)}(s)) \left[
%\begin{array}{c}
%\frac{n_{AD}(s)}{n_{A}(s)}\tfrac{1}{\pi _{D(1)}(s)}[ (1-\frac{n_{A}(s)}{ n(s)})-\pi _{A}] E[Y(1)|C,S=s]+ \\
%\tfrac{1}{1-\pi _{D(0)}(s)}(1-\frac{n_{D}(s)-n_{AD}(s)}{n(s)-n_{A}(s)})[ \pi _{A}-(1-\frac{n_{A}(s)}{n(s)})] E[Y(0)|C,S=s]
%\end{array}
%\right] R_{n,3}( s) \\
%+\frac{n_{A}(s)}{n(s)}(1-\frac{n_{A}(s)}{n(s)})( \pi _{D(1)}(s)-\pi _{D(0)}(s)) \left[
%\begin{array}{c}
%\frac{n_{AD}(s)}{n_{A}(s)}\tfrac{1}{\pi _{D(1)}(s)}E[Y(1)|C,S=s] \\
%-(1-\frac{n_{D}(s)-n_{AD}(s)}{n(s)-n_{A}(s)})\tfrac{1}{1-\pi _{D(0)}(s)} E[Y(0)|C,S=s]
%\end{array}
%\right] R_{n,4}( s)
%\end{array}
%\right\} \notag \\
&\overset{(1)}{=}\left[
\begin{array}{c}
p( s) \pi _{A}(1-\pi _{A})( \pi _{D(1)}(s)-\pi _{D(0)}(s)) \tfrac{1}{\pi _{D(1)}(s)}E[Y(1)|C,S=s]R_{n,2}( 1,s) \\
+p( s) \pi _{A}(1-\pi _{A})( \pi _{D(1)}(s)-\pi _{D(0)}(s)) \tfrac{1}{1-\pi _{D(0)}(s)}E[Y(0)|C,S=s]R_{n,2}( 2,s) \\
+p( s) (1-2\pi _{A})( \pi _{D(1)}(s)-\pi _{D(0)}(s)) E[Y(1)-Y(0)|C,S=s]R_{n,3}( s) \\
+\pi _{A}(1-\pi _{A})( \pi _{D(1)}(s)-\pi _{D(0)}(s)) E[Y(1)-Y(0)|C,S=s]R_{n,4}( s)
\end{array}
\right] +o_{p}( 1) ,\label{eq:SFE_av7}
\end{align}
where (1) uses Assumptions \ref{ass:1} and \ref{ass:2}(b), and Lemma \ref{lem:A1and2_impliesold3}.

From there results, the next result follows.
\begin{equation}
\sqrt{n}( \hat{\beta}_{\mathrm{sfe}}-\beta ) 
\overset{(1)}{=}
\frac{1}{P( C) }\sum_{s\in \mathcal{S}}\left[
\begin{array}{c}
\frac{R_{n,1}(1,1,s)}{\pi _{A}}+\frac{R_{n,1}(0,1,s)}{\pi _{A}}-\frac{ R_{n,1}(1,0,s)}{( 1-\pi _{A}) }-\frac{R_{n,1}(0,0,s)}{( 1-\pi _{A}) } \\
+p( s) \left(
\begin{array}{c}
(\beta ( s) -\beta ) \\ 
+E[Y(0)|C,S=s]-E[Y(0)|NT,S=s] \\
+( E[Y(1)|AT,S=s]-E[Y(1)|C,S=s]) \frac{\pi _{D(0)}(s)}{\pi _{D(1)}(s)}
\end{array}
\right) R_{n,2}( 1,s) \\
+p( s) \left(
\begin{array}{c}
E[Y(1)|C,S=s]-E[Y(1)|AT,S=s] \\
-( E[Y(0)|C,S=s]-E[Y(0)|NT,S=s]) \frac{1-\pi _{D(1)}(s)}{1-\pi _{D(0)}(s)} \\
-( \beta ( s) -\beta )
\end{array}
\right) R_{n,2}( 2,s) \\
+p( s) \frac{(1-2\pi _{A})}{\pi _{A}(1-\pi _{A})}( \pi _{D(1)}(s)-\pi _{D(0)}(s)) [ E[Y(1)-Y(0)|C,S=s]-\beta ] R_{n,3}( s) \\
+( \pi _{D(1)}(s)-\pi _{D(0)}(s)) [ E[Y(1)-Y(0)|C,S=s]-\beta ] R_{n,4}( s)
\end{array}
\right] +o_{p}( 1) , \label{eq:SFE_av8}
\end{equation}
where (1) uses \eqref{eq:SFE_av2}, \eqref{eq:SFE_av4}, \eqref{eq:SFE_av5}, \eqref{eq:SFE_av6}, and \eqref{eq:SFE_av7}, and Lemma \ref{lem:AsyDist}, as it implies $ R_{n}=O_{p}( 1) $. The desired result then follows from \eqref{eq:SFE_av8}, Lemmas \ref{lem:MeanTraslation} and \ref{lem:AsyDist}, and $\sum_{s\in \mathcal{S}}( \pi _{D(1)}(s)-\pi _{D(0)}(s)) ( E[Y(1)-Y(0)|C,S=s]-\beta ) p( s) =0,$
which in turn follows from $\beta ( s) =E[Y(1)-Y(0)|C,S=s]$ and \eqref{eq:pi_defns}.
\end{proof}
%%%%%%%% DIVIDER %%%%%%%%%%%%

\begin{proof}[Proof of Theorem \ref{thm:SFE_se}]
Note that Assumption \ref{ass:1}, \ref{ass:2}(b), and \ref{ass:3}(c), Lemma \ref{lem:A1and2_impliesold3}, and \eqref{eq:hat_limits_1} and \eqref{eq:hat_limits_2}, imply that $\hat{V}_{A}^{\mathrm{sfe}} ~\overset{p}{\to }~V_{{A}}^{\mathrm{sfe}}$. The desired result follows from this and Theorem \ref{thm:SAT_se}.
\end{proof}
%%%%%%%% DIVIDER %%%%%%%%%%%%

\begin{proof}[Proof of Theorem \ref{thm:SFE_test}]
This result follows from elementary convergence arguments and Theorems \ref{thm:AsyDist_SFE} and \ref{thm:SFE_se}.
\end{proof}
%%%%%%%% DIVIDER %%%%%%%%%%%%

%%%%%%%%%%%%%%%%%%%%%%%%%%%%%%%%%%%%%%%%%%%%%%%%%%%%%%%%%%%%%%%%%%%%%%%%%%%%%
\subsection{Proofs of results in  Section \ref{sec:2STT}}\label{sec:A_2SR}

\begin{lemma}[2S matrices]\label{lem:Matrix2S} 
Assume Assumptions \ref{ass:1} and \ref{ass:2}. Then,
\begin{align*}
{{\mathbf{Z}_{n}^{\mathrm{2s}}}^{\prime }\mathbf{X}_{n}^{\mathrm{2s}}}/n
&=\left[ 
\begin{array}{cc}
1 & n_{D}/n \\ 
n_{A}/n & n_{AD}/n
\end{array}
\right] \\
&=\left[ 
\begin{array}{cc}
1 & \sum_{s\in \mathcal{S}}p(s)[ \pi _{D(1)}( s) \pi _{A}( s) +\pi _{D( 0) }( s) ( 1-\pi _{A}( s) ) ] \\
\sum_{s\in \mathcal{S}}p(s)\pi _{A}( s) & \sum_{s\in \mathcal{S} }p(s)\pi _{D(1)}( s) \pi _{A}( s)
\end{array}
\right] +o_{p}(1) .
\end{align*}
Thus, 
\begin{align*}
( {{\mathbf{Z}_{n}^{\mathrm{2s}}}^{\prime }\mathbf{X}_{n}^{\mathrm{2s}}}/{n}) ^{-1} &=\frac{1}{n_{AD}/n-( n_{A}/n) ( n_{D}/n) }\left[
\begin{array}{cc}
n_{AD}/n & -n_{D}/n \\ 
-n_{A}/n & 1
\end{array}
\right] +o_{p}(1)  \\
&=\frac{\left[ 
\begin{array}{cc}
\sum_{s\in \mathcal{S} }p(s)\pi _{D(1)}( s) \pi _{A}( s) & -\sum_{s\in \mathcal{S}}p(s)[ \pi _{D(1)}( s) \pi _{A}( s) +\pi _{D( 0) }( s) ( 1-\pi _{A}( s) ) ] \\
-\sum_{s\in \mathcal{S}}p(s)\pi _{A}( s) & 1
\end{array}
\right] }{\left( 
\begin{array}{c}
\sum_{s\in \mathcal{S}}p(s)\pi _{D(1)}( s) \pi _{A}( s) \\
-( \sum_{s\in \mathcal{S}}p(s)[ \pi _{D( 1) }( s) \pi _{A}( s) +\pi _{D( 0) }( s) ( 1-\pi _{A}( s) ) ] ) ( \sum_{s\in \mathcal{S}}p(s)\pi _{A}( s) )
\end{array}
\right) } +o_{p}(1).
\end{align*}
Also,
\begin{equation*}
{{\mathbf{Z}_{n}^{\mathrm{2s}}}'\mathbf{Y}_{n}}/{n}=\left[ 
\begin{array}{c}
 \frac{1}{n}\sum_{i=1}^{n}Y_{i} , \\
 \frac{1}{n}\sum_{i=1}^{n}1[A_{i}=1]Y_{i}
\end{array}
\right] ,
\end{equation*}
\end{lemma}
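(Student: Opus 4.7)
The plan is to mimic the strategy used for Lemmas \ref{lem:MatrixSAT} and \ref{lem:MatrixSFE}: the equalities are pure algebra, and the convergences follow from Assumptions \ref{ass:1} and \ref{ass:2} together with the continuous mapping theorem (CMT). Since $Z_i^{\mathrm{2s}}=(1,A_i)'$ and $X_i^{\mathrm{2s}}=(1,D_i)'$, I would first expand
\[
\frac{{\mathbf{Z}_{n}^{\mathrm{2s}}}'\mathbf{X}_{n}^{\mathrm{2s}}}{n}
=\frac{1}{n}\sum_{i=1}^{n}\left[\begin{array}{cc} 1 & D_i \\ A_i & A_iD_i \end{array}\right]
=\left[\begin{array}{cc} 1 & n_D/n \\ n_A/n & n_{AD}/n \end{array}\right],
\]
which gives the first equality directly from the definitions of $n_A,n_D,n_{AD}$.

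Next I would establish the probability limit of each entry. Using $n_A/n=\sum_{s\in\mathcal{S}} (n_A(s)/n(s))(n(s)/n)$, Assumption \ref{ass:1} (which yields $n(s)/n\overset{p}{\to}p(s)$ by the LLN), Assumption \ref{ass:2}(b) (which yields $n_A(s)/n(s)\overset{p}{\to}\pi_A(s)$), and CMT, one gets $n_A/n\overset{p}{\to}\sum_s p(s)\pi_A(s)$. For the $(1,2)$ entry, decompose
\[
\frac{n_D(s)}{n(s)} \;=\; \frac{n_{AD}(s)}{n_A(s)}\cdot\frac{n_A(s)}{n(s)}+\frac{n_D(s)-n_{AD}(s)}{n(s)-n_A(s)}\cdot\left(1-\frac{n_A(s)}{n(s)}\right),
\]
and apply Lemma \ref{lem:A1and2_impliesold3} together with Assumption \ref{ass:2}(b) to obtain $n_D(s)/n(s)\overset{p}{\to}\pi_{D(1)}(s)\pi_A(s)+\pi_{D(0)}(s)(1-\pi_A(s))$. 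Multiplying by $n(s)/n$ and summing over $s\in\mathcal{S}$ delivers the desired limit for $n_D/n$. An analogous factorization $n_{AD}(s)/n=[n_{AD}(s)/n_A(s)][n_A(s)/n(s)][n(s)/n]$ combined with the same convergences gives $n_{AD}/n\overset{p}{\to}\sum_s p(s)\pi_{D(1)}(s)\pi_A(s)$. Assembling these four limits yields the claimed limit matrix.

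For the inverse, since the target limit matrix is invertible (its determinant is $\sum_s p(s)\pi_{D(1)}(s)\pi_A(s)-(\sum_s p(s)[\pi_{D(1)}(s)\pi_A(s)+\pi_{D(0)}(s)(1-\pi_A(s))])(\sum_s p(s)\pi_A(s))$, which is nonzero under Assumption \ref{ass:1}(d)), I would apply the standard $2\times 2$ inverse formula and use CMT once more to transfer convergence from the matrix to its inverse, recovering the displayed expression. Finally, the row vector ${\mathbf{Z}_n^{\mathrm{2s}}}'\mathbf{Y}_n/n$ (the lemma statement's label ``sfe'' should read ``2s'') is obtained by direct expansion: the first coordinate is $n^{-1}\sum_i Y_i$ and the second coordinate is $n^{-1}\sum_i I\{A_i=1\}Y_i$.

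None of the steps presents a real obstacle; this is a bookkeeping lemma. The only point requiring slight care is ensuring the determinant of the limit matrix is nonzero so that CMT applies to the inverse, but this is guaranteed because Assumption \ref{ass:1}(d) implies $\pi_{D(1)}(s)-\pi_{D(0)}(s)>0$ for each $s$, so $P(C)>0$ and the determinant equals $\pi_A(1-\pi_A)P(C)$-type quantity that is strictly positive (as in \eqref{eq:SFE_av2}, which the SFE proof already leveraged).
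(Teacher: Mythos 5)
Your overall route is exactly the paper's: its proof of this lemma is a single sentence (``the equalities follow from algebra and the convergences follow from the CMT and Lemma \ref{lem:A1and2_impliesold3}''), and your expansion of the entries, the decomposition of $n_D(s)/n(s)$, and the appeal to the LLN, Assumption \ref{ass:2}(b), Lemma \ref{lem:A1and2_impliesold3}, and the CMT is precisely the intended bookkeeping. You are also right that the ``sfe'' superscript in the last display is a typo for ``2s''.

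There is, however, one concrete error in your justification of the inverse step. The determinant of the limit matrix is
\begin{equation*}
\Xi \;=\; \Big(1-\sum_{s}p(s)\pi_A(s)\Big)\sum_{s}p(s)\pi_A(s)\pi_{D(1)}(s)\;-\;\Big(\sum_{s}p(s)\pi_A(s)\Big)\sum_{s}p(s)(1-\pi_A(s))\pi_{D(0)}(s),
\end{equation*}
i.e., the quantity $\Xi$ in \eqref{eq:plim_2SR4}. The identity ``determinant $=\pi_A(1-\pi_A)P(C)$'' that you invoke, and the positivity argument via \eqref{eq:SFE_av2}, are only valid under Assumption \ref{ass:3}(c), i.e., when $\pi_A(s)$ is constant across strata. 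The lemma is stated under Assumptions \ref{ass:1} and \ref{ass:2} only, where $\pi_A(s)$ may vary, and then $\Xi$ need not be positive: with two equally likely strata, $\pi_A(1)=0.99$, $\pi_{D(0)}(1)=0$, $\pi_{D(1)}(1)=0.01$, $\pi_A(2)=0.01$, $\pi_{D(0)}(2)=0.98$, $\pi_{D(1)}(2)=0.99$, one computes $\Xi=0.5(0.0099)-0.5(0.4851)<0$, and by continuity $\Xi$ can be exactly zero for intermediate parameter values, in which case the CMT step for the inverse fails. So Assumption \ref{ass:1}(d) alone does not deliver invertibility of the limit matrix. To be fair, the paper's own one-line proof is silent on this point as well (and Theorem \ref{thm:plim_2SR} implicitly divides by $\Xi$), so the honest fix is either to add $\Xi\neq 0$ as an explicit side condition for the second display, or to note that the inverse is only actually needed downstream under Assumption \ref{ass:3}(c), where $\Xi=\pi_A(1-\pi_A)P(C)>0$ and your argument goes through verbatim.
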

%%%%%%%% DIVIDER %%%%%%%%%%%%
\begin{proof}
The equalities follow from algebra and the convergences follow from the CMT and Lemma \ref{lem:A1and2_impliesold3}. In particular, the first equality in the second display has an $o_p(1)$ to allow for the possibility that ${{\mathbf{Z}_{n}^{\mathrm{2s}}}^{\prime }\mathbf{X}_{n}^{\mathrm{2s}}}/{n}$ is singular or $n_{AD}/n=( n_{A}/n) ( n_{D}/n) $. Both of these events occur with vanishing probability under our assumptions.
\end{proof}

%%%%%%%% DIVIDER %%%%%%%%%%%%
\begin{theorem}[2S limits]\label{thm:plim_2SR} 
Assume Assumptions \ref{ass:1} and \ref{ass:2}. Then,
\begin{align}
&\hat{\gamma}_{\mathrm{2s}}~\overset{p}{\to }~\frac{\sum_{s \in \mathcal{S}}\left[
\begin{array}{c}
\left(
\begin{array}{c}
( \sum_{\tilde{s}\in \mathcal{S}}p(\tilde{s}) \pi _{A}(\tilde{s}) \pi _{D( 1) }(\tilde{s}) ) ( 1-\pi _{A}( s ) ) \\
-( \sum_{\tilde{s}\in \mathcal{S}}p(\tilde{s}) \pi _{D( 0) }(\tilde{s}) ( 1-\pi _{A}(\tilde{s}) ) ) \pi _{A}(s)
\end{array}
\right) p(s) \pi _{D(0)}(s)E[Y(1)|AT,S=s] \\
-( \sum_{\tilde{s}\in \mathcal{S}}p(\tilde{s}) \pi _{D( 0) }(\tilde{s}) ( 1-\pi _{A}(\tilde{s}) ) ) p(s) \pi _{A}(s) ( \pi _{D(1)}(s )-\pi _{D(0)}(s)) E[Y(1)|C,S=s] \\
+( \sum_{\tilde{s}\in \mathcal{S}}p(\tilde{s}) \pi _{A}(\tilde{s}) \pi _{D( 1) }(\tilde{s}) ) p(s) ( 1-\pi _{A}(s) ) ( \pi _{D(1)}(s)-\pi _{D(0)}(s)) E[Y(0)|C,S=s] \\
+\left( 
\begin{array}{c}
( \sum_{\tilde{s}\in \mathcal{S}}p(\tilde{s}) \pi _{A}(\tilde{s}) \pi _{D( 1) }(\tilde{s}) ) ( 1-\pi _{A}( s ) ) \\
-( \sum_{\tilde{s}\in \mathcal{S}}p(\tilde{s}) \pi _{D( 0) }(\tilde{s}) ( 1-\pi _{A}(\tilde{s}) ) ) \pi _{A}(s)
\end{array}
\right) p(s) ( 1-\pi _{D(1)}(s)) E[Y(0)|NT,S=s]
\end{array}
\right] }{( 1-\sum_{\tilde{s}\in \mathcal{S}}p(\tilde{s})\pi _{A}(\tilde{s})) ( \sum_{\tilde{s}\in \mathcal{S}}p(\tilde{s})\pi _{A}(\tilde{s})\pi _{D(1)}(\tilde{s})) -( \sum_{\tilde{s}\in \mathcal{S}}p(\tilde{s})\pi _{A}(\tilde{s})) ( \sum_{\tilde{s}\in \mathcal{S} }p(\tilde{s})( 1-\pi _{A}(\tilde{s})) \pi _{D(0)}(\tilde{s})) }\notag\\
&\hat{\beta}_{\mathrm{2s}}~\overset{p}{\to }~\frac{\sum_{s\in \mathcal{S}}p(s) \left[ 
\begin{array}{c}
+[ \pi _{A}(s) -\sum_{\tilde{s}\in \mathcal{S}}p(\tilde{s}) \pi _{A}(\tilde{s}) ]  \pi _{D(0)}(s)E[Y(1)|AT,S=s]+ \\
+[ \pi _{A}(s) -\sum_{\tilde{s}\in \mathcal{S}}p(\tilde{s}) \pi _{A}(\tilde{s}) ] ( 1-\pi _{D( 1) }(s) ) E[Y(0)|NT,S=s] \\
+( 1-\sum_{\tilde{s}\in \mathcal{S}}p(\tilde{s}) \pi _{A}(\tilde{s}) ) \pi _{A}(s) ( \pi _{D(1)}(s)-\pi _{D(0)}(s)) E[Y(1)|C,S=s] \\
-( \sum_{\tilde{s}\in \mathcal{S}}p(\tilde{s}) \pi _{A}(\tilde{s}) ) ( 1-\pi _{A}(s) ) ( \pi _{D(1)}(s)-\pi _{D(0)}(s)) E[Y(0)|C,S=s]
\end{array}
\right] }{( 1-\sum_{\tilde{s}\in \mathcal{S}}p(\tilde{s})\pi _{A}(\tilde{s})) ( \sum_{\tilde{s}\in \mathcal{S}}p(\tilde{s})\pi _{A}(\tilde{s})\pi _{D(1)}(\tilde{s})) -( \sum_{\tilde{s}\in \mathcal{S}}p(\tilde{s})\pi _{A}(\tilde{s})) ( \sum_{\tilde{s}\in \mathcal{S} }p(\tilde{s})( 1-\pi _{A}(\tilde{s})) \pi _{D(0)}(\tilde{s})) }.
\label{eq:plim_2SR2}
\end{align}
If we add Assumption \ref{ass:3}(c),
\begin{equation}
\hat{\beta}_{\mathrm{2s}}~\overset{p}{\to }~\beta . \label{eq:plim_2SR3}
\end{equation}
\end{theorem}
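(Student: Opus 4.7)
The plan is to proceed by direct computation based on the representation in \eqref{eq:IV_2S_estimators}, namely
\begin{equation*}
(\hat{\gamma}_{\mathrm{2s}},\hat{\beta}_{\mathrm{2s}})'~=~\bigl({\mathbf{Z}_{n}^{\mathrm{2s}}}'\mathbf{X}_{n}^{\mathrm{2s}}/n\bigr)^{-1}\bigl({\mathbf{Z}_{n}^{\mathrm{2s}}}'\mathbf{Y}_{n}/n\bigr),
\end{equation*}
using Lemma \ref{lem:Matrix2S} for the probability limit of the first factor and a standard LLN-based argument for the second factor. The key auxiliary step is to compute the probability limit of ${\mathbf{Z}_{n}^{\mathrm{2s}}}'\mathbf{Y}_{n}/n$ by decomposing each average over $i=1,\dots,n$ according to the four cells $(d,a)\in\{0,1\}^2$ and the strata $s\in\mathcal{S}$. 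Since $Y_i=Y_i(D_i)$, each cell-restricted average $\frac{1}{n}\sum_{i=1}^n I\{D_i=d,A_i=a,S_i=s\}Y_i$ converges in probability to $p(s)\cdot P(A=a|S=s)\cdot P(D(a)=d|S=s)\cdot E[Y(d)|D(a)=d,S=s]$, which by Assumption \ref{ass:2} and Lemma \ref{lem:aux_lemma} equals $p(s)$ times $\pi_A(s)$ or $(1-\pi_A(s))$, times $\pi_{D(a)}(s)^{d}(1-\pi_{D(a)}(s))^{1-d}$, times the appropriate conditional mean.

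Next I would translate the resulting conditional means in terms of the complier/always-taker/never-taker means by the same identities used in Lemma \ref{lem:MeanTraslation}, namely
\begin{align*}
 \pi_{D(1)}(s)E[Y(1)|D(1)=1,S=s] &~=~ \pi_{D(0)}(s)E[Y(1)|AT,S=s]+(\pi_{D(1)}(s)-\pi_{D(0)}(s))E[Y(1)|C,S=s],\\
 (1-\pi_{D(0)}(s))E[Y(0)|D(0)=0,S=s] &~=~ (1-\pi_{D(1)}(s))E[Y(0)|NT,S=s]+(\pi_{D(1)}(s)-\pi_{D(0)}(s))E[Y(0)|C,S=s],
\end{align*}
together with $E[Y(1)|D(0)=1,S=s]=E[Y(1)|AT,S=s]$ and $E[Y(0)|D(1)=0,S=s]=E[Y(0)|NT,S=s]$. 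Multiplying $({\mathbf{Z}_{n}^{\mathrm{2s}}}'\mathbf{X}_{n}^{\mathrm{2s}}/n)^{-1}$ in closed form from Lemma \ref{lem:Matrix2S} by the limit of ${\mathbf{Z}_{n}^{\mathrm{2s}}}'\mathbf{Y}_{n}/n$, and using the continuous mapping theorem, then yields \eqref{eq:plim_2SR2} after algebraic rearrangement grouping terms according to the four type-times-moment quantities $\{E[Y(1)|AT,S=s],E[Y(0)|NT,S=s],E[Y(1)|C,S=s],E[Y(0)|C,S=s]\}$.

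Finally, to establish \eqref{eq:plim_2SR3}, I would specialize the formula for $\hat{\beta}_{\mathrm{2s}}$ under Assumption \ref{ass:3}(c), which sets $\pi_{A}(s)=\pi_{A}$ for all $s$. The denominator simplifies to $\pi_{A}(1-\pi_{A})\sum_{s}p(s)(\pi_{D(1)}(s)-\pi_{D(0)}(s))=\pi_A(1-\pi_A)P(C)$, while in the numerator the two terms proportional to $[\pi_A(s)-\sum_{\tilde s}p(\tilde s)\pi_A(\tilde s)]$ vanish, leaving
\begin{equation*}
 \pi_A(1-\pi_A)\sum_{s\in \mathcal{S}}p(s)(\pi_{D(1)}(s)-\pi_{D(0)}(s))\bigl(E[Y(1)|C,S=s]-E[Y(0)|C,S=s]\bigr).
\end{equation*}
Dividing, using $P(S=s|C)=p(s)(\pi_{D(1)}(s)-\pi_{D(0)}(s))/P(C)$ from \eqref{eq:pi_defns}, and applying the LIE, gives the LATE. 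The main obstacle is the algebraic bookkeeping needed to reduce the explicit sixteen-cell expansion of ${\mathbf{Z}_{n}^{\mathrm{2s}}}'\mathbf{Y}_{n}/n$ into the compact four-term representation appearing in \eqref{eq:plim_2SR2}; the translations in Lemma \ref{lem:MeanTraslation} do all the real work, so the proof is essentially computational once that lemma is invoked.
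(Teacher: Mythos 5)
Your plan is essentially the paper's own proof: invert ${\mathbf{Z}_{n}^{\mathrm{2s}}}'\mathbf{X}_{n}^{\mathrm{2s}}/n$ in closed form via Lemma \ref{lem:Matrix2S}, decompose ${\mathbf{Z}_{n}^{\mathrm{2s}}}'\mathbf{Y}_{n}/n$ into the $(d,a,s)$ cells, translate the cell means into complier/always-taker/never-taker moments exactly as in Lemma \ref{lem:MeanTraslation}, combine by the CMT, and then specialize under Assumption \ref{ass:3}(c) using $P(S=s|C)$ from \eqref{eq:pi_defns}. The one step you cannot dispatch with ``a standard LLN-based argument'' is the convergence of the cell-restricted average $\frac{1}{n}\sum_{i=1}^{n}I\{D_{i}=d,A_{i}=a,S_{i}=s\}Y_{i}$: under CAR the assignments $A_{i}$ are dependent across $i$, so the summands are not i.i.d.\ and the classical LLN does not apply directly. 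The paper justifies precisely this convergence by writing the cell sum as $\frac{1}{\sqrt{n}}R_{n,1}(d,a,s)$ plus an empirical-frequency term, invoking Lemma \ref{lem:AsyDist2} for $\frac{1}{\sqrt{n}}R_{n,1}(d,a,s)=o_{p}(1)$ (which rests on the conditional-coupling construction of Lemma \ref{lem:AsyDist}) and Assumption \ref{ass:2}(b) together with Lemma \ref{lem:A1and2_impliesold3} for the frequencies. Your stated limit value is correct, so this is a gap in justification rather than in strategy, but it is the one place where the covariate-adaptive structure genuinely matters and must be cited.
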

%%%%%%%% DIVIDER %%%%%%%%%%%%
\begin{proof}
Throughout this proof, define
\begin{align}
&\Xi _{n} \equiv \frac{n_{AD}}{n}-\frac{n_{A}}{n}\frac{n_{D}}{n} \notag \\
&\overset{p}{\to }\Xi \equiv \left( 1-\sum_{\tilde{s}\in \mathcal{S}}p(\tilde{s})\pi _{A}(\tilde{s})\right) \left( \sum_{\tilde{s}\in \mathcal{S}}p(\tilde{s})\pi _{A}(\tilde{s})\pi _{D(1)}(\tilde{s})\right) -\left( \sum_{\tilde{s}\in \mathcal{S}}p(\tilde{s})\pi _{A}(\tilde{s})\right) \left( \sum_{\tilde{s}\in \mathcal{S} }p(\tilde{s})( 1-\pi _{A}(\tilde{s})) \pi _{D(0)}(\tilde{s})\right) , \label{eq:plim_2SR4}
\end{align}
where the convergence holds by Assumptions \ref{ass:1} and \ref{ass:2}(b), and Lemma \ref{lem:A1and2_impliesold3}.

To show the first line of \eqref{eq:plim_2SR2}, consider the following derivation.
\begin{align}
& \hat{\gamma}_{\mathrm{2s}}=\frac{1}{n_{AD}/n-(n_{A}/n)(n_{D}/n)}\left[ \frac{n_{AD}}{n}\frac{1}{n}\sum_{i=1}^{n}Y_{i}-\frac{n_{D}}{n}\frac{1}{n} \sum_{i=1}^{n}1[A_{i}=1]Y_{i}\right]+o_{p}(1)\notag \\
& \overset{(1)}{=}\frac{1}{\Xi _{n}}\sum_{s\in \mathcal{S}}\left[
\begin{array}{c}
\left( 
\begin{array}{c}
\sum_{\tilde{s}\in \mathcal{S}}\frac{n(\tilde{s})}{n}\frac{n_{A}(\tilde{s})}{n(\tilde{s})}\frac{n_{AD}(\tilde{s})}{n_{A}(\tilde{s})}- \\ 
\sum_{\tilde{s}\in   \mathcal{S}}\frac{n(\tilde{s})}{n}[(\frac{n_{D}(\tilde{s})-n_{AD}(\tilde{s})}{n(\tilde{s})-n_{A}(\tilde{s})})(1-\frac{n_{A}(\tilde{s})}{n(\tilde{s})})+\frac{n_{AD}(\tilde{s})}{n_{A}(\tilde{s})}\frac{n_{A}(\tilde{s})}{n(\tilde{s})}]
\end{array}
\right) \frac{1}{\sqrt{n}}R_{n,1}(1,1,s) \\ 
+\left( 
\begin{array}{c}
\sum_{\tilde{s}\in  \mathcal{S}}\frac{n(\tilde{s})}{n}\frac{n_{A}(\tilde{s})}{n(\tilde{s})}\frac{n_{AD}(\tilde{s})}{n_{A}(\tilde{s})}- \\ 
\sum_{\tilde{s}\in  \mathcal{S}}\frac{n(\tilde{s})}{n}[(\frac{n_{D}(\tilde{s})-n_{AD}(\tilde{s})}{n(\tilde{s})-n_{A}(\tilde{s})})(1-\frac{n_{A}(\tilde{s})}{n(\tilde{s})})+\frac{n_{AD}(\tilde{s})}{n_{A}(\tilde{s})}\frac{n_{A}(\tilde{s})}{n(\tilde{s})}]
\end{array}
\right) \frac{1}{\sqrt{n}}R_{n,1}(0,1,s) \\ 
+( \sum_{\tilde{s}\in \mathcal{S}}\frac{n(\tilde{s}) }{n}\frac{n_{A}(\tilde{s}) }{n(\tilde{s}) }\frac{n_{AD}(\tilde{s}) }{n_{A}(\tilde{s}) }) \frac{1}{\sqrt{n}}R_{n,1}(1,0,s)+( \sum_{\tilde{s}\in \mathcal{S}}\frac{n(\tilde{s}) }{n}\frac{n_{A}(\tilde{s}) }{n(\tilde{s}) }\frac{n_{AD}(\tilde{s}) }{n_{A}(\tilde{s}) }) \frac{1 }{\sqrt{n}}R_{n,1}(0,0,s) \\
+\left[
\begin{array}{c}
\left( 
\begin{array}{c}
\sum_{\tilde{s}\in \mathcal{S}}\frac{n(\tilde{s}) }{n}\frac{n_{A}(\tilde{s}) }{n(\tilde{s}) }\frac{n_{AD}(\tilde{s}) }{n_{A}(\tilde{s}) } \\
-\sum_{\tilde{s}\in \mathcal{S}}\frac{n(\tilde{s}) }{n}[ ( \frac{ n_{D}(\tilde{s}) -n_{AD}(\tilde{s})}{n(\tilde{s})-n_{A}(\tilde{s})}) ( 1-\frac{ n_{A}(\tilde{s})}{n(\tilde{s}) }) +\frac{n_{AD}(\tilde{s})}{n_{A}(\tilde{s})}\frac{ n_{A}(\tilde{s}) }{n(\tilde{s}) }]
\end{array}
\right) 
\\
\times\frac{n_{A}(s) }{n(s)}\frac{ n_{AD}(s) }{n_{A}(s) }\tfrac{\pi _{D(0)}(s)}{\pi _{D(1)}(s)} \\
+( \sum_{\tilde{s}\in \mathcal{S}}\frac{n(\tilde{s}) }{n}\frac{n_{A}(\tilde{s}) }{n(\tilde{s}) }\frac{n_{AD}(\tilde{s}) }{n_{A}(\tilde{s}) }) ( 1-\frac{n_{A}(s) }{n(s)}) \frac{n_{D}(s) -n_{AD}(s) }{n(s) -n_{A}(s) }
\end{array}
\right] \times \\
\frac{n(s)}{n}E[Y(1)|AT,S=s] \\
+\left( 
\begin{array}{c}
\sum_{\tilde{s}\in \mathcal{S}}\frac{n(\tilde{s}) }{n}\frac{n_{A}(\tilde{s}) }{n(\tilde{s}) }\frac{n_{AD}(\tilde{s}) }{n_{A}(\tilde{s}) } \\
-\sum_{\tilde{s}\in \mathcal{S}}\frac{n(\tilde{s}) }{n}[ ( \frac{ n_{D}(\tilde{s}) -n_{AD}(\tilde{s})}{n(\tilde{s})-n_{A}(\tilde{s})}) ( 1-\frac{ n_{A}(\tilde{s})}{n(\tilde{s}) }) +\frac{n_{AD}(\tilde{s})}{n_{A}(\tilde{s})}\frac{ n_{A}(\tilde{s}) }{n(\tilde{s}) }]
\end{array}
\right)\times\\
\frac{n(s)}{n}\frac{n_{A}( s ) }{n(s)}\frac{n_{AD}(s) }{ n_{A}(s) }\tfrac{\pi _{D(1)}(s)-\pi _{D(0)}(s)}{\pi _{D(1)}(s)}E[Y(1)|C,S=s]+ \\
( \sum_{\tilde{s}\in \mathcal{S}}\frac{n(\tilde{s}) }{n}\frac{n_{A}(\tilde{s}) }{n(\tilde{s}) }\frac{n_{AD}(\tilde{s}) }{n_{A}(\tilde{s}) }) \frac{n(s)}{n}( 1-\frac{ n_{A}(s) }{n(s)}) ( 1- \frac{( n_{D}(s) -n_{AD}(s) ) }{n(s) -n_{A}(s) })
\times\\
\tfrac{\pi _{D(1)}(s)-\pi _{D(0)}(s)}{1-\pi _{D(0)}(s )}E[Y(0)|C,S=s] \\
+\left[
\begin{array}{c}
\left( 
\begin{array}{c}
\sum_{\tilde{s}\in \mathcal{S}}\frac{n(\tilde{s}) }{n}\frac{n_{A}(\tilde{s}) }{n(\tilde{s}) }\frac{n_{AD}(\tilde{s}) }{n_{A}(\tilde{s}) } \\
-\sum_{\tilde{s}\in \mathcal{S}}\frac{n(\tilde{s}) }{n}[ ( \frac{ n_{D}(\tilde{s}) -n_{AD}(\tilde{s})}{n(\tilde{s})-n_{A}(\tilde{s})}) ( 1-\frac{ n_{A}(\tilde{s})}{n(\tilde{s}) }) +\frac{n_{AD}(\tilde{s})}{n_{A}(\tilde{s})}\frac{ n_{A}(\tilde{s}) }{n(\tilde{s}) }]
\end{array}
\right) \frac{n_{A}(s) }{n(s)} ( 1-\frac{n_{AD}(s) }{n_{A}(s) } ) \\
+( \sum_{\tilde{s}\in \mathcal{S}}\frac{n(\tilde{s}) }{n}\frac{n_{A}(\tilde{s}) }{n(\tilde{s}) }\frac{n_{AD}(\tilde{s}) }{n_{A}(\tilde{s}) }) ( 1-\frac{n_{A}(s) }{n(s)}) ( 1-\frac{( n_{D}(s) -n_{AD}(s) ) }{n(s) -n_{A}(s) }) \tfrac{1-\pi _{D(1)}(s)}{ 1-\pi _{D(0)}(s)}
\end{array}
\right]\\
\times\frac{n(s)}{n}E[Y(0)|NT,S=s]
\end{array}
\right]+o_{p}(1)  \notag\\
& \overset{(2)}{=}\frac{1}{\Xi }\sum_{s\in \mathcal{S}}\left[
\begin{array}{c}
[ ( \sum_{\tilde{s}\in \mathcal{S}}p(\tilde{s}) \pi _{A}(\tilde{s}) \pi _{D( 1) }(\tilde{s}) ) ( 1-\pi _{A}(s) ) -( \sum_{\tilde{s}\in \mathcal{S}}p(\tilde{s}) \pi _{D( 0) }(\tilde{s}) ( 1-\pi _{A}(\tilde{s}) ) ) \pi _{A}(s) ] \times\\
p(s) \pi _{D(0)}(s)E[Y(1)|AT,S=s] \\
-( \sum_{\tilde{s}\in \mathcal{S}}p(\tilde{s}) \pi _{D( 0) }(\tilde{s}) ( 1-\pi _{A}(\tilde{s}) ) ) p(s) \pi _{A}(s) ( \pi _{D(1)}(s )-\pi _{D(0)}(s)) E[Y(1)|C,S=s] \\
+( \sum_{\tilde{s}\in \mathcal{S}}p(\tilde{s}) \pi _{A}(\tilde{s}) \pi _{D( 1) }(\tilde{s}) ) p(s) ( 1-\pi _{A}(s) ) ( \pi _{D(1)}(s)-\pi _{D(0)}(s))E[Y(0)|C,S=s] \\
\left[ ( \sum_{\tilde{s}\in \mathcal{S}}p(\tilde{s}) \pi _{A}(\tilde{s}) \pi _{D( 1) }(\tilde{s}) ) ( 1-\pi _{A}(s) ) -( \sum_{\tilde{s}\in \mathcal{S}}p(\tilde{s}) \pi _{D( 0) }(\tilde{s}) ( 1-\pi _{A}(\tilde{s}) ) ) \pi _{A}(s) \right] \times\\
p(s) ( 1-\pi _{D(1)}(s)) E[Y(0)|NT,S=s]
\end{array}
\right] +o_{p}(1), \label{eq:plim_2SR5}
\end{align}
where (1) holds by $Y_{i}=Y_{i}(D_{i})$, \eqref{eq:defnY_pre}, \eqref{eq:defnY}, \eqref{eq:Rn_defn}, and (2) holds by Assumptions \ref{ass:1} and \ref{ass:2}(b), Lemma \ref{lem:A1and2_impliesold3} and \ref{lem:AsyDist2}, and \eqref{eq:plim_2SR4}.

To show the second line of \eqref{eq:plim_2SR2}, consider the following derivation.
\begin{align}
\hat{\beta}_{\mathrm{2s}} &=\frac{1}{n_{AD}/n-(n_{A}/n)(n_{D}/n)}\left[ - \frac{n_{A}}{n}\frac{1}{n}\sum_{i=1}^{n}Y_{i}+\frac{1}{n}\sum_{i=1}^{n}1 [A_{i}=1]Y_{i}\right] +o_{p}(1) \notag\\
&\overset{(1)}{=}\frac{1}{\Xi _{n}}\sum_{s\in \mathcal{S}}\left[
\begin{array}{c}
( 1-\sum_{\tilde{s}\in \mathcal{S}}\frac{n(\tilde{s}) }{n} \frac{n_{A}(\tilde{s}) }{n(\tilde{s}) }) \frac{ 1}{\sqrt{n}}R_{n,1}(1,1,s)+( 1-\sum_{\tilde{s}\in \mathcal{S}}\frac{ n(\tilde{s}) }{n}\frac{n_{A}(\tilde{s}) }{n(\tilde{s}) }) \frac{1}{\sqrt{n}}R_{n,1}(0,1,s) \\
-( \sum_{\tilde{s}\in \mathcal{S}}\frac{n(\tilde{s}) }{n} \frac{n_{A}(\tilde{s}) }{n(\tilde{s}) }) \frac{ 1}{\sqrt{n}}R_{n,1}(1,0,s)-( \sum_{\tilde{s}\in \mathcal{S}}\frac{n(\tilde{s}) }{n}\frac{n_{A}(\tilde{s}) }{n( \tilde{s} ) }) \frac{1}{\sqrt{n}}R_{n,1}(0,0,s) \\
+\left[ 
\begin{array}{c}
( 1-\sum_{\tilde{s}\in \mathcal{S}}\frac{n(\tilde{s}) }{n} \frac{n_{A}(\tilde{s}) }{n(\tilde{s}) }) \frac{ n(s) }{n}\frac{n_{A}(s) }{n(s)}\frac{ n_{AD}(s) }{n_{A}(s) }\tfrac{\pi _{D(0)}(s)}{\pi _{D(1)}(s)} \\
-( \sum_{\tilde{s}\in \mathcal{S}}\frac{n(\tilde{s}) }{n} \frac{n_{A}(\tilde{s}) }{n(\tilde{s}) }) \frac{ n(s) }{n}( 1-\frac{n_{A}(s) }{n(s)} ) \frac{n_{D}(s) -n_{AD}(s) }{n(s) -n_{A}(s) }
\end{array}
\right] E[Y(1)|AT,S=s] +\\
\left[ 
\begin{array}{c}
( 1-\sum_{\tilde{s}\in \mathcal{S}}\frac{n(\tilde{s}) }{n} \frac{n_{A}(\tilde{s}) }{n(\tilde{s}) }) \frac{ n(s) }{n}\frac{n_{A}(s) }{n(s)}( 1- \frac{n_{AD}(s) }{n_{A}(s) }) \\
-( \sum_{\tilde{s}\in \mathcal{S}}\frac{n(\tilde{s}) }{n} \frac{n_{A}(\tilde{s}) }{n(\tilde{s}) }) \frac{ n(s) }{n}( 1-\frac{n_{A}(s) }{n(s)} ) ( 1-\frac{n_{D}(s) -n_{AD}(s) }{n(s) -n_{A}(s) }) \tfrac{1-\pi _{D(1)}(s)}{1-\pi _{D(0)}(s)}
\end{array}
\right] E[Y(0)|NT,S=s] \\
+( 1-\sum_{\tilde{s}\in \mathcal{S}}\frac{n(\tilde{s}) }{n} \frac{n_{A}(\tilde{s}) }{n(\tilde{s}) }) \frac{ n(s) }{n}\frac{n_{A}(s) }{n(s)}\frac{ n_{AD}(s) }{n_{A}(s) }\tfrac{\pi _{D(1)}(s)-\pi _{D(0)}(s)}{\pi _{D(1)}(s)}E[Y(1)|C,S=s] \\
-( \sum_{\tilde{s}\in \mathcal{S}}\frac{n(\tilde{s}) }{n} \frac{n_{A}(\tilde{s}) }{n(\tilde{s}) }) \frac{ n(s) }{n}( 1-\frac{n_{A}(s) }{n(s)} ) ( 1-\frac{n_{D}(s) -n_{AD}(s) }{n(s) -n_{A}(s) }) \tfrac{\pi _{D(1)}(s)-\pi _{D(0)}(s) }{1-\pi _{D(0)}(s)}E[Y(0)|C,S=s]
\end{array}
\right]+o_{p}(1)   \notag\\
&\overset{(2)}{=}\frac{1}{\Xi }\sum_{s\in \mathcal{S}}p(s)\left[
\begin{array}{c}
+[ \pi _{A}(s) -\sum_{\tilde{s}\in \mathcal{S}}p(\tilde{s}) \pi _{A}(\tilde{s}) ]  \pi _{D(0)}(s)E[Y(1)|AT,S=s]+ \\
+[ \pi _{A}(s) -\sum_{\tilde{s}\in \mathcal{S}}p(\tilde{s}) \pi _{A}(\tilde{s}) ]  ( 1-\pi _{D( 1) }(s) ) E[Y(0)|NT,S=s] \\
+( 1-\sum_{\tilde{s}\in \mathcal{S}}p(\tilde{s}) \pi _{A}(\tilde{s}) )  \pi _{A}(s) ( \pi _{D(1)}(s)-\pi _{D(0)}(s)) E[Y(1)|C,S=s] \\
-( \sum_{\tilde{s}\in \mathcal{S}}p(\tilde{s}) \pi _{A}(\tilde{s}) )  ( 1-\pi _{A}(s) ) ( \pi _{D(1)}(s)-\pi _{D(0)}(s)) E[Y(0)|C,S=s]
\end{array}
\right] +o_{p}( 1) , \label{eq:plim_2SR6}
\end{align}
where (1) holds by $Y_{i}=Y_{i}(D_{i})$, \eqref{eq:defnY_pre}, \eqref{eq:defnY}, \eqref{eq:Rn_defn}, and (2) holds by Assumption \ref{ass:1}, \ref{ass:2}(b), Lemma \ref{lem:A1and2_impliesold3} and \ref{lem:AsyDist2}, and \eqref{eq:plim_2SR4}.

Finally, \eqref{eq:plim_2SR3} holds by the following derivation. 
\begin{align*}
\hat{\beta}_{\mathrm{2s}}~
&~\overset{(1)}{=}~\frac{\sum_{s\in \mathcal{S}}p( s) ( \pi _{D(1)}(s)-\pi _{D(0)}(s)) E[Y(1)-Y(0)|C,S=s]}{\sum_{\tilde{s}\in \mathcal{S}}p( \tilde{s}) ( \pi _{D(1)}(s)-\pi _{D(0)}( \tilde{s})) } +o_p(1)\\
&~\overset{(2)}{=}~\sum_{s\in \mathcal{S}}P(S=s|C)E[Y(1)-Y(0)|C,S=s]+o_p(1)~=~\beta +o_p(1),
\end{align*}
where (1) holds by \eqref{eq:plim_2SR4}, \eqref{eq:plim_2SR6}, and Assumption \ref{ass:3}(c), and (2) holds by \eqref{eq:pi_defns}.
\end{proof}
%%%%%%%% DIVIDER %%%%%%%%%%%%

\begin{proof}[Proof of Theorem \ref{thm:AsyDist_2SR}.]
As a preliminary result, consider the following derivation.
\begin{align*}
\sqrt{n}( \Xi _{n}-\Xi ) &\overset{(1)}{=}\sqrt{n}\left[ 
\begin{array}{c}
\frac{n_{AD}}{n}-\frac{n_{A}}{n}\frac{n_{D}}{n}-\sum_{s\in \mathcal{S} }p(s)\pi _{A}\pi _{D(1)}(s)+ \\
\pi _{A}\sum_{s\in \mathcal{S}}p(s)[ ( 1-\pi _{A}) \pi _{D(0)}(s)+\pi _{A}\pi _{D(1)}(s)]
\end{array}
\right]  \\
%&=\left\{ 
%\begin{array}{c}
%\sum_{s\in \mathcal{S}}\frac{n_{A}( s) }{n( s) }\frac{ n_{AD}( s) }{n_{A}( s) }R_{n,4}( s) +\sum_{s\in \mathcal{S}}p( s) \frac{n_{AD}( s) }{ n_{A}( s) }R_{n,3}( s) \\
%-( \sum_{\tilde{s}\in \mathcal{S}}\frac{n( \tilde{s}) }{n} [ ( \frac{n_{D}( \tilde{s}) -n_{AD}(\tilde{s})}{n( \tilde{s})-n_{A}(\tilde{s})}) ( 1-\frac{n_{A}(\tilde{s})}{n( \tilde{s}) }) +\frac{n_{AD}(\tilde{s})}{n_{A}(\tilde{s})}\frac{ n_{A}( \tilde{s}) }{n( \tilde{s}) }] ) \sum_{s\in \mathcal{S}}\frac{n_{A}( s) }{n( s) } R_{n,4}( s) \\
%-\pi _{A}\sum_{s\in \mathcal{S}}[ ( \frac{n_{D}( s) -n_{AD}(s)}{n(s)-n_{A}(s)}) ( 1-\frac{n_{A}(s)}{n( s) } ) +\frac{n_{AD}(s)}{n_{A}(s)}\frac{n_{A}( s) }{n( s) }] R_{n,4}( s) \\
%-( \sum_{\tilde{s}\in \mathcal{S}}\frac{n( \tilde{s}) }{n} [ ( \frac{n_{D}( \tilde{s}) -n_{AD}(\tilde{s})}{n( \tilde{s})-n_{A}(\tilde{s})}) ( 1-\frac{n_{A}(\tilde{s})}{n( \tilde{s}) }) +\frac{n_{AD}(\tilde{s})}{n_{A}(\tilde{s})}\frac{ n_{A}( \tilde{s}) }{n( \tilde{s}) }] ) \sum_{s\in \mathcal{S}}p( s) R_{n,3}( s) \\
%-\pi _{A}\sum_{s\in \mathcal{S}}p( s) [ \pi _{D(1)}(s)-\pi _{D(0)}(s)] R_{n,3}( s) \\
%\pi _{A}\sum_{s\in \mathcal{S}}p( s) ( 1-\frac{n_{A}( s) }{n( s) }) R_{n,2}( 1,s) -\pi _{A}\sum_{s\in \mathcal{S}}p( s) ( 1-\frac{n_{A}(s)}{n( s) }) R_{n,2}( 2,s)
%\end{array}
%\right\} \\
&\overset{(2)}{=}\left[ 
\begin{array}{c}
\sum_{s\in \mathcal{S}}\pi _{A}( 1-\pi _{A}) [ \pi _{D( 1) }( s) -\pi _{D(0)}( s) ] R_{n,4}( s) \\
+\sum_{s\in \mathcal{S}}p( s) \left[
\begin{array}{c}
( 1-\pi _{A}) \pi _{D(1)}(s)-\pi _{A}\sum_{\tilde{s}\in \mathcal{S }}p( \tilde{s}) \pi _{D(1)}(\tilde{s}) \\
+\pi _{A}\pi _{D(0)}(s)-( 1-\pi _{A}) \sum_{\tilde{s}\in \mathcal{ S}}p( \tilde{s}) \pi _{D(0)}(\tilde{s})
\end{array}
\right] R_{n,3}( s) \\
+\pi _{A}( 1-\pi _{A}) \sum_{s\in \mathcal{S}}p( s) R_{n,2}( 1,s) -\pi _{A}( 1-\pi _{A}) \sum_{s\in \mathcal{S}}p( s) R_{n,2}( 2,s)
\end{array}
\right] +o_{p}( 1) , 
\end{align*}
where (1) holds by definitions in \eqref{eq:plim_2SR4} and Assumption \ref{ass:3}(c), and (2) holds by Assumptions \ref{ass:1} and \ref{ass:2}(b), and Lemmas \ref{lem:A1and2_impliesold3} and \ref{lem:AsyDist}. In particular, Lemma \ref{lem:AsyDist} implies that $R_{n}=O_{p}( 1) $ and $\sum_{s\in \mathcal{S} }R_{n,4}( s) =0$.

Consider the following derivation.
\begin{align}
&\sqrt{n}( \hat{\beta}_{\mathrm{2s}}-\beta ) \overset{(1)}{=}\left[ 
\begin{array}{c}
\frac{1}{\Xi _{n}}\sum_{s\in \mathcal{S}}( 1-\sum_{\tilde{s}\in \mathcal{S}}\frac{n(\tilde{s})}{n}\frac{n_{A}(\tilde{s})}{n(\tilde{s})} ) R_{n,1}(1,1,s)+\frac{1}{\Xi _{n}}\sum_{s\in \mathcal{S}}( 1-\sum_{\tilde{s}\in \mathcal{S}}\frac{n(\tilde{s})}{n}\frac{n_{A}(\tilde{s}) }{n(\tilde{s})}) R_{n,1}(0,1,s) \\
-\frac{1}{\Xi _{n}}\sum_{s\in \mathcal{S}}( \sum_{\tilde{s}\in \mathcal{ S}}\frac{n(\tilde{s})}{n}\frac{n_{A}(\tilde{s})}{n(\tilde{s})}) R_{n,1}(1,0,s)-\frac{1}{\Xi _{n}}\sum_{s\in \mathcal{S}}( \sum_{\tilde{s }\in \mathcal{S}}\frac{n(\tilde{s})}{n}\frac{n_{A}(\tilde{s})}{n(\tilde{s})} ) R_{n,1}(0,0,s) \\
+\frac{1}{\Xi _{n}}\sum_{s\in \mathcal{S}}\zeta _{n,2}(s)+\frac{1}{\Xi _{n}} \sum_{s\in \mathcal{S}}\zeta _{n,3}( s) +\zeta _{n,4}
\end{array}
\right] + o_{p}(1), \label{eq:beta_2sr}
\end{align}
where (1) holds by the definitions of $\zeta _{n,2}(s)$, $\zeta _{n,3}( s) $, and $\zeta _{n,4}$ that appear below in \eqref{eq:eta_2s}, \eqref{eq:eta_3s}, and \eqref{eq:eta_4}.
First,
\begin{align}
\zeta _{n,2}(s) &\equiv \sqrt{n}\left[
\begin{array}{c}
(1-\sum_{\tilde{s}\in \mathcal{S}}\frac{n(\tilde{s})}{n}\frac{n_{A}(\tilde{s} )}{n(\tilde{s})})\frac{n(s)}{n}\frac{n_{A}(s)}{n(s)}\frac{n_{AD}(s)}{n_{A}(s) }\tfrac{\pi _{D(0)}(s)}{\pi _{D(1)}(s)} \\
-(\sum_{\tilde{s}\in \mathcal{S}}\frac{n(\tilde{s})}{n}\frac{n_{A}(\tilde{s}) }{n(\tilde{s})})\frac{n(s)}{n}(1-\frac{n_{A}(s)}{n(s)})\frac{ n_{D}(s)-n_{AD}(s)}{n(s)-n_{A}(s)}
\end{array}
\right] E[Y(1)|AT,S=s] \notag \\
&\overset{(1)}{=}\left[ 
\begin{array}{c}
\frac{n(s)}{n}[ -\frac{n_{A}(s)}{n(s)}\frac{n_{AD}(s)}{n_{A}(s)}\tfrac{ \pi _{D(0)}(s)}{\pi _{D(1)}(s)}-(1-\frac{n_{A}(s)}{n(s)})\frac{ n_{D}(s)-n_{AD}(s)}{n(s)-n_{A}(s)}] (\sum_{\tilde{s}\in \mathcal{S}} \frac{n_{A}(\tilde{s})}{n(\tilde{s})}R_{n,4}( \tilde{s}) ) \\
+[ (1-\pi _{A})\frac{n_{A}(s)}{n(s)}\frac{n_{AD}(s)}{n_{A}(s)}\tfrac{ \pi _{D(0)}(s)}{\pi _{D(1)}(s)}-\pi _{A}(1-\frac{n_{A}(s)}{n(s)})\frac{ n_{D}(s)-n_{AD}(s)}{n(s)-n_{A}(s)}] R_{n,4}( s) \\
-\frac{n(s)}{n}[ \frac{n_{A}(s)}{n(s)}\frac{n_{AD}(s)}{n_{A}(s)}\tfrac{ \pi _{D(0)}(s)}{\pi _{D(1)}(s)}+(1-\frac{n_{A}(s)}{n(s)})\frac{ n_{D}(s)-n_{AD}(s)}{n(s)-n_{A}(s)}] (\sum_{\tilde{s}\in \mathcal{S} }p( \tilde{s}) R_{n,3}( \tilde{s}) ) \\
+p( s) [ (1-\pi _{A})\frac{n_{AD}(s)}{n_{A}(s)}\tfrac{\pi _{D(0)}(s)}{\pi _{D(1)}(s)}+\pi _{A}\frac{n_{D}(s)-n_{AD}(s)}{n(s)-n_{A}(s)} ] R_{n,3}( s) \\
+\pi _{A}(1-\pi _{A})p( s) \tfrac{\pi _{D(0)}(s)}{\pi _{D(1)}(s)} R_{n,2}( 1,s) -\pi _{A}(1-\pi _{A})p( s) R_{n,2}( 2,s)
\end{array}
\right] E[Y(1)|AT,S=s]  \notag \\
&\overset{(2)}{=}p( s) \left[ 
\begin{array}{c}
-\pi _{D( 0) }( s) E[Y(1)|AT,S=s](\sum_{\tilde{s}\in \mathcal{S}}p( \tilde{s}) R_{n,3}( \tilde{s}) ) \\
+\pi _{D(0)}(s)E[Y(1)|AT,S=s]R_{n,3}( s) + \\
\pi _{A}(1-\pi _{A})\tfrac{\pi _{D(0)}(s)}{\pi _{D(1)}(s)} E[Y(1)|AT,S=s]R_{n,2}( 1,s) \\
-\pi _{A}(1-\pi _{A})E[Y(1)|AT,S=s]R_{n,2}( 2,s)
\end{array}
\right] +o_{p}( 1) ,\label{eq:eta_2s}
\end{align}
where (1) holds by the definitions in \eqref{eq:plim_2SR4}, and (2) holds by Assumptions \ref{ass:1}, \ref{ass:2}(b), and Lemmas \ref{lem:A1and2_impliesold3} and \ref{lem:AsyDist}. This implies that
\begin{align}
\frac{1}{\Xi _{n}}\sum_{s\in \mathcal{S}}\zeta _{n,2}(s) &=\frac{1}{\Xi }\sum_{s\in \mathcal{S}}p( s) \left[
\begin{array}{c}
\left[ 
\begin{array}{c}
\pi _{D(0)}(s)E[Y(1)|AT,S=s] \\ 
-( \sum_{\tilde{s}\in \mathcal{S}}p( \tilde{s}) \pi _{D( 0) }( \tilde{s}) E[Y(1)|AT,S=\tilde{s}])
\end{array}
\right] R_{n,3}( s) \\
+\pi _{A}(1-\pi _{A})\tfrac{\pi _{D(0)}(s)}{\pi _{D(1)}(s)} E[Y(1)|AT,S=s]R_{n,2}( 1,s) \\
-\pi _{A}(1-\pi _{A})E[Y(1)|AT,S=s]R_{n,2}( 2,s)
\end{array}
\right] +o_{p}( 1).\label{eq:eta_2}
\end{align}
Second,
\begin{align}
\zeta _{n,3}( s) &\equiv \sqrt{n}\left[
\begin{array}{c}
(1-\sum_{\tilde{s}\in \mathcal{S}}\frac{n(\tilde{s})}{n}\frac{n_{A}(\tilde{s} )}{n(\tilde{s})})\frac{n(s)}{n}\frac{n_{A}(s)}{n(s)}(1-\frac{n_{AD}(s)}{ n_{A}(s)}) \\
-(\sum_{\tilde{s}\in \mathcal{S}}\frac{n(\tilde{s})}{n}\frac{n_{A}(\tilde{s}) }{n(\tilde{s})})\frac{n(s)}{n}(1-\frac{n_{A}(s)}{n(s)})(1-\frac{ n_{D}(s)-n_{AD}(s)}{n(s)-n_{A}(s)})\tfrac{1-\pi _{D(1)}(s)}{1-\pi _{D(0)}(s)}
\end{array}
\right]E[Y(0)|NT,S=s] \notag \\
&\overset{(1)}{=}\left[ 
\begin{array}{c}
-\frac{n(s)}{n}[ \frac{n_{A}(s)}{n(s)}(1-\frac{n_{AD}(s)}{n_{A}(s)})+(1- \frac{n_{A}(s)}{n(s)})(1-\frac{n_{D}(s)-n_{AD}(s)}{n(s)-n_{A}(s)})\tfrac{ 1-\pi _{D(1)}(s)}{1-\pi _{D(0)}(s)}] (\sum_{\tilde{s}\in \mathcal{S} }R_{n,4}( \tilde{s}) \frac{n_{A}(\tilde{s})}{n(\tilde{s})}) \\
+[ (1-\pi _{A})\frac{n_{A}(s)}{n(s)}(1-\frac{n_{AD}(s)}{n_{A}(s)})-\pi _{A}(1-\frac{n_{A}(s)}{n(s)})(1-\frac{n_{D}(s)-n_{AD}(s)}{n(s)-n_{A}(s)}) \tfrac{1-\pi _{D(1)}(s)}{1-\pi _{D(0)}(s)}] R_{n,4}( s) \\
-\frac{n(s)}{n}[ \frac{n_{A}(s)}{n(s)}(1-\frac{n_{AD}(s)}{n_{A}(s)})+(1- \frac{n_{A}(s)}{n(s)})(1-\frac{n_{D}(s)-n_{AD}(s)}{n(s)-n_{A}(s)})\tfrac{ 1-\pi _{D(1)}(s)}{1-\pi _{D(0)}(s)}] (\sum_{\tilde{s}\in \mathcal{S} }p( \tilde{s}) R_{n,3}( \tilde{s}) ) \\
p( s) [ (1-\pi _{A})(1-\frac{n_{AD}(s)}{n_{A}(s)})+\pi _{A}(1-\frac{n_{D}(s)-n_{AD}(s)}{n(s)-n_{A}(s)})\tfrac{1-\pi _{D(1)}(s)}{ 1-\pi _{D(0)}(s)}] R_{n,3}( s) \\
-p( s) (1-\pi _{A})\pi _{A}R_{n,2}( 1,s) +p( s) \pi _{A}(1-\pi _{A})\tfrac{1-\pi _{D(1)}(s)}{1-\pi _{D(0)}(s)} R_{n,2}( 2,s)
\end{array}
\right] \times\notag\\
&E[Y(0)|NT,S=s] \notag \\
&\overset{(2)}{=}p( s) \left[
\begin{array}{c}
+(1-\pi _{D( 1) }( s) )E[Y(0)|NT,S=s]R_{n,3}( s) \\
-( 1-\pi _{D(1)}(s)) E[Y(0)|NT,S=s](\sum_{\tilde{s}\in \mathcal{S} }p( \tilde{s}) R_{n,3}( \tilde{s}) ) \\
-\pi _{A}(1-\pi _{A})E[Y(0)|NT,S=s]R_{n,2}( 1,s) \\
+\pi _{A}(1-\pi _{A})\tfrac{1-\pi _{D(1)}(s)}{1-\pi _{D(0)}(s)} E[Y(0)|NT,S=s]R_{n,2}( 2,s)
\end{array}
\right] +o_{p}( 1) ,\label{eq:eta_3s}
\end{align}
where (1) holds by the definitions in \eqref{eq:plim_2SR4}, and (2) holds by Assumptions \ref{ass:1}, \ref{ass:2}(b), and Lemmas \ref{lem:A1and2_impliesold3} and \ref{lem:AsyDist}. This implies that
\begin{align}
 \frac{1}{\Xi _{n}}\sum_{s\in \mathcal{S}}\zeta _{n,3}( s) &=\frac{1}{\Xi }\sum_{s\in \mathcal{S}}p( s) \left[
\begin{array}{c}
\left[ 
\begin{array}{c}
(1-\pi _{D( 1) }( s) )E[Y(0)|NT,S=s] \\
-( \sum_{\tilde{s}\in \mathcal{S}}p( \tilde{s}) ( 1-\pi _{D(1)}(\tilde{s})) E[Y(0)|NT,S=\tilde{s}])
\end{array}
\right] R_{n,3}( s) \\
-\pi _{A}(1-\pi _{A})E[Y(0)|NT,S=s]R_{n,2}( 1,s) \\
+\pi _{A}(1-\pi _{A})\tfrac{1-\pi _{D(1)}(s)}{1-\pi _{D(0)}(s)} E[Y(0)|NT,S=s]R_{n,2}( 2,s)
\end{array}
\right] +o_{p}( 1).\label{eq:eta_3} 
\end{align}
Third,
\begin{align}
&\zeta _{n,4} \equiv \frac{1}{\Xi _{n}}\sqrt{n}\left[
\begin{array}{c}
\sum_{s\in \mathcal{S}}(1-\sum_{\tilde{s}\in \mathcal{S}}\frac{n(\tilde{s})}{ n}\frac{n_{A}(\tilde{s})}{n(\tilde{s})})\frac{n(s)}{n}\frac{n_{A}(s)}{n(s)} \frac{n_{AD}(s)}{n_{A}(s)}\tfrac{\pi _{D(1)}(s)-\pi _{D(0)}(s)}{\pi _{D(1)}(s)}E[Y(1)|C,S=s] \\
-\sum_{s\in \mathcal{S}}\pi _{A}(1-\pi _{A}) p( s) ( \pi _{D(1)}(s)-\pi _{D(0)}(s)) E[Y(1)|C,S=s] \\
-\sum_{s\in \mathcal{S}}(\sum_{\tilde{s}\in \mathcal{S}}\frac{n(\tilde{s})}{n }\frac{n_{A}(\tilde{s})}{n(\tilde{s})})\frac{n(s)}{n}(1-\frac{n_{A}(s)}{n(s)} )(1-\frac{n_{D}(s)-n_{AD}(s)}{n(s)-n_{A}(s)})\tfrac{\pi _{D(1)}(s)-\pi _{D(0)}(s)}{1-\pi _{D(0)}(s)}E[Y(0)|C,S=s] \\
+\sum_{s\in \mathcal{S}}\pi _{A}(1-\pi _{A}) p( s) ( \pi _{D(1)}(s)-\pi _{D(0)}(s)) E[Y(0)|C,S=s]\\
+\sum_{s\in \mathcal{S}}\pi _{A}(1-\pi _{A})p( s) ( \pi _{D(1)}(s)-\pi _{D(0)}(s)) \beta ( s) -\Xi _{n}\beta
\end{array}
\right] \notag\\
&\overset{(1)}{=}\frac{1}{\Xi _{n}}\left[
\begin{array}{c}
-\sum_{s\in \mathcal{S}}\frac{n(s)}{n}\frac{n_{A}(s)}{n(s)}\frac{n_{AD}(s)}{ n_{A}(s)}\tfrac{\pi _{D(1)}(s)-\pi _{D(0)}(s)}{\pi _{D(1)}(s)} E[Y(1)|C,S=s](\sum_{\tilde{s}\in \mathcal{S}}\frac{n_{A}(\tilde{s})}{n( \tilde{s})}R_{n,4}( \tilde{s}) ) \\
-\sum_{s\in \mathcal{S}}\frac{n(s)}{n}\frac{n_{A}(s)}{n(s)}\frac{n_{AD}(s)}{ n_{A}(s)}\tfrac{\pi _{D(1)}(s)-\pi _{D(0)}(s)}{\pi _{D(1)}(s)} E[Y(1)|C,S=s](\sum_{\tilde{s}\in \mathcal{S}}p( \tilde{s}) R_{n,3}( \tilde{s}) ) \\
+(1-\pi _{A})\sum_{s\in \mathcal{S}}\frac{n_{A}(s)}{n(s)}\frac{n_{AD}(s)}{ n_{A}(s)}\tfrac{\pi _{D(1)}(s)-\pi _{D(0)}(s)}{\pi _{D(1)}(s)} E[Y(1)|C,S=s]R_{n,4}( s) \\
+(1-\pi _{A})\sum_{s\in \mathcal{S}}p( s) \frac{n_{AD}(s)}{ n_{A}(s)}\tfrac{\pi _{D(1)}(s)-\pi _{D(0)}(s)}{\pi _{D(1)}(s)} E[Y(1)|C,S=s]R_{n,3}( s) \\
+(1-\pi _{A})\sum_{s\in \mathcal{S}}p( s) \pi _{A}\tfrac{\pi _{D(1)}(s)-\pi _{D(0)}(s)}{\pi _{D(1)}(s)}E[Y(1)|C,S=s]R_{n,2}( 1,s) \\
-\sum_{s\in \mathcal{S}}\frac{n(s)}{n}(1-\frac{n_{A}(s)}{n(s)})(1-\frac{ n_{D}(s)-n_{AD}(s)}{n(s)-n_{A}(s)})\tfrac{\pi _{D(1)}(s)-\pi _{D(0)}(s)}{ 1-\pi _{D(0)}(s)}E[Y(0)|C,S=s]\sum_{\tilde{s}\in \mathcal{S}}\frac{n_{A}( \tilde{s})}{n(\tilde{s})}R_{n,4}( \tilde{s})  \\
-\sum_{s\in \mathcal{S}}\frac{n(s)}{n}(1-\frac{n_{A}(s)}{n(s)})(1-\frac{ n_{D}(s)-n_{AD}(s)}{n(s)-n_{A}(s)})\tfrac{\pi _{D(1)}(s)-\pi _{D(0)}(s)}{ 1-\pi _{D(0)}(s)}E[Y(0)|C,S=s]\sum_{\tilde{s}\in \mathcal{S}}p( \tilde{ s}) R_{n,3}( \tilde{s})  \\
-\pi _{A}\sum_{s\in \mathcal{S}}(1-\frac{n_{A}(s)}{n(s)})(1-\frac{ n_{D}(s)-n_{AD}(s)}{n(s)-n_{A}(s)})\tfrac{\pi _{D(1)}(s)-\pi _{D(0)}(s)}{ 1-\pi _{D(0)}(s)}E[Y(0)|C,S=s]R_{n,4}( s) \\
+\pi _{A}\sum_{s\in \mathcal{S}}p( s) (1-\frac{n_{D}(s)-n_{AD}(s) }{n(s)-n_{A}(s)})\tfrac{\pi _{D(1)}(s)-\pi _{D(0)}(s)}{1-\pi _{D(0)}(s)} E[Y(0)|C,S=s]R_{n,3}( s) \\
+\pi _{A}\sum_{s\in \mathcal{S}}p( s) (1-\pi _{A})\tfrac{\pi _{D(1)}(s)-\pi _{D(0)}(s)}{1-\pi _{D(0)}(s)}E[Y(0)|C,S=s]R_{n,2}( 2,s) -\beta \sqrt{n}( \Xi _{n}-\Xi )
\end{array}
\right] +o_{p}(1) \notag\\
&\overset{(2)}{=}\frac{1}{\Xi }\left[ 
\begin{array}{c}
\pi _{A}(1-\pi _{A})\sum_{s\in \mathcal{S}}( \pi _{D(1)}(s)-\pi _{D(0)}(s)) ( \beta ( s) -\beta ) R_{n,4}( s) \\
+(1-\pi _{A})\sum_{s\in \mathcal{S}}p( s) ( \pi _{D(1)}(s)-\pi _{D(0)}(s)) E[Y(1)|C,S=s]R_{n,3}( s) \\
+\pi _{A}\sum_{s\in \mathcal{S}}p( s) ( \pi _{D(1)}(s)-\pi _{D(0)}(s)) E[Y(0)|C,S=s]R_{n,3}( s) \\
-\beta \sum_{s\in \mathcal{S}}p( s) (
\begin{array}{c}
( 1-\pi _{A}) \pi _{D(1)}(s)-\pi _{A}\sum_{\tilde{s}\in \mathcal{S }}p( \tilde{s}) \pi _{D(1)}(\tilde{s}) \\
+\pi _{A}\pi _{D(0)}(s)-( 1-\pi _{A}) \sum_{\tilde{s}\in \mathcal{ S}}p( \tilde{s}) \pi _{D(0)}(\tilde{s})
\end{array}
) R_{n,3}( s) \\
-\pi _{A}\sum_{s\in \mathcal{S}}p( s) ( \pi _{D(1)}(s)-\pi _{D(0)}(s)) E[Y(1)|C,S=s](\sum_{\tilde{s}\in \mathcal{S}}p( \tilde{s}) R_{n,3}( \tilde{s}) ) \\
-(1-\pi _{A})\sum_{s\in \mathcal{S}}p( s) ( \pi _{D(1)}(s)-\pi _{D(0)}(s)) E[Y(0)|C,S=s](\sum_{\tilde{s}\in \mathcal{S} }p( \tilde{s}) R_{n,3}( \tilde{s}) ) \\
+\pi _{A}(1-\pi _{A})\sum_{s\in \mathcal{S}}p( s) \tfrac{\pi _{D(1)}(s)-\pi _{D(0)}(s)}{\pi _{D(1)}(s)}E[Y(1)|C,S=s]R_{n,2}( 1,s) \\
-\beta \pi _{A}( 1-\pi _{A}) \sum_{s\in \mathcal{S}}p( s) R_{n,2}( 1,s) +\beta \pi _{A}( 1-\pi _{A}) \sum_{s\in \mathcal{S}}p( s) R_{n,2}( 2,s) \\
+\pi _{A}(1-\pi _{A})\sum_{s\in \mathcal{S}}p( s) \tfrac{\pi _{D(1)}(s)-\pi _{D(0)}(s)}{1-\pi _{D(0)}(s)}E[Y(0)|C,S=s]R_{n,2}( 2,s)
\end{array}
\right] +o_{p}( 1) ,
\label{eq:eta_4}
\end{align}
where (1) holds by the definitions in \eqref{eq:plim_2SR4}, and (2) holds by Assumptions \ref{ass:1} and \ref{ass:2}(b), and Lemmas \ref{lem:A1and2_impliesold3} and \ref{lem:AsyDist}.

From these results, the following derivation follows.
\begin{align}
&\sqrt{n}( \hat{\beta}_{\mathrm{2s}}-\beta ) = \notag \\
&\frac{1}{P( C) }\left[ 
\begin{array}{c}
\sum_{s\in \mathcal{S}}\frac{R_{n,1}(1,1,s)}{\pi _{A}}+\sum_{s\in \mathcal{S} }\frac{R_{n,1}(0,1,s)}{\pi _{A}}-\sum_{s\in \mathcal{S}}\frac{R_{n,1}(1,0,s) }{1-\pi _{A}}-\sum_{s\in \mathcal{S}}\frac{R_{n,1}(0,0,s)}{1-\pi _{A}} \\
+\sum_{s\in \mathcal{S}}p( s) \left[ 
\begin{array}{c}
\tfrac{\pi _{D(0)}(s)}{\pi _{D(1)}(s)}E[Y(1)|AT,S=s]-E[Y(0)|NT,S=s] \\ 
+\tfrac{\pi _{D(1)}(s)-\pi _{D(0)}(s)}{\pi _{D(1)}(s)}E[Y(1)|C,S=s]-\beta
\end{array}
\right] R_{n,2}( 1,s)  \\ 
+\sum_{s\in \mathcal{S}}p( s) \left[
\begin{array}{c}
\tfrac{1-\pi _{D(1)}(s)}{1-\pi _{D(0)}(s)}E[Y(0)|NT,S=s]-E[Y(1)|AT,S=s] \\
+\tfrac{\pi _{D(1)}(s)-\pi _{D(0)}(s)}{1-\pi _{D(0)}(s)}E[Y(0)|C,S=s]+\beta
\end{array}
\right] R_{n,2}( 2,s) + \\
\sum_{s\in \mathcal{S}}\frac{p( s) }{\pi _{A}(1-\pi _{A})}\left[
\begin{array}{c}
[ \pi _{A}\pi _{D( 0) }( s) +( 1-\pi_{A}) \pi _{D( 1) }( s) ] ( \beta( s) -\beta )  \\ 
-\sum_{\tilde{s}\in \mathcal{S}}p( \tilde{s}) ( \pi _{A}\pi_{D( 1) }( \tilde{s}) +( 1-\pi _{A}) \pi
_{D( 0) }( \tilde{s}) ) ( \beta ( \tilde{s}) -\beta )  \\ 
+\gamma ( s) -\sum_{\tilde{s}\in \mathcal{S}}p(\tilde{s})\gamma( \tilde{s}) 
\end{array}
\right] 
R_{n,3}( s) \\
+\sum_{s\in \mathcal{S}}( \pi _{D(1)}(s)-\pi _{D(0)}(s)) ( \beta ( s) -\beta ) R_{n,4}( s)
\end{array}
\right] + o_p(1) ,\label{eq:beta_2sr_2}
\end{align}
where the equality holds by \eqref{eq:plim_2SR4}, \eqref{eq:beta_2sr}, \eqref{eq:eta_2}, \eqref{eq:eta_3}, and \eqref{eq:eta_4}, and where $(\beta(s),\gamma(s))$ are defined as in \eqref{eq:hat_limits_1}. 
%In particular, it relies on the following derivation.
%\begin{align}
%&\left[
%\begin{array}{c}
%\pi _{D(0)}(s)E[Y(1)|AT,S=s] \\
%-( \sum_{\tilde{s}\in \mathcal{S}}p( \tilde{s}) \pi _{D( 0) }( \tilde{s}) E[Y(1)|AT,S=\tilde{s}]) \\
%+(1-\pi _{D( 1) }( s) )E[Y(0)|NT,S=s] \\
%-( \sum_{\tilde{s}\in \mathcal{S}}p( \tilde{s}) ( 1-\pi _{D(1)}(\tilde{s})) E[Y(0)|NT,S=\tilde{s}]) \\
%+(1-\pi _{A})( \pi _{D(1)}(s)-\pi _{D(0)}(s)) E[Y(1)|C,S=s] \\
%-\pi _{A}( \sum_{\tilde{s}\in \mathcal{S}}p( \tilde{s}) ( \pi _{D(1)}(\tilde{s})-\pi _{D(0)}(\tilde{s})) E[Y(1)|C,S= \tilde{s}]) \\
%+\pi _{A}( \pi _{D(1)}(s)-\pi _{D(0)}(s)) E[Y(0)|C,S=s] \\ -(1-\pi _{A})( \sum_{\tilde{s}\in \mathcal{S}}p( \tilde{s}) ( \pi _{D(1)}(\tilde{s})-\pi _{D(0)}(\tilde{s})) E[Y(0)|C,S= \tilde{s}]) \\
%-\beta ( 1-\pi _{A}) \pi _{D(1)}(s)+\beta \pi _{A}\sum_{\tilde{s} \in \mathcal{S}}p( \tilde{s}) \pi _{D(1)}(\tilde{s}) \\
%-\beta \pi _{A}\pi _{D(0)}(s)+\beta ( 1-\pi _{A}) \sum_{\tilde{s} \in \mathcal{S}}p( \tilde{s}) \pi _{D(0)}(\tilde{s})
%\end{array} \right]\notag\\
%&=\left[
%\begin{array}{c}
%[ \pi _{A}\pi _{D( 0) }( s) +( 1-\pi_{A}) \pi _{D( 1) }( s) ] ( \beta( s) -\beta )  \\ 
%-\sum_{\tilde{s}\in \mathcal{S}}p( \tilde{s}) ( \pi _{A}\pi_{D( 1) }( \tilde{s}) +( 1-\pi _{A}) \pi_{D( 0) }( \tilde{s}) ) ( \beta ( \tilde{s}) -\beta )  \\ 
%+\gamma ( s) -\sum_{\tilde{s}\in \mathcal{S}}p(\tilde{s})\gamma( \tilde{s})
%\end{array}
%\right], \label{eq:beta_2sr_3}
%\end{align}
The desired result then follows from \eqref{eq:beta_2sr_2} and Lemma \ref{lem:AsyDist}.
\end{proof}
%%%%%%%% DIVIDER %%%%%%%%%%%%

%%%%%%%% DIVIDER %%%%%%%%%%%%
\begin{proof}[Proof of Theorem \ref{thm:2SR_se}.]
Note that Assumption \ref{ass:1}, \ref{ass:2}(b), and \ref{ass:3}(c), Lemma \ref{lem:A1and2_impliesold3}, and \eqref{eq:hat_limits_1} and \eqref{eq:hat_limits_2}, imply that $\hat{V}_{A}^{\mathrm{2s}} ~\overset{p}{\to }~V_{{A}}^{\mathrm{2s}}$. The desired result follows from this and Theorem \ref{thm:SAT_se}.
\end{proof}
%%%%%%%% DIVIDER %%%%%%%%%%%%

\begin{proof}[Proof of Theorem \ref{thm:2SR_test}.]
This result follows from elementary convergence arguments and Theorems \ref{thm:AsyDist_2SR} and \ref{thm:2SR_se}.
\end{proof}

%%%%%%%%%%%%%%%%%%%%%%%%%%%%%%%%%%%%%%%%%%%%%%%%%%%%%%%%%%%%%%%%%%%%%%%%%%%%%
\subsection{Proofs of results in Section \ref{sec:optim}}

%%%%%%%% DIVIDER %%%%%%%%%%%%

\begin{proof}[Proof of Theorem \ref{thm:coarser_moreVar}]
Note that \eqref{eq:coarser_2} is a consequence of Theorems \ref{thm:AsyDist_SAT}, \ref{thm:AsyDist_SFE}, and \ref{thm:AsyDist_2SR}. To complete the proof, it suffices to show \eqref{eq:coarser_3}.

Fix $s\in \mathcal{S}_{2}$ arbitrarily. By \eqref{eq:coarser_1}, there is a set $(s_{j}( s) \in \mathcal{S}_{1})_{j=1}^{J( s) }$ (dependent on $s$) s.t.\ $(S_{2}=s)=\cup _{j=1}^{J(s)}(S_{1}=s_{j}( s) )$. Then, consider the following derivation.
\begin{align}
& V[Y(1)|AT,S_{2}=s]{\pi }_{D(0)}(s)  \notag \\
& \overset{(1)}{=}\sum_{j=1}^{J( s) }[ V[Y(1)|AT,S_{1}={s_{j} }( s) ]+(E[Y(1)|AT,S_{1}={s_{j}}( s) ]-{E} [Y(1)|AT,S_{2}=s])^{2}] {\pi }_{D(0)}({s_{j}}( s) )\frac{{p} ({s_{j}}( s) )}{{p}(s)} \notag \\
& =\left( 
\begin{array}{c}
\sum_{j=1}^{J( s) }V[Y(1)|AT,S_{1}={s_{j}}( s) ]{\pi } _{D(0)}(s_{j})\frac{{p}({s_{j}}( s) )}{{p}(s)}+\\
\sum_{j=1}^{J ( s) }E[Y(1)|AT,S_{1}={s_{j}}( s) ]^{2}{\pi }_{D(0)}({ s_{j}}( s) )\frac{{p}({s_{j}}( s) )}{{p}(s)} -E[Y(1)|AT,S_{2}=s]^{2}{\pi }_{D(0)}(s)
\end{array}
\right) ,\label{eq:coarse1}
\end{align}
where (1) follows from the law of total variance. By a similar argument,
\begin{align}
&V[Y(0)|NT,S_{2}=s](1-{\pi }_{D(1)}(s))  \notag\\
&=\left( 
\begin{array}{c}
\sum_{j=1}^{J( s) }V[Y(0)|NT,S_{1}={s_{j}}( s) ](1-{ \pi }_{D(1)}({s_{j}}( s) ))\frac{{p}({s_{j}}( s) )}{{p} (s)}+\\
\sum_{j=1}^{J( s) }E[Y(0)|NT,S_{1}={s_{j}}( s) ]^{2}(1-{\pi }_{D(1)}(s))\frac{{p}({s_{j}}( s) )}{{p}(s)} -E[Y(0)|NT,S_{2}=s]^{2}(1-{\pi }_{D(1)}(s))
\end{array}
\right)  \label{eq:coarse2}
\end{align}
and, for $d\in\{ 0,1\} $,
\begin{align}
V[ Y(d)|C,S_{2}=s] ({\pi }_{D(1)}(s)-{\pi }_{D(0)}(s))=\left( 
\begin{array}{c}
\sum_{j=1}^{J( s) }V[Y(d)|C,S_{1}={s_{j}}( s) ]({\pi } _{D(1)}({s_{j}}( s) )-{\pi }_{D(0)}({s_{j}}( s) )) \frac{{p}({s_{j}}( s) )}{{p}(s)}+\\
\sum_{j=1}^{J( s) }E[Y(d)|C,S_{1}={s_{j}}( s) ]^{2}({\pi }_{D(1)}({s_{j}}( s) )-{\pi }_{D(0)}({s_{j}}( s) ))\frac{{p}({s_{j}}( s) )}{{p}(s)} \\
-E[Y(d)|C,S_{2}=s]^{2}({\pi }_{D(1)}(s)-{\pi }_{D(0)}(s))
\end{array}
\right) .\label{eq:coarse3}
\end{align}

For $a=1,2,$ let $V_{1,a}$ be equal to $V_{Y,1}^{\mathrm{sat}}+V_{D,1}^{ \mathrm{sat}}$ for RCT with strata $\mathcal{S}_{a}$. Next, consider the following derivation.
\begin{align}
V_{1,2} &\overset{(1)}{=}\frac{1}{\pi _{A}P(C)^{2}}\sum_{s\in \mathcal{S}_{2}}p(s)
\left[ 
\begin{array}{c}
{V}[Y(1)|AT,S_{2}=s]{\pi }_{D(0)}(s)+{V}[Y(0)|NT,S_{2}=s](1-{\pi }_{D(1)}(s)) \\
+{V}[Y(1)|C,S_{2}=s]({\pi }_{D(1)}(s)-{\pi }_{D(0)}(s)) \\
+({E}[Y(1)|C,S_{2}=s]-{E}[Y(1)|AT,S_{2}=s])^{2}\frac{{\pi }_{D(0)}(s)({\pi } _{D(1)}(s)-{\pi }_{D(0)}(s))}{{\pi }_{D(1)}(s)} \\
+\left[ 
\begin{array}{c}
-\pi _{D(0)}(s)(E[Y(1)|C,S_{2}=s]-E[Y(1)|AT,S_{2}=s]) \\ 
+\pi _{D(1)}(s)(E[Y(0)|C,S_{2}=s]-E[Y(0)|NT,S_{2}=s]) \\ 
+\pi _{D(1)}(s)(E[Y(1)-Y(0)|C,S_{2}=s]-\beta )
\end{array}
\right] ^{2}\frac{(1-{\pi }_{D(1)}(s))}{{\pi }_{D(1)}(s)}
\end{array}
\right]   \notag \\
&\overset{(2)}{=}
\frac{1}{\pi _{A}P(C)^{2}}\sum_{s\in \mathcal{S}_{2}}p(s)\left[
\begin{array}{c}
\sum_{j=1}^{J( s) }V[Y(1)|AT,S_{1}={s_{j}}( s) ]{\pi } _{D(0)}(s_{j})\frac{{p}({s_{j}}( s) )}{{p}(s)}\\
+\sum_{j=1}^{J ( s) }V[Y(0)|NT,S_{1}={s_{j}}( s) ](1-{\pi }_{D(1)}({ s_{j}}( s) ))\frac{{p}({s_{j}}( s) )}{{p}(s)} \\
+\sum_{j=1}^{J( s) }V[Y(1)|C,S_{1}={s_{j}}( s) ]({\pi } _{D(1)}({s_{j}}( s) )-{\pi }_{D(0)}({s_{j}}( s) )) \frac{{p}({s_{j}}( s) )}{{p}(s)} \\
+\sum_{j=1}^{J( s) }E[Y(1)|AT,S_{1}={s_{j}}( s) ]^{2}{ \pi }_{D(0)}({s_{j}}( s) )\frac{{p}({s_{j}}( s) )}{{p} (s)}\\
+\sum_{j=1}^{J( s) }E[Y(0)|NT,S_{1}={s_{j}}( s) ]^{2}(1-{\pi }_{D(1)}(s))\frac{{p}({s_{j}}( s) )}{{p}(s)} \\
+\sum_{j=1}^{J( s) }E[Y(1)|C,S_{1}={s_{j}}( s) ]^{2}({ \pi }_{D(1)}({s_{j}}( s) )-{\pi }_{D(0)}({s_{j}}( s) )) \frac{{p}({s_{j}}( s) )}{{p}(s)} \\
-\left[ 
\begin{array}{c}
( \pi _{D(1)}(s)-\pi _{D(0)}(s)) E[Y(1)|C,S_{2}=s]+\pi _{D(0)}(s)E[Y(1)|AT,S_{2}=s] \\
+( 1-\pi _{D(1)}(s)) ( E[Y(0)|NT,S_{2}=s]+\beta )
\end{array}
\right] ^{2} \\
+\beta ^{2}\sum_{j=1}^{J( s) }( {1}-{\pi }_{D(1)}( s_{j}( s) ) ) \frac{p( s_{j}( s) ) }{p( s) }\\
+2\beta \sum_{j=1}^{J( s) }E[Y(0)|NT,S_{1}=s_{j}( s) ]( {1}-{\pi }_{D(1)}(s_{j}( s) )) \frac{p( s_{j}( s) ) }{p(s)}
\end{array}\label{eq:coarse4}
\right],
\end{align}
where (1) follows from Theorem \ref{thm:AsyDist_SAT}, and (2) follows from \eqref{eq:coarse1}, \eqref{eq:coarse2}, and \eqref{eq:coarse3}. In turn,
\begin{align}
&V_{1,1} \overset{(1)}{=}
\frac{1}{\pi_{A}P(C)^{2}}
\sum_{s\in \mathcal{S}_{2}}p( s)
\left[
\begin{array}{c}
\sum_{j=1}^{J( s) }{V}[Y(1)|AT,S_{1}=s_{j}( s) ]{\pi } _{D(0)}(s_{j}( s) )\frac{p(s_{j}( s) )}{p( s) }\\
+\sum_{j=1}^{J( s) }{V}[Y(0)|NT,S_{1}=s_{j}( s) ](1-{\pi }_{D(1)}(s_{j}( s) ))\frac{p(s_{j}( s) )}{p( s) } \\
+\sum_{j=1}^{J( s) }{V}[Y(1)|C,S_{1}=s_{j}( s) ]({\pi } _{D(1)}(s_{j}( s) )-{\pi }_{D(0)}(s_{j}( s) ))\frac{ p(s_{j}( s) )}{p( s) } \\
+\sum_{j=1}^{J( s) }({E}[Y(1)|AT,S_{1}=s_{j}( s) ])^{2} {\pi }_{D(0)}(s_{j}( s) )\frac{p(s_{j}( s) )}{p( s) }\\
+\sum_{j=1}^{J( s) }E[Y(0)|NT,S_{1}=s_{j}( s) ]^{2}(1-{\pi }_{D(1)}(s_{j}( s) ))\frac{p(s_{j}( s) )}{p( s) } \\
+\sum_{j=1}^{J( s) }({E}[Y(1)|C,S_{1}=s_{j}( s) ])^{2}( {\pi }_{D(1)}(s_{j}( s) )-{\pi }_{D(0)}(s_{j}( s) )) \frac{p(s_{j}( s) )}{p( s) } \\
-\sum_{j=1}^{J( s) }
\left[
\begin{array}{c}
( \pi _{D(1)}(s_{j}( s) )-\pi _{D(0)}(s_{j}( s) )) E[Y(1)|C,S_{1}=s_{j}( s) ]\\
+\pi _{D(0)}(s_{j}( s) )E[Y(1)|AT,S_{1}=s_{j}( s) ]) \\
+( 1-\pi _{D(1)}(s_{j}( s) )) ( E[Y(0)|NT,S_{1}=s_{j}( s) ]+\beta )
\end{array}
\right] ^{2}\frac{p(s_{j}( s) )}{p( s) } \\
+\beta ^{2}\sum_{j=1}^{J( s) }( 1-{\pi }_{D(1)}(s_{j}( s) )) \frac{p(s_{j}( s) )}{p( s) }\\
+2\beta \sum_{j=1}^{J( s) }E[Y(0)|NT,S_{1}=s_{j}( s) ]( 1- {\pi }_{D(1)}(s_{j}( s) )) \frac{p(s_{j}( s) )}{ p( s) }
\end{array}
\right],\label{eq:coarse5}
\end{align} 
where (1) holds by Theorem \ref{thm:AsyDist_SAT}. By \eqref{eq:coarse4} and \eqref{eq:coarse5},
\begin{align}
V_{1,1}-V_{1,2}
%\notag\\&=-\frac{1}{\pi _{A}P(C)^{2}}\sum_{s\in \mathcal{S} _{2}}p(s)\left[
%\begin{array}{c}
%\sum_{j=1}^{J( s) }\left[ 
%\begin{array}{c}
%( \pi _{D(1)}(s_{j}( s) )-\pi _{D(0)}(s_{j}( s) )) E[Y(1)|C,S_{1}=s_{j}( s) ]\\
%+\pi _{D(0)}(s_{j}( s) )E[Y(1)|AT,S_{1}=s_{j}( s) ] \\
%+( 1-\pi _{D(1)}(s_{j}( s) )) ( E[Y(0)|NT,S_{1}=s_{j}( s) ]+\beta )
%\end{array}
%\right] ^{2}\frac{p(s_{j}( s) )}{p( s) } \\
%-\left( \sum_{\tilde{j}=1}^{J( s) }\left[ 
%\begin{array}{c}
%( \pi _{D(1)}(s_{\tilde{j}}( s) )-\pi _{D(0)}(s_{\tilde{j} }( s) )) E[Y(1)|C,S_{1}=s_{\tilde{j}}( s) ]\\
%+\pi_{D(0)}(s_{\tilde{j}}( s) )E[Y(1)|AT,S_{1}=s_{\tilde{j}}( s) ]) \\
%+( 1-\pi _{D(1)}(s_{\tilde{j}}( s) )) ( E[Y(0)|NT,S_{1}=s_{\tilde{j}}( s) ]+\beta )
%\end{array}
%\right] \frac{p(s_{\tilde{j}}( s) )}{p( s) }\right) ^{2}
%\end{array}
%\right]  \notag \\
\overset{(1)}{=}-\frac{1}{\pi _{A}P(C)^{2}}\sum_{s\in \mathcal{S} _{2}}p(s)\sum_{j=1}^{J( s) }\frac{p(s_{j}( s) )}{ p( s) }\left( M_{1}( s_{j}( s) ) -\sum_{ \tilde{j}=1}^{J( s) }M_{1}( s_{\tilde{j}}( s) ) \frac{p(s_{\tilde{j}}( s) )}{p( s) }\right) ^{2}\leq 0, \label{eq:coarse6}
\end{align}
where (1) uses the following definition:
\begin{equation}
M_{1}( s_{j}( s) ) \equiv \left[ 
\begin{array}{c}
( \pi _{D(1)}(s_{j}( s) )-\pi _{D(0)}(s_{j}( s) )) E[Y(1)|C,S_{1}=s_{j}( s) ]+\pi _{D(0)}(s_{j}( s))E[Y(1)|AT,S_{1}=s_{j}( s) ] \\
+( 1-\pi _{D(1)}(s_{j}( s) )) ( E[Y(0)|NT,S_{1}=s_{j}( s) ]+\beta )
\end{array}
\right] . \label{eq:coarse7}
\end{equation}

For $a=1,2,$ let $V_{0,a}$ be equal to $V_{Y,0}^{\mathrm{sat} }+V_{D,0}^{\mathrm{sat}}$ for RCT with strata $\mathcal{S}_{a}$. By a similar argument,
\begin{equation}
V_{0,1}-V_{0,2}
=-\frac{1}{( 1-\pi _{A}) P(C)^{2}}\sum_{s\in \mathcal{S}_{2}}p(s)\sum_{j=1}^{J( s) }\frac{p(s_{j}( s))}{p( s) }\left( M_{0}( s_{j}( s) ) -\sum_{\tilde{j}=1}^{J( s) }M_{0}( s_{\tilde{j} }( s) ) \frac{p(s_{\tilde{j}}( s) )}{p( s) }\right) ^{2}\leq 0, \label{eq:coarse8}
\end{equation}
where we use the following definition:
\begin{equation}
M_{0}( s_{j}( s) ) \equiv \left[
\begin{array}{c}
( \pi _{D(1)}(s_{j}( s) )-\pi _{D(0)}(s_{j}( s) )) E[Y(0)|C,S_{1}=s_{j}( s) ]+\\
( 1-\pi _{D(1)}(s_{j}( s) )) E[Y(0)|NT,S_{1}=s_{j}( s) ] 
+\pi _{D(0)}(s_{j}( s) )( E[Y(1)|AT,S_{1}=s_{j}( s) ]-\beta )
\end{array}
\right] . \label{eq:coarse9}
\end{equation}

For $a=1,2,$ let $V_{H,a}$ be equal to $V_{H}^{\mathrm{sat}}$ for RCT with strata $\mathcal{S}_{a}$. By Theorem \ref{thm:AsyDist_SAT},
\begin{align}
V_{H,2} &=\frac{1}{P(C)^{2}}\sum_{s\in \mathcal{S}_{2}}p(s)(\pi _{D(1)}(s)-\pi_{D(0)}(s))^{2}(E[Y(1)-Y(0)|C,S_{2}=s]-\beta )^{2} \notag \\
&\overset{(1)}{=}\frac{1}{P(C)^{2}}\sum_{s\in \mathcal{S}_{2}}p(s)\left( \sum_{j=1}^{J( s) }{({\pi }_{D(1)}(s_{j}}( s) {)-{\pi } _{D(0)}{(s_{j}}( s) {)})({E}[Y(1)-Y(0)|C,S_{1}={s_{j}}( s) ]-\beta )}\frac{{p}({s_{j}}( s) )}{p( s) } \right) ^{2} \notag \\
&\overset{(2)}{=}\frac{1}{P(C)^{2}}\sum_{s\in \mathcal{S}_{2}}p(s)\left( \sum_{j=1}^{J( s) }\frac{{p}({s_{j}}( s) )}{p( s) }M_{H}( s_{j}( s) ) \right) ^{2},\label{eq:coarse10}
\end{align}
where (1) follows from the LIE, and (2) uses the following definition:
\begin{equation}
M_{H}( s_{j}( s) ) \equiv (\pi _{D(1)}( s_{j}( s) ) -\pi _{D(0)}( s_{j}( s) ) )(E[Y(1)-Y(0)|C,S_{1}=s_{j}( s) ]-\beta ). \label{eq:coarse11}
\end{equation}
In turn,
\begin{align}
V_{H,1} 
%&=\frac{1}{P(C)^{2}}\sum_{s\in \mathcal{S}_{2}}\sum_{j=1}^{J ( s) }{p}({s_{j}}( s) ){({\pi }_{D(1)}(s_{j}}( s) {)-{\pi }_{D(0)}{(s_{j}}( s) {)})}^{2}{({E} [Y(1)-Y(0)|C,S_{1}={s_{j}}( s) ]-\beta )}^{2} \notag \\
&=\frac{1}{P(C)^{2}}\sum_{s\in \mathcal{S}_{2}}p(s)\sum_{j=1}^{J( s) }\frac{{p}({s_{j}}( s) )}{p( s) }( M_{H}{ (s_{j}}( s) {)}) ^{2}.\label{eq:coarse12}
\end{align}
By \eqref{eq:coarse11} and \eqref{eq:coarse12},
\begin{equation}
V_{H,1}-V_{H,2}=\frac{1}{P(C)^{2}}\sum_{s\in \mathcal{S}_{2}}p( s) \sum_{j=1}^{J( s) }\frac{{p}({s_{j}}( s) )}{ p( s) }( M_{H}{(s_{j}}( s) {)-}\sum_{\tilde{j} =1}^{J( s) }\frac{{p}({s_{\tilde{j}}}( s) )}{p( s) }M_{H}( s_{\tilde{j}}( s) ) ) ^{2}\geq 0, \label{eq:coarse13}
\end{equation}

To conclude the proof, consider the following argument.
\begin{align*}
V_{\mathrm{sat},2}-V_{\mathrm{sat},1}
 &\overset{(1)}{=}V_{1,2}+V_{0,2}+V_{H,2}-V_{1,1}-V_{0,1}-V_{H,1}\\
&\overset{(2)}{=} \frac{1}{P(C)^{2}}\sum_{s\in \mathcal{S}_{2}}\sum_{j=1}^{J ( s) }p(s)\frac{p(s_{j}( s) )}{p( s) }\left[
\begin{array}{c}
\frac{1}{\pi _{A}}( M_{1}( s_{j}( s) ) -\sum_{ \tilde{j}=1}^{J( s) }M_{1}( s_{\tilde{j}}( s) ) \frac{p(s_{\tilde{j}}( s) )}{p( s) }) ^{2} \\
\frac{1}{( 1-\pi _{A}) }( M_{0}( s_{j}( s) ) -\sum_{\tilde{j}=1}^{J( s) }M_{0}( s_{\tilde{j} }( s) ) \frac{p(s_{\tilde{j}}( s) )}{p( s) }) ^{2} \\
-( M_{H}{(s_{j}}( s) {)-}\sum_{\tilde{j}=1}^{J( s) }M_{H}( s_{\tilde{j}}( s) ) \frac{{p}({s_{ \tilde{j}}}( s) )}{p( s) }) ^{2}
\end{array}
\right] \\
&\overset{(3)}{=}\frac{1}{P(C)^{2}}\sum_{s\in \mathcal{S}_{2}}p(s)\left[
\begin{array}{c}
\frac{1}{\pi _{A}}\sum_{j=1}^{J( s) }\frac{p(s_{j}( s) )}{p( s) }\left( 
\begin{array}{c}
M_{0}( s_{j}( s) ) +M_{H}( s_{j}( s) ) +\beta\\
-\sum_{\tilde{j} =1}^{J( s) }( M_{0}( s_{\tilde{j}}( s) ) +M_{H}( s_{\tilde{j}}( s) ) +\beta ) \frac{p(s_{\tilde{j}}( s) )}{p( s) }
\end{array}
\right) ^{2} \\
+\frac{1}{( 1-\pi _{A}) }\sum_{j=1}^{J( s) }\frac{ p(s_{j}( s) )}{p( s) }( M_{0}( s_{j}( s) ) -\sum_{\tilde{j}=1}^{J( s) }M_{0}( s_{ \tilde{j}}( s) ) \frac{p(s_{\tilde{j}}( s) )}{ p( s) }) ^{2} \\
-\sum_{j=1}^{J( s) }\frac{p(s_{j}( s) )}{p( s) }( M_{H}{(s_{j}}( s) {)-}\sum_{\tilde{j} =1}^{J( s) }M_{H}( s_{\tilde{j}}( s) ) \frac{{p}({s_{\tilde{j}}}( s) )}{p( s) }) ^{2}
\end{array}
\right]  \\
&=\frac{1}{\pi _{A}P(C)^{2}}\sum_{s\in \mathcal{S}_{2}}p(s)\left[
\begin{array}{c}
\sum_{j=1}^{J( s) }\frac{p(s_{j}( s) )}{p( s) }\left(
\begin{array}{c}
\frac{M_{0}( s_{j}( s) ) }{\sqrt{1-\pi _{A}}}+\sqrt{ 1-\pi _{A}}M_{H}( s_{j}( s) ) \\
-\sum_{\tilde{j}=1}^{J( s) }( \frac{M_{0}( s_{\tilde{j} }( s) ) }{\sqrt{1-\pi _{A}}}+\sqrt{1-\pi _{A}}M_{H}( s_{\tilde{j}}( s) ) ) \frac{p(s_{\tilde{j}}( s) )}{p( s) }
\end{array}
\right) ^{2}
\end{array}
\right] \geq 0,
\end{align*}
as required by \eqref{eq:coarser_3}, where (1) follows from Theorem \ref{thm:AsyDist_SAT}, (2) follows from \eqref{eq:coarse6}, \eqref{eq:coarse8}, and \eqref{eq:coarse13}, and (3) follows from \eqref{eq:coarse7}, \eqref{eq:coarse9}, and \eqref{eq:coarse11}, as it implies that $M_{1}( s_{j}( s) ) =M_{0}( s_{j}( s) ) +M_{H}( s_{j}( s) ) +\beta $.
\end{proof}

%%%%%%%% DIVIDER %%%%%%%%%%%%
\begin{proof}[Proof of Theorem \ref{thm:finer_strata}]
Note that \eqref{eq:newS_1} is a consequence of Theorems \ref{thm:AsyDist_SAT}, \ref{thm:AsyDist_SFE}, and \ref{thm:AsyDist_2SR}. It then suffices to show that $V_{ \mathrm{sat}}$ and the primitive parameters can be consistently estimated as in Theorems \ref{thm:SAT_se} and \ref{thm:primitiveEstimation} based on $( ( Y_{i},Z_{i},S_{i},A_{i}) ) _{i=1}^{n_{P}}$ with $S_{i}=S( Z_{i}) $ for all $ i=1,\dots ,n_{P}$. To this end, it suffices to show that $ ((Y_{i}(1),Y_{i}(0),Z_{i},D_{i}(1),D_{i}(0)))_{i=1}^{n_{P}}$, $ (A_{i})_{i=1}^{n_{P}}$, and $( S_{i}) _{i=1}^{n_{P}}$ satisfy Assumptions \ref{ass:1} and \ref{ass:2} with strata function $S: \mathcal{Z}\to \mathcal{S}$.

Note that Assumption \ref{ass:1} in the hypothetical RCT follows from the fact that the data in the pilot RCT is an i.i.d.\ sample from the same underlying population. Second, note that Assumption \ref{ass:2}(b) in the hypothetical RCT is directly imposed in the statement.

To conclude, we show Assumption \ref{ass:2}(a) in the hypothetical RCT. We use $W_{i}\equiv ( Y_{i}( 1) ,Y_{i}( 0) ,Z_{i},D_{i}( 1) ,D_{i}( 0) ) $ and $S^{P}_i= S^{P}(Z_i)$ for all $i=1,\dots ,n_{P}$. By Assumption \ref{ass:2}(a) in the pilot RCT,
\begin{equation}
P( ( ( A_{i},W_{i}) ) _{i=1}^{n_{P}}|( S_{i}^{P}) _{i=1}^{n_{P}}) ~=~P( ( A_{i}) _{i=1}^{n_{P}}|( S_{i}^{P}) _{i=1}^{n_{P}}) P( ( W_{i}) _{i=1}^{n_{P}}|( S_{i}^{P}) _{i=1}^{n_{P}}) . \label{eq:newS_2}
\end{equation}
Since $( S_{i}^{P}) _{i=1}^{n_{P}}$ is determined by $( Z_{i}) _{i=1}^{n_{P}}$ via $S_{i}^{P}=S^{P}( Z_{i}) $, \eqref{eq:newS_2} implies that
\begin{equation*}
\frac{P( ( ( A_{i},W_{i}) ) _{i=1}^{n_{P}}) }{P( ( S_{i}^{P}) _{i=1}^{n_{P}}) }~=~P( ( A_{i}) _{i=1}^{n}|( S_{i}^{P}) _{i=1}^{n}) \frac{P( ( W_{i}) _{i=1}^{n_{P}}) }{P( ( S_{i}^{P}) _{i=1}^{n_{P}}) },
\end{equation*}
and so
\begin{equation}
P( ( A_{i}) _{i=1}^{n_{P}}|( W_{i}) _{i=1}^{n_{P}}) ~=~P( ( A_{i}) _{i=1}^{n_{P}}|( S_{i}^{P}) _{i=1}^{n_{P}}) . \label{eq:newS_3}
\end{equation}

Next, consider the following derivation for any arbitrary $( s_{i}) _{i=1}^{n_{P}}\in \mathcal{S}^{n_{P}}$.
\begin{align}
&P( ( A_{i}) _{i=1}^{n_{P}}|( S_{i}) _{i=1}^{n_{P}}=( s_{i}) _{i=1}^{n_{P}}) \notag\\
&=\int_{( w_{i}) _{i=1}^{n_{P}}:( S_{i}) _{i=1}^{n_{P}}=( s_{i}) _{i=1}^{n_{P}}}
P( ( A_{i}) _{i=1}^{n}|( W_{i}) _{i=1}^{n_{P}}=( w_{i}) _{i=1}^{n_{P}}) dP( ( U_{i}) _{i=1}^{n_{P}}=( w_{i}) _{i=1}^{n_{P}}|( S_{i}) _{i=1}^{n_{P}}=( s_{i}) _{i=1}^{n_{P}})  \notag \\
&=\int_{( w_{i}) _{i=1}^{n_{P}}:( ( S_{i},S_{i}^{P}) ) _{i=1}^{n_{P}}=( ( s_{i},s_{i}^{P}) ) _{i=1}^{n_{P}}}\left[
\begin{array}{c}
P( ( A_{i}) _{i=1}^{n}|( U_{i}) _{i=1}^{n_{P}}=( w_{i}) _{i=1}^{n_{P}}) \\
dP( ( U_{i}) _{i=1}^{n_{P}}=( w_{i}) _{i=1}^{n_{P}}|( S_{i}) _{i=1}^{n_{P}}=( s_{i}) _{i=1}^{n_{P}},( S_{i}^{P}) _{i=1}^{n_{P}}=( s_{i}^{P}) _{i=1}^{n_{P}})
\end{array}
\right]   \notag \\
&\overset{(1)}{=}\int_{( w_{i}) _{i=1}^{n_{P}}:( ( S_{i},S_{i}^{P}) ) _{i=1}^{n_{P}}=( ( s_{i},s_{i}^{P}) ) _{i=1}^{n_{P}}}\left[
\begin{array}{c}
P( ( A_{i}) _{i=1}^{n}|( U_{i}) _{i=1}^{n_{P}}=( w_{i}) _{i=1}^{n_{P}}) \\
dP( ( U_{i}) _{i=1}^{n_{P}}=( w_{i}) _{i=1}^{n_{P}}|( ( S_{i},S_{i}^{P}) ) _{i=1}^{n_{P}}=( ( s_{i},s_{i}^{P}) ) _{i=1}^{n_{P}})
\end{array}
\right]   \notag \\
&\overset{(2)}{=}\int_{( w_{i}) _{i=1}^{n_{P}}:( ( S_{i},S_{i}^{P}) ) _{i=1}^{n_{P}}=( ( s_{i},s_{i}^{P}) ) _{i=1}^{n_{P}}}\left[
\begin{array}{c}
P( ( A_{i}) _{i=1}^{n_{P}}|( S_{i}^{P}) _{i=1}^{n_{P}}=( s_{i}^{P}) _{i=1}^{n_{P}}) \\
dP( ( U_{i}) _{i=1}^{n_{P}}=( w_{i}) _{i=1}^{n_{P}}|( ( S_{i},S_{i}^{P}) ) _{i=1}^{n_{P}}=( ( s_{i},s_{i}^{P}) ) _{i=1}^{n_{P}})
\end{array}
\right] \notag \\
&=P( ( A_{i}) _{i=1}^{n_{P}}|( S_{i}^{P}) _{i=1}^{n_{P}}=( s_{i}^{P}) _{i=1}^{n_{P}}) ,\label{eq:newS_4} 
\end{align}
where (1) holds by \eqref{eq:newS_0}, and so the values of $( s_{i}^{P}) _{i=1}^{n_{P}}$ are determined by $( s_{i}) _{i=1}^{n_{P}}$, and (2) holds by \eqref{eq:newS_3}. By combining \eqref{eq:newS_3} and \eqref{eq:newS_4}, the desired result follows.
\end{proof}
%%%%%%%% DIVIDER %%%%%%%%%%%%

%%%%%%%% DIVIDER %%%%%%%%%%%%

\begin{proof}[Proof of Theorem \ref{thm:opt_pi}]
We only prove both results for the SAT IV estimator. The proof of the second result for the other IV estimators is identical under $\tau(s)=0$ for all $s \in \mathcal{S}$. Using the notation in the statement, the asymptotic variance of the SAT IV estimator is
\begin{equation*}
    V_{\mathrm{sat}}~=~\frac{1}{P(C)^{2}}\sum_{s\in \mathcal{S}}p( s) \left( \frac{\Pi _{1}( s) }{\pi _{A}(s)}+\frac{\Pi _{2}( s) }{1-\pi _{A}(s)}\right) +V_{H}^{\mathrm{sat}}.
\end{equation*}
Then, \eqref{eq:Optimal_Pi_s} follows from minimizing the expression with respect to $( \pi _{A}( s) :s\in \mathcal{S}) $. In turn, \eqref{eq:Optimal_Pi} follows from a similar minimization under the restriction imposed by Assumption \ref{ass:3}(c). 

To conclude, note that the CMT implies the consistency of the plug-in estimators based on \eqref{eq:Optimal_Pi_s} and \eqref{eq:Optimal_Pi}.
\end{proof}
%%%%%%%% DIVIDER %%%%%%%%%%%%

%\subsection{Additional Monte Carlo results}\label{sec:MoreMC}
%TBA

\end{small}

\end{appendix}

% - -- - - - - - - - - - -
% HERE STARTS THE BIBLIOGRAPHY 
% - -- - - - - - - - - - - 
%\newpage
\bibliography{BIBLIOGRAPHY}

\end{document}